\documentclass[11pt,letterpaper]{article}

\usepackage[margin=1in,includeheadfoot]{geometry}
\usepackage{graphicx} 
\usepackage{float} 
\usepackage{pdflscape}
\usepackage{amsthm}
\usepackage{thm-restate}
\usepackage{mathtools}
\usepackage{amssymb} 
\usepackage[dvipsnames,svgnames]{xcolor}
\usepackage[hidelinks]{hyperref}
\hypersetup{
	unicode=true,          
	colorlinks=true,        
	linkcolor=DarkRed,          
	citecolor=DarkGreen,        
	filecolor=magenta,      
	urlcolor=cyan           
}
\usepackage[capitalize, nameinlink,noabbrev]{cleveref}
\usepackage{algorithm} 
\usepackage{algpseudocodex} 

\crefname{algorithm}{Algorithm}{Algorithms}
\usepackage{textcomp} 
\usepackage{xspace}
\usepackage[textwidth=20mm,obeyFinal]{todonotes}
\setuptodonotes{size=\tiny}

\usepackage[backend=biber,style=alphabetic,sorting=nty]{biblatex}         
\let\citet\textcite
\usepackage{enumitem}
\usepackage{tikz}
\usepackage{calc}
\usepackage{subcaption}
\usetikzlibrary{decorations.pathreplacing,calc,arrows.meta,fit,positioning,shapes.geometric,backgrounds,patterns}
\usepackage[normalem]{ulem}
\usepackage[leftbars,xcolor]{changebar}

\usepackage{thmtools}
\usepackage{booktabs}

\theoremstyle{plain}
\newtheorem{theorem}{Theorem}
\newtheorem{lemma}[theorem]{Lemma}
\newtheorem{corollary}[theorem]{Corollary}
\theoremstyle{definition}
\newtheorem{definition}[theorem]{Definition}

\theoremstyle{remark}

\AddToHook{env/lemma/begin}{\crefalias{theorem}{lemma}}
\AddToHook{env/corollary/begin}{\crefalias{theorem}{corollary}}
\AddToHook{env/definition/begin}{\crefalias{theorem}{definition}}
\AddToHook{env/invariant/begin}{\crefalias{theorem}{invariant}}
\AddToHook{env/claim/begin}{\crefalias{theorem}{claim}}
\AddToHook{env/observation/begin}{\crefalias{theorem}{observation}}
\AddToHook{env/remark/begin}{\crefalias{theorem}{remark}}

\newcommand{\rank}{\mathrm{rank}}

\newcommand{\gap}{\mathrm{gap}}

\newcommand{\algo}[1]{\textsf{#1}}

\usepackage{fontawesome}

\DeclareMathOperator{\lazygap}{intervalgap}
\DeclareMathOperator{\nonlazygap}{pointgap}

\newcommand{\OO}{O}

\newcommand{\contractedgapbound}{G}

\newcommand{\fI}{\mathcal {I}}
\newcommand{\fU}{\mathcal {U}}

\DeclareMathOperator{\OPT}{OPT}

\DeclareMathOperator{\depth}{depth}
\DeclareMathOperator{\cost}{cost}
\DeclareMathOperator{\Path}{path}
\let\top=\undefined
\DeclareMathOperator{\top}{top}
\DeclareMathOperator{\bottom}{bottom}
\DeclareMathOperator{\heapchildren}{HeapChildren}
\DeclareMathOperator{\heapparent}{HeapParent}
\DeclareMathOperator{\leftheapchildren}{LeftHeapChildren}
\DeclareMathOperator{\rightheapchildren}{RightHeapChildren}
\DeclareMathOperator{\intervals}{intervals}

\def\defop#1{\expandafter\def\csname#1\endcsname{\textsc{#1}}}
\defop{Convert}
\defop{Create}
\defop{Insert}
\defop{Delete}
\defop{Split}
\defop{Pass}
\defop{ValueChange}
\defop{ValueDecrease}
\defop{ValueIncrease}

\makeatletter
\AtBeginDocument{
  \hypersetup{
    pdftitle = {\@title},
    pdfauthor = {\ourauthors},
	pdfsubject = {\ourabstract},
	pdfkeywords = {\ourkeywords},
  }
}
\makeatother
\newif\ifsubmission

\submissionfalse

\title{Splay trees are almost dynamically optimal}
\date{}

\ifsubmission
\author{Anonymous Authors}
\def\affiliations{}
\def\ourauthors{Anonymous Authors}
\else
\author{%
  Petr Chmel${}^1$\\\texttt{chmel@iuuk.mff.cuni.cz}
  \and
  Bernhard Haeupler${}^{2,3}$\\\texttt{bernhard.haeupler@insait.ai}
  \and
  Richard Hladík${}^2$\\\texttt{rihl@rihl.cz}
  \and
  Michal Koucký${}^1$\\\texttt{koucky@iuuk.mff.cuni.cz}
  \and
  Antti Roeyskoe${}^2$\\\texttt{aroeyskoe@inf.ethz.ch}
  \and
  Václav Rozhoň${}^1$\\\texttt{vaclavrozhon@gmail.com}
  \and
  Ondřej Sladký${}^2$\\\texttt{ondra.sladky@gmail.com}
  \and
  Robert E. Tarjan${}^4$\\\texttt{ret@cs.princeton.edu}
}
\def\affiliations{%
\footnotetext[1]{Charles University}
\footnotetext[2]{ETH Zurich}
\footnotetext[3]{INSAIT, Sofia University ``St.~Kliment Ohridski''}
\footnotetext[4]{Princeton University}
\setcounter{footnote}{4}
}
\def\ourauthors{Petr Chmel, Bernhard Haeupler, Richard Hladík, Michal Koucký, Antti Roeyskoe, Václav Rozhoň, Ondřej Sladký, Robert E. Tarjan}
\fi
\def\ourkeywords{splay trees, dynamic optimality}
\def\ourabstract{We prove that splay trees are dynamically optimal up to a factor of log log n times the square of log log log n.}

\begin{document}
\maketitle
\affiliations
\begin{abstract}
Sleator and Tarjan~\cite{SleatorTarjan85} conjectured that splay trees are dynamically optimal -- that on every access sequence, they perform within a constant factor of the optimal offline dynamic binary search tree. Despite four decades of work, no $o(\log n)$ competitive ratio was known.

\smallskip
\noindent We prove that splay trees are
$O(\log\log n \cdot \log^2\log\log n)=\tilde{O}(\log\log n)$-competitive.
\end{abstract}

\section{Introduction}

Binary search trees are among the most fundamental data structures for maintaining an ordered set of $n$ elements. A binary search tree consists of nodes arranged in a tree; each node stores one key, with all smaller keys in its left subtree and all larger keys in its right subtree.
By only comparing elements and navigating among the nodes, binary search trees are able to support a wide range of operations on the ordered set such as membership queries, inserts, deletes, or predecessor queries. There are techniques which ensure the tree is balanced -- having depth of $\Theta(\log n)$, which implies that the operations work in $\Theta(\log n)$ worst-case time~\cite{adelson1962algorithm,Guibas1978}.
In the worst case, this is the best possible -- any comparison-based procedure requires $\Omega(\log n)$ time to perform an access in the worst case.

While this is optimal in the worst case, there are access sequences which are easy to resolve faster than in logarithmic time per access. For example,
if the accessed elements are drawn from a fixed distribution, one can perform the accesses in time proportional to the entropy of the distribution using Huffman trees~\cite{huffman1952method}. If the accessed elements have high locality -- for instance, if they are accessed sequentially -- constant time per access suffices. In these and many more cases, worst-case optimality fails to capture that on many instances of interest one can do much better than the $\Theta(\log n)$-time access guaranteed by balanced binary search trees.

Dynamic optimality is a much stronger guarantee than worst-case optimality. A data structure is dynamically optimal if, on any sufficiently long access sequence, its performance is asymptotically the same as the optimal tree tailored to the particular access sequence even with the advantage of knowing the whole sequence in advance.
In other words, a dynamically optimal binary search tree is constant-competitive with the optimal search tree for any sufficiently long sequence.
In some sense, dynamic optimality is the strongest guarantee possible in terms of time complexity. Up to constant factors, one cannot perform better on any sequence of accesses than a dynamically optimal binary search tree.

While no binary search tree algorithm is known to be dynamically optimal~\cite{Iacono13InPursuit,Russo19Study,Kozma16Thesis}, there are two main candidates -- splay trees~\cite{SleatorTarjan85} and GREEDY~\cite{Lucas88Canonical,Munro00LinearSearch,DemaineHIKP09Geometry}.
In fact, the whole concept of dynamic optimality originated with splay trees.
Prior to this work, no $o(\log n)$ competitive ratio was known for either of them.
Their $O(\log n)$ competitive ratios follow from the fact that accesses in both take amortized time $O(\log n)$~\cite{SleatorTarjan85,Fox11MaximallyGreedy}, as is also true for any balanced binary search tree.
Several binary search tree algorithms are known to achieve $O(\log\log n)$-competitiveness; the first was the Tango tree~\cite{DemaineHIP07Tango}, followed by a line of work using similar proof ideas~\cite{WangDS06MultiSplay,Georgakopoulos08ChainSplay,BoseDouiebDujmovicFagerberg10Zipper}. All of these track the so-called interleave lower bound on the optimal tree~\cite{Wilber89}, but since this lower bound can be separated from the optimal tree by a factor of $\Omega(\log\log n)$, this line of analyses cannot prove dynamic optimality. For Tango trees specifically, the $\Theta(\log\log n)$ analysis is tight.

Splay trees, which were introduced in 1985 by Sleator and Tarjan~\cite{SleatorTarjan85}, are binary search trees that, unlike other trees, are not balanced by a set of static rules, but adjust themselves after every operation based on which element was accessed.
In particular, after an access, the accessed element is moved to the root by a series of rotations and double-rotations, which are called zig-zigs and zig-zags.
Intuitively, this is a sensible strategy: first, frequently accessed elements need to be close to the root; second, the strategy of putting an element at the front is constant-competitive for linked lists~\cite{SleatorTarjan85ListUpdate}.
Much of the evidence for why splay trees are conjectured to be dynamically optimal is that they are known to perform well on a range of restricted families of access sequences.
First, Tarjan and Sleator showed that splay trees are up to a constant factor as good as any other static binary search tree, i.e., a tree that is not allowed to modify its structure~\cite{SleatorTarjan85}.
Next, Tarjan~\cite{Tarjan85SequentialAccess} showed that on sequential accesses, splay trees take amortized time $\Theta(1)$ per access.
Furthermore, splay trees are known to perform well on a family of pre-order and post-order traversal sequences~\cite{LevyTarjan19PrePost} and if the updates happen on the extremes as in a deque, each operation takes nearly-constant time~\cite{Pettie08Deque}.
For other related problems, splay trees are conjectured to behave asymptotically optimally. One example is the split conjecture, which states that any sequence of splits in a splay tree takes total linear time.
However, prior to this work, no competitive ratio beyond $O(\log n)$ was known for splay trees.

Here, we give the first $o(\log n)$ competitive ratio for splay trees, showing that splay trees are
$\tilde{O}(\log\log n)$-competitive,
which, ignoring $\log\log\log n$ factors, matches the best currently known competitive binary search trees.
While there are already binary search trees known to achieve this competitive ratio, this is the first time such a ratio is given for a binary search tree that is conjectured to be dynamically optimal.

Unlike existing proofs of sublogarithmic competitiveness for other binary search trees~\cite{DemaineHIP07Tango,WangDS06MultiSplay}, we do not charge against a lower bound on the work of the optimal tree, which is known to be non-tight~\cite{LecomteWeinstein20Wilber}.
Instead, we charge directly to the optimal tree, circumventing the non-trivial gap between this lower bound and the optimum.

The first step of our proof extends the original proof of static optimality: besides charging work to a static tree, we build structures that allow us to analyze splay trees even when the optimal tree performs rotations.
In fact, the core of our argument -- defining the rank of a node by its depth in the optimal tree and defining heavy paths in the splay tree to lead towards vertices closest to the root of the optimal tree -- can be seen as an alternative to the ranks used in the original static-optimality analysis and can also be used to rederive static optimality.
The proof then has three more core ingredients.
First, we introduce lazy intervals, which allow us to absorb changes in the ranks after a tree rotates.
Second, we observe that the structure of heavy paths after a splay changes similarly to a pairing heap after a delete-min.
This lets us use recent progress in the analysis
of multi-pass and pure pairing heaps~\cite{Sinnamon2025,PairingHeaps2026} to help bound the total number of zig-zigs performed.
Third, we show that we can assume a normal form on the optimal tree, which restricts it to certain properties while slowing it down only by a constant factor.
With this, we show that the number of zig-zags can be asymptotically bounded by the number of zig-zigs by observing how the structure of heavy paths changes.

Finally, we apply the same techniques to prove the split conjecture for splay trees up to a $\tilde{O}(\log\log n)$ factor.

\section{Setting the stage}
\label{sec:preliminaries}

In this section, we set up the basics needed to state our result: we define the model in which the time complexity of binary search trees is measured and introduce splay trees.

\subsection{Problem setup}

\paragraph{BST model.}

Formally, this is the BST model: we have to dynamically maintain a binary search tree (BST) $T$. This is an $n$-node binary tree where every node has a unique key from $\fU = \{1, \ldots, n\}$; this is without loss of generality because elements are only compared, so only their order matters. $T$ must at all times satisfy the \emph{ordering property}: for any node $u$, its key is larger than all keys in the left subtree of $u$ and smaller than all keys in the right subtree of $u$.
The tree undergoes a sequence of accesses $x_1, \dots, x_m$ (with $x_i \in \fU$). Upon each access, we can repeatedly perform the following operation on the nodes of the tree:
\begin{itemize}[itemsep=0pt]
\item $\algo{rotate}(x)$. Change the tree locally around $x$. If $x$ is the left child of $y$, $y$ now becomes the right child of $x$, the formerly right child of $x$ becomes the new left child of $y$, and $x$ becomes a child of $y$'s parent. The case where $x$ is the right child is analogous.
\end{itemize}
After each access, the node being accessed must be in the root. The cost of an access is $1$ plus the number of rotations performed. The total cost of a BST data structure $T$ on the access sequence $X = x_1, \ldots, x_m$ is denoted by $\cost(T, X)$.

Note that this differs from the more standard BST model by requiring that the accessed element end up at the root. We show the equivalence of these models in \Cref{app:bst-models}.

\paragraph{Relating to the optimal tree.}
For a fixed access sequence $X = x_1, \dots, x_m$, the optimal tree $T^\text{OPT}_X$ is a dynamic BST that, among all possible BSTs, incurs the smallest cost on the given access sequence while knowing the whole sequence in advance. We will often write just $T^\text{OPT}$ when the access sequence is clear from the context.
By $T$, we denote an online BST algorithm that we compare to $T^\text{OPT}$.
In our case, $T$ always denotes the splay tree.
For the analysis, we serialize the two executions so that, for access $x_i$, the optimal tree first performs its access and all rotations needed to bring $x_i$ to the root. Then, while $T^\text{OPT}$ is fixed, the splay tree performs its splay of $x_i$. Thus, during the splay-tree part of the access, $x_i$ is already the root of $T^\text{OPT}$.
Furthermore, the initial configuration for $T^\text{OPT}$ is assumed to be the optimal one, whereas for the splay tree we assume that we have no control over the initial configuration.

\subsection{Splay trees}

\begin{figure}[t]
\centering

\begin{subfigure}[t]{0.48\textwidth}
\centering
\begin{tikzpicture}[
  scale=0.68,
  transform shape,
  level distance=6mm,
  level 1/.style={sibling distance=16mm},
  level 2/.style={sibling distance=16mm},
  level 3/.style={sibling distance=16mm},
  level 4/.style={sibling distance=16mm},
  level 5/.style={sibling distance=16mm},
  level 6/.style={sibling distance=16mm},
  level 7/.style={sibling distance=16mm},
  internal/.style={circle, draw, minimum size=5.5mm, inner sep=1pt},
  >=latex
]

\node[internal] {}
  child {node[internal] {}
    child[missing]
    child {node[internal] {}
      child {node[internal] {}
        child[missing]
        child {node[internal] {}
          child {node[internal] {}
            child[missing]
            child {node[internal] {}
              child {node[internal] {$x$}}
              child[missing]
            }
          }
          child[missing]
        }
      }
      child[missing]
    }
  }
  child[missing];

\end{tikzpicture}
\caption{Initial tree state.}
\label{fig:zig-zag-rotate:initial-path}
\end{subfigure}
\hfill
\begin{subfigure}[t]{0.48\textwidth}
\centering
\begin{tikzpicture}[
  scale=0.68,
  transform shape,
  level distance=8mm,
  level 1/.style={sibling distance=35mm},
  level 2/.style={sibling distance=8mm},
  level 3/.style={sibling distance=8mm},
  level 4/.style={sibling distance=8mm},
  level 5/.style={sibling distance=8mm},
  level 6/.style={sibling distance=8mm},
  level 7/.style={sibling distance=8mm},
  internal/.style={circle, draw, minimum size=5.5mm, inner sep=1pt},
  >=latex
]

\node[internal] {$x$}
  child {node[internal] {}
    child[missing]
    child {node[internal] {}
      child[missing]
      child {node[internal] {}
        child[missing]
        child {node[internal] {}}
      }
    }
  }
  child {node[internal] {}
    child {node[internal] {}
      child {node[internal] {}
        child {node[internal] {}}
        child[missing]
      }
      child[missing]
    }
    child[missing]
  };

\end{tikzpicture}
\caption{Rotating $x$ up by single rotations, which is the same as zig-zagging in splay.}
\label{fig:zig-zag-rotate:rotate-x-up}
\end{subfigure}

\caption{Result of repeatedly rotating $x$ on a path with alternating left and right edges.}
\label{fig:zig-zag-rotate}
\end{figure}

\begin{figure}[t]
\centering

\begin{subfigure}[t]{0.32\textwidth}
\centering
\begin{tikzpicture}[
  scale=0.68,
  transform shape,
  level distance=6mm,
  level 1/.style={sibling distance=10mm},
  level 2/.style={sibling distance=10mm},
  level 3/.style={sibling distance=10mm},
  level 4/.style={sibling distance=10mm},
  level 5/.style={sibling distance=10mm},
  level 6/.style={sibling distance=10mm},
  internal/.style={circle, draw, minimum size=5.5mm, inner sep=1pt},
  >=latex
]

\node[internal] {}
  child {node[internal] {}
    child {node[internal] {}
    child {node[internal] {}
    child {node[internal] {}
      child {node[internal] {}
        child {node[internal] {}
          child {node[internal] {}
            child {node[internal] {$x$}}
            child[missing]
          }
          child[missing]
        }
        child[missing]
        }
        child[missing]
        }
        child[missing]
      }
      child[missing]
    }
    child[missing]
  }
  child[missing];

\end{tikzpicture}
\caption{Initial tree state.}
\label{fig:rotate-vs-splay:initial-path}
\end{subfigure}
\hfill
\begin{subfigure}[t]{0.32\textwidth}
\centering
\begin{tikzpicture}[
  scale=0.68,
  transform shape,
  level distance=6mm,
  level 1/.style={sibling distance=10mm},
  level 2/.style={sibling distance=10mm},
  level 3/.style={sibling distance=10mm},
  level 4/.style={sibling distance=10mm},
  level 5/.style={sibling distance=10mm},
  level 6/.style={sibling distance=10mm},
  internal/.style={circle, draw, minimum size=5.5mm, inner sep=1pt},
  >=latex
]

\node[internal] {$x$}
  child[missing]
  child {node[internal] {}
    child {node[internal] {}
    child {node[internal] {}
    child {node[internal] {}
      child {node[internal] {}
        child {node[internal] {}
          child {node[internal] {}
            child {node[internal] {}}
            child[missing]
          }
          child[missing]
        }
        child[missing]
        }
        child[missing]
        }
        child[missing]
      }
      child[missing]
    }
    child[missing]
  };

\end{tikzpicture}
\caption{Rotating $x$ up by single rotations.}
\label{fig:rotate-vs-splay:rotate-x-up}
\end{subfigure}
\hfill
\begin{subfigure}[t]{0.32\textwidth}
\centering
\begin{tikzpicture}[
  scale=0.68,
  transform shape,
  level distance=6mm,
  level 1/.style={sibling distance=14mm},
  level 2/.style={sibling distance=14mm},
  level 3/.style={sibling distance=14mm},
  level 4/.style={sibling distance=14mm},
  internal/.style={circle, draw, minimum size=5.5mm, inner sep=1pt},
  >=latex
]

\node[internal] {$x$}
  child[missing]
  child {node[internal] {}
    child {node[internal] {}
      child {node[internal] {}
      child {node[internal] {}
        child[missing]
        child {node[internal] {}}
      }
        child {node[internal] {}}
        }
      child {node[internal] {}}
    }
    child {node[internal] {}}
  };

\end{tikzpicture}
\caption{Splaying $x$ by repeated zig-zigs.}
\label{fig:rotate-vs-splay:splay-x}
\end{subfigure}

\caption{Comparison of repeatedly rotating $x$ upward using single rotations and splaying $x$ on a path with left edges only.}
\label{fig:rotate-vs-splay}
\end{figure}

\begin{figure}[t]
\centering

\begin{subfigure}[t]{0.49\textwidth}
\centering
\begin{tikzpicture}[
  scale=0.62,
  transform shape,
  level distance=10mm,
  level 1/.style={sibling distance=16mm},
  level 2/.style={sibling distance=14mm},
  level 3/.style={sibling distance=12mm},
  subtree/.style={
    draw,
    isosceles triangle,
    shape border rotate=90,
    minimum height=8mm,
    minimum width=5mm,
    yshift=-7.5mm
  },
  subtree child/.style={
    edge from parent path={
      (\tikzparentnode) -- (\tikzchildnode.north)
    }
  },
  internal/.style={circle, draw},
  >=latex
]

\begin{scope}[xshift=-3.2cm]
\node[internal] {$z$}
  child {node[internal] {$y$}
    child {node[internal] {$x$}
      child[subtree child] {node[subtree] {$A$}}
      child[subtree child] {node[subtree] {$B$}}
    }
    child[subtree child] {node[subtree] {$C$}}
  }
  child[subtree child] {node[subtree] {$D$}};
\end{scope}

\draw[->, thick] (-0.65,-1.05) -- (0.65,-1.05);

\begin{scope}[xshift=3.2cm]
\node[internal] {$x$}
  child[subtree child] {node[subtree] {$A$}}
  child {node[internal] {$y$}
    child[subtree child] {node[subtree] {$B$}}
    child {node[internal] {$z$}
      child[subtree child] {node[subtree] {$C$}}
      child[subtree child] {node[subtree] {$D$}}
    }
  };
\end{scope}

\end{tikzpicture}
\caption{Zig-zig}
\label{fig:zig-zig}
\end{subfigure}
\hfill
\begin{subfigure}[t]{0.49\textwidth}
\centering
\begin{tikzpicture}[
  scale=0.62,
  transform shape,
  level distance=10mm,
  level 1/.style={sibling distance=28mm},
  level 2/.style={sibling distance=14mm},
  level 3/.style={sibling distance=14mm},
  subtree/.style={
    draw,
    isosceles triangle,
    shape border rotate=90,
    minimum height=8mm,
    minimum width=6mm,
    yshift=-7.5mm
  },
  subtree child/.style={
    edge from parent path={
      (\tikzparentnode) -- (\tikzchildnode.north)
    }
  },
  internal/.style={circle, draw},
  >=latex
]

\begin{scope}[xshift=-3.2cm]
\node[internal] {$z$}
  child {node[internal] {$y$}
    child[subtree child] {node[subtree] {$A$}}
    child {node[internal] {$x$}
      child[subtree child] {node[subtree] {$B$}}
      child[subtree child] {node[subtree] {$C$}}
    }
  }
  child[subtree child] {node[subtree] {$D$}};
\end{scope}

\draw[->, thick] (-0.65,-1.05) -- (0.65,-1.05);

\begin{scope}[xshift=3.2cm]
\node[internal] {$x$}
  child {node[internal] {$y$}
    child[subtree child] {node[subtree] {$A$}}
    child[subtree child] {node[subtree] {$B$}}
  }
  child {node[internal] {$z$}
    child[subtree child] {node[subtree] {$C$}}
    child[subtree child] {node[subtree] {$D$}}
  };
\end{scope}

\end{tikzpicture}
\caption{Zig-zag}
\label{fig:zig-zag}
\end{subfigure}

\caption{The zig-zig and zig-zag splay steps.}
\label{fig:zig-zig-zig-zag}
\end{figure}

Splay trees~\cite{SleatorTarjan85} are binary search trees that adjust themselves with the aim of performing well on any sequence.
The first main idea behind the design is that once a node is accessed, it is moved to the root by a series of rotations. 
The intuition for this is twofold. First, the same move-to-front heuristic turns out to be dynamically optimal in the setting where the data structures are linked lists instead of binary search trees~\cite{SleatorTarjan85ListUpdate}. Second, if an element is accessed often, it is better for it to be at or near the root, as otherwise the competing tree pays a lot while the optimal tree pays only a little.

The simplest strategy that moves the accessed element to the root is to rotate the accessed element by single rotations all the way to the root~\cite{AllenMunro78SelfOrganizing}.
If the path to the accessed element alternates between left and right edges, the structural changes to the tree look reasonable: the path is split into two of roughly equal length -- see \Cref{fig:zig-zag-rotate}.
However, if the path to the accessed element encounters a long left or a long right path, these paths remain essentially unchanged and thus can be traversed unnecessarily many times, rendering this strategy inefficient. For example, on a sequential access starting from a long path this algorithm has total cost $\Theta(n^2)$ and so average cost per access of $\Theta(n)$, much worse than constant cost per access which is possible for such a sequence.
Splay trees do the next simplest thing; the splay procedure is such that whenever a long path of only left or only right edges is encountered, it is at least partially rebalanced.
The difference between these two approaches is illustrated in \Cref{fig:rotate-vs-splay}.

More precisely, splaying a node $x$ to the root works through the following sequence of double rotations and rotations, depending on the local structure around $x$.

\begin{itemize}
    \item If $x$ and its parent are either both left children or both right children, then a \emph{zig-zig} is performed by rotating first the parent of $x$ and then $x$. Unlike two rotations of $x$, this ensures rebalancing of the path.
    \item If $x$ is a right child and its parent is a left child, or vice versa, a \emph{zig-zag} is performed by rotating  $x$ twice.
    \item Right below the root, $x$ might not have a grandparent. In this case a \emph{zig} is performed by only rotating $x$.
\end{itemize}

Zig-zig and zig-zag operations are illustrated in \Cref{fig:zig-zig-zig-zag}.

Since each step in the splay algorithm moves $x$ one level closer to the root while performing at most 2 rotations, the cost of accessing the element $x$ is $O(\depth_T(x))$.

\section{Warm-up: static optimality}

In this section, we sketch a re-proof of static optimality for sufficiently long sequences using the elementary combinatorial elements used in the proof of dynamic optimality.
This section helps the reader understand the basics of splay trees and become acquainted with our notation in a familiar setting. The combinatorial elements introduced in this section are largely variations on existing concepts.
We further remark that the re-proof of static optimality is not fully rigorous, as the main aim of this section is to familiarize the reader with the concepts rather than to give a full proof.

\paragraph{Ternary view of splay trees for simpler proofs.}
First, to make all proofs substantially simpler and avoid multiple corner cases, we impose the following ternary view on splay trees. For every node $v$, we create an additional middle value child that stores the original value of $v$, and we keep its left and right subtrees. In addition, for nodes that are missing a left or right child, we add a null node. We assign dummy values to null nodes in a way consistent with the tree structure.
We imagine that the static optimal tree contains these nodes as well -- in fact, their place is uniquely determined -- and we define the ranks of null nodes analogously to other nodes. The newly added nodes are called external, while all other nodes are called internal.
Overall, in the ternary view, all internal nodes are hollow and have exactly three children, and their structure exactly matches that of the original tree. The external nodes are new, and each stores a real or dummy value.
The transition to the ternary view is illustrated in \Cref{fig:overview:ternary-view}.
Rotations happen on internal nodes, and all value nodes always remain attached to their internal parent; see \Cref{fig:overview:rotation}.

\paragraph{Relating the splay tree to the optimal tree through ranks and heavy paths.}
As in the dynamic case, the re-proof of the static case starts by relating the splay tree to an optimal tree $T^\text{OPT}_\text{static}$.
For every node $v$ in the splay tree $T$, we define its rank to be its depth in $T^\text{OPT}_\text{static}$.
Consider an edge $(u, v)$ with $u$ being the parent of $v$. The minimum rank among the nodes in the subtree of $u$ is less than or equal to the minimum rank among the nodes in the subtree of $v$.
If it is equal, we call the edge \emph{heavy}, otherwise it is \emph{light}.
An important property of the ranks is that, since they are defined by depths in a tree, in every subtree of $T$ there is only one node that minimizes the rank. This builds upon a folklore result that every consecutive interval of nodes has a unique node minimizing the depth in the optimal tree, which we prove in \Cref{lemma:unique-depth-minimum}. Thus, for each internal node, exactly one of its three children is connected to it via a heavy edge.
This decomposes the tree into a collection of heavy paths and light edges, with each light edge connecting paths of different ranks.
For an internal node, the rank of the bottom-most node on the same heavy path plays a role analogous to that of the rank in the original access-lemma analysis of splay trees, where the rank was the logarithm of the sum of weights.

\tikzset{heavy/.style={edge from parent/.style={draw, line width=1pt}}}
\newcommand{\heavy}{edge from parent={draw,line width=5pt} }

\begin{figure}[H]
\centering

\begin{subfigure}{\textwidth}
\centering
\begin{tikzpicture}[
  level distance=1.1cm,
  every node/.style={minimum size=5.5mm, inner sep=1pt},
  internal/.style={circle, draw},
  value/.style={rectangle, draw},
  nullnode/.style={inner sep=1pt},
  >=latex
]

\begin{scope}[
  xshift=-3.8cm,
  every node/.style={circle, draw, minimum size=5.5mm},
  level 1/.style={sibling distance=2.2cm},
  level 2/.style={sibling distance=1.7cm},
  level 3/.style={sibling distance=1.3cm}
]

\node {5}
  child {node {1}
    child[missing]
    child {node {3}
      child {node {2}}
      child {node {4}}
    }
  }
  child {node {6}};

\end{scope}

\draw[->, thick] (-1.8,-1.2) -- (-0.5,-1.2);

\begin{scope}[xshift=3.8cm,
  level 1/.style={sibling distance=2.0cm},
  level 2/.style={sibling distance=1.2cm},
  level 3/.style={sibling distance=2.0cm},
  level 4/.style={sibling distance=1.2cm}
]

\node[internal] {}
  child {node[internal] {}
    child {node[nullnode] {\small $\perp_0$}}
    child {node[value] {1}}
    child {node[internal] {}
      child {node[internal] {}
        child {node[nullnode] {\small $\perp_1$}}
        child {node[value] {2}}
        child {node[nullnode] {\small $\perp_2$}}
      }
      child {node[value] {3}}
      child {node[internal] {}
        child {node[nullnode] {\small $\perp_3$}}
        child {node[value] {4}}
        child {node[nullnode] {\small $\perp_4$}}
      }
    }
  }
  child {node[value] {5}}
  child {node[internal] {}
    child {node[nullnode] {\small $\perp_5$}}
    child {node[value] {6}}
    child {node[nullnode] {\small $\perp_6$}}
  };

\end{scope}

\end{tikzpicture}
\caption{Transition to the ternary view with null nodes. }
\label{fig:overview:ternary-view}
\end{subfigure}

\vspace{0.8em}

\begin{subfigure}{\textwidth}
\centering
\begin{tikzpicture}[
  level distance=9mm,
  sibling distance=12mm,
  subtree/.style={
    draw,
    isosceles triangle,
    shape border rotate=90,
    minimum height=8mm,
    minimum width=6mm,
    yshift=-7.5mm
  },
  subtree child/.style={
    edge from parent path={
      (\tikzparentnode) -- (\tikzchildnode.north)
    }
  },
  internal/.style={circle, draw},
  value/.style={rectangle, draw},
  >=latex
]

\begin{scope}[xshift=-3cm]

\node[internal] {$y$}
  child {node[internal] {$x$}
    child[subtree child] {node[subtree] {$A$}}
    child {node[value] {$u$}}
    child[subtree child] {node[subtree] {$B$}}
  }
  child {node[value] {$v$}}
  child[subtree child] {node[subtree] {$C$}};

\end{scope}

\draw[->, thick] (-0.5,0) -- (0.8,0);

\begin{scope}[xshift=3cm]

\node[internal] {$x$}
  child[subtree child] {node[subtree] {$A$}}
  child {node[value] {$u$}}
  child {node[internal] {$y$}
    child[subtree child] {node[subtree] {$B$}}
    child {node[value] {$v$}}
    child[subtree child] {node[subtree] {$C$}}
  };

\end{scope}

\end{tikzpicture}
\caption{An example of a left rotation in the ternary view of BSTs.}\label{fig:overview:rotation}
\end{subfigure}

\vspace{0.8em}

\begin{subfigure}{\textwidth}
\centering
\begin{tikzpicture}[
  level distance=1.1cm,
  every node/.style={minimum size=5.5mm, inner sep=1pt},
  internal/.style={circle, draw},
  value/.style={rectangle, draw},
  nullnode/.style={inner sep=1pt},
  >=latex
]
\begin{scope}[
  xshift=-3.8cm,
  every node/.style={circle, draw, minimum size=5.5mm},
  nullnode/.style={draw=none, rectangle, inner sep=0pt},
  level 1/.style={sibling distance=2.8cm},
  level 2/.style={sibling distance=1.4cm},
  level 3/.style={sibling distance=0.8cm}
]
\node {4}
  child {node {2}
    child {node {1}
      child {node[nullnode] {\small $\perp_0$}}
      child {node[nullnode] {\small $\perp_1$}}
    }
    child {node {3}
      child {node[nullnode] {\small $\perp_2$}}
      child {node[nullnode] {\small $\perp_3$}}
    }
  }
  child {node {6}
    child {node {5}
      child {node[nullnode] {\small $\perp_4$}}
      child {node[nullnode] {\small $\perp_5$}}
    }
    child {node[nullnode] {\small $\perp_6$}}
  };
\end{scope}
\node at (-3.8cm,-4.2cm) {$T^{\text{OPT}}$};
\begin{scope}[xshift=3.8cm,
  edge from parent/.style={draw,dashed,line width=1pt},
  level 1/.style={sibling distance=2.2cm,solid,line width=.4pt},
  level 2/.style={sibling distance=1.5cm,solid,line width=.4pt},
  level 3/.style={sibling distance=2.0cm,solid,line width=.4pt},
  level 4/.style={sibling distance=1.3cm,solid,line width=.4pt}
]
\node[internal] (root) {}
  child {node[internal] (a) {}
    child {node[nullnode] {\small $\perp_0$} node[below=0mm] {$\scriptstyle r=3$}}
    child {node[value] {1} node[below=2mm] {$\scriptstyle r=2$}}
    child {node[internal] (b) {}
      child {node[internal] (c) {}
        child {node[nullnode] {\small $\perp_1$} node[below=0mm] {$\scriptstyle r=3$}}
        child {node[value] (n2) {2} node[below=2mm] {$\scriptstyle r=1$}}
        child {node[nullnode] {\small $\perp_2$} node[below=0mm] {$\scriptstyle r=3$}}
      }
      child {node[value] {3} node[below=2mm] {$\scriptstyle r=2$}}
      child {node[internal] (d) {}
        child {node[nullnode] {\small $\perp_3$} node[below=0mm] {$\scriptstyle r=3$}}
        child {node[value] (n4) {4} node[below=2mm] {$\scriptstyle r=0$} }
        child {node[nullnode] {\small $\perp_4$} node[below=0mm] {$\scriptstyle r=3$}}
      }
    }
  }
  child {node[value] {5} node[below=2mm] {$\scriptstyle r=2$}}
  child {node[internal] (e) {}
    child {node[nullnode] {\small $\perp_5$} node[below=0mm] {$\scriptstyle r=3$}}
    child {node[value] (n6) {6} node[below=2mm] {$\scriptstyle r=1$}}
    child {node[nullnode] {\small $\perp_6$} node[below=0mm] {$\scriptstyle r=2$}}
  };
  \draw[ultra thick]
  (root) -- (a) -- (b) -- (d) -- (n4);
  \draw[ultra thick]
  (e) -- (n6);
  \draw[ultra thick]
  (c) -- (n2);
\end{scope}
\end{tikzpicture}
\caption{The ranks of external nodes are based on their depth in the example optimal tree $T^\text{OPT}$. Based on these ranks, the tree is decomposed into heavy paths (thick) which are connected by light edges (dashed). The heavy path corresponding to value 4 has rank 0 so it contains the root. The parent of external node 1 is the only bend on this heavy path.}\label{fig:overview:heavy-light}
\end{subfigure}

\caption{An overview of the concepts for splay-tree analysis.}
\label{fig:overview}
\end{figure}

The ternary view, in particular, makes the structure of heavy paths nice. Every heavy path now ends in an external node and there is exactly one such external node on each heavy path. We can therefore associate each heavy path with this external node, which is its bottom-most node, and say the rank of the heavy path is the rank of this bottom-most node.
In addition, if we imagine that the external nodes connected by light edges have degenerate heavy paths consisting of a single node, then every external node also has its own heavy path. See also \Cref{fig:overview:heavy-light}.

\paragraph{Potential via gaps.}
We now have almost everything needed to prove static optimality. 
As in the original proof of static optimality, where rank differences along edges were central, we define the \emph{gap} of a heavy path $P$ as the difference between the rank of $P$ and the rank of the heavy path to which it is connected by a light edge. We also associate the gap of a heavy path with the light edge to its top-most node. The gap of the root heavy path, which contains the root, is $0$. The sum of the gaps of all heavy paths will serve as a potential: we show that each operation over heavy edges decreases the potential, and we bound the increases caused by operations over light edges, thus concluding that the splay tree cannot do too much work.

\paragraph{Operations over light edges.}
When a node $x$ is accessed, the splay tree rotates it to the root via a series of zig-zigs and zig-zags and possibly one zig.
The first case to consider is when an operation happens over a light edge.
By this we mean that, if $x^\circ$ denotes the internal parent of the accessed external node and $y^\circ$ denotes the parent of $x^\circ$, then $x^\circ$ or $y^\circ$ is connected to its parent by a light edge.

To bound this case, first observe that there cannot be too many light edges on the path from $x$ to the root. In fact, there are at most $\rank(x)$ such edges, as every light edge decreases the rank by at least one.
Second, if the operation happens over a light edge with gap $t$,
the sum of gaps increases by at most $O(t)$. To prove this fact, one would have to verify it in many different situations based on how exactly the light and heavy edges are positioned.
On the path from $x$ to the root, the gaps on the light edges sum to exactly the rank of $x$. Thus, the operations over light edges for one access
increase the sum of gaps by at most the rank of $x$.\footnote{This case never occurs in the dynamic setting. There, we can assume that the optimal tree always accesses the root, which is the only external node with rank $0$, so there are only heavy edges on the path from it to the root.
In the static case, we cannot make this assumption without making the optimal tree dynamic.}

\paragraph{Zig-zigs on heavy edges decrease gaps.}
Next, we show that every zig-zig over heavy edges decreases the sum of gaps by at least one. It suffices to consider a right zig-zig, as the situation with a left zig-zig is analogous. Define $x^\circ$ and $y^\circ$ as above, and let $z^\circ$ be the parent of $y^\circ$.
Let $z$ be the middle child of $z^\circ$ and $y^+$ the least-rank node in the right subtree of $y^\circ$; define $z^+$ analogously.
During a zig-zig, $z^\circ$ is shifted downward and $y^+$, $z$, $z^+$ are the new least-rank vertices in the respective child subtrees of $z^\circ$.
The one with the least rank -- let us analyze the case where this is $y^+$ -- acquires $z^\circ$. For the other two nodes, $z$ and $z^+$, the heavy path relative to which their gaps are measured changes from the previous one -- consider the case where it was $x$ -- to $y^+$. Before the operation, $y^+$ was connected by a light edge to $y^\circ$ because we assumed that the operation was over heavy edges, so its gap was measured with respect to $x$. Therefore, it had higher rank, which implies that the gaps of $z$ and $z^+$ decrease by at least one, since the rank of the reference node in the definition of the gap increases.
See \cref{fig:gaps-after-zig-zig} for an illustration of this case.
It can be checked that this holds in the other cases as well and that, for the other nodes, the ranks do not increase.


\begin{figure}[t]
\centering

\begin{tikzpicture}[
    scale=0.95,
    transform shape,
    vertex/.style={
        circle,
        draw,
        fill=white,
        minimum size=5.5mm,
        inner sep=0pt
    },
    childsquare/.style={
        rectangle,
        draw,
        fill=white,
        minimum size=5mm,
        inner sep=0pt,
        font=\small
    },
    trig/.style={
        isosceles triangle,
        draw,
        fill=white,
        minimum height=9mm,
        minimum width=10mm,
        inner sep=0pt,
        shape border rotate=90
    },
    subtree/.style={
        trig,
        yshift=-11mm
    },
    dashed edge/.style={
        dashed,
        line width=1pt
    },
    solid edge/.style={
        line width=1.6pt
    },
    grey edge/.style={
        line width=1.6pt,
        lightgray
    },
    grey dashed edge/.style={
        dashed,
        line width=1pt,
        lightgray
    },
    grey node/.style={
        draw=lightgray,
        text=lightgray
    },
    grey subtree/.style={
        subtree,
        draw=lightgray,
        text=lightgray
    },
    transform arrow/.style={
        -{Stealth[length=4mm,width=2.5mm]},
        line width=1.2pt
    },
    every label/.style={
        font=\small
    }
]

\def\xshift{1.4}
\def\yshift{0.8}
\def\mainxshift{\xshift}
\def\mainyshift{\yshift*1.5}


\node[vertex] (Lz) at (-4, \mainyshift) {};
\node[vertex] (Ly) at (-4-\mainxshift, 0) {};
\node[vertex] (Lx) at (-4-2*\mainxshift,-\mainyshift) {};

\draw[solid edge] (Lz) -- (Ly) -- (Lx);

\node[childsquare,grey node] (Lxsq) at ($(Lx)+(0,-0.75)$) {};
\node[childsquare] (Lysq) at ($(Ly)+(0,-0.75)$) {};
\node[childsquare]           (Lzsq) at ($(Lz)+(0,-0.75)$) {};

\node[grey subtree] (Lxm) at ($(Lx)+(-\xshift,-\yshift)$) {};
\node[subtree,label=below:{$x^{+}$}] (Lxp)
    at ($(Lx)+(\xshift,-\yshift)$) {};

\node[subtree,label=below:{$y^{+}$}] (Lyp)
    at ($(Ly)+(\xshift,-\yshift)$) {};

\node[subtree,label=below:{$z^{+}$}] (Lzp)
    at ($(Lz)+(\xshift,-\yshift)$) {};

\draw[grey dashed edge] (Lx) -- (Lxm.apex);
\draw[grey dashed edge] (Lx) -- (Lxsq);
\draw[solid edge]       (Lx) -- (Lxp.apex);

\draw[dashed edge] (Ly) -- (Lysq);
\draw[dashed edge]      (Ly) -- node[above right,pos=0.30,xshift=-1mm,yshift=-1.5mm] {$\operatorname{gap}(y^{+})$} (Lyp.apex);

\draw[dashed edge]      (Lz) -- (Lzsq);
\draw[dashed edge]      (Lz) -- node[above right,pos=0.45,xshift=-1mm,yshift=-1mm] {$\operatorname{gap}(z^{+})$} (Lzp.apex);


\draw[transform arrow] (-1.65,-\mainyshift) -- (-0.35,-\mainyshift);


\node[vertex] (Rx) at (2, \mainyshift) {};
\node[vertex] (Ry) at (2+\mainxshift, 0) {};
\node[vertex] (Rz) at (2+2*\mainxshift,-\mainyshift) {};

\draw[solid edge] (Rx) -- (Ry);
\draw[dashed edge] (Ry) -- node[above right,pos=0.55,xshift=-.5mm,yshift=-1.5mm] {$\operatorname{gap}(y^{+})$} (Rz);

\node[childsquare,grey node] (Rxsq) at ($(Rx)+(0,-0.75)$) {};
\node[childsquare] (Rysq) at ($(Ry)+(0,-0.75)$) {};
\node[childsquare]           (Rzsq) at ($(Rz)+(0,-0.75)$) {};

\node[grey subtree] (Rxm) at ($(Rx)+(-\xshift,-\yshift)$) {};

\node[subtree,label=below:{$x^{+}$}] (Ryp)
    at ($(Ry)+(-\xshift,-\yshift)$) {};

\node[subtree,label=below:{$y^{+}$}] (RzpL)
    at ($(Rz)+(-\xshift,-\yshift)$) {};

\node[subtree,label=below:{$z^{+}$}] (Rzp)
    at ($(Rz)+(\xshift,-\yshift)$) {};

\draw[grey dashed edge]        (Rx) -- (Rxm.apex);
\draw[grey dashed edge] (Rx) -- (Rxsq);

\draw[solid edge]       (Ry) -- (Ryp.apex);
\draw[dashed edge] (Ry) -- (Rysq);

\draw[solid edge]       (Rz) -- (RzpL.apex);
\draw[dashed edge]      (Rz) -- (Rzsq);
\draw[dashed edge]      (Rz) -- node[above right,pos=0.45,xshift=-1mm,yshift=-1mm]
    {$\operatorname{gap}(z^{+})-\operatorname{gap}(y^{+})$} (Rzp.apex);


\foreach \T in {Lxp,Lyp,Lzp,Ryp,RzpL,Rzp}{
    \draw[line width=1.4pt]
        ($(\T.center)+(0.15,-0.18)$)
        -- ++(-0.12,0.12)
        -- ++( 0.12,0.12)
        -- ++(-0.12,0.12)
        -- (\T.apex);
}

\foreach \T in {Lxm,Rxm}{
    \draw[line width=1.4pt, lightgray]
        ($(\T.center)+(0.15,-0.18)$)
        -- ++(-0.12,0.12)
        -- ++( 0.12,0.12)
        -- ++(-0.12,0.12)
        -- (\T.apex);
}

\end{tikzpicture}

\caption{The situation after a zig-zig over two heavy edges in one particular configuration, assuming the rank of $y^+$ is smaller than the rank of $z^+$. $\gap(\cdot)$ denotes the gap before the zig-zig. The gap of $z^+$ decreases as before it was computed based on $x^+$ and after it is computed based on $y^+$ which has higher rank.}
\label{fig:gaps-after-zig-zig}

\end{figure}

\paragraph{Concluding static optimality for sufficiently long sequences.}
Similarly, a zig-zag over two heavy edges can be shown to decrease the potential by one. Therefore, for every splay, the potential can increase by at most asymptotically the rank of the accessed node.
There can be at most $\rank(x)$ zig-zigs or zig-zags over light edges, where $x$ is the splayed vertex. Since every other zig-zig or zig-zag decreases the potential by at least one, the amortized number of rotations is $O(\rank(x))$, possibly plus $1$ for the final zig.

As $\rank(x)$ is the depth of $x$ in the static optimal tree and $1+\rank(x)$ is the access cost, and since the initial potential is at most $O(n^2)$, the total work done by splay trees is $O(n^2 + \cost(T^\text{OPT}_\text{static}))$.
With a slightly improved potential, the dynamic analysis reduces the $O(n^2)$ factor to $\tilde{O}(n\log\log n)$.

\section{Technical overview}\label{sec:technical-overview}

Having concluded a warm-up proof of static optimality, let us focus on dynamic optimality.
In this section, we provide intuition for our proof and an overview of all the important concepts.

Just as the static optimality of splay trees is proved by charging the work done by splay trees to that done by an optimal static tree, our argument charges the work done by splay trees to an optimal tree that changes over time.

As in the warm-up section, the proof builds upon the concepts used to sketch static optimality. We impose the ternary view on the splay tree, define node ranks based on their depths in the optimal tree, define heavy paths, and consider their gaps.
The proof itself will rely on a deeper understanding of these combinatorial concepts.

\paragraph{Heap view of heavy paths.}
Imagine contracting every heavy path into the bottom-most node and keeping all the children of nodes on the heavy path as children of the contracted node.
In this contracted tree, there are only light edges and thus the ranks increase with every edge.
Hence, this contracted tree is heap-ordered on the ranks, meaning that a node has lower rank than all its children, and we call this the \emph{heap view of heavy paths}; see \Cref{fig:overview:heap-children-heap-parents}. To distinguish between the children in the tree itself and in the heap view, we call the children and parents in the heap view the \emph{heap-children} and \emph{heap-parents}, respectively.
Heap-children whose values precede that of the heap-parent in symmetric order are called \emph{left heap-children}; \emph{right heap-children} are defined analogously.
\begin{figure}[H]
\centering

\begin{tabular}{@{}c@{\hspace{1.5em}}c@{\hspace{1.5em}}c@{}}

\begin{tikzpicture}[
  baseline=(current bounding box.center),
  level distance=1.1cm,
  every node/.style={minimum size=5.5mm, inner sep=1pt},
  internal/.style={circle, draw},
  value/.style={rectangle, draw},
  nullnode/.style={inner sep=1pt},
  edge from parent/.style={draw,dashed,line width=1pt},
  level 1/.style={sibling distance=1.7cm,solid, line width=.4pt},
  level 2/.style={sibling distance=1cm,  solid, line width=.4pt},
  level 3/.style={sibling distance=1.0cm,solid, line width=.4pt},
  level 4/.style={sibling distance=1cm,  solid, line width=.4pt},
  >=latex
]

\node[internal] (root) {}
  child {node[internal] (e) {}
    child {node[nullnode] {\small $\perp_0$} node[below=0mm] {$\scriptstyle r=3$}}
    child {node[value] (n6) {1} node[below=2mm] {$\scriptstyle r=2$}}
    child {node[nullnode] {\small $\perp_1$} node[below=0mm] {$\scriptstyle r=3$}}
  }
  child {node[value] {2} node[below=2mm] {$\scriptstyle r=1$}}
  child {node[internal] (a) {}
    child {node[internal] (b) {}
      child {node[nullnode] {\small $\perp_2$} node[below=0mm] {$\scriptstyle r=3$}}
      child {node[value] (c) {3} node[below=2mm] {$\scriptstyle r=2$}}
      child {node[internal] (d) {}
        child {node[nullnode] {\small $\perp_3$} node[below=0mm] {$\scriptstyle r=4$}}
        child {node[value] (n2) {4} node[below=2mm] {$\scriptstyle r=3$}}
        child {node[nullnode] {\small $\perp_4$} node[below=0mm] {$\scriptstyle r=4$}}
      }
    }
    child {node[value] (f) {5} node[below=2mm] {$\scriptstyle r=0$}}
    child {node[nullnode] {\small $\perp_5$} node[below=0mm] {$\scriptstyle r=1$}}
    };

  \draw[ultra thick] (root) -- (a) -- (f);
  \draw[ultra thick] (e) -- (n6);
  \draw[ultra thick] (b) -- (c);
  \draw[ultra thick] (d) -- (n2);

\end{tikzpicture}
&
\begin{tikzpicture}[baseline=(current bounding box.center),>=latex]
  \draw[->, line width=1pt] (-0.5,0) -- (0.5,0);
\end{tikzpicture}
&
\begin{tikzpicture}[
  baseline=(current bounding box.center),
  every node/.style={minimum size=5.5mm, inner sep=1pt},
  value/.style={rectangle, draw},
  perpnode/.style={rectangle, minimum size=5.5mm, inner sep=1pt},
  rval/.style={font=\scriptsize, inner sep=1pt},
  auxline/.style={densely dotted},
  invisible/.style={edge from parent/.style={draw=none}},
  level distance=1.5cm,
  level 1/.style={sibling distance=2cm},
  level 2/.style={sibling distance=1.4cm},
  level 3/.style={sibling distance=1.4cm},
  >=latex
]

\node[value] (v5) {5}
  child {node[value] (v1) {1}
    child {node[perpnode] (p0) {$\perp_0$}}
    child {node[perpnode] (p1) {$\perp_1$}}
  }
  child {node[value] (v2) {2}}
  child {node[value] (v3) {3}
    child {node[perpnode] (p2) {$\perp_2$}}
    child {node[value] (v4) {4}
      child {node[perpnode] (p3) {$\perp_3$}}
      child {node[perpnode] (p4) {$\perp_4$}}
    }
  }
  child {node[perpnode] (p5) {$\perp_5$}}
  child[invisible] {}
  child[invisible] {};

\node[rval, anchor=west]  at ([xshift=1mm]v5.east)   {$r=0$};
\node[rval, anchor=south]  at ([yshift=-1mm]v1.north)   {$r=2$};
\node[rval, anchor=north] at ([yshift=1mm]v2.south) {$r=1$};
\node[rval, anchor=east]  at ([xshift=-1mm]v3.west)   {$r=2$};
\node[rval, anchor=west]  at ([xshift=1mm]v4.east)   {$r=3$};
\node[rval, anchor=north] at ([yshift=2mm]p0.south) {$r=3$};
\node[rval, anchor=north] at ([yshift=2mm]p1.south) {$r=3$};
\node[rval, anchor=north] at ([yshift=2mm]p2.south) {$r=3$};
\node[rval, anchor=north] at ([yshift=2mm]p3.south) {$r=4$};
\node[rval, anchor=north] at ([yshift=2mm]p4.south) {$r=4$};
\node[rval, anchor=north] at ([yshift=2mm]p5.south) {$r=1$};

\draw[auxline] ($(p0)!0.5!(p1)+(0,0.33)$) -- ($(p0)!0.5!(p1)+(0,-0.33)$);
\draw[auxline] ($(p2)!0.5!(v4)+(0,0.33)$) -- ($(p2)!0.5!(v4)+(0,-0.33)$);
\draw[auxline] ($(p3)!0.5!(p4)+(0,0.33)$) -- ($(p3)!0.5!(p4)+(0,-0.33)$);
\draw[auxline] ($(v3)!0.5!(p5)+(0,0.33)$) -- ($(v3)!0.5!(p5)+(0,-0.33)$);

\end{tikzpicture}

\end{tabular}

\caption{Left: an example splay tree and its heavy-path decomposition. Right: the heap view obtained from the tree on the left, where heap-children are ordered as in the symmetric order and the dashed line separates left and right heap-children.}
\label{fig:overview:heap}
\label{fig:overview:heap-children-heap-parents}
\end{figure}

\paragraph{Lazy intervals for handling gap changes caused by rotations.}
The straightforward idea of using the sum of gaps as a potential while allowing rotations of the optimal tree does not work. A rotation in the optimal tree can change up to $\Theta(n)$ ranks and gaps, which would yield a useless $O(n)$-competitiveness proof.
Although a rotation changes the ranks of many nodes, the affected nodes form a consecutive interval $[i, j]$ in symmetric order, as they correspond to subtrees in the optimal tree. Therefore, we can group consecutive heap-children of each heavy path into so-called \emph{lazy intervals}. Instead of updating the gaps, we simply record the amount by which all gaps in an interval are shifted.
The amount by which the whole interval is shifted is called an \emph{interval gap} and the gaps without this shift are called \emph{point gaps}.
The important property of lazy intervals is that, for zig-zigs occurring within a lazy interval, the interval gap cancels out of the corresponding changes to the gaps. Thus, the point gaps behave as if they were the gaps themselves; see \Cref{fig:point-gaps-after-zig-zigs}.
In other words, it does not matter whether we first update the gaps and then perform zig-zigs, or we perform the zig-zigs and only later update the gaps lazily when needed.
Thus we can track the sum of point gaps instead of the sum of gaps and every rotation can then be handled by creating $O(1)$ new lazy intervals.


\begin{figure}[t]
\centering

\def\intervalgapshortcut{g_\mathcal I}
\def\vstretch{1.5}
\def\hstretch{0.8}

\begin{tikzpicture}[
    scale=0.88,
    transform shape,
    vertex/.style={
        circle,
        draw,
        fill=white,
        minimum size=5.5mm,
        inner sep=0pt
    },
    childsquare/.style={
        rectangle,
        draw,
        fill=white,
        minimum size=5mm,
        inner sep=0pt,
        font=\small
    },
    trig/.style={
        isosceles triangle,
        draw,
        fill=white,
        minimum height=9mm,
        minimum width=10mm,
        inner sep=0pt,
        shape border rotate=90
    },
    subtree/.style={
        trig,
        yshift=-11mm*\vstretch
    },
    dashed edge/.style={
        dashed,
        line width=1pt
    },
    solid edge/.style={
        line width=1.6pt
    },
    grey edge/.style={
        line width=1.6pt,
        lightgray
    },
    grey dashed edge/.style={
        dashed,
        line width=1pt,
        lightgray
    },
    grey node/.style={
        draw=lightgray,
        text=lightgray
    },
    grey subtree/.style={
        subtree,
        draw=lightgray,
        text=lightgray
    },
    gaplabel/.style={
        above right,
        pos=0.2,
        xshift=-1mm,
        yshift=-1mm
    },
    gaplabel2/.style={
        above right,
        pos=0.5,
        xshift=-1mm,
        yshift=-1mm
    },
    transform arrow/.style={
        -{Stealth[length=4mm,width=2.5mm]},
        line width=1.2pt
    },
    every label/.style={
        font=\small
    }
]

\def\childvec{3,0}
\def\ggap#1{$\gap(#1) - \intervalgapshortcut$}

\begin{scope}[shift={(-3.5,5.2)}]
\node[vertex] (LzTwo) at (-1*\hstretch, -1*\vstretch) {};
\node[vertex] (LyTwo) at (-2*\hstretch, -2*\vstretch) {};
\node[vertex] (LzOne) at (-3*\hstretch, -3*\vstretch) {};
\node[vertex] (LyOne) at (-4*\hstretch, -4*\vstretch) {};
\node[vertex] (Lx)    at (-5*\hstretch, -5*\vstretch) {};

\draw[solid edge] (LzTwo) -- (LyTwo) -- (LzOne) -- (LyOne) -- (Lx);

\node[childsquare,grey node] (Lxsq)     at ($(Lx)+(0,-0.75)$) {};
\node[childsquare,grey node] (LyOnesq)  at ($(LyOne)+(0,-0.75)$) {};
\node[childsquare]           (LzOnesq)  at ($(LzOne)+(0,-0.75)$) {};
\node[childsquare,grey node] (LyTwosq)  at ($(LyTwo)+(0,-0.75)$) {};
\node[childsquare]           (LzTwosq)  at ($(LzTwo)+(0,-0.75)$) {};

\node[grey subtree] (Lxm) at ($(Lx)+(-0.95*\hstretch,-0.55*\vstretch)$) {};
\node[subtree,label=below:{$x^{+}$}] (Lxp)
    at ($(Lx)+(\childvec)$) {};

\node[subtree,label=below:{$y_1^{+}$}] (LyOneP)
    at ($(LyOne)+(\childvec)$) {};

\node[subtree,label=below:{$z_1^{+}$}] (LzOneP)
    at ($(LzOne)+(\childvec)$) {};

\node[subtree,label=below:{$y_2^{+}$}] (LyTwoP)
    at ($(LyTwo)+(\childvec)$) {};

\node[subtree,label=below:{$z_2^{+}$}] (LzTwoP)
    at ($(LzTwo)+(\childvec)$) {};

\draw[grey dashed edge] (Lx)    -- (Lxm.apex);
\draw[grey dashed edge] (Lx)    -- (Lxsq);
\draw[solid edge]       (Lx)    -- (Lxp.apex);

\draw[grey dashed edge] (LyOne) -- (LyOnesq);
\draw[dashed edge]      (LyOne) -- node[gaplabel]
    {\ggap{y_1^+}} (LyOneP.apex);

\draw[dashed edge]      (LzOne) -- (LzOnesq);
\draw[dashed edge]      (LzOne) -- node[gaplabel]
    {\ggap{z_1^+}} (LzOneP.apex);

\draw[grey dashed edge] (LyTwo) -- (LyTwosq);
\draw[dashed edge]      (LyTwo) -- node[gaplabel]
    {\ggap{y_2^+}} (LyTwoP.apex);

\draw[dashed edge]      (LzTwo) -- (LzTwosq);
\draw[dashed edge]      (LzTwo) -- node[gaplabel]
    {\ggap{z_2^+}} (LzTwoP.apex);

\end{scope}

\draw[transform arrow] (-1,-0.95) -- (0.30,-0.95);


\node[vertex] (Rx)    at (2.90,  3.80) {};
\node[vertex] (RyTwo) at (4.05,  2.40) {};
\node[vertex] (RyOne) at (2.05, -0.55) {};
\node[vertex] (RzTwo) at (5.00,  1.35) {};
\node[vertex] (RzOne) at (3.25, -2.40) {};

\draw[solid edge]  (Rx)    -- (RyTwo);
\draw[solid edge]  (RyTwo) -- (RyOne);
\draw[dashed edge] (RyTwo) -- node[gaplabel2]
    {\ggap{y_2^+}} (RzTwo);
\draw[dashed edge] (RyOne) -- node[gaplabel2,pos=0.8]
    {\ggap{y_1^+}} (RzOne);

\node[childsquare,grey node] (Rxsq)    at ($(Rx)+(0,-0.75)$) {};
\node[childsquare,grey node] (RyTwosq) at ($(RyTwo)+(0,-0.75)$) {};
\node[childsquare,grey node] (RyOnesq) at ($(RyOne)+(0,-0.75)$) {};
\node[childsquare]           (RzTwosq) at ($(RzTwo)+(0,-0.75)$) {};
\node[childsquare]           (RzOnesq) at ($(RzOne)+(0,-0.75)$) {};

\node[grey subtree] (Rxm) at ($(Rx)+(-1.10,-0.55)$) {};

\node[subtree,label=below:{$x^{+}$}] (RxP)
    at ($(RyOne)+(-0.95,-0.55)$) {};

\node[subtree,label=below:{$y_1^{+}$}] (RyOneP)
    at ($(RzOne)+(-0.95,-0.55)$) {};

\node[subtree,label=below:{$z_1^{+}$}] (RzOneP)
    at ($(RzOne)+(1.30,-0.55)$) {};

\node[subtree,label=below:{$y_2^{+}$}] (RyTwoP)
    at ($(RzTwo)+(-0.85,-0.15)$) {};

\node[subtree,label=below:{$z_2^{+}$}] (RzTwoP)
    at ($(RzTwo)+(1.10,-0.15)$) {};

\draw[grey dashed edge]        (Rx)    -- (Rxm.apex);
\draw[grey dashed edge] (Rx)    -- (Rxsq);

\draw[solid edge]       (RyOne) -- (RxP.apex);
\draw[grey dashed edge] (RyOne) -- (RyOnesq);

\draw[solid edge]       (RzOne) -- (RyOneP.apex);
\draw[dashed edge]      (RzOne) -- (RzOnesq);
\draw[dashed edge]      (RzOne) -- node[gaplabel2]
    {$\gap(z_1^+)-\gap(y_1^+)$} (RzOneP.apex);

\draw[grey dashed edge] (RyTwo) -- (RyTwosq);

\draw[solid edge]       (RzTwo) -- (RyTwoP.apex);
\draw[dashed edge]      (RzTwo) -- (RzTwosq);
\draw[dashed edge]      (RzTwo) -- node[gaplabel2]
    {$\gap(z_2^+)-\gap(y_2^+)$} (RzTwoP.apex);


\foreach \T in {Lxp,LyOneP,LzOneP,LyTwoP,LzTwoP,RxP,RyOneP,RzOneP,RyTwoP,RzTwoP}{
    \draw[line width=1.4pt]
        ($(\T.center)+(0.15,-0.18)$)
        -- ++(-0.12,0.12)
        -- ++( 0.12,0.12)
        -- ++(-0.12,0.12)
        -- (\T.apex);
}

\foreach \T in {Lxm,Rxm}{
    \draw[line width=1.4pt, lightgray]
        ($(\T.center)+(0.15,-0.18)$)
        -- ++(-0.12,0.12)
        -- ++( 0.12,0.12)
        -- ++(-0.12,0.12)
        -- (\T.apex);
}

\end{tikzpicture}

\caption{The change of point gaps after two consecutive zig-zigs over heavy edges in one particular configuration with all heap-children in the same lazy interval $\mathcal I$. We use $\intervalgapshortcut$ as a shorthand for $\lazygap(\mathcal I)$. $\gap(\cdot)$ denotes the gap before the zig-zigs.
Note that for the resulting gap of $z_i^+$, the interval gaps cancel.}
\label{fig:point-gaps-after-zig-zigs}

\end{figure}

\paragraph{Contracting gaps to bound the number of zig-zigs.}
The remaining difficulty is posed by \emph{boundary zig-zigs}, which affect heap-children from two different lazy intervals.
To simplify the discussion of zig-zigs, recall that a zig-zig takes three consecutive heap-children and hangs the two with higher rank in a lazy interval of the one with lower rank.
We decompose this event into two \emph{pairings}, in each of which one heap-child is hung below another heap-child with lower rank.
In this terminology, the problem lies with \emph{boundary pairings}, where the two heavy paths are not from the same lazy interval.
An attempt to resolve this would be to modify the optimal tree so that it always has logarithmic depth and thus logarithmic ranks while incurring only constant slowdown. But even when charging to this modified tree,
each boundary pairing can increase the sum of point gaps by up to $\Theta(\log n)$. This again could not be used to prove an $o(\log n)$ competitive ratio.

The solution is to contract the point gaps by passing them through an integer-valued logarithmic function. As a potential, we then sum the contracted point gaps instead of the point gaps. As a result, each boundary pairing increases the sum of contracted point gaps by $O(\log\log n)$, so we can hope for $\tilde{O}(\log \log n)$-competitiveness.
The contraction, however, complicates matters, as a decrease in a point gap does not imply a decrease in the corresponding contracted point gap, so not all internal pairings decrease the sum of contracted point gaps.
The internal pairings where the point gaps are of the same magnitude, which we call \emph{good}, still decrease the sum of contracted point gaps.
This is because decreasing the input of the logarithm by a constant fraction decreases its output by a constant.
We are able to bound good pairings by the total increase in the sum of contracted point gaps.
The other internal pairings are called \emph{bad}, and bounding them requires more work.
See also \Cref{fig:contracted-gaps-after-zig-zigs}.


\begin{figure}[t]
\centering

\def\vstretch{1.5}
\def\hstretch{0.8}

\begin{tikzpicture}[
    scale=0.88,
    transform shape,
    vertex/.style={
        circle,
        draw,
        fill=white,
        minimum size=5.5mm,
        inner sep=0pt
    },
    childsquare/.style={
        rectangle,
        draw,
        fill=white,
        minimum size=5mm,
        inner sep=0pt,
        font=\small
    },
    trig/.style={
        isosceles triangle,
        draw,
        fill=white,
        minimum height=9mm,
        minimum width=10mm,
        inner sep=0pt,
        shape border rotate=90
    },
    subtree/.style={
        trig,
        yshift=-11mm*\vstretch
    },
    dashed edge/.style={
        dashed,
        line width=1pt
    },
    solid edge/.style={
        line width=1.6pt
    },
    grey edge/.style={
        line width=1.6pt,
        lightgray
    },
    grey dashed edge/.style={
        dashed,
        line width=1pt,
        lightgray
    },
    grey node/.style={
        draw=lightgray,
        text=lightgray
    },
    grey subtree/.style={
        subtree,
        draw=lightgray,
        text=lightgray
    },
    gaplabel/.style={
        above right,
        pos=0.2,
        xshift=-1mm,
        yshift=-1mm
    },
    gaplabel2/.style={
        above right,
        pos=0.5,
        xshift=-1mm,
        yshift=-1mm
    },
    transform arrow/.style={
        -{Stealth[length=4mm,width=2.5mm]},
        line width=1.2pt
    },
    every label/.style={
        font=\small
    }
]

\def\childvec{3,0}

\begin{scope}[shift={(-3.5,5.2)}]
\node[vertex] (LzTwo) at (-1*\hstretch, -1*\vstretch) {};
\node[vertex] (LyTwo) at (-2*\hstretch, -2*\vstretch) {};
\node[vertex] (LzOne) at (-3*\hstretch, -3*\vstretch) {};
\node[vertex] (LyOne) at (-4*\hstretch, -4*\vstretch) {};
\node[vertex] (Lx)    at (-5*\hstretch, -5*\vstretch) {};

\draw[solid edge] (LzTwo) -- (LyTwo) -- (LzOne) -- (LyOne) -- (Lx);

\node[childsquare,grey node] (Lxsq)     at ($(Lx)+(0,-0.75)$) {};
\node[childsquare,grey node] (LyOnesq)  at ($(LyOne)+(0,-0.75)$) {};
\node[childsquare]           (LzOnesq)  at ($(LzOne)+(0,-0.75)$) {};
\node[childsquare,grey node] (LyTwosq)  at ($(LyTwo)+(0,-0.75)$) {};
\node[childsquare]           (LzTwosq)  at ($(LzTwo)+(0,-0.75)$) {};

\node[grey subtree] (Lxm) at ($(Lx)+(-0.95*\hstretch,-0.55*\vstretch)$) {};
\node[subtree,label=below:{$x^{+}$}] (Lxp)
    at ($(Lx)+(\childvec)$) {};

\node[subtree,label=below:{$y_1^{+}$}] (LyOneP)
    at ($(LyOne)+(\childvec)$) {};

\node[subtree,label=below:{$z_1^{+}$}] (LzOneP)
    at ($(LzOne)+(\childvec)$) {};

\node[subtree,label=below:{$y_2^{+}$}] (LyTwoP)
    at ($(LyTwo)+(\childvec)$) {};

\node[subtree,label=below:{$z_2^{+}$}] (LzTwoP)
    at ($(LzTwo)+(\childvec)$) {};

\draw[grey dashed edge] (Lx)    -- (Lxm.apex);
\draw[grey dashed edge] (Lx)    -- (Lxsq);
\draw[solid edge]       (Lx)    -- (Lxp.apex);

\draw[grey dashed edge] (LyOne) -- (LyOnesq);
\draw[dashed edge]      (LyOne) -- node[gaplabel]
    {$\lceil\log(96)\rceil=7$} (LyOneP.apex);

\draw[dashed edge]      (LzOne) -- (LzOnesq);
\draw[dashed edge]      (LzOne) -- node[gaplabel]
    {$\lceil\log(128)\rceil=7$} (LzOneP.apex);

\draw[grey dashed edge] (LyTwo) -- (LyTwosq);
\draw[dashed edge]      (LyTwo) -- node[gaplabel]
    {$\lceil\log(32)\rceil=5$} (LyTwoP.apex);

\draw[dashed edge]      (LzTwo) -- (LzTwosq);
\draw[dashed edge]      (LzTwo) -- node[gaplabel]
    {$\lceil\log(128)\rceil=7$} (LzTwoP.apex);

\end{scope}

\draw[transform arrow] (-1,-0.95) -- (0.30,-0.95);


\node[vertex] (Rx)    at (2.90,  3.80) {};
\node[vertex] (RyTwo) at (4.05,  2.40) {};
\node[vertex] (RyOne) at (2.05, -0.55) {};
\node[vertex] (RzTwo) at (5.00,  1.35) {};
\node[vertex] (RzOne) at (3.25, -2.40) {};

\draw[solid edge]  (Rx)    -- (RyTwo);
\draw[solid edge]  (RyTwo) -- (RyOne);
\draw[dashed edge] (RyTwo) -- node[gaplabel2]
    {$\lceil\log(32)\rceil=5$} (RzTwo);
\draw[dashed edge] (RyOne) -- node[gaplabel2,pos=0.8]
    {$\lceil\log(96)\rceil=7$} (RzOne);

\node[childsquare,grey node] (Rxsq)    at ($(Rx)+(0,-0.75)$) {};
\node[childsquare,grey node] (RyTwosq) at ($(RyTwo)+(0,-0.75)$) {};
\node[childsquare,grey node] (RyOnesq) at ($(RyOne)+(0,-0.75)$) {};
\node[childsquare]           (RzTwosq) at ($(RzTwo)+(0,-0.75)$) {};
\node[childsquare]           (RzOnesq) at ($(RzOne)+(0,-0.75)$) {};

\node[grey subtree] (Rxm) at ($(Rx)+(-1.10,-0.55)$) {};

\node[subtree,label=below:{$x^{+}$}] (RxP)
    at ($(RyOne)+(-0.95,-0.55)$) {};

\node[subtree,label=below:{$y_1^{+}$}] (RyOneP)
    at ($(RzOne)+(-0.95,-0.55)$) {};

\node[subtree,label=below:{$z_1^{+}$}] (RzOneP)
    at ($(RzOne)+(1.30,-0.55)$) {};

\node[subtree,label=below:{$y_2^{+}$}] (RyTwoP)
    at ($(RzTwo)+(-0.85,-0.15)$) {};

\node[subtree,label=below:{$z_2^{+}$}] (RzTwoP)
    at ($(RzTwo)+(1.10,-0.15)$) {};

\draw[grey dashed edge]        (Rx)    -- (Rxm.apex);
\draw[grey dashed edge] (Rx)    -- (Rxsq);

\draw[solid edge]       (RyOne) -- (RxP.apex);
\draw[grey dashed edge] (RyOne) -- (RyOnesq);

\draw[solid edge]       (RzOne) -- (RyOneP.apex);
\draw[dashed edge]      (RzOne) -- (RzOnesq);
\draw[dashed edge]      (RzOne) -- node[gaplabel2]
    {$\lceil\log(128-96)\rceil=\lceil\log(32)\rceil=5$} (RzOneP.apex);

\draw[grey dashed edge] (RyTwo) -- (RyTwosq);

\draw[solid edge]       (RzTwo) -- (RyTwoP.apex);
\draw[dashed edge]      (RzTwo) -- (RzTwosq);
\draw[dashed edge]      (RzTwo) -- node[gaplabel2]
    {$\lceil\log(128-32)\rceil=7$} (RzTwoP.apex);


\foreach \T in {Lxp,LyOneP,LzOneP,LyTwoP,LzTwoP,RxP,RyOneP,RzOneP,RyTwoP,RzTwoP}{
    \draw[line width=1.4pt]
        ($(\T.center)+(0.15,-0.18)$)
        -- ++(-0.12,0.12)
        -- ++( 0.12,0.12)
        -- ++(-0.12,0.12)
        -- (\T.apex);
}

\foreach \T in {Lxm,Rxm}{
    \draw[line width=1.4pt, lightgray]
        ($(\T.center)+(0.15,-0.18)$)
        -- ++(-0.12,0.12)
        -- ++( 0.12,0.12)
        -- ++(-0.12,0.12)
        -- (\T.apex);
}

\end{tikzpicture}

\caption{The change to contracted point gaps after two consecutive zig-zigs over heavy edges with corresponding heap-children in the same lazy interval. The pairing $(y_1^+, z_1^+)$ is good as the two heap-children had point gaps of the same magnitude, so the contracted point gap of $ z_1^+$ decreases. The pairing $(y_2^+, z_2^+)$ is bad as the point gaps are of different magnitude, therefore the contracted point gap of $ z_2^+$ is not guaranteed to decrease. Note that in reality we use $\lceil 1 + \log(1 + x) \rceil$ for the contracted gap, to avoid edge cases. We use $\lceil \log(x) \rceil$ in this figure instead for simplicity. In both, the logarithms are binary.}
\label{fig:contracted-gaps-after-zig-zigs}

\end{figure}

The next step is to show that there are not too many boundary pairings. Some occur at the boundaries of very large lazy intervals, and their effects can be subsumed by internal pairings. We call these pairings \emph{unimportant}.
On the other hand, important pairings cannot happen too often within any one lazy interval, and since we do not create too many lazy intervals, their total number is bounded.

Once we also bound the number of bad pairings, we can relate the number of zig-zigs to the work done by the optimal tree: the number of zig-zigs exceeds the cost of the optimal tree by at most a factor of $\tilde{O}(\log\log n)$, plus an additional factor of $\tilde{O}(n\log\log n)$ for initialization.

\paragraph{Relating splaying to pairing heaps and using the core of pairing heap analysis to bound bad pairings.}
To bound the bad pairings, we recall the heap view of heavy paths and realize that splaying behaves similarly to how delete-min behaves in pairing heaps.

In pairing heaps, delete-min works by first removing the minimum.
Afterwards, the individual children are correctly heap-ordered, but in separate heaps. 

Next, we perform a pairing step: we partition the child list into consecutive pairs, and for each pair, the lower-value child is kept in the list while the higher-value child is rehung below it. 
Finally, the assembly step makes the least-value child the new root~\cite{Fredman1986PairingHeap}.
In splay trees, the splayed vertex is not deleted but moved to the root. Among its heap-children on the right, every consecutive triplet is combined through two pairings: the heap-child with the lowest rank becomes the heap-parent of the other two. The right heap-child with the least rank remains a heap-child of the splayed vertex. An analogous process occurs to the left of the splayed vertex.
Hence, the core process on either side of the splayed vertex is nearly the same as that of delete-min in pairing heaps.
Therefore, we abstract the core of the analysis of the amortized complexity of pairing heaps~\cite{Sinnamon2025,PairingHeaps2026} into an interface
that also applies to splay trees. We show that, by modifying lazy intervals in a controlled way through a few allowed operations, the lazy intervals can be mapped to this interface. This allows us to bound the number of bad internal pairings in terms of the numbers of good internal pairings and boundary pairings, which is the last ingredient needed to bound the number of zig-zigs.

\paragraph{Bounding the number of zig-zags by bends and by normalizing the optimal tree.}
The last part is to bound the zig-zags. Unlike in the static case, the contraction does not allow us to bound zig-zags by the sum of contracted point gaps.
We bound them by introducing another combinatorial element -- \emph{bends}. A bend is an internal node where the heavy path ``turns'' -- it is a left child, but the heavy path continues to the right, or vice versa.
To charge zig-zags to bends and ensure that each splay creates only a controlled number of bends, we require, among other things, that the optimal tree be \emph{ribless}, i.e., that two vertices that are close in symmetric order also have similar heights in the tree. We show that we can impose these assumptions on the optimal tree while slowing it down by only a constant factor.
Then, every optimal-tree rotation creates $O(1)$ bends, and every splay creates $O(1)$ bends plus one for each zig-zig.
Every zig-zag removes one bend, so the total number of zig-zags can be asymptotically bounded by the number of zig-zigs, the cost of the optimal tree, and the maximum initial number $n$ of bends.

Overall, these ingredients suffice to prove $\tilde{O}(\log\log n)$-competitiveness for sequences of length $\tilde{\Omega}(n\log\log n)$.
The same techniques can also be used to prove the split conjecture within a factor of $\tilde{O}(\log\log n)$ (\Cref{app:split-conjecture}).

\section{Proof of splay tree competitiveness}

In this section, we prove that the splay tree is $O(\log\log n \cdot \log^2\log\log n)$-competitive; that is, we prove \Cref{thm:main}.
Note that we also formally define and re-introduce the concepts introduced informally in previous sections.

\begin{theorem}\label{thm:main}
    Assume a set $\mathcal{U}$ of $n$ elements  and an access sequence  $X = x_1, \dots, x_m$, with $x_i \in \mathcal{U}$.
    Let $T$ be the splay tree and let $\cost (T, X)$ be the time complexity in the BST model of the splay tree, starting from an arbitrary configuration, when it performs accesses as in $X$.
    Furthermore, let $T^\text{OPT}_X$ be the optimal tree for the sequence $X$, i.e., the dynamic tree that incurs the lowest cost among all trees.
    Then
    $\cost (T, X) = O((\cost (T^\text{OPT}_X, X) + n) \cdot \log\log n \cdot \log^2\log\log n)$.
\end{theorem}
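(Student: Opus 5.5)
The proof follows the roadmap of \Cref{sec:technical-overview}. Splay cost is split into zigs (at most one per access, so $O(m)\le O(\cost(T^\text{OPT}_X,X))$), zig-zigs, and zig-zags. I bound zig-zigs by a potential argument against a normalized optimal tree, and zig-zags by a separate bend-counting argument.

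\textbf{Step 1: normalize $T^\text{OPT}$.} I first replace $T^\text{OPT}_X$ by a tree $T'$ whose cost is $O(\cost(T^\text{OPT}_X,X)+n)$ and which has several properties. It accesses at the root, so $x_i$ is the root, of rank $0$, when the splay starts; hence every edge on the splay path is heavy. It has $O(\log n)$ depth, so all ranks and gaps are $O(\log n)$. It is ribless, so symmetric-order neighbours have similar depths. The standard simulation arguments, using the equivalence of BST models from \Cref{app:bst-models}, should give each property at constant overhead. From here on, ranks are depths in $T'$.

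\textbf{Step 2: potential for zig-zigs.} In the ternary view, I maintain the heavy-path decomposition and group each heavy path's heap-children into lazy intervals carrying interval gaps. The potential $\Phi$ is the sum over heap-children of the contracted point gap $\lceil 1+\log(1+\nonlazygap)\rceil$, which is $O(\log\log n)$ per node, so $\Phi_0=O(n\log\log n)$. I then check four cases.
\begin{itemize}
\item A rotation of $T'$ shifts the ranks of a contiguous symmetric-order interval. It is absorbed by creating $O(1)$ new lazy intervals, and $\Phi$ rises by $O(\log\log n)$.
\item Each zig-zig is decomposed into two pairings.
\item An internal pairing in which both point gaps have the same magnitude (a good pairing) lowers $\Phi$ by at least $1$, because the interval gap cancels.
\item A boundary pairing raises $\Phi$ by $O(\log\log n)$.
\end{itemize}
Boundary pairings at the edges of very large intervals (unimportant ones) are charged to internal pairings. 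Important boundary pairings are charged to the lazy interval they occur at; each interval receives $O(\log\log\log n)$-ish of them, and there are $O(\cost(T'))$ intervals.

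\textbf{Step 3: bad pairings (main obstacle).} Bad internal pairings can leave $\Phi$ unchanged. To handle them, I map each splay's restructuring on each side of the splayed node to a pairing pass of a pairing-heap delete-min, and the lazy-interval operations (create, split, insert, delete) to the allowed interface operations. The abstracted multipass/pure pairing-heap analysis of \cite{Sinnamon2025,PairingHeaps2026} then bounds bad pairings by $O(\log^2\log\log n)$ times (good pairings + boundary pairings + interface operations). The delicate part is verifying that splay behaves exactly as a pairing pass on consecutive triples and that every lazy-interval modification is legal in the interface; this is where the $\log^2\log\log n$ factor enters. Combining Steps 2 and 3 gives $\#\text{zig-zigs}=O((\cost(T')+n)\log\log n\log^2\log\log n)$.

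\textbf{Step 4: zig-zags via bends.} Take as potential the number of bends, at most $n$ initially. Each zig-zag over heavy edges removes one bend. Each splay creates $O(1)$ bends plus at most one per zig-zig. Each rotation of $T'$ creates $O(1)$ bends, which is where riblessness is used: only $O(1)$ heavy paths can change direction. Hence $\#\text{zig-zags}=O(\#\text{zig-zigs}+\cost(T')+n)$. Summing zigs, zig-zigs and zig-zags and substituting the bound $\cost(T')=O(\cost(T^\text{OPT}_X,X)+n)$ from Step 1 yields the theorem.
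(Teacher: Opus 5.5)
Your outline follows the paper's strategy: ternary view, ranks as depths in a normalized tree, heavy paths and lazy intervals with contracted point gaps, zig-zig $=$ two pairings, the pairing-heap interface for bad pairings, and bends for zig-zags. However, Step~1 omits a property that Step~2 depends on. The normalized tree must perform every rotation within depth $O(1)$ of its root (property~2 of \Cref{lemma:opt-constraints}). Logarithmic depth, riblessness and root access do not yield your claim that a rotation of $T'$ ``is absorbed by creating $O(1)$ new lazy intervals.''

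A rotation shifts the ranks of the segments $S_x$ and $S_y$. Every lazy interval whose heap-children straddle a segment border must be split, and every growing interval whose owner lies in a different segment from its heap-children must be converted. The owners of all these intervals are heap-ancestors of border nodes, so their number is bounded only by the ranks of the border nodes. Those ranks are $O(1)$ precisely because the rotated node has constant depth (\Cref{lemma:ranks-of-border-nodes}, \Cref{lemma:bounding-border-intervals}, \Cref{lemma:rotation-lazy-intervals}). For a unit-cost rotation at depth $d$ the argument gives only $O(d)$ new lazy intervals, and nothing in your normal form prevents $d=\Theta(\log n)$. Each lazy interval absorbs $O(\log G)$ important boundary pairings (\Cref{lemma:outer-ops-block}), and their inserts are multiplied by $O(G\log G)$ in the interface bound, so this costs a $\log n$ factor. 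Getting all these properties \emph{simultaneously} is also not a routine consequence of \Cref{app:bst-models}. \Cref{S:normal-form} combines the log-depth simulation of Bose et al., the ribless finger trees of Brodal and Rysgaard, and a new riblessness-preserving version of the Cleary--Taback restricted-rotation bound.

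Second, you use riblessness in the wrong place. It is not what makes a rotation of $T'$ create $O(1)$ bends; in the paper that also comes from the constant rotation depth (\Cref{lemma:heavy-paths-rotation}). Riblessness is needed for the splays. By \Cref{cor:depth-by-riblessness}, the minimum-rank nodes $x^-$ and $x^+$ of the two subtrees of $x^\circ$ have rank $O(1)$. Hence undoing, after the splay, the assumption that they have the next-lowest ranks costs only $O(1)$ heavy-path junctions. This is what justifies your Step~4 claim that a splay creates $O(1)$ bends plus one per zig-zig. It also justifies the unstated fact that each splay creates only $O(1)$ new lazy intervals and paid operations, which you need for your interval count; that count should be $O(n+\cost(T'))$, not $O(\cost(T'))$.

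Finally, in Step~3 the $\log^2 G$ factor does not come from the interface alone. It comes from $O(\log G)$ important boundary pairings per lazy interval times the $O(G\log G)$ interface charge per insert. Charging unimportant boundary pairings to internal pairings is also circular, because their inserts feed the very bound on internal pairings to which they are charged. The paper breaks this circle by setting the unimportance threshold at $\kappa G\log G$, with $\kappa$ large relative to the interface constant (\Cref{lemma:bound-unimportant-pairings-by-internal}).
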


We remark that at least an additional additive term of $O(n)$ is necessary because we do not assume any initial structure for the splay tree. Thus, the first access itself can require $\Omega(n)$ time, while the optimal tree can still serve it in constant time.

The $o(n\log n)$ additive term in \cref{thm:main} makes it possible to apply our techniques to the split conjecture for splay trees as well. There, the goal is to match the $\Theta(n)$ cost of the optimal tree, so any large additive factor would hinder this application. We prove the split conjecture with total cost $O(n \log\log n \log^2\log\log n)$ in \Cref{app:split-conjecture}.

\paragraph{Ternary tree view of the splay tree.}
To simplify all proofs later on, we consider the following \emph{ternary view} of the splay tree that ensures that only leaves store values, and furthermore, every non-leaf has exactly three children.
To obtain it from the standard binary view, for every node $v$, we add an additional middle child where $v$'s original value is moved, so internal vertices no longer hold any value. The nodes added this way are called \emph{value nodes}. We keep $v$'s original left and right child.

Furthermore, to ensure that every node has exactly three children, we add $n+1$ \emph{null nodes}
$\perp_0, \perp_1, \dots \perp_n$, where $\perp_i$ falls in between $i$ and $i+1$ in the symmetric order, $\perp_0$ is left of $1$ and $\perp_n$ is right of $n$. Observe that for any tree, there is a unique position of the null nodes. One can imagine the null node $\perp_i$ corresponding to value $i + 1/2$.
Null nodes and value nodes are collectively called \emph{external nodes}, while all other nodes are called \emph{internal}. An illustration of the ternary view is in \Cref{fig:overview:ternary-view}.
The splays of the tree work the same way as in the binary view, never detaching the value node
from its parent internal node. Since the null nodes are never accessed, rotations corresponding to the splay never attach any children to null nodes. A rotation in the ternary view is depicted in \Cref{fig:overview:rotation}.

Similarly as for $T$, we add the null nodes $\perp_i$ to the optimal tree $T^\text{OPT}$. However, we do not impose the ternary view on the optimal tree.

\subsection{Restrictions on the optimal tree}\label{sec:constraints-on-opt-overview}

Our proof of the competitiveness of splay trees proceeds by relating the splay tree to the optimal tree. However, we require several properties of the tree used in this comparison. Therefore, we do not relate the splay tree to the optimal tree $T^\text{OPT}$ itself, but rather to a nearly optimal tree $T^*$ whose complexity is only a constant factor larger than that of $T^\text{OPT}$ and which has the following properties simultaneously at all times.

\begin{enumerate}
	\item $T^*$ is \emph{ribless}, meaning that all paths that go only to left children, as well as all paths that go only to right children, have length $O(1)$, except possibly for the two spines from the root toward the left-most and right-most elements.
	\item All rotations in $T^*$ happen within constant depth of the root.
	\item The depth of $T^*$ is $O(\log n)$.
\end{enumerate}

The existence of such a tree $T^*$ is implied by \Cref{lemma:opt-constraints} and is proved in \Cref{S:normal-form}.

\begin{restatable}{theorem}{normalform}   
\label{lemma:opt-constraints}
For any access sequence $X = x_1,\dots,x_m$ from $\mathcal{U} = [n]$, there exists a sequence of trees $T_0,T_1,\dots,T_\ell$ and a sequence of times $t_1 < \dots < t_m \in [\ell]$,
where $\ell=\OO(\cost(T^\text{OPT}_X))$, such that
\begin{enumerate}
    \item For all $i \in [\ell]$, $T_i$ is a binary search tree on $[n]$ that is ribless of depth $\OO(\log n)$,
    \item For all $i \in [\ell]$, $T_{i}$ is obtained from $T_{i-1}$ by a single rotation at depth at most $\OO(1)$,
    \item For all $j \in [m]$, $T_{t_j}$ has root $x_j$.
\end{enumerate}
\end{restatable}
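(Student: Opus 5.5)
The plan is to build the sequence $T_0,\dots,T_\ell$ in three layers. Each layer takes a valid execution and adds one of the three properties, while multiplying the cost by at most a constant and not destroying the properties obtained earlier.

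\textbf{Step 1: reduce to a ``rearrange the touched subtree'' model.} We replace the rotation model by the equivalent model in which, at access $x_j$, an offline BST may replace a connected subtree $S_j$ containing the root by any BST on the same keys, at cost $\Theta(|S_j|)$. Up to constant factors this is equivalent to $\cost(T^\text{OPT}_X)$; this is the standard equivalence between rotations and re-arrangement of touched sets, and \Cref{app:bst-models} handles the root-access variant. We fix an optimal execution in this model with $\sum_j |S_j| = O(\cost(T^\text{OPT}_X))$ and $x_j$ at the root of the rearranged tree.

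\textbf{Step 2: depth $O(\log n)$.} We maintain a representation tree $T^*$ of the current optimal tree $T$. Decompose $T$ into heavy paths by subtree size. Store each heavy path as a biased search tree whose weights are the sizes of the hanging subtrees, using a biased structure with the finger property, so that a node at position $p$ from the top of its path has depth $O(\min(p, \log(W/w)))$. Telescoping over light edges bounds the depth of $T^*$ by $O(\log n)$. Summing the finger bound along a root path bounds the depth of any node of $T$-depth $d$ by $O(d)$. Hence the region of $T^*$ that represents $S_j$ is a connected top part of size $O(|S_j|)$. We then check that rebuilding the representation after $S_j$ is replaced costs $O(|S_j|)$ amortized. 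The only paths whose weights change are those meeting $S_j$, and below $S_j$ only subtree sizes of whole hanging subtrees move, so we can charge the changes to a standard biased-tree potential of $O(\log n)$ per node. This potential is bounded by $O(n\log n)$ initially, and the initial potential can be absorbed because $T_0$ may be chosen optimally. I expect this amortization to be the main obstacle. It is the one place where the argument must beat the naive $O(\log n)$ factor, and making it work requires the finger property and the weight bound to be compatible simultaneously. If this fails, the fallback is to rebuild lazily: reconstruct a subtree of $T^*$ only when its size doubles or halves, and charge the rebuild to the accesses that caused the size change.

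\textbf{Step 3: riblessness and rotations near the root.} Any tree rearrangement of a top subtree of size $k$ can be realized by $O(k)$ rotations. The task is to schedule these rotations at constant depth and to keep ribs short. For the depth condition, we realize a rearrangement by a top-down procedure in the style of top-down splaying. First peel the top part into the root by rotating its nodes up one at a time, so each rotation is applied at depth $O(1)$ relative to the current root. Then rebuild the top part from the root downward in the target shape. Each element of $S_j$ is rotated $O(1)$ times, so the cost stays $O(|S_j|)$. For riblessness, we insist that the target shape of every rebuilt top part, and the biased trees of Step 2, are chosen so that alternation is forced. Concretely, we replace every monotone chain of length more than a constant $c$ by a balanced zig-zag shape of the same keys. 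This does not increase depth by more than a constant factor and is preserved under subsequent rearrangements, since untouched subtrees keep their shape. One subtlety remains: during the peel-and-rebuild phase the peeled elements form the two root spines. These are exempt from the rib condition by definition, and this is exactly why the exemption appears in the theorem. Finally we record $t_j$ as the moment $x_j$ sits at the root. Since $\ell$ equals the total number of rotations, $\ell=O(\cost(T^\text{OPT}_X))$.
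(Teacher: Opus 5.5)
Your ingredients match the paper's: the touched-subtree model, heavy paths stored in biased structures as in Bose et al., rotations near the root as in Cleary--Taback, and folding of ribs as in Brodal--Rysgaard. However, the piece that carries the whole $\OO(1)$ amortization is assumed rather than supplied, and the order in which you apply the steps breaks the depth invariant.

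\textbf{Step~2.} You need a heavy-path representation in which replacing the top portion of a path costs $\OO(1)$ amortized per changed node, while the biased bound $\OO(\log(W/w))$ survives the change in the path's total weight $W$.
\begin{itemize}
\item Charging to a standard biased-tree potential of $\OO(\log n)$ per affected node gives $\OO(|S_j|\log n)$ per access. That is exactly the trivial bound you set out to beat.
\item The fallback fails too. Let the root $r$ have left child $c$, let $c$ have a left subtree of size about $n/2$, and let $r$ have a right subtree of size about $n/2$. Alternately accessing $c$ and $r$ costs the optimal tree one rotation per access. Yet every access halves or doubles the subtree sizes of $r$ and $c$, so rebuilding on doubling or halving costs $\Theta(n)$ per access.
\item An initial potential of order $n\log n$ is not harmless. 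The statement has no additive term, and the main theorem could not afford one of that size.
\end{itemize}
The paper closes this gap by importing the pop-tart (biased stack) of Bose et al., which gives $\OO(1)$-amortized push and pop at the top together with weight-biased depth. It also keeps the finger at the root, so that $x_j$ is the root of the representation. In your Step~2 tree the root is the root of the biased tree of the root heavy path, which need not be $x_j$, so property~3 also needs an argument.

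\textbf{Step~3, depth.} By your own description, peel-and-rebuild puts the whole region, of size $\Theta(|S_j|)$, onto the root spines. Since $|S_j|$ can be far larger than $\log n$ (even $\Theta(n)$), the intermediate trees violate the depth bound of property~1. The spine exemption covers riblessness, not depth.

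The paper avoids this through the order of its steps:
\begin{itemize}
\item It first applies \Cref{L:restricted-rotations} to the optimal tree itself, where no depth bound is needed.
\item The log-depth simulation therefore only ever processes single rotations at depth $\OO(1)$.
\item Each such rotation becomes $\OO(1)$ pop-tart operations, each modifying a root subtree of amortized size $\OO(1)$ and worst-case size $\OO(\log n)$.
\item Only these small subtrees are rebuilt by rotations near the root, so transient spines have length $\OO(\log n)$.
\end{itemize}

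\textbf{Step~3, ribs.} Your rib scheme fails as stated.
\begin{itemize}
\item A balanced replacement of a length-$k$ rib puts its top, and the subtree hanging there, at depth $\Theta(\log k)$. A root path in a depth-$\OO(\log n)$ tree can enter $\Theta(\log n)$ ribs of length $\Theta(\log n)$ at their tops, so depth grows by a $\log\log n$ factor.
\item What is needed is a fully ribless replacement in which the $i$-th node from the top of the rib has depth $\OO(i)$, supporting $\OO(1)$-amortized push and pop at the top. That is the Brodal--Rysgaard tree.
\item The folded shape is also not preserved merely because untouched subtrees keep their shape. Ribs cross the boundary between the rebuilt top and untouched subtrees and can grow by concatenation.
\item Rotations near the root change proper ribs as well, not only the spines.
\end{itemize}
Handling these is the role of \Cref{L:folding-ribless} and of the riblessness-preserving version of Cleary--Taback in \Cref{L:restricted-rotations}.
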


\Cref{lemma:opt-constraints} guarantees the existence of the nearly optimal tree $T^*$ without null nodes. However, we can always add null nodes to $T^*$ while increasing both the depth and the length of every rib by only one. Therefore, $T^*$ is guaranteed to exist with null nodes as well.
From now on, we relate the splay tree to this nearly optimal tree $T^*$.

The need for riblessness is explained by the following lemma, which we later use to bound the number of operations on the splay tree.

\begin{corollary}\label{cor:depth-by-riblessness}
	The depths of  $x$, $\perp_{x-1}$, and $\perp_x$,
	in $T^*$ differ only by $O(1)$ for any $x$.
\end{corollary}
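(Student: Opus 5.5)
The plan is to show that in $T^*$ (with null nodes added) the three nodes $\perp_{x-1}$, $x$, $\perp_x$ lie on a single root-to-leaf path, and that this path segment has length $O(1)$ by riblessness.

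\emph{Placing the null nodes.} With null nodes added, the null nodes are exactly the leaves of $T^*$. The null node $\perp_{x-1}$ is the left-most leaf of the left subtree of $x$. It is reached from $x$ by going to the left child once and then only to left children. Symmetrically, $\perp_x$ is reached from $x$ by one step to the right child and then a maximal path of right children. (If $x$ has no left child in the original tree, then $\perp_{x-1}$ is simply the left child of $x$ after null nodes are added.) Hence $\depth(\perp_{x-1}) - \depth(x) = 1 + \ell$, where $\ell$ is the length of the all-right path starting at the left child of $x$. Similarly, $\depth(\perp_x) - \depth(x) = 1 + r$, where $r$ is the length of the all-left path from the right child of $x$. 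This leaf-placement step should be argued directly from the symmetric-order characterization of null nodes: $\perp_{x-1}$ is the symmetric-order predecessor of $x$ among leaves, and the predecessor lies at the end of that right spine.

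\emph{Bounding $\ell$ and $r$.} By property 1, every path going only to right children, or only to left children, has length $O(1)$, except the two outer spines from the root. As remarked after \Cref{lemma:opt-constraints}, adding null nodes lengthens each rib by at most one, so the bound still holds.

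The main obstacle is checking that the path we use is never one of the exempted spines. The right spine from the left child of $x$ consists of keys smaller than $x$ whose path passes through $x$ as a left-subtree ancestor. So it cannot be part of the right spine from the root toward the right-most element, since that spine consists of ancestors-free right turns from the root. It also cannot be the left spine, since it has the wrong direction. The only delicate case would be if the definition of the spines includes subpaths of them. A right-only path lying inside the left subtree of $x$ is never a subpath of the root's right spine, because that spine never enters a left subtree. So the exemption does not apply, and $\ell, r = O(1)$.

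Combining the two steps, the depths of $\perp_{x-1}$ and $\perp_x$ each exceed the depth of $x$ by a positive constant. Hence all three depths differ by $O(1)$.
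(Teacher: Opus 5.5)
Your argument is the same as the paper's: $\perp_{x-1}$ is reached from $x$ by one step to the left child followed by an all-right path (and symmetrically for $\perp_x$). That path lies inside the left subtree of $x$, so it is a proper rib rather than an exempted spine, and riblessness bounds its length by $O(1)$. One slip to fix: your first paragraph says $\perp_{x-1}$ is the \emph{left}-most leaf of the left subtree, reached by continuing to left children, and that $\perp_x$ is reached by continuing to right children. Both are backwards, since $\perp_{x-1}$ is the right-most leaf of the left subtree of $x$. Your depth formulas and the rest of the argument already use the correct directions.
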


\begin{proof}
		Fix $x$, and let $x^L$ be the left child of $x$.
	    If $x^L = \perp_{x-1}$, the claim follows immediately. Otherwise, $\perp_{x-1}$ is the largest element in the subtree of $x^L$, so there can be no left edge on the path from $x^L$ to $\perp_{x-1}$. Hence, this path contains only right edges and therefore forms a right rib or a spine.
	    Since $x$ lies to the right of this rib, it cannot be a spine to the maximum and hence is a proper rib. By riblessness, the path to the parent of $\perp_{x-1}$ cannot be longer than $O(1)$. Therefore, the depths of $x$ and $\perp_{x-1}$ differ by at most $O(1)$.
		The argument for $\perp_x$ is analogous: it is the smallest element in the subtree of the right child of $x$, so the path to it uses only left edges and can be only an additive $O(1)$ deeper.
\end{proof}

\subsection{Heavy paths and lazy intervals}

In order to prove the competitiveness of splay trees, we need several combinatorial structures defined on the splay tree with respect to the state of the optimal tree at any given moment.
The principal structure is a decomposition of the splay tree into heavy paths, where each heavy path always leads to the element closest to the root in the optimal tree (\Cref{sec:heavy-light}). The heavy paths are used to define gaps, which are at the core of explaining why splay operations can be paid for.
Since these gaps can change for many heavy paths upon a rotation in the optimal tree,
lazy intervals are defined to handle these global changes (\Cref{sec:lazy-intervals}).
Lazy intervals are central to our bound on the operations performed by a splay tree. We define allowed operations on lazy intervals (\Cref{sec:lazy-interval-operations}) and, in later sections, use them to show that every zig-zig performs at least one such operation and that the total number of these operations is small.

\subsubsection{Decomposition into heavy paths and light edges}\label{sec:heavy-light}

For every external node $v$ of $T$, we define its \emph{rank} as its depth in the nearly optimal tree $T^*$, i.e., $\rank(v) = \depth_{T^*}(v)$.
In particular, since the depth of $T^*$ is logarithmic by \Cref{lemma:opt-constraints}, the ranks themselves are in $O(\log n)$.
Based on the ranks, we classify all edges as heavy or light. An edge $uv$ is \emph{heavy} if the minimum rank in the subtree of $v$ is the same as the minimum rank in the subtree of $u$; otherwise, it is \emph{light}. We define \emph{heavy paths} as the connected components that remain after all light edges are removed. The heavy path that contains the root is called the \emph{root heavy path}. For an illustration of ranks and heavy paths, see \Cref{fig:overview:heavy-light}.
The following folklore observation is crucial to ensure that heavy paths are indeed paths.

\begin{lemma}\label{lemma:unique-depth-minimum}
	For any tree $T^*$ and any consecutive interval of external nodes of the splay tree in the symmetric order $[a, b]$, there is a unique node $v \in [a, b]$ with minimum rank.
\end{lemma}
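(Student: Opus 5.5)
Existence of a minimum is trivial since the interval is finite, so the plan is to prove uniqueness by contradiction. I would use the standard BST fact that the lowest common ancestor of two keys lies between them in symmetric order. Suppose two distinct external nodes $u < v$ in $[a,b]$ both attain the minimum rank $r$. Since the rank of an external node is its depth in $T^*$ (with null nodes included), $u$ and $v$ are two distinct nodes of $T^*$ at the same depth $r$. Neither is an ancestor of the other, because their depths are equal. Let $w$ be their lowest common ancestor in $T^*$. Then $u$ lies in the left subtree of $w$ and $v$ in the right one, so by the ordering property $u < w < v$ in symmetric order. Also $\depth_{T^*}(w) < r$.

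The next step is to check that $w$ is itself one of the nodes in the interval. This is where the setup matters: every node of $T^*$, whether a value $i$ or a null node $\perp_i$, corresponds to exactly one external node of the splay tree in its ternary view, and the symmetric orders agree (value node $i$ sits at position $i$, and $\perp_i$ at position $i+1/2$). Since $[a,b]$ is a consecutive interval of external nodes in symmetric order and $a \le u < w < v \le b$, the node $w$ belongs to $[a,b]$. Its rank is $\depth_{T^*}(w) < r$, which contradicts the minimality of $r$. Hence the minimizer is unique.

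The one point needing care is this correspondence step. I must make sure that $T^*$ contains the null nodes as genuine nodes, which the paper arranges by adding them to $T^*$. I must also make sure the interval is taken over all external nodes, both value and null, so that the lowest common ancestor, which may be a value node or, if null nodes are internal in $T^*$ (they are leaves after insertion, so in fact $w$ is a value node), still lies in the interval. Once this bookkeeping is fixed, the argument is immediate.
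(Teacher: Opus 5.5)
Your proof is correct, and it argues differently from the paper. The paper uses induction on the size of $T^*$. If the root of $T^*$ lies in $[a,b]$, it is the unique node of depth $0$ and hence the unique minimizer. Otherwise the consecutive interval lies entirely inside one subtree of the root, where every depth is shifted by exactly one, and the induction hypothesis applies.

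You instead argue by contradiction. Two distinct minimizers at equal depth are incomparable, so their lowest common ancestor lies strictly between them in symmetric order. It is therefore inside any consecutive interval containing both, and it is strictly shallower. This avoids the depth-shift bookkeeping of the induction. Pushed one step further, it also names the minimizer directly: it is the lowest common ancestor of $a$ and $b$. That node lies in $[a,b]$, and its subtree is an interval containing $a$ and $b$, so it is an ancestor of every node in $[a,b]$. The paper's induction is more self-contained, since it needs no LCA fact, and it is essentially the same root-to-interval descent unrolled.

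One small point: the parenthetical in your last paragraph is garbled and not needed. Every node of $T^*$, value or null, corresponds to an external node of the splay tree at the same symmetric position, so the argument goes through whatever kind of node $w$ is. Your side remark that null nodes are leaves, so $w$ is in fact a value node, is correct.
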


\begin{proof}
	The external nodes of the splay tree correspond to all nodes of the optimal tree.
    We prove the claim by induction on the number of elements in the optimal tree. 
    Consider some tree $T^*$ and an interval $[a, b]$.
	    If $T^*$ has only a single vertex, or if the root of $T^*$ is contained in the interval, then it is the unique minimizer. Otherwise, the interval is contained entirely in either the left subtree or the right subtree.
	    Let $T'$ be the subtree.
	    The ranks of nodes in $T'$ are exactly one larger than they would be if $T'$ were the whole tree, which preserves uniqueness of the minimum.
	    By the induction hypothesis, any interval in $T'$ has a unique minimizer, so the corresponding interval in the respective subtree does as well.
\end{proof}

\begin{corollary}\label{cor:heavy-paths}
		For every internal node of $T$, there is exactly one outgoing heavy edge.
\end{corollary}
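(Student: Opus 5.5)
The plan is to argue directly from the definition of heavy edges together with \Cref{lemma:unique-depth-minimum}. Fix an internal node $u$ of $T$. In the ternary view $u$ has exactly three children $c_1, c_2, c_3$, one of them its value node. The external nodes in the subtree of $u$ form a consecutive interval $[a,b]$ in symmetric order, and this interval is the disjoint union of the consecutive intervals $I_1, I_2, I_3$ of external nodes in the subtrees of $c_1, c_2, c_3$. Every subtree of $T$ contains at least one external node, because every internal node has three children and leaves are external. Hence each $I_k$ is nonempty and its minimum rank $m_k$ is well defined.

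First I check that ``minimum rank in a subtree'' means minimum over the external nodes in it, since only external nodes carry ranks. Then $\min(m_1,m_2,m_3)$ equals the minimum rank $m_u$ over $[a,b]$. So at least one $k$ has $m_k = m_u$, and the edge $uc_k$ is heavy. This gives existence.

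For uniqueness, apply \Cref{lemma:unique-depth-minimum} to the interval $[a,b]$: there is a unique external node $v\in[a,b]$ with $\rank(v)=m_u$. Suppose two children $c_k \neq c_{k'}$ both had $m_k = m_{k'} = m_u$. Then $I_k$ would contain an external node of rank $m_u$, and so would $I_{k'}$. Since $I_k$ and $I_{k'}$ are disjoint, these are two distinct nodes of $[a,b]$ attaining the minimum, a contradiction. Therefore exactly one outgoing edge of $u$ is heavy.

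There is no real obstacle here; the only point needing care is to justify that subtree intervals are consecutive and that they partition the parent's interval. Both follow from the ordering property of the BST and the placement of value and null nodes in the ternary view.
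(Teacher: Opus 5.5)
Your proof is correct and matches the paper's argument: the internal node's subtree is a consecutive interval of external nodes, partitioned by its three child subtrees, and \Cref{lemma:unique-depth-minimum} gives a unique rank minimizer, which lies in exactly one child subtree. The extra checks you add (child intervals are nonempty, and ranks are taken only over external nodes) are details the paper leaves implicit.
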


\begin{proof}
		The subtree of every internal node $z$ corresponds to an interval in the symmetric order. Hence, by \Cref{lemma:unique-depth-minimum}, there is exactly one external node minimizing the rank. Since the child subtrees of $z$ partition the interval into disjoint subintervals, exactly one of them contains the minimum-rank node. Thus, there is exactly one outgoing heavy edge, and it goes to this child.
\end{proof}

\Cref{cor:heavy-paths} has several important implications. First, heavy paths are indeed paths. Second, since the bottom-most node of a heavy path cannot be an internal node, the bottom-most node of every heavy path is an external node. This is the only external node on this path, since external nodes have no outgoing edges. Therefore, there is a bijection between external nodes and heavy paths.
For a heavy path $P$ ending in an external node $v$, we denote $\Path(v) \coloneq P$ and $\bottom(P) \coloneq v$.
When it is clear from the context, for convenience, we sometimes use the heavy path and the bottom-most vertex interchangeably.
Furthermore, let $\top(P)$ denote the topmost node of $P$. We say that the rank of a heavy path is the rank of its bottom-most node.
Note that some heavy paths may contain no internal nodes and consist solely of their corresponding external node.
The concept of heavy paths and light edges is illustrated in \Cref{fig:overview:heavy-light}.

\paragraph{Relations between heavy paths.}
Every light edge connects two different heavy paths with bottom-most vertices $v_\text{top}$ and $v_\text{bottom}$, with the top one having strictly smaller rank than the bottom one.
Thus, if we imagine another tree in which every light edge induces an edge from $v_\text{top}$ to $v_\text{bottom}$, this new tree is heap-ordered by the ranks of the heavy paths.
We call this tree the heap view of the heavy paths and it is illustrated in \Cref{fig:overview:heap-children-heap-parents}.
We say that $v_\text{top}$ is the \emph{heap-parent} of $v_\text{bottom}$ and that $v_\text{bottom}$ is a \emph{heap-child} of $v_\text{top}$.
	It is a \emph{left heap-child} if $v_\text{bottom}$ is left of $v_\text{top}$ in the symmetric order, and a right heap-child otherwise.
For simplicity, we also say that heavy paths are heap-children and heap-parents when we mean that their bottom-most nodes are heap-children and heap-parents.
Equivalently, left heap-children are those that are connected by a light edge to its heap-parent at an internal node, where the heavy path goes to the right or if the light edge is left in case the heavy path goes via the middle edge; vice versa for right heap-children.
As a heavy path is connected as a heap-child only from its topmost node, every heavy path has exactly one heap-parent except for the root heavy path, which has no heap-parent. A heavy path can have multiple heap-children.
We denote by $\heapparent(P)$ the heap-parent of the heavy path $P$, and by $\leftheapchildren(P)$, $\rightheapchildren(P)$, and $\heapchildren(P)$ the sets of its left, right, and all heap-children, respectively. We consider the left, right, and all heap-children to be ordered top-down by the order in which they hang off $P$, with heap-children connected through middle edges considered closer to the bottom than those connected through left or right edges.
On each side, top-down order is monotone in symmetric order: for left heap-children it agrees with increasing symmetric order, while for right heap-children it agrees with decreasing symmetric order. Hence a consecutive interval of left or right heap-children is still consecutive in symmetric order.

\begin{definition}[Gap]
	We define the \emph{gap} of a non-root heavy path $P$ as $\gap(P) = \rank(P) - \rank(\heapparent(P))$ and $\gap(P) = 0$ if $P$ is the root heavy path.
\end{definition}

Next, we define \emph{heap-descendants} of a node as itself, all its heap-children and all the heap-children of the heap-children and so on. The heap-descendants of $P$ are all the bottom-most vertices of heavy paths corresponding to the external nodes within the subtree rooted at  $\top(P)$. Likewise, we define \emph{heap-ancestors} of $P$ as all the heavy paths that have $P$ as their heap-descendant.
Heap-descendants are the descendants in the heap view, while heap-ancestors are the ancestors in this view; see \Cref{fig:overview:heap-children-heap-parents}.
There is one important property of heap-ancestors. A heavy path of rank $R$ has at most $R+1$ heap-ancestors, because moving from a heap-child to its heap-parent strictly decreases rank, and ranks are integral and nonnegative.

\paragraph{Bends.} The last structural elements of a heavy path that we need are bends. A node is called a \emph{bend} if it is not the top-most node of the heavy path on which it lies and either it is the right child of its parent and the heavy path continues to its left child, or it is the left child of its parent and the heavy path continues to its right child. Note that only internal nodes can be bends, as external nodes do not have children.
Later, we use the total number of bends ever created to argue that the number of zig-zags is small.

\begin{figure}[p]
\centering

\begin{tikzpicture}[
    vertex/.style={
        circle,
        draw,
        fill=white,
        minimum size=6mm,
        inner sep=0pt
    },
    childsquare/.style={
        rectangle,
        draw,
        fill=white,
        minimum size=5mm,
        inner sep=0pt
    },
    trig/.style={
        isosceles triangle,
        draw,
        fill=white,
        minimum height=8mm,
        minimum width=8mm,
        inner sep=0pt
    },
    left interval brace/.style={
        decorate,
        decoration={brace,mirror,amplitude=5pt,raise=2pt}
    },
    right interval brace/.style={
        decorate,
        decoration={brace,amplitude=5pt,raise=2pt}
    },
    interval label left/.style={
        midway,
        below=8pt,
        left=8pt
    },
    interval label right/.style={
        midway,
        below=8pt,
        right=8pt
    }
]


\node[vertex] (n1) at (-4.0,  0.0) {};
\node[vertex] (n2) at ( 3.4, -1.2) {};
\node[vertex] (n3) at ( 2.6, -2.6) {};
\node[vertex] (n4) at (-2.6, -4.0) {};
\node[vertex] (n5) at (-1.8, -5.4) {};
\node[vertex] (n6) at ( 1.4, -6.8) {};
\node[vertex] (n7) at ( 0.8, -8.2) {};
\node[vertex] (n8) at (-0.4, -9.6) {};


\draw[line width=2pt]
    (n1) -- (n2)
    -- (n3)
    -- (n4)
    -- (n5)
    -- (n6)
    -- (n7)
    -- (n8);


\foreach \i in {1,...,7}{
    \node[childsquare] (s\i) at ($(n\i)+(0,-1.0)$) {};
    \draw[dashed] (n\i) -- (s\i);
}

\node[childsquare] (s8) at ($(n8)+(0,-1.0)$) {};
\draw[line width=2pt] (n8) -- (s8);


\foreach \i in {1,4,5}{
    \node[trig,shape border rotate=90] (l\i)
        at ($(n\i)+(-1.4,-1.3)$) {};

    \draw[dashed] (n\i) -- (l\i.apex);

    \draw[line width=1.4pt]
        ($(l\i.center)+(0.15,-0.18)$)
        -- ++(-0.12,0.12)
        -- ++( 0.12,0.12)
        -- ++(-0.12,0.12)
        -- (l\i.apex);
}


\foreach \i in {2,3,6,7}{
    \node[trig,shape border rotate=90] (r\i)
        at ($(n\i)+(1.4,-1.3)$) {};

    \draw[dashed] (n\i) -- (r\i.apex);

    \draw[line width=1.4pt]
        ($(r\i.center)+(0.15,-0.18)$)
        -- ++(-0.12,0.12)
        -- ++( 0.12,0.12)
        -- ++(-0.12,0.12)
        -- (r\i.apex);
}


\node[trig,shape border rotate=90] (l8)
    at ($(n8)+(-1.4,-1.3)$) {};

\draw[dashed] (n8) -- (l8.apex);

\draw[line width=1.4pt]
    ($(l8.center)+(0.15,-0.18)$)
    -- ++(-0.12,0.12)
    -- ++( 0.12,0.12)
    -- ++(-0.12,0.12)
    -- (l8.apex);

\node[trig,shape border rotate=90] (r8)
    at ($(n8)+(1.4,-1.3)$) {};

\draw[dashed] (n8) -- (r8.apex);

\draw[line width=1.4pt]
    ($(r8.center)+(0.15,-0.18)$)
    -- ++(-0.12,0.12)
    -- ++( 0.12,0.12)
    -- ++(-0.12,0.12)
    -- (r8.apex);


\draw[left interval brace]
    ($(s1.south west)+(-2.4,1.75)$)
    --
    ($(l1.south west)+(-0.2,-0.5)$)
    node[interval label left] {$I^L_{2}$};

\draw[left interval brace]
    ($(l4.south west)+(-0.8,1.25)$)
    --
    ($(s8.south west)+(-2,-0.75)$)
    node[interval label left] {$I^L_{1}$};

\draw[right interval brace]
    ($(r2.south east)+(0.8,1.25)$)
    --
    ($(s3.south east)+(2.2,-0.75)$)
    node[interval label right] {$I^R_{2}$};

\draw[right interval brace]
    ($(r6.south east)+(0.8,1.25)$)
    --
    ($(s8.south east)+(2.6,-0.75)$)
    node[interval label right] {$I^R_{1}$};

\draw[right interval brace]
    ($(n2.north east)+(2.4,1.4)$)
    --
    ($(n2.north east)+(2.4,0.6)$)
    node[midway,right=12pt] {$I^R_{3}$};

\end{tikzpicture}

\caption{A possible partition of heap-children into lazy intervals. Each lazy interval covers a contiguous segment of either the left heap-children (intervals denoted by $L$) or the right heap-children (intervals denoted by $R$). $I^L_2$ and $I^R_3$ are the growing lazy intervals; note that $I^R_3$ is empty. All other intervals are shrinking. $I^L_1$ is the only broken lazy interval. There are multiple valid decompositions into lazy intervals.}
\label{fig:lazy-intervals}

\end{figure}

\subsubsection{Lazy intervals}\label{sec:lazy-intervals}

Consider a heavy path $P$. A \emph{left lazy interval} is a subset $\fI \subseteq \leftheapchildren(P)$ that is consecutive in the top-down ordering. $\fI$ is consecutive within $\leftheapchildren(P)$ but might be non-consecutive in $\heapchildren(P)$ if there is a right heap-child between two left heap-children in $\fI$. If this is the case, we call the lazy interval \emph{broken}. A \emph{right lazy interval} is defined analogously.
Although a lazy interval may be broken in the full top-down order of all heap-children, it is still consecutive in symmetric order: all left heap-children of $P$ lie left of $\bottom(P)$, all right heap-children lie right of it, and within each side the top-down order is monotone in symmetric order.
Therefore, as heap-descendants of a heavy path are a consecutive interval in the symmetric order, the union of heap-descendants over a lazy interval also always forms a consecutive interval in symmetric order.

Each heavy path $P$ maintains a partition of its heap-children into lazy intervals. Denote their set by $\intervals(P)$. For each $\fI \in \intervals(P)$, we call $P$ its \emph{owner}. Each non-root heavy path $P$ belongs to exactly one lazy interval, denoted by $\fI(P)$. We have $\fI(P) \in \intervals(\heapparent(P))$.

The top-most left and top-most right intervals on each heavy path are called \emph{growing} and they can be empty. All other lazy intervals are called \emph{shrinking}. We additionally maintain the invariant that there is no heap-child in a growing lazy interval that would appear lower in the top-down ordering of heap-children than some other heap-child from a shrinking interval.

The concept of lazy intervals is illustrated in \Cref{fig:lazy-intervals}.

\begin{definition}[Decomposition of gaps]
	The \emph{interval gap} of a shrinking lazy interval $\fI$ is defined as $\lazygap(\fI) = \min_{P \in \fI} \gap (P)$. The interval gap of a growing lazy interval is always $0$.
	The \emph{point gap} of a heavy path $P$ is defined as $\nonlazygap(P) = \gap(P) - \lazygap(\fI(P))$.
    For the root heavy path, the point gap is defined to be $0$.
    This decomposes each gap into two components.
    
\end{definition}

Throughout the analysis, we use a contracted version of the point gap, called the \emph{contracted point gap}.

\begin{definition}[Contracted point gap]
		We define the \emph{contracted point gap} of a heavy path $P$ as $\lceil1 + \log (1 + \nonlazygap(P))\rceil$.\footnote{All logarithms in this work are binary.}
\end{definition}

Furthermore, we define $\contractedgapbound = \max_t \lceil1 + \log (2 +\depth(T_t^*)) \rceil$, which is an upper bound on any contracted point gap at any time.

\subsubsection{Operations on lazy intervals}\label{sec:lazy-interval-operations}

After every splay and every rotation of the optimal tree, we are free to choose the precise structure of the intervals, as long as it is consistent with the definitions.

We do not, however, want to recreate the lazy intervals. The strategy of our proof relies on the fact that each zig-zig modifies the structure in a particular way and on bounding the zig-zigs by the total number of modifications to the structure.
For this, we need the changes of intervals to be controllable. If an interval has the same heap-children with the same heap-parent before and after operations, we simply say that this is the same lazy interval. If the lazy intervals are not the same, we argue that we can get from one setup to the other via the following operations.
Before presenting the allowed operations on lazy intervals, we define \emph{pairings}, the central operation: we later show that every zig-zig can be decomposed into two pairings.

\begin{definition}[Pairing]
	    A \emph{pairing} takes two right or two left heap-children with the same heap-parent and different ranks that appear consecutively in the top-down ordering of the heap-children. The heap-child $v$ with the smaller rank remains in its lazy interval, while the heap-child $w$ with the higher rank leaves its lazy interval and joins the left growing lazy interval of $v$ if it is to the left in the symmetric order; otherwise, it joins the right growing interval of $v$.
    $v$ is called the winner of the pairing, while $w$ is the loser.

We further define the following classification of pairings:
\begin{itemize}
    \item If $v$ and $w$ originally belonged to the same lazy interval, i.e., $\fI(v) = \fI(w)$,
          the pairing is called \emph{internal}.
          \begin{itemize}
	              \item If the contracted point gaps of $v$ and $w$ were the same, the internal pairing is called \emph{good}.
              \item Otherwise, the internal pairing is called \emph{bad}.
          \end{itemize}
	    \item If $v$ and $w$ originally belonged to different lazy intervals, i.e., $\fI(v) \neq \fI(w)$,
          the pairing is called \emph{boundary}.
          \begin{itemize}
	              \item If the loser's lazy interval is not broken and is large enough, i.e., $|\fI(w)| \geq \kappa \contractedgapbound\log(\contractedgapbound)$, where $\kappa$ is a constant that we set later, the boundary pairing is called \emph{unimportant}.
              \item Otherwise, the boundary pairing is called \emph{important}.
          \end{itemize}
\end{itemize}
\end{definition}

Note that the contracted point gap of the winner of any pairing never changes. The important property of internal pairings is that the contracted point gap of the loser does not increase. Furthermore, if the pairing is good, the loser's contracted point gap decreases, which will later allow us to bound good pairings by the total increase in contracted point gaps.

\begin{lemma}\label{lemma:internal-pairings-gap}
		Let $(v, w)$ be an internal pairing. Then the contracted point gap of $w$ does not increase. If the internal pairing $(v, w)$ is good, the contracted point gap of $w$ decreases by at least $1$.
\end{lemma}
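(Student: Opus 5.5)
The plan is to compute the loser's point gap before and after the pairing and compare the two values directly. Let $P$ be the common heap-parent of $v$ and $w$, and let $\fI=\fI(v)=\fI(w)$. Since $\rank(v)<\rank(w)$, after the pairing $w$ is a heap-child of $v$. Its new gap is therefore
\[
\gap'(w)=\rank(w)-\rank(v)=\gap(w)-\gap(v).
\]
Because $w$ joins a growing lazy interval of $v$, and growing intervals have interval gap $0$, the new point gap of $w$ equals this new gap. The first step is to check this in the pairing definition: $w$ always joins the left or right growing interval of $v$.

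The second step expresses the difference in terms of point gaps, using that $v$ and $w$ share the interval $\fI$:
\[
\gap(w)-\gap(v)=\nonlazygap(w)-\nonlazygap(v),
\]
because the common term $\lazygap(\fI)$ cancels. This holds whether $\fI$ is growing, where the interval gap is $0$, or shrinking, where it is the common minimum. Since $\fI$ is shrinking or growing with interval gap at most the minimum gap, every point gap in $\fI$ is nonnegative. Thus $\nonlazygap(v)\ge 0$, and so
\[
\nonlazygap'(w)=\nonlazygap(w)-\nonlazygap(v)\le\nonlazygap(w).
\]
The contraction $x\mapsto\lceil 1+\log(1+x)\rceil$ is monotone, so the contracted point gap of $w$ does not increase. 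The new point gap is also at least $1$, since the ranks differ; this is not needed here.

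For the good case, let $a=\nonlazygap(v)$ and $b=\nonlazygap(w)$. Goodness means $\lceil 1+\log(1+a)\rceil=\lceil 1+\log(1+b)\rceil=k$. Then $1+a$ and $1+b$ both lie in $(2^{k-2},2^{k-1}]$, so $1+a>2^{k-2}\ge(1+b)/2$. Hence
\[
1+(b-a)=(1+b)-(1+a)+1<(1+b)/2+1.
\]
I then want $1+(b-a)\le 2^{k-2}$, which gives contracted value at most $k-1$. The main obstacle is the integer off-by-one at the dyadic boundary. Working with integers, $1+a\ge 2^{k-2}+1$ and $1+b\le 2^{k-1}$, so $b-a\le 2^{k-2}-1$ and $1+(b-a)\le 2^{k-2}$, as required. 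The degenerate small case $k=1$ forces $a=b=0$, which is impossible since $b-a=\rank(w)-\rank(v)\ge 1$. So goodness implies $k\ge 2$, and the bound above applies.
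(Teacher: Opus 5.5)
Your proof is correct and matches the paper's argument step for step. The new point gap of $w$ equals $\rank(w)-\rank(v)$ because growing intervals have interval gap $0$; the common interval gap of $\fI$ then cancels, leaving $\nonlazygap(w)-\nonlazygap(v)$, and nonnegativity of $\nonlazygap(v)$ plus monotonicity of the contraction gives the first claim. For the good case you use the same integrality bounds $2^{k-2}\le a,b\le 2^{k-1}-1$, with $k\ge 2$ forced by the distinct ranks, to get $1+(b-a)\le 2^{k-2}$, exactly as the paper does.
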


\begin{proof}

	    Denote by $\fI$ the lazy interval in which the internal pairing happens, and let $u$ be the owner of $\fI$, hence the heap-parent of $v$ and $w$ before the pairing.
	    Denote by $x$ and $x'$ the point gaps of $w$ before and after the pairing, respectively, and let $y$ be the point gap of $v$, which does not change.
	    Then, since a growing lazy interval has zero interval gap, the following identity holds:
    \[
		 x'                                                               
		 = \rank(w) - \rank(v)                                                  
		 = \rank(w) - \rank(u) - \lazygap(\fI) - (\rank(v)  - \rank(u) - \lazygap(\fI))
		 = x -  y                     
	\]

	    It follows that
        \[
		 \lceil 1 + \log (1 + x') \rceil = \lceil 1 + \log (1 + x -  y)\rceil
	\]
		and since $y$ is non-negative,
            \[
		 \lceil 1 + \log (1 + x') \rceil = \lceil 1 + \log (1 + x -  y)\rceil \leq \lceil 1 + \log (1 + x)\rceil
	\]
    which shows that the contracted point gap of $w$ cannot increase.

		If $(v, w)$ is a good pairing, let
	$k=\left\lceil 1+\log(1+x)\right\rceil
		=\left\lceil 1+\log(1+y)\right\rceil$ their contracted point gap before the pairing.

	    We must have $k \geq 2$, as otherwise $x=y=0$, contradicting the distinct ranks of $w$ and $v$.
		Next, since \(x,y\in\mathbb Z\),
	\[
		2^{k-2}\leq x,y\leq 2^{k-1}-1
	\]
	\[
		1+x-y\leq 1+(2^{k-1}-1)-2^{k-2}
			=2^{k-2}
	\]
	\[
		1+\log(1+x-y)\leq k-1
	\]
	    By the identity at the beginning of the proof, $x' = x-y$, and
	\[
		1+\log(1+x')=1+\log(1+x-y)\leq k-1
	\]
	    Since $k - 1$ is an integer, the ceiling operation cannot increase the value above it.
	\[
		\lceil1+\log(1+x')\rceil\leq k-1.
	\]
	    This shows that for good internal pairings, the contracted point gap decreases from $k$ to at most $k-1$.
\end{proof}

With pairings in hand, we define the following allowed operations on lazy intervals.

\begin{itemize}
	    \item \textit{Pair-up} performs the effect of an internal pairing on a shrinking interval.
    \item \textit{Delete} removes a heap-child from a shrinking lazy interval.
    \item \textit{Insert} inserts a heap-child on top of a growing lazy interval. The inserted heap-child has to be higher in the top-down ordering of heap-children than all other heap-children from shrinking lazy intervals.
    \item \textit{Split} splits a shrinking lazy interval into two lazy intervals with consecutive heap-children.
	    \item \textit{Convert} converts a growing lazy interval into a shrinking interval and immediately creates a new growing interval on the same side.
    \item \textit{Transfer} changes the owner of a shrinking lazy interval (all heap-children in it change their heap-parent). All the heap-children in the transferred lazy interval have to end up higher on the heavy path than the heap-children that were there before -- it is possible to transfer multiple lazy intervals at the same time so that the transferred lazy intervals do not have to satisfy this condition relative to one another; and a transfer of one or multiple lazy intervals immediately converts the growing lazy intervals to shrinking and creates new growing lazy intervals above the transferred lazy intervals.
	    \item The last operation is \textit{heap-child-exchange}. The bottom-most vertex of a heap-child heavy path may change, thereby changing the identity of the heap-child while the upper part of the heavy path remains unchanged. In this case, we allow a heap-child-exchange, in which one heap-child is deleted from a lazy interval and another is simultaneously inserted into it. A valid heap-child-exchange requires that the exchanged heap-child be in a singleton shrinking lazy interval and that the light edge connecting the heap-child to the heap-parent heavy path have the same endpoints before and after the exchange.
\end{itemize}

\begin{table}[t]
\centering
\scriptsize
\setlength{\tabcolsep}{3pt}
\renewcommand{\arraystretch}{1.12}
\begin{tabular}{lccccccc}
\toprule
 & Pair-up & Delete & Insert & Split & Convert & Transfer & Heap-child-exchange \\
\midrule
Paid operation? 
  & no & no & yes & no & no & no & yes \\
Creates new lazy interval(s)? 
  & no & no & no & yes & yes & yes & no \\
Can increase contracted point gaps? 
  & no & no & by $\le\contractedgapbound$ & no & no & no & no \\
Can add heap-child below shrinking intervals? 
  & no & no & no & no & no & no & yes, but keeps structure \\
\bottomrule
\end{tabular}
\caption{Summary of allowed operations on lazy intervals and their properties.}
\label{tab:lazy-interval-operations}
\end{table}

Throughout the proof, we ensure that we do not modify lazy intervals differently than through these operations.
Furthermore, we ensure that no contracted point gap of any heap-child changes other than through these operations. For instance, if we need to change contracted point gaps of a few heap-children in some lazy interval due to their changes in gaps, we need to resolve it by the above operations, for instance, by splitting the lazy interval so that the heap-children for which the gaps change are in their own lazy interval so that the interval gap captures the difference.

With pair-ups, we will be able to do internal pairings. For boundary pairings, there is no explicit operation, hence we will have to cover their effects with a delete followed by an insert.
Whenever pair-up or delete removes the last heap-child of a shrinking interval, we simply delete that interval.

 Note that, except for heap-child-exchange, all operations ensure that new heap-children appear only above shrinking lazy intervals, so they do not interfere with the subpaths corresponding to shrinking intervals.
 Heap-child-exchange is the only operation that allows adding new heap-children into shrinking lazy intervals, but due to its requirements, it is guaranteed that it does not affect the structure of the heavy path.

 Next, as elements are always inserted only to growing lazy intervals whose interval gap is defined to be zero and lazy intervals are never merged, no operation can increase the point gap of any existing heap-child.
 In particular, splits and converts only decrease the contracted point gaps of heap-children in the lazy intervals and transfer changes the rank of the new heap-parent, but this gets absorbed in the interval gap, so the point gap and contracted point gap remain the same.
 This is true also for heap-child-exchange as it happens only on singleton shrinking lazy intervals, so the new heap-child has the lowest possible contracted point gap of $1$.

\paragraph{Operations on lazy intervals requiring attention.}
 There will be several operations and properties that we need to track.
 In \Cref{tab:lazy-interval-operations}, we summarize all these operations, including their properties that we care about.
 
 First, we need to bound the total number of inserts and heap-child-exchanges. This not only gives an upper bound on the number of boundary pairings but, as we later show, also bounds the number of pair-ups and hence the number of bad internal pairings. Furthermore, since inserts are the only operations that can increase a contracted point gap, we use their number to bound both the total increase in contracted point gaps and the total number of good internal pairings.
 We say that inserts and heap-child-exchanges are \emph{paid operations}, while all other operations are free.

 Second, we need to track the total number of lazy intervals created by converts and splits. This will be crucial for bounding the total number of inserts, as we will need to perform a certain number of inserts per lazy interval.

Finally, we need to keep track of every bend that we create on heavy paths, as we will use the total number of bends created to bound the total number of zig-zags.

\subsubsection{Adjusting lazy intervals to small changes in heavy paths}\label{sec:heavy-path-junction}

\usetikzlibrary{calc,arrows.meta,decorations.pathreplacing}

\begin{figure}[t]
\centering

\begin{tikzpicture}[
    vertex/.style={
        circle,
        draw,
        fill=white,
        minimum size=6mm,
        inner sep=0pt
    },
    childsquare/.style={
        rectangle,
        draw,
        fill=white,
        minimum size=5mm,
        inner sep=0pt
    },
    mainpath/.style={
        line width=2pt
    },
    subtreepath/.style={
        line width=1.4pt
    },
    dashededge/.style={
        dashed,
        line width=1.2pt
    },
    transferarrow/.style={
        -{Latex[length=3mm,width=2mm]},
        line width=1.4pt
    },
    lazy interval brace/.style={
        decorate,
        decoration={brace,mirror,amplitude=5pt,raise=2pt}
    },
    lazy interval label/.style={
        midway,
        below=8pt,
        xshift=-6pt
    },
    left interval brace/.style={
        decorate,
        decoration={brace,mirror,amplitude=5pt,raise=2pt}
    },
    right interval brace/.style={
        decorate,
        decoration={brace,amplitude=5pt,raise=2pt}
    },
    interval label left/.style={
        midway,
        left=8pt,
        yshift=-6pt
    },
    interval label right/.style={
        midway,
        right=8pt,
        yshift=-6pt
    }
]


\begin{scope}[shift={(-4.2,0)}]


\node[childsquare] (LvL) at (-1.4,-7.3) {};
\node[childsquare] (LvR) at ( 1.4,-6.7) {};

\node[vertex] (Lx2) at (0.0,-3.6) {};
\node[childsquare] (Lu4) at (2,-0.8) {};

\node[below=3pt] at (LvL.south) {$v'$};
\node[below=3pt] at (LvR.south) {$v$};
\node[below=3pt] at (Lx2.south) {$z$};
\node[below=3pt] at (Lu4.south) {$u$};


\draw[mainpath]
    (LvL.north)
    -- ++(0,0.55)
    -- ++(0.75,0.8)
    -- ++(-1.25,0.4)
    -- (Lx2.south west);


\coordinate (LrbranchA) at ($(Lx2)+(1.0,-0.7)$);

\draw[dashededge]
    (Lx2.south east) -- (LrbranchA);

\draw[mainpath]
    (LrbranchA)
    -- ++(1.2,-0.9)
    -- ($(LvR.north)+(0,0.55)$)
    -- (LvR.north);


\coordinate (LubranchA) at (-3.5,-0.2);
\coordinate (LubranchB) at (0.2,0.675);

\draw[mainpath]
    (Lx2.north east)
    -- ++(1.2,1.7)
    -- ++(-4.2,0.8)
    -- (LubranchA);

\draw[dashededge]
    (LubranchA) -- (LubranchB);

\draw[mainpath]
    (Lu4.north)
    -- ++(0,0.55)
    -- (LubranchB)
    -- ++(-2.2, 0.825);


\draw[lazy interval brace]
    (-1.85,1.30)
    --
    (1.75,0.05)
    node[lazy interval label] {$I_u$};


\draw[left interval brace]
    (-3.5,-0.5)
    --
    (-1.8,-6.8)
    node[interval label left] {$I_L$};

\draw[right interval brace]
    (1.6,-2)
    --
    (-1,-6.8)
    node[interval label right] {$I_R$};

\end{scope}


\begin{scope}[shift={(4.2,0)}]


\node[childsquare] (RvL) at (-1.4,-7.3) {};
\node[childsquare] (RvR) at ( 1.4,-6.7) {};

\node[vertex] (Rx2) at (0.0,-3.6) {};
\node[childsquare] (Ru4) at (2,-0.8) {};

\node[below=3pt] at (RvL.south) {$v'$};
\node[below=3pt] at (RvR.south) {$v$};
\node[below=3pt] at (Rx2.south) {$z$};
\node[below=3pt] at (Ru4.south) {$u$};


\coordinate (RlbranchA) at ($(Rx2)+(-0.95,-0.75)$);

\draw[dashededge]
    (Rx2.south west) -- (RlbranchA);

\draw[mainpath]
    (RvL.north)
    -- ++(0,0.55)
    -- ++(0.75,0.8)
    -- ++(-1.25,0.4)
    -- (RlbranchA);


\coordinate (RrbranchA) at ($(Rx2)+(1.0,-0.7)$);

\draw[mainpath]
    (Rx2.south east)
    --
    (RrbranchA)
    -- ++(1.2,-0.9)
    -- ($(RvR.north)+(0,0.55)$)
    -- (RvR.north);


\coordinate (RubranchA) at (-3.5,-0.2);
\coordinate (RubranchB) at (0.2,0.675);

\draw[mainpath]
    (Rx2.north east)
    -- ++(1.2,1.7)
    -- ++(-4.2,0.8)
    -- (RubranchA);

\draw[dashededge]
    (RubranchA) -- (RubranchB);

\draw[mainpath]
    (Ru4.north)
    -- ++(0,0.55)
    -- (RubranchB)
    -- ++(-2.2, 0.825);


\draw[lazy interval brace]
    (-1.85,1.30)
    --
    (-0.75,0.92)
    node[lazy interval label] {$I''_u$};

\draw[lazy interval brace]
    (-0.60,0.87)
    --
    (0.55,0.48)
    node[lazy interval label] {$I'_u$};

\draw[lazy interval brace]
    (0.70,0.43)
    --
    (1.75,0.05)
    node[lazy interval label] {$I_u$};


\draw[left interval brace]
    (-3.5,-0.5)
    --
    (-2.75,-3.25)
    node[interval label left] {$I_L'$};

\draw[left interval brace]
    (-2.50,-4.05)
    --
    (-1.8,-6.8)
    node[interval label left] {$I_L$};

\draw[right interval brace]
    (1.6,-2)
    --
    (0.95,-3.25)
    node[interval label right] {$I_R'$};

\draw[right interval brace]
    (0.25,-4.55)
    --
    (-1,-6.8)
    node[interval label right] {$I_R$};

\end{scope}


\draw[transferarrow]
    ($(LvR.east)+(0.9,3.1)$)
    --
    ($(RubranchA)+(-0.3,-3.4)$);

\end{tikzpicture}

\caption{All the lazy interval splits associated with a heavy path junction at node $z$.
The lazy interval $I_u$ containing $v'$ is split so that $v'$ is in a singleton lazy interval $I_u'$, making heap-child-exchange possible. $I_L$ and $I_R$ -- the left and right intervals of $v'$ that in top-down order contain the internal node $z$ are split at the height of $z$. The upper halves $I_L'$ and $I_R'$ along with all other lazy intervals of $v'$ above $z$ are transferred to $v$. The other halves along with the remaining lazy intervals stay with $v'$. Except for these splits, the three heap-children that descend from $z$, including $v'$ and $v$, are resolved by inserts, deletes and the heap-child-exchange of $v'$ for $v$.}
\label{fig:heavy-path-junction}

\end{figure}

When the structure of the heavy paths needs to be modified only in a few places, we can describe the process in terms of the differences between the heavy-path structures before and after the modification. At certain internal nodes, we need to split the old heavy path and concatenate two pieces of the new heavy path. We call such an event a \emph{heavy path junction}.
Formally, a junction occurs at an internal node $z$ whose outgoing heavy edge changes, equivalently, when the minimum-rank external node in the subtree of $z$ changes from $v'$ to $v$; see \Cref{fig:heavy-path-junction}.

A junction might require modifications of lazy intervals so that it is true that a lazy interval contains only heap-children with the same heap-parent and that each heavy path has two growing lazy intervals.
We resolve this by the following sequence of operations on lazy intervals.

\begin{enumerate}
    \item The growing lazy intervals of $v'$ are converted.
	    \item The new lowest-rank node $v$ in the subtree of $z$ is deleted from the lazy intervals of the former lowest-rank node $v'$ in the subtree of $z$. The other heap-child that was connected via a light edge to $z$ is also deleted.
    \item We split the lazy intervals with owner $u$, the former heap-parent of $v'$, that contained $v'$ by two splits into a singleton lazy interval, converting it to shrinking if needed.
	    \item We do a heap-child-exchange from $v'$ to $v$. Since the lazy interval containing $v'$ is a singleton after the previous step and the top nodes of the heavy paths of $v'$ before the junction and of $v$ after it are the same, this satisfies the requirements of heap-child-exchange.
	    If $v'$ was the root heavy path, this step is skipped, as $v$ will now be the root heavy path and thus correctly belongs to no lazy interval.
	    \item The growing lazy intervals of $v$ are converted.
	    \item The node $v'$ and the other heap-child that was connected via a light edge to $z$ are inserted into the appropriate growing intervals of $v$ in the appropriate order. After each insertion, the growing lazy interval is converted into a shrinking interval, ensuring that it is a singleton.
	    \item The at most two lazy intervals of $v'$ that contained heap-children both above and below $z$ in the top-down order are split at $z$.
    \item All lazy intervals of $v'$ above $z$ are transferred to $v$.
\end{enumerate}

This procedure correctly resolves the relationships between $v$, $v'$, and $u$, and transfers all heap-children above $z$ from $v'$ to $v$ as this part of the heavy path is now part of the heavy path of $v$.
Updates on lazy intervals related to a heavy path junction are depicted in \Cref{fig:heavy-path-junction}.

There is one more important aspect to this junction resolution and that is that the former heap-parent $v'$ ends up in a singleton shrinking lazy interval.
This is important since when we need to apply these modifications of lazy intervals we will not be able to atomically ensure that at the same time we perform these junctions and change the ranks. Due to this, after the junction, the gap of $v'$ could temporarily be negative. However, its contracted point gap is always $1$ as the interval gap resolves this, so this will be correct.

\begin{lemma}\label{lemma:heavy-path-junction}

	    A heavy-path junction creates $O(1)$ bends; the lazy intervals can be adjusted using $O(1)$ paid operations and creating $O(1)$ new lazy intervals.
    
\end{lemma}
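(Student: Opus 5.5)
The plan is to walk through the eight-step procedure for resolving a junction at $z$ listed above and count, step by step, the paid operations, the newly created lazy intervals, and the bends. Throughout, the heavy-path structure changes only at $z$. Above $z$, the old heavy path of $v'$ becomes a prefix of the new heavy path of $v$. Below $z$, the heavy path of $v'$ loses the part above $z$ and keeps the rest, while $v$ acquires the segment from $z$ down to its old top. No other edge changes its heavy/light status, so every other heavy path, together with its lazy intervals, is untouched.

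\textbf{Lazy-interval operations.} Steps 1 and 5 each perform two converts, creating at most four new (growing) intervals. Step 2 performs two deletes, which are free. Step 3 uses at most two splits and possibly one convert, creating $O(1)$ intervals. Step 4 is a single heap-child-exchange, which is paid. I will check its validity explicitly: before step 4, $v'$ sits in a singleton shrinking interval. Moreover, the light edge from $u$'s heavy path to $\top(\Path(v'))$ is the same edge as the one to $\top(\Path(v))$ after the junction. This holds because the heavy path of $v$ after the junction consists of the old upper part of $v'$'s path down to $z$ followed by the old path of $v$, so its topmost node is unchanged. Step 6 performs at most two inserts (paid) and two converts. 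For the insert conditions, the inserted heap-children hang off $z$, and $z$ is now the topmost attachment point below which all of $v$'s old heap-children lie. Since the growing intervals of $v$ were just converted, no shrinking interval lies above them, as the insert definition requires. Step 7 uses at most two splits, one for each side. Step 8 is a single (multi-interval) transfer, which is allowed because all transferred heap-children hang above $z$ and hence above all of $v$'s existing heap-children; it creates at most two new growing intervals. Altogether this is at most three paid operations and $O(1)$ new intervals.

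\textbf{Bends.} A bend is determined locally by a node, its parent, and the direction in which its heavy path continues. At the junction, only $z$ changes its outgoing heavy edge, and only the child of $z$ on the new path gains or loses its status as a non-topmost node. So bends can be created only at $z$ and at the two children of $z$, which are the old and new heavy continuations. This gives at most three new bends.

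\textbf{Main obstacle.} The delicate part is the bookkeeping, not the counting. Each intermediate state must still be a valid lazy-interval configuration: each heavy path must have exactly two growing intervals, with growing intervals lying above shrinking ones. In particular, I must handle the corner cases separately: $v'$ being the root heavy path (step 4 is then skipped), $z$ being the top of $v'$'s path, and the middle child of $z$ being on one of the paths. I also have to confirm that the splits in step 7 are legal when the interval straddling $z$ is growing; in that case I would first convert it, which is covered by step 1.
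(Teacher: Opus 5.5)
Your proposal is correct and follows the paper's own argument. Like the paper, you tally the eight-step junction procedure (one heap-child-exchange and $O(1)$ inserts as the only paid operations; $O(1)$ splits, converts and one transfer as the source of new intervals) and note that new bends can arise only at $z$ and its new heavy child; your third candidate, the old heavy child of $z$, is harmless overcounting, since it becomes the top of $v'$'s path and can only lose bend status.
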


\begin{proof}
	    New bends can be created only at the node $z$ and at the node immediately below $z$ on the path to the new minimum $v$.
    At every other internal node, there is no new heavy edge going to or from it.
	    At most $4$ lazy intervals are split and $O(1)$ new growing intervals are created; this totals $O(1)$ new lazy intervals.
	    We perform one heap-child-exchange and $O(1)$ inserts, for a total of $O(1)$ paid operations.
\end{proof}

\subsection{Handling rotations of the nearly optimal tree}

Whenever the nearly optimal tree $T^*$ rotates, the structure of heavy paths changes, which in turn necessitates a modification of lazy intervals.
We first show that, after a rotation of the optimal tree near its root, the structure of the heavy paths changes in only a constant number of places, creating $O(1)$ bends.
Next, we show that we can adjust the lazy intervals to these changes using only $O(1)$ paid operations and creating $O(1)$ new lazy intervals.

\subsubsection{Changes in heavy paths due to rotations in the nearly optimal tree}

In this part, assume that a node $x$ in $T^*$ is rotated. 
Denote by $y$ the parent of $x$. Additionally, denote by $p^L$ the lowest ancestor of $y$ with $y$ in its right subtree, and analogously by $p^R$ the lowest ancestor with $y$ in its left subtree.

\begin{figure}[t]
\centering
\begin{tikzpicture}[
  level distance=9mm,
  sibling distance=12mm,
  subtree/.style={
    draw,
    isosceles triangle,
    shape border rotate=90,
    minimum height=8mm,
    minimum width=6mm,
    yshift=-7.5mm
  },
  subtree child/.style={
    edge from parent path={
      (\tikzparentnode) -- (\tikzchildnode.north)
    }
  },
  internal/.style={circle, draw, minimum size=6mm, inner sep=1.2pt},
  >=latex
]

\coordinate (SL) at (-1.8,-3.5);
\coordinate (Sx) at (-0.2,-3.5);
\coordinate (SM) at ( 1.0,-3.5);
\coordinate (Sy) at ( 2.2,-3.5);
\coordinate (SR) at ( 3.8,-3.5);

\node[internal] (pL) at (-1.4, 1.3) {$p_L$};
\node[internal] (pR) at ( 3.4, 2.1) {$p_R$};

\node[internal] (x) at (0.2,-0.3) {$x$};
\node[internal]   (y) at (1.8, 0.5) {$y$};

\node[subtree] (Ax) at (-0.6,-1.1) {};
\node[subtree] (Bx) at ( 1.0,-1.1) {};
\node[subtree] (Cy) at ( 2.6,-0.3) {};

\draw (-3, 2.9) -- (pR);
\draw (pL) -- (pR);
\draw (pL) -- (y);

\draw (x) -- (Ax.north);
\draw (x) -- (Bx.north);
\draw (x) -- (y);
\draw (y) -- (Cy.north);

\draw[->, thick] (0,0.1) .. controls (-0.1,0.65) and (0.85,1.0) .. (1.25,0.8);

\draw[dashed] (-1, 3) -- (-1,-3);
\draw[dashed] (0.6, 3)  -- (0.6,-3);
\draw[dashed] (1.4, 3)  -- (1.4,-3);
\draw[dashed] (3, 3) -- (3,-3);

\draw (-3,-3) -- (5,-3);

\node at (SL) {$S_L$};
\node at (Sx) {$S_x$};
\node at (SM) {$S_M$};
\node at (Sy) {$S_y$};
\node at (SR) {$S_R$};

\node[left] at (-2.5,-4.2) {$\Delta\rank =$};

\node at (SL |- 0,-4.2) {$0$};
\node at (Sx |- 0,-4.2) {$-1$};
\node at (SM |- 0,-4.2) {$0$};
\node at (Sy |- 0,-4.2) {$1$};
\node at (SR |- 0,-4.2) {$0$};

\end{tikzpicture}
\caption{Changes in ranks due to a left-edge rotation of $x$ in $T^\text{OPT}$.}
\label{fig:rank-changes-to-rotation}
\end{figure}

We define the following \emph{segments} of nodes that are consecutive in the symmetric order of nodes; see also \Cref{fig:rank-changes-to-rotation}.
\begin{itemize}
	    \item $S_L$ consists of all nodes that are not in the subtree of $y$ and are left of $y$ in the symmetric order. This corresponds to the segment $[\perp_0, p^L]$ and may be empty if $p^L$ does not exist. Nodes in $S_L$ do not change their rank during the rotation.
	    \item Analogously, $S_R$ consists of all nodes that are not in the subtree of $y$ and are right of $y$ in the symmetric order. This corresponds to the segment $[p^R, \perp_n]$ and may be empty if $p^R$ does not exist. As with $S_L$, nodes in $S_R$ do not change their rank.
    \item $S_y$ consists of $y$ and the subtree of $y$ not containing $x$. If this is a rotation on a left edge, these are $[y, p^R)$, otherwise $(p^L, y]$.
    In case $p^R$ or $p^L$ do not exist, these are instead $[y, \perp_n]$, respectively $[\perp_0, y]$.
    These nodes increase their rank by $1$.
    \item $S_x$ consists of $x$ and the subtree of $x$ that remains a subtree of $x$ after the rotation. If this is a rotation on a left edge, these are $(p^L, x]$, otherwise $[x, p^R)$.
    Again, in case $p^L$ or $p^R$ do not exist, these are instead $[\perp_0, x]$, respectively $[x, \perp_n]$.
    These nodes decrease their rank by $1$.
	    \item $S_M$ consists of the nodes in the other subtree of $x$.
    If this is a rotation on a left edge, these are $(x, y)$, otherwise $(y, x)$. Their rank is not modified.
\end{itemize}

A pair of consecutive vertices belonging to different segments is called a \emph{border}. The node of a border with the lower rank is called a \emph{border node}.

\begin{lemma}\label{lemma:ranks-of-border-nodes}
    Every border node has rank $O(1)$.
\end{lemma}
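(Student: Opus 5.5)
The plan is to identify the borders explicitly and show that each border pair contains a node within $O(1)$ depth of $T^*$'s root. There are at most four borders, between $S_L|S_x$, $S_x|S_M$, $S_M|S_y$ and $S_y|S_R$ for a left-edge rotation (symmetric for right-edge). Some segments may be empty, in which case borders merge, but the same analysis applies. By \Cref{lemma:opt-constraints}, the rotation happens at depth $O(1)$, so $x$, $y$, and all ancestors of $y$ (in particular $p^L$ and $p^R$) have depth $O(1)$ in $T^*$, both before and after the rotation. It therefore suffices to show that each border pair has one element that is one of $x,y,p^L,p^R$, or is adjacent in symmetric order to such a node. In the latter case we conclude using \Cref{cor:depth-by-riblessness}.

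Take a left-edge rotation and carry out the cases, the right-edge case being mirror-symmetric.

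\begin{itemize}
\item Border $S_L|S_x$: the pair is $(p^L, a)$, where $a$ is the minimum of $x$'s left subtree, i.e.\ $\perp$ just right of $p^L$. The node $p^L$ itself has depth $O(1)$; if $p^L$ does not exist, $S_L$ is empty and there is no border.
\item Border $S_x|S_M$: the pair is $(x, \perp_x)$, and $x$ has depth $O(1)$.
\item Border $S_M|S_y$: the pair is $(\perp_{y-1}, y)$ with $y$ shallow.
\item Border $S_y|S_R$: the pair is $(b, p^R)$ with $p^R$ shallow.
\end{itemize}

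Here we need to be careful with external nodes versus elements. Null nodes are nodes of $T^*$ too, so all these pairs are consecutive in symmetric order of $T^*$'s nodes. In every case one member of the pair is one of $x,y,p^L,p^R$, whose depth is $O(1)$ since it is an ancestor-or-self of the rotated edge. As the border node is the lower-rank member of the pair, its rank is at most that shallow node's depth, hence $O(1)$. This is measured at whichever time (before or after the rotation) is relevant, and ranks change by at most $1$.

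The main obstacle is the precise bookkeeping of where the segments end, e.g.\ confirming that the last node of $S_L$ is exactly $p^L$, and that the boundary of $S_y$ against $S_R$ is adjacent to $p^R$. For the first, note that $p^L$ is the predecessor of the whole subtree of $y$ among ancestors, so its right subtree's leftmost portion is the subtree of $y$; the node immediately before the minimum of $y$'s subtree is therefore $p^L$. The same holds symmetrically for $p^R$. The lower-rank choice makes riblessness unnecessary here, though \Cref{cor:depth-by-riblessness} could be invoked as a fallback if one instead wanted both nodes of the border to be shallow.
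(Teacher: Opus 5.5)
Your proof is correct and follows the paper's own argument: every border contains one of $x, y, p^L, p^R$, these all have $O(1)$ depth in $T^*$ because the rotation is at constant depth and $y, p^L, p^R$ are ancestors of $x$, and the border node is the lower-rank member of its pair. Your explicit case listing of the four border pairs just spells out the step the paper asserts in one line, and, as you note yourself, \Cref{cor:depth-by-riblessness} is not needed.
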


\begin{proof}
	    There are at most $4$ borders, and each of them contains at least one node from $\{x, y, p^L, p^R\}$.
	    The node $x$ has rank $O(1)$ by \Cref{lemma:opt-constraints}, as rotations occur only at constant depth from the root. Moreover, $y$, $p^L$, and $p^R$ have smaller ranks because they are ancestors of $x$ in the optimal tree. Thus, all these nodes have rank $O(1)$.
    Since every border contains a node with rank $O(1)$, the border node corresponding to that border has to have rank $O(1)$.
\end{proof}


\begin{figure}[t]
\centering
\begin{tikzpicture}[
  level distance=9mm,
  sibling distance=12mm,
  subtree/.style={
    draw,
    isosceles triangle,
    shape border rotate=90,
    minimum height=8mm,
    minimum width=6mm,
    yshift=-7.5mm
  },
  subtree child/.style={
    edge from parent path={
      (\tikzparentnode) -- (\tikzchildnode.north)
    }
  },
    mainpath/.style={
        line width=2pt
    },
  internal/.style={circle, draw, inner sep=1.2pt, minimum width=6mm},
  value/.style={rectangle, draw, inner sep=2pt, minimum size=5mm},
  >=latex
]

\begin{scope}[xshift=-4.1cm]

\node[internal] (z) at (0,0) {$z$};
\node[value] (u) at (-1.6,-2.8) {$v'$};
\node[value] (r) at ( 0.0,-2.8) {};
\node[value] (v) at ( 1.6,-2.8) {$v$};

\draw[mainpath] (z.north east) -- (0.65,0.85);
\draw[dashed] (0, 1.3) -- (0.65,0.85);

\draw[mainpath] (u.north) -- ++(0, 0.55) -- (-0.6, 
-0.72) -- (z.south west);

\draw[mainpath] (r.north) -- ++(0, 0.55);
\draw[dashed] ($(r.north) + (0, 0.55)$)-- (-0.6, 
-0.72) ;

\draw[mainpath] (v.north) -- ++(0, 0.55);
\draw[dashed] ($(v.north) + (0, 0.55)$) -- (1,-1.2) ;
\draw[mainpath] (0.5, -0.62) -- (1,-1.2) ;
\draw[dashed] (0.5, -0.62) -- (z.south east) ;

\draw (-2.6,-3.80) -- (3.25,-3.80);

\draw (0.35,-3.6) -- (0.35,-4);
\node at (-1.6,-4.05) {$S'$};
\node at ( 1.60,-4.05) {$S$};

\node[below] at (r.south) {$\rank \in O(1)$};

\end{scope}

\draw[->, thick] (-0.45,-0.15) -- (0.55,-0.15);

\begin{scope}[xshift=4.1cm]

\node[internal] (z) at (0,0) {$z$};
\node[value] (u) at (-1.6,-2.8) {$v'$};
\node[value] (r) at ( 0.0,-2.8) {};
\node[value] (v) at ( 1.6,-2.8) {$v$};

\draw[mainpath] (z.north east) -- (0.65,0.85);
\draw[dashed] (0, 1.3) -- (0.65,0.85);

\draw[mainpath] (u.north) -- ++(0, 0.55) -- (-0.6, 
-0.72);
\draw[dashed] (-0.6, -0.72) -- (z.south west);

\draw[mainpath] (r.north) -- ++(0, 0.55);
\draw[dashed] ($(r.north) + (0, 0.55)$)-- (-0.6, 
-0.72) ;

\draw[mainpath] (v.north) -- ++(0, 0.55) -- (1,-1.2) -- (0.5, -0.62) -- (z.south east) ;

\draw (-2.6,-3.80) -- (3.25,-3.80);

\draw (0.35,-3.6) -- (0.35,-4);
\node at (-1.6,-4.05) {$S'$};
\node at ( 1.60,-4.05) {$S$};

\node[below] at (r.south) {$\rank \in O(1)$};
\end{scope}

\end{tikzpicture}
\caption{A heavy path junction at node $z$ due to optimal tree rotations. $v'$ is the node containing $z$ in its heavy path before and $v$ is the node after. Since their relative ranks changed, they must come from different segments $S'$ and $S$. Before rotation $v'$ was a heap-ancestor of $v$ and a heap-ancestor of a border node with rank $O(1)$; after $v$ is a heap-ancestor of both $v'$ and the border node.}
\label{fig:junction-due-to-rotation}
\end{figure}

\begin{lemma}\label{lemma:heavy-paths-rotation}
	    There are $O(1)$ heavy-path junctions due to a rotation of the nearly optimal tree $T^*$. As a consequence, each such rotation creates $O(1)$ bends.
\end{lemma}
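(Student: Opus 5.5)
We prove the bound on junctions by charging each one to a border node, and then obtain the bound on bends from \Cref{lemma:heavy-path-junction}. Fix an internal node $z$ of the splay tree $T$ at which a junction occurs. Its subtree is an interval $[a,b]$ of external nodes in symmetric order. Let $v'$ and $v$ be the unique minimum-rank nodes of $[a,b]$ before and after the rotation; both are unique by \Cref{lemma:unique-depth-minimum}.

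The rotation changes ranks by the segment-wise offsets $0,-1,0,+1,0$ on $S_L,S_x,S_M,S_y,S_R$. Within a single segment all ranks shift by the same amount, so the relative order of ranks is preserved there. Hence, if $[a,b]$ were contained in one segment, the minimizer could not change. So $[a,b]$ must contain a border, and in particular a border node $r$ lies in the subtree of $z$. By \Cref{lemma:ranks-of-border-nodes}, $\rank(r)=O(1)$ both before and after the rotation, since ranks change by at most one. Consequently $\rank(v')\le\rank(r)=O(1)$ and $\rank(v)=O(1)$, and after the rotation $v$ is a heap-ancestor of $r$. This is the situation of \Cref{fig:junction-due-to-rotation}.

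It then suffices to count, for each border node $r$ (there are at most four), the internal nodes $z$ that are ancestors of $r$ in $T$ and at which a junction occurs. These $z$ lie on the path from the root of $T$ to $r$, and the heavy paths that path passes through are exactly the heap-ancestors of $r$. There are at most $\rank(r)+1=O(1)$ of them, both before and after the rotation. A junction at $z$ means the heavy edge out of $z$ changes. Along the root-to-$r$ path, the minimum-rank descendant changes only at light edges, and there are $O(1)$ of those, before and after the rotation.

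The main obstacle is that $z$ need not be a top or bottom node of a heavy path. It can sit in the middle of a long heavy path whose minimizer switches, for instance when $v'\in S_y$ and $v\in S_x$. I would handle this by looking at the minimum-rank node of the subtree of $z$ restricted to the root-to-$r$ path, considering the sequence $m(z)$ of subtree minimizers along that path. Both before and after the rotation, $m$ is piecewise constant with $O(1)$ pieces, each piece equal to some heap-ancestor of $r$. So the set of positions where $m_{\text{before}}(z)\ne m_{\text{after}}(z)$ is a union of $O(1)$ contiguous stretches of the path.

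Counting whole stretches would overcount, because a junction is a change of the \emph{outgoing heavy edge}, not merely of the minimizer. Inside a stretch where $m_{\text{before}}=v'$ and $m_{\text{after}}=v$ are constant, the heavy edge out of $z$ goes toward $v'$ before and toward $v$ after. Both $v'$ and $v$ are heap-ancestors of $r$, so their heavy paths both follow the path to $r$ until they branch off. The two edges differ only at nodes where one of these paths branches off the path to $r$. That happens at at most two nodes per stretch; these are exactly the junction points $z$ in the figure, at which the old and new heavy paths split.

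With $O(1)$ stretches per border node and $O(1)$ border nodes, there are $O(1)$ junctions in total. \Cref{lemma:heavy-path-junction} then gives $O(1)$ bends per junction. Any internal node that is not the site of a junction keeps its heavy edge and its parent edge, so its bend status is unchanged. Hence each rotation of $T^*$ creates $O(1)$ bends.
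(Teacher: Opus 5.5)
Your argument is correct. It shares the paper's core: a junction at $z$ changes the subtree minimizer, so the subtree of $z$ must contain a border node $r$, and $r$'s $O(1)$ rank (before and after) limits the heap-ancestors involved.

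The final count is different. The paper counts pairs $(v',v)$, where $v'$ is a heap-ancestor of a border node before the rotation and $v$ one after. It then notes that such a pair can cause a junction only at $\mathrm{LCA}(v',v)$, which gives $O(1)\cdot O(1)$ junctions. You instead place every junction on the fixed root-to-$r$ path in $T$. There, the heavy edge at $z$ can change only where the old or the new heavy path leaves that path. This yields an additive $O(\rank(r))$ bound per border node, and it avoids the LCA-uniqueness step. In fact, the same observation applied globally makes your stretch decomposition unnecessary: each heap-ancestor's heavy path leaves the root-to-$r$ path at most once.

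One small slip concerns your closing sentence. A node that is not a junction site can still change bend status when its parent is one. Specifically, the child of $z$ toward the new minimizer $v$ can stop being the top-most node of its heavy path and become a bend. This is harmless, because \Cref{lemma:heavy-path-junction} already counts that node among the $O(1)$ bends per junction.
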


\begin{proof}
    We count the number of internal nodes $z$ where a junction happens. This happens whenever there was a node $v'$ containing $z$ in its heavy path before the rotation and a different node $v$ containing $z$ in its heavy path after the rotation.
	    Each such pair $(v',v)$ uniquely determines the internal node $z$ where it can cause a junction: the point where their paths to the root diverge, namely, the lowest common ancestor of $v'$ and $v$.
	    Above the lowest common ancestor, the heavy path would lead to the same node, so this pair cannot cause a junction there. Below it, one of the two nodes is not in the subtree, so the pair cannot cause a junction there either.

    As $v$ is in the subtree of $z$, prior to the rotation $v'$ was a heap-ancestor of $v$; and after the rotation,
    $v$ is a heap-ancestor of $v'$. In particular, this means that $v'$ and $v$ must have been in different segments as otherwise their ranks would change in the same way. Therefore the subtree of $z$ contains a border and hence at least one border node. Therefore, $v'$ must have been a heap-ancestor of the heavy path of a border node before the rotation and $v$ a heap-ancestor after the rotation. The situation is depicted in \Cref{fig:junction-due-to-rotation}.

	    There are at most $4$ different border nodes, and each has rank $O(1)$ before the rotation by \Cref{lemma:ranks-of-border-nodes}.
    Since a node of rank $R$ can have at most $R+1$ heap-ancestors,
    there are only $O(1)$ possible nodes $v'$ whose heavy paths were heap-ancestors of a border node before the rotation.
    Similarly, after the rotation each border node has rank $O(1)$ as ranks change only by a constant during the rotation. Therefore, there are only $O(1)$ possible nodes $v$ whose heavy paths are heap-ancestors of a border node after the rotation.
	    Hence, there are $O(1) \times O(1) = O(1)$ pairs of vertices $(v',v)$ and thus only $O(1)$ heavy-path junctions.
    The two $O(1)$ factors come from the fact that the nearly optimal tree $T^*$ does rotations only at constant depth.
	    Each junction creates $O(1)$ bends by \Cref{lemma:heavy-path-junction}.
\end{proof}

\subsubsection{Adjusting lazy intervals to rotations}

Since the structure of the heavy paths changes, the lazy intervals must be updated. However, there is one more issue to settle. We cannot directly modify the contracted point gaps of heap-children. Instead, we must either ensure that the contracted point gaps do not change or absorb the changes through operations on lazy intervals, so that the contracted point gaps remain unchanged once these operations and the rank updates are complete.

\begin{lemma}\label{lemma:bounding-border-intervals}
    There are $O(1)$ lazy intervals containing heap-children from at least two different segments.
\end{lemma}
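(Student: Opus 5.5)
Consider a lazy interval $\fI$ owned by a heavy path $P$ that contains heap-children from two different segments. By the discussion in \Cref{sec:lazy-intervals}, the union of heap-descendants over $\fI$ forms a consecutive interval in symmetric order, and so does each heap-child's own set of heap-descendants. Since the heap-children of $\fI$ lie in two different segments, somewhere between two consecutive heap-children of $\fI$ the symmetric order crosses a border. There are at most $4$ borders, and between two consecutive heap-children of a lazy interval there is only a contiguous stretch of the symmetric order. So my plan is to charge $\fI$ to one border lying in the union of heap-descendants of $\fI$ together with the owner's own region, and then show that each border can be charged only $O(1)$ times.

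Fix a border with border node $b$, whose rank is $O(1)$ by \Cref{lemma:ranks-of-border-nodes}. The key step is to show that if $\fI$ is charged to this border, then the owner $P$ is a heap-ancestor of $b$ (or of the other node of the border, which is adjacent in symmetric order). To argue this, I would note that the two heap-children $w_1, w_2 \in \fI$ on either side of the border are consecutive within $\leftheapchildren(P)$ (or within $\rightheapchildren(P)$). Hence every external node lying between the heap-descendants of $w_1$ and those of $w_2$ in symmetric order lies in the subtree of $\top(P)$, so it is a heap-descendant of $P$. One border node is then either inside the heap-descendants of $w_1$ or $w_2$, or strictly between them. In every case it lies in the subtree of $\top(P)$, so it is a heap-descendant of $P$.

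Because a heavy path of rank $R$ has at most $R+1$ heap-ancestors, each border node has only $O(1)$ heap-ancestors $P$. It remains to bound, for a fixed owner $P$, the number of its lazy intervals that straddle a fixed border. The lazy intervals of $P$ on one side partition the consecutive sequence of left (respectively right) heap-children, and they are consecutive in symmetric order. Therefore at most one left interval and one right interval of $P$ can contain heap-children on both sides of a given border point. Summing over $4$ borders, $O(1)$ owners per border, and $2$ intervals per owner gives $O(1)$.

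I expect the main obstacle to be the middle step, namely pinning down rigorously that straddling forces the border node to be a heap-descendant of the owner. A broken lazy interval has right heap-children interleaved in top-down order, but this does not affect the symmetric-order consecutiveness, and the argument above only uses symmetric order. I would also check the degenerate case where the owner's own bottom node $\bottom(P)$ lies at the border. That case is harmless, since $\bottom(P)$ is a heap-descendant of $P$.
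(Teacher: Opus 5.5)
Your proof is correct and follows the paper's argument. A lazy interval that straddles a border forces its owner to be a heap-ancestor of a border node, which has rank $O(1)$ and hence only $O(1)$ heap-ancestors, and each such owner has only $O(1)$ lazy intervals straddling a given border; the paper's version of the last count is one interval per owner, because heap-descendant sets of distinct heap-children are disjoint. Your hedges about a border node lying ``strictly between'' the blocks, or about the other border node, are unnecessary, since the subtree of $\top(P)$ is consecutive in symmetric order and contains both nodes of the border.
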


\begin{proof}
    We know that all heap-descendants in any lazy interval are consecutive in the symmetric order.
    This implies that if a lazy interval contains heap-children from two different segments, then it contains a border node as a heap-descendant.
	    In particular, the owner of the lazy interval is a heap-ancestor of the border node.
    
	    There are $4$ border nodes, each having $O(1)$ heap-ancestors, and for each such heap-ancestor, there is at most one lazy interval containing the border node as a heap-descendant.
    Therefore there are $O(1)$ such lazy intervals.
\end{proof}

Our strategy for updating the lazy intervals to handle a rotation is as follows. We first perform the necessary updates to the lazy intervals so that the structure of the heavy paths can change. We then perform certain splits and converts. Finally, we observe that, because of how we split the intervals, no heap-child changes its contracted point gap due to the rank changes.

\begin{lemma}\label{lemma:rotation-lazy-intervals}
    Whenever the nearly optimal tree $T^*$ rotates, the lazy intervals can be updated by $O(1)$ paid operations and by creating $O(1)$ new lazy intervals.
\end{lemma}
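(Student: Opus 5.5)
The update proceeds in three phases: first restructure the heavy paths, then split and convert intervals so that rank shifts are absorbed, and only then change the ranks.

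\emph{Restructuring.} By \Cref{lemma:heavy-paths-rotation}, the rotation of $T^*$ causes only $O(1)$ heavy-path junctions. I process each one with the procedure of \Cref{sec:heavy-path-junction}. By \Cref{lemma:heavy-path-junction}, each junction costs $O(1)$ paid operations and creates $O(1)$ new lazy intervals. After this phase every lazy interval again consists of heap-children sharing one heap-parent, and each heavy path has its two growing intervals. The heap-children touched by a junction, namely $v'$ and the other heap-child hanging from $z$, end up in singleton shrinking intervals. The junction step also puts the old $v'$ into a singleton shrinking interval, so its contracted point gap is $1$ no matter how the ranks change.

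\emph{Absorbing rank changes.} Each segment $S_L,S_x,S_M,S_y,S_R$ has a uniform rank shift $\delta\in\{-1,0,+1\}$. I use \Cref{lemma:bounding-border-intervals} on the post-junction structure: only $O(1)$ lazy intervals contain heap-children from two or more segments. Each of these I split at the segment borders into $O(1)$ pieces, each lying within a single segment; since segments are consecutive in symmetric order and each interval is consecutive in symmetric order, $O(1)$ splits per interval suffice. I also convert the growing intervals of the $O(1)$ heap-ancestors of border nodes, since growing intervals have interval gap $0$ and cannot absorb a shift. This creates $O(1)$ new intervals.

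After this, for each non-root heavy path $P$ in a shrinking interval $\fI(P)$, all members of $\fI(P)$ lie in one segment. The owner is also in a fixed segment. Hence every gap in $\fI(P)$ shifts by the same constant, and the interval gap $\min_{P\in\fI}\gap(P)$ shifts by that same constant. The point gaps therefore stay unchanged. For this shift to be legitimate the interval gap must stay nonnegative; this holds because the minimum gap stays a gap of a genuine heap-child, so it is at least $0$. The exception is a temporarily negative gap at a junction, which is handled as in the paragraph after \Cref{lemma:heavy-path-junction}.

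\emph{Growing intervals.} The remaining worry is a growing interval whose heap-children change gap while its interval gap is pinned at $0$. Such a gap change happens only if the owner and the child lie in different segments. The owner is a heap-ancestor of a border node; otherwise its whole heap-subtree, which is consecutive in symmetric order, lies in a single segment. There are only $O(1)$ such owners, and their growing intervals were converted in the previous step. So after conversion every heap-child with a changing gap sits in a shrinking interval contained in one segment.

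\emph{Main obstacle.} The delicate step is arguing that the heap-children with changed gaps are exactly those covered by the above case analysis. In particular, the ordering between the junction modifications and the rank update must keep the singleton intervals consistent. A naive count could also miss heap-children whose heap-parent changed during a junction; these are covered by transfers, which preserve point gaps. Altogether the update uses $O(1)$ paid operations and creates $O(1)$ new lazy intervals.
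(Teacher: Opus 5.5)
Your proposal is correct and follows the paper's proof essentially step for step. You process the $O(1)$ junctions first, split the $O(1)$ intervals that straddle segments (\Cref{lemma:bounding-border-intervals}), and convert growing intervals whose owner lies in a different segment, bounding these via the $O(1)$ heap-ancestors of the $O(1)$ border nodes. The differences are cosmetic: you convert all growing intervals of those heap-ancestors up front rather than only the problematic ones, and you should do those converts before the splits, since \textit{Split} applies only to shrinking intervals.
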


\begin{proof}
	    First, we adjust the lazy intervals to the changes in the heavy paths. By \Cref{lemma:heavy-paths-rotation}, changes occur in only $O(1)$ places.
	    The heavy-path junctions can be handled using $O(1)$ paid operations and creating $O(1)$ new lazy intervals by \Cref{lemma:heavy-path-junction}.
    Note that we first perform the changes due to junctions and then apply the rank changes. For a short while the former heap-parents could have negative gaps, but since they are in shrinking singleton lazy intervals, their contracted point gaps are always $1$, so this is correct.

	    If a shrinking lazy interval lies entirely within a segment, the gaps of all heap-children in it change by the same amount. Therefore, the interval gap also changes by exactly this amount, and the point gaps remain unchanged.
    
	    By \Cref{lemma:bounding-border-intervals}, there are only $O(1)$ intervals not fully contained in a single segment. We can split them into $O(1)$ parts, each fully contained in one segment. We do this using free splits, preceded by converts if the lazy interval being split was growing. This creates only $O(1)$ new lazy intervals.

    Next, we need to bound non-empty growing lazy intervals that are fully within a single segment. If the owner of the lazy interval is in the same segment, then all ranks change by the same amount, so all the gaps and also contracted point gaps of heap-children within the lazy interval remain the same.
	    The problematic intervals are those for which all heap-children of a growing lazy interval are in the same segment but the owner is in a different segment. Since the interval gap of a growing lazy interval is defined to be zero, it does not capture this shift. We resolve the issue by converting the interval into a shrinking interval, which absorbs the rank update so that the contracted point gaps do not change.
    However, we need to argue that this cannot happen too often.
    Consider a heap-child $v$ in the lazy interval and its heap-parent $u$. Since they are in different segments, it means that
	    there is a border node between $v$ and $u$ in the symmetric order; in particular, $u$ is a heap-ancestor of this border node.
	    Each of the $O(1)$ border nodes has only $O(1)$ such heap-ancestors, and each heap-ancestor has only $2$ growing lazy intervals. Therefore, this creates only $O(1)$ new lazy intervals.

	    In total, by creating $O(1)$ new lazy intervals and using $O(1)$ paid operations, we ensure that no contracted point gap changes due to the rank update, except for decreases already caused by free interval operations.
\end{proof}
\subsection{Handling splays}

To be able to argue about the cost of splaying, we first observe how the structure of the splay tree changes when we splay some node $x$ to the root.
The main idea is to use these structural changes to bound zig-zigs and zig-zags. Roughly speaking, we bound the total number of changes needed to maintain the lazy intervals and show that each zig-zig makes at least one such change. For zig-zags, we observe that each removes a bend, which allows us to relate their number to the zig-zigs that create bends.

We analyze these changes in three phases. First, we observe how splaying changes the structure of heavy paths. We then show how to adjust the lazy intervals while creating only a few new ones. Finally, we show how to perform these changes on the interface with only a few operations; the crucial step is to decompose the effects of zig-zigs into pairings.

\subsubsection{Structural changes to heavy paths due to splaying}

Because the accessed element is at the root of the nearly optimal tree $T^*$ before the splay tree changes, it has rank $0$ and is always the bottom-most node on the root heavy path.

When the parent internal node of $x$, denoted by $x^\circ$, is splayed, $x$ and $x^\circ$ are progressively moved towards the root by zig-zigs and zig-zags, possibly ending with a zig. Each zig-zig and zig-zag removes two internal nodes from the heavy path of $x$, while a zig removes one; we call these nodes \emph{hit} by the corresponding operation. Except for the hit nodes, the structure of the heavy path of $x$ remains exactly the same as before the operation.
Additionally, we say that a heap-child is \emph{hit} by an operation if the light edge connecting it to its heap-parent is incident to a hit internal node.

Our aim is to determine what happens during a splay to heavy paths other than the root heavy path.
Towards that goal, let $x^-$ denote the node with the lowest rank in the subtree of the left child $x^L$ of $x^\circ$ before the splay. Since $\perp_{x-1}$ is in this subtree, $\rank(x^-) \leq \rank(\perp_{x-1})$. By \Cref{cor:depth-by-riblessness}, the rank of $\perp_{x-1}$ is $O(1)$, and hence so is the rank of $x^-$.
Analogously, $x^+$ is defined as the node with the lowest rank in the subtree of $x^R$, and its rank is also $O(1)$ because the rank of $\perp_{x}$ is $O(1)$.

To be able to reason about the effects of a splay on the structure of heavy paths, we do the following trick.
We imagine that before the splay, the ranks of both $x^-$ and $x^+$ lie between $x$'s rank of $0$ and the ranks of all other nodes; think of them as having rank $0.5$. Before the splay, this does not affect the structure of the heavy paths: both $x^+$ and $x^-$ are heap-children of the lower-rank node $x$, and they are already the least-rank nodes within their respective subtrees.
Although general artificial rank changes could cause a subtree to contain two different least-rank nodes, this particular change is safe: after the splay, $x^-$ and $x^+$ are the least-rank vertices in the left and right subtrees, respectively, and subtrees containing neither of them are unaffected.

In the following lemma, we analyze how the structure of heavy paths changes after a splay under these assumptions. Later, we show that omitting these assumptions requires only a small adjustment to the analysis.

\begin{lemma}\label{lemma:heavy-paths-after-splay}
	    Assume that $x^-$ and $x^+$ are the least-rank nodes to the left and right of $x$, respectively, in symmetric order.
	    After a splay of $x^\circ$ with value node $x$, the structure of the heavy paths changes as follows.
    \begin{itemize}
        \item After the splay, $x^-$ is the only left heap-child of $x$ and $x^+$ is the only right heap-child of $x$. Furthermore, $x$ contains only a single internal node, $x^\circ$, in its heavy path.
        \item All internal nodes except for $x^\circ$ on the former heavy path of $x$ are hit by some zig-zig/zig-zag/zig and so are all former heap-children of $x$ except for $x^-$ and $x^+$. No internal node and no heap-child are hit by more than one operation.
	        \item For every zig-zig, among its four hit heap-children, the bottom-most one in the top-down order is skipped and the remaining three form a consecutive triplet. Denote by $w$ the heap-child in the triplet with the least rank and call it the \emph{winner}. The other two heap-children in the triplet, \emph{the losers}, become heap-children of $w$, and $w$ acquires the upper internal node hit by that zig-zig into its heavy path.
        \item 
	        All hit right heap-children that are not losers of a triplet---that is, heap-children touched by zig-zags or zigs, heap-children skipped by a zig-zig, and winners of a triplet---become right heap-children of $x^+$. All hit left heap-children that are not losers become left heap-children of $x^-$.
	        All internal nodes that are either hit by a zig-zag or zig, or are the lower node hit by a zig-zig, become internal nodes of the heavy path of either $x^-$ or $x^+$.
    \end{itemize}
\end{lemma}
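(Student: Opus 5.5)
The proof tracks the splay step by step, maintaining an invariant about the part of the tree already processed. The characterization of heavy edges comes from \Cref{cor:heavy-paths}: an internal node's heavy edge leads to the child whose subtree contains the unique minimum-rank external node (\Cref{lemma:unique-depth-minimum}). So it suffices to know, after each step, which external node has minimum rank in each affected subtree. Throughout, we use the modified ranks in which $x$ has rank $0$ and $x^-,x^+$ have rank $0.5$. Then $x$ is the minimum in every subtree containing it. Any subtree that does not contain $x$ but lies entirely to one side of $x$ and contains $x^\pm$ has $x^\pm$ as its minimum.

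The invariant after each step is the following.
\begin{itemize}
\item The current subtree of $x^\circ$ is the subtree of $x^\circ$ after the previous step.
\item Its left part holds all elements of the processed region left of $x$, with heavy path to $x^-$; its right part is symmetric, with heavy path to $x^+$.
\item Every hit heap-child not yet rehung is attached by a light edge to a node on the heavy path of $x^-$ or $x^+$.
\end{itemize}
Initially (before any step) $x^\circ$ has children $x^L$ (containing $x^-$), the middle child $x$, and $x^R$ (containing $x^+$), so the invariant holds.

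For the inductive step, consider the step at $x^\circ$ with parent $y^\circ$ and, if present, grandparent $z^\circ$. Both lie on the heavy path of $x$, since $x$ has minimum rank in their subtrees. Each of them carries exactly one light child, which is a hit heap-child. Hence each step hits exactly the nodes it rotates and the heap-children hanging off them, and no node is hit twice. We then check the rotations case by case.

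\textbf{Zig and zig-zag.} After the rotations, $y^\circ$ (and $z^\circ$) become children of $x^\circ$ on the left or right side. The subtree of such a node contains the old left (respectively right) part, hence $x^-$ (respectively $x^+$), together with the hit heap-child. Since $x^\pm$ has rank $0.5 <$ the rank of every other node, the heavy edge goes toward $x^\pm$. So the rotated node joins the heavy path of $x^\pm$, and the hit heap-child, on the same side, becomes a heap-child of $x^\pm$.

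\textbf{Zig-zig}, say with $x^\circ$ a left child of $y^\circ$, which is a left child of $z^\circ$. After the two rotations, $y^\circ$ is the right child of $x^\circ$. Its children are the old right part of $x^\circ$ (with $x^+$), the middle value of $y$, and $z^\circ$. The children of $z^\circ$ are the old right subtree of $y^\circ$, the middle value of $z$, and the right subtree of $z^\circ$. Thus $z^\circ$ carries exactly three hit heap-children, consecutive in top-down order, and its heavy child is their minimum $w$. The other two become light children of $z^\circ$, which lies on $w$'s path; so they are the losers and become heap-children of $w$, and $w$ acquires the upper hit node $z^\circ$. The node $y^\circ$ contains $x^+$ in its subtree, so it joins $x^+$'s heavy path. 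Its value child is the skipped bottom-most heap-child, and $z^\circ$'s subtree, represented by $w$, hangs off $x^+$'s path via a light edge. Hence the skipped child and the winner become heap-children of $x^+$.

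A careful statement is needed about what ``bottom-most skipped'' means: the hit heap-child attached at the lower node $y^\circ$ comes first in top-down order. Left/right orientations must also be checked in the ternary picture, including the value children. I expect this bookkeeping, rather than any real difficulty, to be the main obstacle. The symmetric cases follow by mirroring.

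At the end of the splay, $x^\circ$ is the root with children the left part, $x$, and the right part. So $x$'s heavy path is just $x^\circ$, and its only heap-children are $x^-$ and $x^+$. All other former heap-children of $x$ were hit by exactly one step. This gives every item of the lemma.
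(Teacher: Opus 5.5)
Your proposal is correct and is essentially the paper's proof. It uses the same invariant: everything below $x^\circ$ is already in its final form while everything above is untouched. It uses the same artificial rank $0.5$ for $x^\pm$. It also runs the same local case check, in which a rotated node that lands directly below $x^\circ$ joins the heavy path of $x^-$ or $x^+$, while in a zig-zig $z^\circ$ joins the path of the minimum of $y^+$, $z$, $z^+$.

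There are two slips to fix. First, in the ternary view each hit internal node has two light children (its value child and its off-path subtree), not one. Second, the skipped heap-child $y$ hangs by a middle edge from the lower node, so it is last in top-down order, not first.

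The left/right bookkeeping you defer is a one-liner, as in the paper. Every hit right heap-child lies to the right of the subtree of $x^R$ in symmetric order, and that subtree contains $x^+$.
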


\begin{proof}
    Once $x$ is splayed, it still has rank $0$ so it is the root heavy path and $x^\circ$ is the root and the only internal node on the heavy path.
	    The node $x^-$ is assumed to be the lowest-rank node in the left subtree of the root, so it is still a left heap-child of $x$. Since there is only one internal node on the heavy path of $x$, it is the only left heap-child. Similarly, $x^+$ is the lowest-rank node in the right subtree of the root, and for the same reason, it is the only right heap-child of $x$.

	    For the second property, every operation hits two internal nodes, which are removed from the heavy path of $x$. Therefore, no internal node can be hit twice, and neither can any heap-child. Since, after the splay, the heavy path of $x$ has no internal node other than $x^\circ$, every other internal node on its former heavy path must have been hit by some operation.
    
    Next, consider a particular zig-zig and the structure preceding this zig-zig. At this point, $x$ is still the bottom-most node of the root heavy path with everything above as before the splay, whereas everything below is exactly as after the splay, since rotations only affect the local neighborhood.
	     All three cases distinguished below are depicted in \Cref{fig:splay-heavy-paths:zig-zig}.
     
	    Let $y^\circ$ be the parent of $x^\circ$ and $z^\circ$
	    the parent of $y^\circ$---these are the two internal nodes hit by this zig-zig. Let $y$ and $z$ denote the value children of $y^\circ$ and $z^\circ$, respectively. First, we consider the case where $x^\circ$ and $y^\circ$ are both left children.
    Let $y^+$ be the lowest-rank node in the right subtree of $y^\circ$ and similarly define $z^+$.
    Prior to the zig-zig, $x$ had four heap-children that were connected via light edges to either $y^\circ$ or $z^\circ$ -- these were $y$, $y^+$, $z$, and $z^+$. These are exactly the heap-children hit by the zig-zig and they appear consecutively in the top-down ordering of heap-children.
    After the zig-zig, $y^\circ$ and $z^\circ$ are no longer on the heavy path of $x$, with $y^\circ$ being a right child of $x^\circ$. Since $x^+$ is the least-rank node right from $x$, $y^\circ$ is on the heavy path of $x^+$ and hence all $y$, $y^+$, $z$, and $z^+$ are no longer heap-children of $x$. $y$ becomes a heap-child of $x^+$ since it is a child of an internal node on its heavy path. Similarly, the node that contains in its heavy path $z^\circ$ becomes a heap-child of $x^+$ -- this is the one from $y^+$, $z$, and $z^+$ with the least rank. The other two are connected via a light edge to $z^\circ$, so they become heap-children of the least-rank one from $y^+$, $z$, and $z^+$.  Hence, $y^+$, $z$, and $z^+$ are the consecutive triplet considered in the statement and $y$ along with the least-rank node from the triplet become heap-children of $x^+$. No other heap-children are changed, since these are the only nodes hit by the zig-zig. The situation when $x^\circ$ and $y^\circ$ are right children is exactly the same, just with $x^-$ instead of $x^+$.

	    To see the last part of the claim, we first need to show that all other operations move former heap-children of $x$ to become heap-children of $x^-$ or $x^+$. For zig-zigs, we have already shown that the remaining heap-children become heap-children of either $x^-$ or $x^+$, so we need to show this for zigs and zig-zags.

    For a simple rotation of $x$, first consider that $x^\circ$ is a left child.
	    As above, $y^\circ$ becomes part of the heavy path of $x^+$, so both $y$ and $y^+$ become heap-children of $x^+$. The case where $x^\circ$ is a right child is analogous, with $x^-$ in place of $x^+$.
    Zigs are one such rotation and zig-zags are two such rotations one after another. Therefore the heap-children and internal nodes hit by them behave as described above. The effect of the rotation of $x$ on the heavy paths is illustrated in \Cref{fig:splay-heavy-paths:rotation}.

	    This shows that every remaining former heap-child of $x$ becomes a heap-child of either $x^-$ or $x^+$. All former right heap-children of $x$ are in the right subtree, so they must become heap-children of $x^+$. Furthermore, they lie to the right of $x^+$ in the symmetric order, as $x^+$ is the bottom-most and hence left-most right heap-child of $x$; thus, they all become right heap-children of $x^+$. By the same reasoning, all left heap-children become left heap-children of $x^-$.
\end{proof}

See also \Cref{fig:splay} for an illustration of how the heavy paths change after splaying.

However, as mentioned above, the assumption that $x^-$ and $x^+$ have the next-lowest ranks need not hold. We therefore need to consider what changes when we adjust their ranks to their true values.
We now want to argue that this requires only $O(1)$ heavy path junctions.
First, consider $x^-$. Prior to its rank adjustment, it was the bottom of the root heavy path of the left subtree of $x^\circ$. In the left subtree of $x^\circ$, increasing the rank of $x^-$ can create junctions only on this heavy path, as no other internal node contains $x^-$ and no other external node's rank changes. After the update to its correct rank, its rank is $O(1)$ by the riblessness of the optimal tree, as proved in \Cref{cor:depth-by-riblessness}.
In particular there can be at most $O(1)$ light edges on the path from $x^-$ to the root as every light edge decreases the rank by at least $1$. 
Since each heavy-path junction on the heavy path of $x^-$ creates one such light edge, there can be only $O(1)$ heavy-path junctions.
The same holds for $x^+$ as the situation is identical.

Since we now know how the structure of heavy paths changes, we can bound the number of bends created and moreover show that each zig-zag removes at least one bend.

\begin{lemma}\label{lemma:bends-by-splay}
    The number of bends created by a splay is at most $O(1)+\text{\#zig-zigs}$;
    the number of bends removed by a splay is  at least \#zig-zags.
\end{lemma}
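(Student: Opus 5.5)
The plan is to follow the splay step by step, using the description of the heavy-path structure from \Cref{lemma:heavy-paths-after-splay}. We first work under the idealized assumption that $x^-$ and $x^+$ have rank $0.5$, and afterwards add the $O(1)$ heavy-path junctions caused by restoring their true ranks. By \Cref{lemma:heavy-path-junction}, these junctions contribute only $O(1)$ bends, which goes into the additive $O(1)$ term.

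For the upper bound on created bends, note that a bend is determined locally: it is an internal node together with its parent edge and its outgoing heavy edge. So a new bend can appear only at a node whose parent changed or whose heavy edge changed. Since each rotation is local, we look at each operation separately.

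In a zig-zag or zig, the two (or one) hit internal nodes $y^\circ, z^\circ$ end up as children of $x^\circ$. After the splay, $x^\circ$ is the root, so these nodes are the top nodes of the heavy paths of $x^-$ or $x^+$ (or they hang directly below such a top). A top-most node of a heavy path is by definition not a bend. The heap-children hanging off them keep their own heavy paths, and their topmost nodes stay topmost. Hence zig-zags and zigs create at most $O(1)$ bends in total, coming from the last steps near the root and from the $x^\circ$ children.

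I plan to check this more carefully as follows. After a zig-zag, $y^\circ$ and $z^\circ$ become the left and right children of $x$. Subsequent operations then move $x$ further up, and the nodes $y^\circ, z^\circ$ become deeper on the paths of $x^\pm$. Whether they are bends is fixed at this point, and the parent edge of $y^\circ$ or $z^\circ$ does not change again. So bends created by a zig-zag need separate accounting. The claim I will aim for is that after a zig-zag the new parent–child direction agrees with the direction of the heavy edge. This is the case because the heavy path of $x^+$ enters the right subtree of $x$ and continues, and $z^\circ$ as a right child of the node above it continues its heavy path rightward into $C$ or $D$ only when $z^+$ wins.

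For zig-zigs, the upper hit node $z^\circ$ gains a new heavy edge towards the triplet winner, and the lower hit node $y^\circ$ joins the path of $x^\pm$. Each of these can create at most $O(1)$ bends, and I will argue it is at most one net per zig-zig, which gives the $\#$zig-zigs term.

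For the removal bound, the key observation is that before a zig-zag, $x^\circ$ and $y^\circ$ lie on the root heavy path with alternating directions. Take the case where $x^\circ$ is a right child of $y^\circ$ and $y^\circ$ is a left child of $z^\circ$. The heavy path goes from $y^\circ$ to its right child $x^\circ$, while $y^\circ$ is itself a left child, so $y^\circ$ is a bend. The exception is when $y^\circ$ is topmost on its path, but it is on the root path and not the root, so this does not happen. After the zig-zag, $y^\circ$ leaves the root path and lands near the root as described above. I will show it is no longer a bend, or at least that the total bend count strictly drops by one, with each zig-zag charged to its own $y^\circ$. Distinct zig-zags hit distinct nodes (\Cref{lemma:heavy-paths-after-splay}), so no bend is counted twice.

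The main obstacle is the precise case analysis for the created bends. I need to verify that zig-zags do not create bends at $y^\circ, z^\circ$ in their new positions on the heavy paths of $x^\pm$, or else charge any such bend against the one just removed. I would first attempt an exhaustive check over the few configurations, depending on which child subtree contains the minimum rank after each step, with figures analogous to \Cref{fig:zig-zig-zig-zag}.
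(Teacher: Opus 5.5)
Your count of removed bends matches the paper's argument. The lower hit node of each zig-zag is a non-topmost node of the root heavy path whose parent edge and heavy edge point in opposite directions. These nodes are distinct across zig-zags, and after the splay the root heavy path has no bends.

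The gap is in the creation half, where your step-by-step tracking rests on claims that are false.
\begin{itemize}
\item The zig/zig-zag hit nodes do not stay children of $x^\circ$. Their parent edge \emph{does} change again: each later splay step re-hangs one or both subtrees of $x^\circ$, always on the opposite side. For example, in a zig-zig with $x^\circ$ a left child, the right subtree of $x^\circ$ becomes the left subtree of the lower hit node.
\item Immediately after a zig-zag, the parent--child direction does \emph{not} agree with the heavy edge. $z^\circ$ is the right child of $x^\circ$, while its heavy edge goes left into the old right subtree of $x^\circ$, which contains $x^+$ (symmetrically for $y^\circ$). It is a non-bend at that moment only because it is topmost. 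It becomes a lasting non-bend precisely through the later flip you rule out.
\item The remark about $z^+$ winning is zig-zig vocabulary and does not apply to a zig-zag.
\item Your fallback of charging such a bend against the one just removed would not prove the statement, which bounds creations separately. If zig-zags could create bends one-for-one, the later bound of \#zig-zags by bends created would collapse.
\item The claim of at most one net bend per zig-zig is only announced, not argued.
\end{itemize}

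The missing idea is to look at the final tree globally rather than step by step. Under the rank-$0.5$ assumption, after the splay the heavy path of $x^+$ has two parts. The lower part is its old part from $x^R$ downward, where $x^R$ is the former right child of $x^\circ$ and its subtree is untouched by the splay. It is preceded by the path from the right child of $x^\circ$ down to $x^R$ (cf.\ \Cref{lemma:heavy-paths-after-splay}).

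That subtree contains $\perp_x$, the leftmost external node of the root's right subtree. Hence the connecting path uses only left edges, and the heavy path follows it leftwards, so no node on it above $x^R$ is a bend. The only candidate is $x^R$ itself, which keeps its heavy continuation but may stop being topmost; symmetrically for $x^L$.

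For a zig-zig, $z^\circ$ becomes the topmost node of the winner's heavy path. So the only node that can turn into a bend is the winner's former topmost internal node, giving at most one per zig-zig. Nothing else changes, and the $O(1)$ junctions add $O(1)$ bends.

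The same spine observation is also what guarantees that the lower hit nodes of zig-zags are really no longer bends after the splay.
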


\begin{proof} 
	    Let $x^R$ be the top-most node of the heavy path of $x^+$ before the splay, i.e., the former right child of $x^\circ$.
    By \Cref{lemma:heavy-paths-after-splay}, $x^+$ is the sole right heap-child of $x$. Since the subtree of $x^R$ is the left-most one right of $x$, all edges on the path to $x^R$ (which is the new prefix of the heavy path to $x^+$) are left and hence no internal node above $x^R$ on the heavy path of $x^+$ becomes a bend. The heavy path below $x^R$ remains the same, so possibly only $x^R$ could have become a new bend on the heavy path of $x^+$.
    Similarly, all edges to analogously defined $x^L$ are only right edges, hence $x^L$ is the only node that could have become a bend on the heavy path of $x^-$.

	    Next, by \Cref{lemma:heavy-paths-after-splay}, for every zig-zig there is one node $w$---the winner of the triple---whose heavy path acquires a new internal node. Consequently, the former top-most internal node on the heavy path of $w$ could become a new bend, but all its other internal nodes are unaffected. This creates at most \#zig-zigs new bends.
    
    The heavy path of $x$ after the splay contains only a single internal node, hence it has no bends. Prior to the splay it could have had many bends, which got removed.
	    In particular, we show that there were at least \#zig-zags bends.
	    In a zig-zag, the edge from the upper hit node to the lower hit node and the heavy edge from the lower hit node toward $x^\circ$ go in opposite directions, so the lower hit node must have been a bend. The hit nodes for all zig-zags are disjoint; thus, there were \#zig-zags such bends, and all were removed.

    By \Cref{lemma:heavy-paths-after-splay}, no other heavy paths were modified under the assumption of low ranks of $x^-$ and $x^+$, hence no other bends could have been created.

	    Restoring the true ranks of $x^-$ and $x^+$ requires $O(1)$ heavy-path junctions, each creating $O(1)$ bends by \Cref{lemma:heavy-path-junction}.
    
    This totals at most $O(1)+\text{\#zig-zigs}$ created bends and at least \#zig-zags  removed bends as desired.  
\end{proof}


\begin{figure}[p]
\centering

\begin{subfigure}{\textwidth}
\centering

\begin{tikzpicture}[
    vertex/.style={
        circle,
        draw,
        fill=white,
        minimum size=6mm,
        inner sep=0pt
    },
    childsquare/.style={
        rectangle,
        draw,
        fill=white,
        minimum size=5mm,
        inner sep=0pt,
        font=\small
    },
    trig/.style={
        isosceles triangle,
        draw,
        fill=white,
        minimum height=8mm,
        minimum width=8mm,
        inner sep=0pt,
        shape border rotate=90
    },
    subtree/.style={
        trig,
        yshift=-11mm
    },
    dashed edge/.style={
        dashed
    },
    solid edge/.style={
        line width=1.6pt
    },
    transform arrow/.style={
        -{Stealth[length=3mm,width=2mm]},
        line width=1pt
    },
    every label/.style={
        font=\small
    }
]


\node[vertex] (Az) at (-3.8,  1.0) {};
\node[vertex] (Ay) at (-5.0,  0.0) {};
\node[vertex] (Ax) at (-6.2, -1.0) {};

\draw[solid edge] (Az) -- (Ay) -- (Ax);

\node[childsquare] (Asx) at ($(Ax)+(0,-0.8)$) {$x$};
\node[childsquare] (Asy) at ($(Ay)+(0,-0.8)$) {$y$};
\node[childsquare] (Asz) at ($(Az)+(0,-0.8)$) {$z$};

\node[subtree,label=below:{$x^{-}$}] (Axm)
    at ($(Ax)+(-0.95,-0.20)$) {};
\node[subtree,label=below:{$x^{+}$}] (Axp)
    at ($(Ax)+( 0.95,-0.20)$) {};
\node[subtree,label=below:{$y^{+}$}] (Ayp)
    at ($(Ay)+( 0.95,-0.20)$) {};
\node[subtree,label=below:{$z^{+}$}] (Azp)
    at ($(Az)+( 0.95,-0.20)$) {};

\draw[dashed edge] (Ax) -- (Axm.apex);
\draw[solid edge]  (Ax) -- (Asx);
\draw[dashed edge] (Ax) -- (Axp.apex);

\draw[dashed edge] (Ay) -- (Asy);
\draw[dashed edge] (Ay) -- (Ayp.apex);

\draw[dashed edge] (Az) -- (Asz);
\draw[dashed edge] (Az) -- (Azp.apex);


\draw[transform arrow] (-1.4,-1.25) --(-0.4,-1.25);

\node[vertex] (Bx) at ( 2.0,  1.0) {};
\node[vertex] (By) at ( 3.2, -0.1) {};
\node[vertex] (Bz) at ( 4.4, -1.2) {};

\draw[dashed edge] (Bx) -- (By) -- (Bz);

\node[childsquare] (Bsx) at ($(Bx)+(0,-0.8)$) {$x$};
\node[childsquare] (Bsy) at ($(By)+(0,-0.8)$) {$y$};
\node[childsquare] (Bsz) at ($(Bz)+(0,-0.8)$) {$z$};

\node[subtree,label=below:{$x^{-}$}] (Bxm)
    at ($(Bx)+(-0.95,-0.20)$) {};
\node[subtree,label=below:{$x^{+}$}] (Bxp)
    at ($(By)+(-0.95,-0.20)$) {};
\node[subtree,label=below:{$y^{+}$}] (Byp)
    at ($(Bz)+(-0.95,-0.20)$) {};
\node[subtree,label=below:{$z^{+}$}] (Bzp)
    at ($(Bz)+( 0.95,-0.20)$) {};

\draw[dashed edge] (Bx) -- (Bxm.apex);
\draw[solid edge]  (Bx) -- (Bsx);

\draw[solid edge]  (By) -- (Bxp.apex);
\draw[dashed edge] (By) -- (Bsy);

\draw[solid edge, color=black!30] (Bz) -- (Byp.apex);
\draw[solid edge, color=black!30] (Bz) -- (Bsz);
\draw[solid edge, color=black!30] (Bz) -- (Bzp.apex);


\foreach \T in {Axm,Axp,Ayp,Azp,Bxm,Bxp,Byp,Bzp}{
    \draw[line width=1.4pt]
        ($(\T.center)+(0.15,-0.18)$)
        -- ++(-0.12,0.12)
        -- ++( 0.12,0.12)
        -- ++(-0.12,0.12)
        -- (\T.apex);
}

\end{tikzpicture}

\caption{The structure of heavy paths after a zig-zig on $x$ ($x^+$ is assumed to be the second least-rank node). The grey solid edges indicate which edges can be either light or heavy, depending on the exact ranks. The edge to the lowest-rank node among $y^+$, $z$, and $z^+$ is heavy.}
\label{fig:splay-heavy-paths:zig-zig}

\end{subfigure}

\begin{subfigure}{\textwidth}
\centering
\begin{tikzpicture}[
    vertex/.style={
        circle,
        draw,
        fill=white,
        minimum size=6mm,
        inner sep=0pt
    },
    childsquare/.style={
        rectangle,
        draw,
        fill=white,
        minimum size=5mm,
        inner sep=0pt,
        font=\small
    },
    trig/.style={
        isosceles triangle,
        draw,
        fill=white,
        minimum height=8mm,
        minimum width=8mm,
        inner sep=0pt,
        shape border rotate=90
    },
    subtree/.style={
        trig,
        yshift=-3mm
    },
    dashed edge/.style={
        dashed
    },
    solid edge/.style={
        line width=1.6pt
    },
    transform arrow/.style={
        -{Stealth[length=3mm,width=2mm]},
        line width=1pt
    }
]


\node[vertex] (Lroot) at (-5.0,  0.0) {};
\node[vertex] (Lmid)  at (-6.2, -1.2) {};

\node[childsquare] (Ly) at (-5.0, -1.05) {$y$};

\node[subtree,label=below:{$y^{+}$}] (Lyp)
    at (-3.8, -1.35) {};

\node[subtree,label=below:{$x^{-}$}] (Lxm)
    at (-7.3, -2.45) {};

\node[childsquare] (Lx) at (-6.2, -2.25) {$x$};

\node[subtree,label=below:{$x^{+}$}] (Lxp)
    at (-5.1, -2.45) {};

\draw[solid edge] (Lroot) -- (Lmid);
\draw[dashed edge] (Lroot) -- (Ly);
\draw[dashed edge] (Lroot) -- (Lyp.apex);

\draw[dashed edge] (Lmid) -- (Lxm.apex);
\draw[solid edge] (Lmid) -- (Lx);
\draw[dashed edge] (Lmid) -- (Lxp.apex);


\draw[transform arrow] (-2.6,-1.25) -- (-1.6,-1.25);


\node[vertex] (Rroot) at ( 0.8,  0.0) {};
\node[vertex] (Rmid)  at ( 2.0, -1.2) {};

\node[subtree,label=below:{$x^{-}$}] (Rxm)
    at (-0.4, -1.35) {};

\node[childsquare] (Rx) at (0.8, -1.05) {$x$};

\node[subtree,label=below:{$x^{+}$}] (Rxp)
    at (0.9, -2.45) {};

\node[childsquare] (Ry) at (2.0, -2.25) {$y$};

\node[subtree,label=below:{$y^{+}$}] (Ryp)
    at (3.2, -2.45) {};

\draw[dashed edge] (Rroot) -- (Rxm.apex);
\draw[solid edge] (Rroot) -- (Rx);
\draw[dashed edge] (Rroot) -- (Rmid);

\draw[solid edge] (Rmid) -- (Rxp.apex);
\draw[dashed edge] (Rmid) -- (Ry);
\draw[dashed edge] (Rmid) -- (Ryp.apex);


\foreach \T in {Lyp,Lxm,Lxp,Rxm,Rxp,Ryp}{
    \draw[line width=1.4pt]
        ($(\T.center)+(0.15,-0.18)$)
        -- ++(-0.12,0.12)
        -- ++( 0.12,0.12)
        -- ++(-0.12,0.12)
        -- (\T.apex);
}

\end{tikzpicture}

\caption{The effect on heavy paths after rotating $x$ (under the assumption that $x^+$ is the next least-rank node). A zig consists of one such rotation, while a zig-zag consists of two.}
\label{fig:splay-heavy-paths:rotation}

\end{subfigure}

\caption{The changes to the structure of heavy paths after zig/zig-zag/zig-zig of the splayed node.}

\end{figure}
\usetikzlibrary{patterns.meta}
\newcommand{\treezig}[1]{%
    \draw[line width=1.4pt]
        ($(#1.center)+(0.15,-0.18)$)
        -- ++(-0.12,0.12)
        -- ++( 0.12,0.12)
        -- ++(-0.12,0.12)
        -- (#1.apex);
}

\begin{figure}[t]
\centering

\begin{tikzpicture}[
    scale=0.70,
    transform shape,
    >=Stealth,
    vertex/.style={
        circle,
        draw,
        fill=white,
        minimum size=6mm,
        inner sep=0pt
    },
    childsquare/.style={
        rectangle,
        draw,
        fill=white,
        minimum size=5mm,
        inner sep=0pt
    },
    trig/.style={
        isosceles triangle,
        draw,
        fill=white,
        minimum height=8mm,
        minimum width=8mm,
        inner sep=0pt
    },
    hatch horizontal/.style={
        pattern={Lines[angle=0, distance=2.5pt]}, 
        pattern color=DarkGreen
    },
    hatch left/.style={
        pattern=north west lines,
        pattern color=blue
    },
    hatch right/.style={
        pattern=north east lines,
        pattern color=red
    },
    main edge/.style={
        line width=2pt
    }
]


\begin{scope}


\node[vertex,hatch horizontal] (n1) at (-4.0, -1.4) {};
\node[vertex,hatch horizontal] (n2) at ( 2.6, -2.6) {};

\node[vertex,hatch left] (n3) at (-2.6, -4.0) {};
\node[vertex,hatch left] (n4) at (-1.8, -5.4) {};

\node[vertex,hatch right] (n5) at ( 1.4, -6.8) {};
\node[vertex,hatch right] (n6) at ( 0.8, -8.2) {};

\node[vertex] (n7) at (-0.4, -9.6) {$x^\circ$};


\draw[main edge]
    (n1) -- (n2)
    -- (n3)
    -- (n4)
    -- (n5)
    -- (n6)
    -- (n7);


\node[childsquare,hatch horizontal] (s1) at ($(n1)+(0,-1.0)$) {};
\draw[dashed] (n1) -- (s1);

\node[childsquare,hatch horizontal] (s2) at ($(n2)+(0,-1.0)$) {};
\draw[dashed] (n2) -- (s2);

\node[childsquare,hatch left] (s3) at ($(n3)+(0,-1.0)$) {};
\draw[dashed] (n3) -- (s3);

\node[childsquare,hatch left] (s4) at ($(n4)+(0,-1.0)$) {};
\draw[dashed] (n4) -- (s4);

\node[childsquare,hatch right] (s5) at ($(n5)+(0,-1.0)$) {};
\draw[dashed] (n5) -- (s5);

\node[childsquare,hatch right] (s6) at ($(n6)+(0,-1.0)$) {};
\draw[dashed] (n6) -- (s6);

\node[childsquare] (s7) at ($(n7)+(0,-1.0)$) {$x$};
\draw[main edge] (n7) -- (s7);


\node[trig,shape border rotate=90,hatch horizontal] (l1)
    at ($(n1)+(-1.4,-1.3)$) {};
\draw[dashed] (n1) -- (l1.apex);
\treezig{l1}

\node[trig,shape border rotate=90,hatch left] (l3)
    at ($(n3)+(-1.4,-1.3)$) {};
\draw[dashed] (n3) -- (l3.apex);
\treezig{l3}

\node[trig,shape border rotate=90,hatch left] (l4)
    at ($(n4)+(-1.4,-1.3)$) {};
\draw[dashed] (n4) -- (l4.apex);
\treezig{l4}


\node[trig,shape border rotate=90,hatch horizontal] (r2)
    at ($(n2)+(1.4,-1.3)$) {};
\draw[dashed] (n2) -- (r2.apex);
\treezig{r2}

\node[trig,shape border rotate=90,hatch right] (r5)
    at ($(n5)+(1.4,-1.3)$) {};
\draw[dashed] (n5) -- (r5.apex);
\treezig{r5}

\node[trig,shape border rotate=90,hatch right] (r6)
    at ($(n6)+(1.4,-1.3)$) {};
\draw[dashed] (n6) -- (r6.apex);
\treezig{r6}


\node[trig,shape border rotate=90,label=below:{$x^-$}] (l7)
    at ($(n7)+(-1.4,-1.3)$) {};
\draw[dashed] (n7) -- (l7.apex);
\treezig{l7}

\node[trig,shape border rotate=90,label=below:{$x^+$}] (r7)
    at ($(n7)+(1.4,-1.3)$) {};
\draw[dashed] (n7) -- (r7.apex);
\treezig{r7}

\end{scope}


\draw[->,line width=1.4pt] (4.6,-6.1) -- (5.9,-6.1);


\begin{scope}[xshift=11.0cm]


\node[vertex] (m0) at (0.0,-1.4) {$x^\circ$};
\node[childsquare] (mx) at (0.0,-2.7) {$x$};

\draw[main edge] (m0) -- (mx);


\node[vertex,hatch horizontal] (a1) at (-2.4,-3.6) {};
\node[vertex,hatch horizontal] (b2) at ( 2.4,-3.6) {};

\draw[dashed] (m0) -- (a1);
\draw[dashed] (m0) -- (b2);


\node[childsquare,hatch horizontal] (sa1) at (-2.4,-4.9) {};
\node[trig,shape border rotate=90,hatch horizontal] (ta1)
    at (-3.6,-5.3) {};

\draw[dashed] (a1) -- (sa1);
\draw[dashed] (a1) -- (ta1.apex);
\treezig{ta1}


\node[vertex,hatch left] (a3) at (-1.25,-5.6) {};
\draw[main edge] (a1) -- (a3);

\node[childsquare,hatch left] (sa3) at (-1.25,-6.9) {};
\node[trig,shape border rotate=90,label=below:{$x^-$}] (txminus)
    at (-0.4,-7.65) {};

\draw[dashed] (a3) -- (sa3);
\draw[main edge] (a3) -- (txminus.apex);
\treezig{txminus}

\node[vertex,hatch left] (a4) at (-2.75,-7.45) {};
\draw[dashed] (a3) -- (a4);

\node[childsquare,hatch left] (sa4) at (-2.75,-8.8) {};
\node[trig,shape border rotate=90,hatch left] (ta4L)
    at (-4.05,-9.25) {};
\node[trig,shape border rotate=90,hatch left] (ta4R)
    at (-1.50,-9.25) {};

\draw[main edge] (a4) -- (sa4);
\draw[dashed] (a4) -- (ta4L.apex);
\draw[dashed] (a4) -- (ta4R.apex);

\treezig{ta4L}
\treezig{ta4R}


\node[childsquare,hatch horizontal] (sb2) at (2.4,-4.9) {};
\node[trig,shape border rotate=90,hatch horizontal] (tb2)
    at (3.6,-5.3) {};

\draw[dashed] (b2) -- (sb2);
\draw[dashed] (b2) -- (tb2.apex);
\treezig{tb2}


\node[vertex,hatch right] (b5) at (1.30,-5.6) {};
\draw[main edge] (b2) -- (b5);

\node[childsquare,hatch right] (sb5) at (1.30,-6.9) {};
\node[trig,shape border rotate=90,label=below:{$x^+$}] (txplus)
    at (0.55,-7.65) {};

\draw[dashed] (b5) -- (sb5);
\draw[main edge] (b5) -- (txplus.apex);
\treezig{txplus}

\node[vertex,hatch right] (b6) at (2.75,-7.35) {};
\draw[dashed] (b5) -- (b6);

\node[childsquare,hatch right] (sb6) at (2.75,-8.7) {};
\node[trig,shape border rotate=90,hatch right] (tb6L)
    at (1.55,-9.25) {};
\node[trig,shape border rotate=90,hatch right] (tb6R)
    at (4.00,-9.25) {};

\draw[dashed] (b6) -- (sb6);
\draw[main edge] (b6) -- (tb6L.apex);
\draw[dashed] (b6) -- (tb6R.apex);

\treezig{tb6L}
\treezig{tb6R}

\end{scope}

\end{tikzpicture}

\caption{The effect of splaying node $x$ (under the assumption that $x^-$ and $x^+$ are the next least-rank nodes). Nodes hit by the first zig-zig are red and right-diagonally hatched, nodes hit by the second zig-zig are blue and left-diagonally hatched, and nodes hit by the zig-zag are green and horizontally hatched. Note that the structure below the heavy paths to $x^-$ and $x^+$ depends on the least-rank node in the triplet and can be any configuration from \Cref{fig:splay-heavy-paths:zig-zig}.
Observe also that zig-zags happened exactly when the bottom-most hit internal node was a bend.}
\label{fig:splay}

\end{figure}

\subsubsection{Adjusting lazy intervals to changes after splaying}

\Cref{lemma:heavy-paths-after-splay} tells us how the structure of heavy paths changes. This change necessitates a change in the lazy intervals.

As in the heavy-path analysis, we first assume that $x^-$ and $x^+$ have the next-lowest ranks and therefore become the sole heap-children of $x$; after the splay, we remove this assumption.
To make this assumption, we split the lazy intervals containing $x^-$ and $x^+$ to make them singleton lazy intervals, converting them if necessary. Then, changes to the ranks of $x^-$ and $x^+$ do not affect the contracted point gaps of any other heap-children of $x$. If we ensure that all non-empty lazy intervals of $x^-$ and $x^+$ are converted to shrinking, no heap-children of $x^-$ or $x^+$ change their contracted point gaps, as every change is absorbed by the interval gaps.

Under this assumption, these two singleton lazy intervals remain the only two shrinking lazy intervals of $x$.

Next, by \Cref{lemma:heavy-paths-after-splay}, we have two more tasks. First, we need to account for the effects of zig-zigs. Second, we need to ensure that every remaining heap-child of $x$ changes its heap-parent to $x^-$ or $x^+$.
The second task is straightforward: once all heap-children moved by zig-zigs have been moved accordingly within the lazy intervals, all remaining right heap-children of $x$ except for $x^+$ become right heap-children of $x^+$.
Since $x^+$ is already in a separate lazy interval, all other right lazy intervals of $x$ are transferred to $x^+$.
Similarly, all heap-children in left lazy intervals of $x$ except for the singleton one that contains $x^-$ become left heap-children of $x^-$, which on the level of lazy intervals is handled by transferring the left lazy intervals of $x$ to $x^-$.
These are all valid lazy-interval transfers, since the heap-children in the transferred intervals appear above the lazy intervals that were already owned by $x^-$ and $x^+$.

The remaining part that we need to handle is the effect of zig-zigs. We show that the effect of a zig-zig corresponds to exactly two pairings and then explain how to perform them on lazy intervals using pair-ups, inserts, and deletes. This is important because we will later bound the total number of pairings by bounding the total number of these operations on lazy intervals.

\begin{lemma}\label{lemma:zig-zig-to-pairings}
    The effect of a zig-zig of $x$ corresponds to two pairings assuming that afterwards the lazy intervals containing all the hit heap-children of $x$ get transferred to $x^-$/$x^+$.
\end{lemma}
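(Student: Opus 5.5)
The plan is to use the description of a single zig-zig from \Cref{lemma:heavy-paths-after-splay} and check that its effect on heap-parent relations and lazy intervals is exactly two pairings, followed by the transfers already described. Fix a zig-zig, and by symmetry take $x^\circ$ and $y^\circ$ to be left children. Its hit heap-children of $x$ are $y, y^+, z, z^+$, and they are consecutive in the top-down order of the right heap-children of $x$. The zig-zig skips $y$ and forms the triplet $(y^+, z, z^+)$, which is still consecutive in the right top-down order. Let $w$ be the triplet member of least rank; its rank is unique by \Cref{lemma:unique-depth-minimum}. Then the net change is that the other two members stop being heap-children of $x$ and become right or left heap-children of $w$. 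Meanwhile $y$ and $w$ stay in the pool that is later transferred to $x^+$.

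The first step is to write this change as two successive pairings in the sense of the definition, using the order of the triplet. There are three cases, according to whether $w$ is the first, middle, or last member of the triplet.
\begin{itemize}
\item If $w$ is the middle element $z$, pair $z$ with $y^+$ and then $z$ with $z^+$.
\item If $w$ is an end element, first pair the two non-winners, which are adjacent. Their winner is the one of smaller rank. That winner stays in the interval and is now adjacent to $w$, so pair it with $w$.
\end{itemize}
In the second case I will check that the end result is the same. The final heap-parent of every loser must be $w$, and each must sit in the correct growing lazy interval of $w$, with left/right side decided by symmetric order relative to $w$. Here the earlier loser follows its winner into $w$'s growing interval. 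To make that consistent, I will pick the pairing order, or insist that the first pairing be $(w,\cdot)$ whenever possible, so that the losers land directly under $w$. Concretely, if $w$ is at an end, I pair $w$ with its neighbor first. After this, $w$ is adjacent to the remaining element, because the loser has been removed from $x$'s list, so I pair $w$ with that element second. In every case both pairings then have winner $w$, which I expect matches the paper's convention. Adjacency and distinct ranks hold for both pairings, since the members of the triplet are pairwise distinct in rank.

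The second step is to check where each loser ends up in the heavy-path structure. From \Cref{fig:splay-heavy-paths:zig-zig}, the losers hang via light edges off $z^\circ$, which now lies on $w$'s heavy path above its old top. So they are the top-most heap-children of $w$, and inserting them on top of $w$'s growing intervals is consistent with the invariants. Their order is also correct: $z$ via the middle edge and $z^+$ via the right edge, placed in top-down order.

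Finally, $y$ and $w$ stay in their lazy intervals of $x$, and the subsequent transfer to $x^+$ is exactly the transfer described before the lemma, so nothing else changes. The main obstacle is the bookkeeping in the end cases: showing that the second pairing is legitimate, that is, adjacent in top-down order after the first loser is removed, and that the ordering inside $w$'s growing intervals matches the actual tree. I would settle this by a direct inspection of the three rank-order configurations.
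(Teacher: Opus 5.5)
Your final scheme matches the paper's argument. In all three cases both pairings are won by the unique minimum $w$ of the triplet. When $w$ is $y^+$ or $z^+$, you first pair it with the adjacent middle element $z$ and then with the far element. When $w=z$ the order is irrelevant, because the two losers go to different sides. This also gives the correct top-down order inside $w$'s growing interval. The far element hangs from $z^\circ$ by a side edge, so it lies above the middle-edge child $z$; it is inserted last and therefore lands on top, which is what the paper checks case by case.

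Two statements in your write-up should be removed.

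\begin{itemize}
\item The first end-case recipe is wrong: it pairs the two non-winners first and then pairs the survivor with $w$. The remark that the earlier loser ``follows its winner'' into $w$'s interval does not rescue it. A pairing moves only the loser, so under that recipe one loser would end up as a heap-child of the other loser rather than of $w$.
\item The triplet need not be pairwise distinct in rank. When $w=z$, the elements $y^+$ and $z^+$ can have equal depth in $T^*$. All you need is that $\rank(w)$ is strictly smaller than the rank of each loser. This follows from \Cref{lemma:unique-depth-minimum}, or, as in the paper, from the losers being heap-children of $w$ after the splay.
\end{itemize}
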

\begin{proof}
First, consider a right zig-zig. Let $x^\circ$ be the splayed internal node, with $x$ as its corresponding value node; let $y^\circ$ be the parent of $x^\circ$ and $z^\circ$ the parent of $y^\circ$, with $y$ and $z$ as their respective value nodes. Let $x^-$ and $x^+$ be the least-rank nodes in the left and right subtrees of $x^\circ$, respectively, and let $y^+$ and $z^+$ be the least-rank nodes in the right subtrees of $y^\circ$ and $z^\circ$, respectively.
After the zig-zig, $x^-$ and $x^+$ remain the bottom-most heap-children of $x$, while $y$
is transferred to $x^+$ since $y^\circ$ lies on the heavy path of $x^+$. The lowest-rank node among $y^+$, $z$, and $z^+$ contains $z^\circ$ in its heavy path and is transferred to $x^+$ as a heap-child. The other two become top-most heap-children of the heavy path containing $z^\circ$. Therefore, we only need to show that this change can be decomposed into two pairings.

The proof of this depends on which of  $y^+$, $z$, and $z^+$ has the lowest rank.
Note that initially they appear in this order consecutively as the right heap-children of $x$.
If $z$ is the lowest-rank node, then $y^+$ and $z^+$ become the new top-most heap-children of $z$, with $y^+$ becoming the left heap-child and $z^+$ becoming the right heap-child. Since both pairs $(z,z^+)$ and $(y^+,z)$ are consecutive, this can be performed with two pairings in either order.

If $y^+$ has the lowest rank, then $z^+$ becomes the new top-most right heap-child of $y^+$ and $z$ becomes the second top-most right heap-child of $y^+$. Therefore, we can first pair $y^+$ and $z$, which are consecutive, and then $y^+$ and $z^+$, which become consecutive when $z$ is removed. This correctly puts $z^+$ above $z$ among the right heap-children of $y^+$.

Finally, if $z^+$ has the lowest rank, then $y^+$ becomes the new top-most left heap-child of $z^+$ and $z$ becomes the second top-most left heap-child of $z^+$. Therefore, we can first pair $z$ and $z^+$, which are consecutive, and then $y^+$ and $z^+$, which become consecutive when $z$ is removed. This correctly puts $y^+$ above $z$ among the left heap-children of $z^+$.
All three cases are illustrated in \Cref{fig:splay-heavy-paths:zig-zig}.

After the splay, the losers are heap-children of the winner, so their ranks are strictly larger than the winner's rank. Therefore, the requirement that the paired vertices have different ranks is satisfied.

The situation for left zig-zigs is analogous: $x^-$, $y^-$, and $z^-$ replace $x^+$, $y^+$, and $z^+$, respectively, and the roles of left and right heap-children are swapped.

\end{proof}

Hence the effect of each zig-zig can be decoupled into two pairings that happen one after the other. We call the first of these a \emph{first pairing} and the second a \emph{second pairing}.
We perform internal pairings using pair-ups. Boundary pairings cannot be performed by pair-ups, so we implement them by deleting each loser and inserting it into the winner's lazy interval.

\begin{lemma}\label{lemma:pairings-in-parallel}
    For a given splay, no two first pairings share a heap-child; additionally, no two second pairings share a heap-child.
\end{lemma}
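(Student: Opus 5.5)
We prove the statement by reading off which heap-children each pairing touches, using \Cref{lemma:heavy-paths-after-splay} and \Cref{lemma:zig-zig-to-pairings}. For a fixed zig-zig, the two pairings act only on its consecutive triplet (the three upper hit heap-children, i.e.\ $y^+$, $z$, $z^+$ for a right zig-zig). So it suffices to show that triplets of distinct zig-zigs in the same splay are disjoint. Since pairings of a given type (first or second) use only heap-children from a single triplet, and each zig-zig contributes exactly one first and one second pairing, two first pairings from different zig-zigs then cannot share a heap-child; the same holds for second pairings.

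First, we establish disjointness of the triplets. By the second item of \Cref{lemma:heavy-paths-after-splay}, no heap-child is hit by more than one operation of the splay. Each triplet consists of heap-children hit by its zig-zig, so triplets of different zig-zigs are disjoint as sets of original heap-children of $x$.

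Second, we make sure no new heap-child enters a later triplet. The triplet of a later zig-zig should consist of heap-children that were already heap-children of $x$ before the splay. The concern is that an earlier zig-zig could create a new heap-child of $x$ (for instance, a winner) that a later zig-zig then hits. But by \Cref{lemma:heavy-paths-after-splay}, after each operation the hit heap-children leave $x$: winners and skipped heap-children move to $x^\pm$, and losers move to the winner. The portion of the heavy path of $x$ above the current position is unchanged. Hence the heap-children hit by later operations are original heap-children of $x$, hanging off internal nodes not yet hit. This fixes the triplets as disjoint subsets of the original heap-children.

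Third, we handle the fact that the pairings are applied sequentially and the second pairing involves the winner, which is also in the first pairing of the same zig-zig. This only matters within a single zig-zig, which contributes one pairing of each type, so it does not violate the claim. The main subtlety is confirming that the interpretation "as heap-children of $x$, before the lazy intervals are transferred to $x^\pm$" matches the assumption in \Cref{lemma:zig-zig-to-pairings}. This ensures the pairings of all zig-zigs can be considered simultaneously on the original list of heap-children of $x$, with consecutivity holding within each triplet. I expect this bookkeeping to be the main obstacle, but it follows from the locality of rotations noted in the proof of \Cref{lemma:heavy-paths-after-splay}.
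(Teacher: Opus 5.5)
Your proposal is correct and matches the paper's proof: the heap-children hit by different zig-zigs are disjoint by \Cref{lemma:heavy-paths-after-splay}, and each zig-zig contributes exactly one first and one second pairing, so pairings of the same type cannot share a heap-child. Your extra bookkeeping, showing that later triplets consist of original heap-children of $x$, is already contained in that lemma, and the paper's proof is the one-line version of your argument.
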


\begin{proof}
    By \Cref{lemma:heavy-paths-after-splay}, the sets of heap-children hit by individual zig-zigs are disjoint, therefore the pairings are disjoint as each comes from a different zig-zig. Analogously for second pairings.
\end{proof}

From \Cref{lemma:pairings-in-parallel}, we can perform the pair-ups / deletes and inserts for first pairings in any order or in parallel; the same for second pairings. This will be later important to allow us to bound the number of pair-ups.

These operations on lazy intervals capture the changes to heavy paths under the assumption that $x^-$ and $x^+$ are the next-lowest-rank vertices. This need not be the case, so we must also update the intervals at the $O(1)$ heavy-path junctions required to remove the assumption. By \Cref{lemma:heavy-path-junction}, we know how to do this using $O(1)$ paid operations and creating $O(1)$ new lazy intervals.
Since all transferred lazy intervals are shrinking, if any transferred element in fact has lower rank than $x^-$ or $x^+$, this difference is absorbed by the interval gap, and all point gaps remain non-negative throughout.

Overall, the following sequence of operations correctly updates the lazy intervals for a splay.

\begin{enumerate}
    \item Convert the growing lazy intervals of $x$, $x^-$, and $x^+$ to shrinking.
    \item Split the lazy intervals containing $x^-$ and $x^+$ so that $x^-$ and $x^+$  are in singleton lazy intervals.
	    \item For every first pairing $(u,v)$ during a splay, in any order, do the following:
    \begin{enumerate}
        \item If $(u,v)$ is a boundary pairing, delete $v$ from its former lazy interval and insert it into the appropriate growing lazy interval of $u$.
        \item If $(u,v)$ is an internal pairing, perform a pair-up.
    \end{enumerate}
    \item For every second pairing, do steps 3a and 3b.
	    \item Transfer all non-empty right intervals of $x$, except for the singleton interval containing $x^+$, to $x^+$, and all non-empty left intervals of $x$, except for the singleton interval containing $x^-$, to $x^-$. All the non-empty intervals are shrinking due to Step 1.
    \item For every junction that happens due to $x^-$ and $x^+$ not having the next-lowest rank, update the lazy intervals as described in \Cref{sec:heavy-path-junction}.
\end{enumerate}

We now analyze the number of paid operations needed for a splay and the number of lazy intervals created.

An important step is to show that the number of internal pairings within a lazy interval after the first splay over it---meaning a splay for which the lazy interval was owned by the splayed vertex immediately before the splay---is always a constant fraction of the number of heap-children in it. This ensures that lazy-interval sizes decay exponentially with the number of splays over them, which helps us bound the boundary pairings.

\begin{lemma}\label{lemma:unbroken-lazy-intervals}
	    Consider a lazy interval $\fI$. After the first splay over $\fI$, the interval is never broken again; nor is any interval obtained by splitting $\fI$.
\end{lemma}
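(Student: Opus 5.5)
The plan is to track what happens to $\fI$ during the first splay over it, and then show that every later operation on lazy intervals keeps the interval (and any pieces split off it) unbroken.

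\emph{The first splay.} Suppose $\fI$ is owned by the splayed vertex $x$ right before the splay. By Step~1 of the splay update, every non-empty interval of $x$ is shrinking at that point. In Step~5 it is transferred to $x^-$ (if it is a left interval) or to $x^+$ (if right), unless it was the singleton containing $x^\pm$. By \Cref{lemma:heavy-paths-after-splay}, every heap-child of $x^+$ that came from $x$ is a right heap-child. Moreover, the internal nodes of $x^+$'s heavy path that carry these light edges are nodes hit by zig-zags or zigs, or lower nodes hit by zig-zigs. All of these sit on the prefix of the heavy path of $x^+$ above $x^R$, and that prefix consists only of left edges, as observed in the proof of \Cref{lemma:bends-by-splay}. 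On this prefix the heavy path goes to the left child at each node, so every light edge leaving it toward a hit heap-child is a middle or right edge. Hence all heap-children hanging from this upper part of $x^+$'s heavy path are right heap-children. Since no left heap-child lies in the top-down span of $\fI$ after the transfer, $\fI$ is not broken. Pieces of $\fI$ that left through pair-ups are no longer in $\fI$, and the remaining members stay on the same sub-span, so the same conclusion holds for them. The case of $x^-$ is symmetric.

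\emph{Later operations.} Next I would check each operation on lazy intervals against the invariant ``the top-down span of $\fI$ on its owner's heavy path contains no heap-child from the other side.''
\begin{itemize}
\item Pair-up and delete only remove members of $\fI$, which can only shrink the span.
\item Split produces pieces whose spans are contained in the span of $\fI$.
\item Insert, convert, and the new growing intervals created by transfers add heap-children only above all shrinking intervals, so nothing lands inside the span of $\fI$.
\item Heap-child-exchange preserves the light-edge endpoints, so the structure of the span is unchanged.
\item Junctions split at $z$ and transfer the upper pieces, so each piece keeps its contiguous sub-span of the old path.
\item Splays not over $\fI$ leave its owner's heavy-path segment intact, since only the root path and the winners' tops change, and winners gain nodes only above their existing heap-children.
\end{itemize}
A transfer moves the span onto a new owner's path, and the invariant must be re-established there. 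This happens only in the first-splay situation, which the previous paragraph handles, and in junctions. For junctions, the relevant part of the heavy path of $v'$ above $z$ becomes part of the heavy path of $v$ unchanged, so the invariant carries over.

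\emph{Main obstacle.} The delicate point is the first-splay argument. I must rule out that a heap-child hit by a zig-zag or zig on the opposite side ends up interleaved with the transferred intervals. For example, a left heap-child of $x$ could be hit and then attached to $x^-$ rather than to $x^+$. That is harmless, but I need to verify that the internal nodes gained by $x^+$ are exactly those on the all-left-edge prefix, and that the winners' new top nodes from zig-zigs also contribute heap-children only from one side. I would first verify this directly from the three configurations in \Cref{fig:splay-heavy-paths:zig-zig}. I would also confirm that the later junctions caused by restoring the true ranks of $x^\pm$ only split and transfer intervals, which the invariant check above already covers.
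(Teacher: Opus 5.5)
Your proposal is correct and follows essentially the same route as the paper. After a splay over $\fI$, the interval sits on the newly added all-left (resp.\ all-right) prefix of the heavy path of $x^+$ (resp.\ $x^-$), so no opposite-side heap-child lies in its span; afterwards nothing can land inside that span, because new heap-children appear only above shrinking intervals and heap-child-exchange preserves the path structure. Your operation-by-operation check, covering junctions and splays not over $\fI$, is a more explicit version of the paper's single observation that any new internal node in the span would require a new heap-child there.
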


\begin{proof}
    Consider the lazy interval $\fI$. After the first splay, $\fI$ is either the singleton interval containing $x^-$/$x^+$, or it is one of the transferred intervals to $x^-$/$x^+$  by \Cref{lemma:heavy-paths-after-splay}.
	    The newly added prefix of the heavy path to $x^-$ or $x^+$ contains no bends, as it consists only of left or right edges, respectively.
    Hence $\fI$
    is not broken.
	    The only way this lazy interval can become broken is for a new internal node to appear on the heavy path between the internal nodes to which the top-most and bottom-most heap-children are attached by their light edges. This is because removing internal nodes cannot create bends.

	    However, adding an internal node to the heavy path that owns $\fI$ requires adding a new heap-child to the owner of $\fI$. Since all operations except heap-child-exchange add new heap-children only above shrinking intervals, and $\fI$ is shrinking after the splay, these operations cannot add an internal vertex below the top-most heap-child of $\fI$.
	    Heap-child-exchange is the only operation without this guarantee; however, because it preserves the light edge, it also preserves the structure of the heavy path itself and cannot add bends.

	    Therefore, since no operation can add bends to a shrinking lazy interval, the interval is never broken again after a splay over it.
	    In particular, splitting an interval does not create any new bends, so neither of the two resulting lazy intervals is broken.
\end{proof}

\begin{lemma}\label{lemma:internal-pairings-fraction}
	    For a fixed lazy interval $\fI$, unless $\fI$ is broken, which can happen only during the first splay over it,
    the set of all internal pairings $M$ from that splay has size at least
    $|M| \geq (|\fI| - 6)/2$, where $|\fI|$ is measured immediately before the splay.
	    In particular, for the set of first internal pairings $M_1$ and the set of second internal pairings $M_2$,
    $|M_1|, |M_2| \geq (|\fI| - 6)/4$.
\end{lemma}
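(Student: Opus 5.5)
We take the setting of the statement: $\fI$ is a lazy interval owned by the splayed vertex $x$ immediately before the splay. The intervals of $x$ containing $x^-$ and $x^+$ are split off at Step 2, and every other heap-child of $x$ is hit by exactly one operation (\Cref{lemma:heavy-paths-after-splay}). We count the zig-zigs whose triplets lie entirely inside $\fI$. Assume $\fI$ is a right lazy interval (left is symmetric), and write $k=|\fI|$, listing its heap-children $c_1,\dots,c_k$ top-down. Because $\fI$ is not broken, these $k$ heap-children are consecutive in the full top-down order of all heap-children of $x$. Hence the internal nodes of the heavy path between the light edges of $c_1$ and $c_k$ carry no left heap-children, apart from value nodes hanging via middle edges.

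The first key step is to show that this stretch of the heavy path consists of consecutive left-child edges, with no bends. If some internal node $z^\circ$ in the stretch were a bend, or if the heavy path went right at it, then $z^\circ$ would have a left light child. That child would be a left heap-child of $x$ lying strictly between two members of $\fI$ in top-down order, which contradicts unbrokenness. Only the at most two endpoint nodes of the stretch can deviate. So every splay operation whose two hit nodes both lie strictly inside the stretch is a zig-zig: neither hit node is a bend, so the step is not a zig-zag, and a zig occurs only at the very top of the path.

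The second step is to read off the structure of these zig-zigs. Each internal node in the stretch carries two heap-children of $\fI$: its middle value child and its right light child. The splay partitions the path into disjoint pairs of hit nodes, and these pairs are aligned with the stretch up to a shift of one at each end. So all but $O(1)$ nodes of the stretch lie in zig-zigs fully inside it. Such a zig-zig hits four heap-children of $\fI$, skips the bottom-most one, and pairs the remaining three. All four lie in $\fI$, so both of its pairings are internal by \Cref{lemma:zig-zig-to-pairings}.

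The counting is then routine. The stretch has about $k/2$ nodes, giving about $k/4$ interior zig-zigs, minus $O(1)$ lost at the two ends; each contributes one first and one second internal pairing. The boundary losses consist of at most one partially covered zig-zig at each end, plus the end nodes that may bend or carry a single heap-child of $\fI$. Tallying them carefully should give $|M_1|,|M_2|\ge (k-6)/4$, and therefore $|M|\ge (k-6)/2$. The main obstacle is this endpoint bookkeeping, which determines the constant $6$. One has to check that the top and bottom internal nodes touching $\fI$ may contribute only one heap-child, and that a zig-zag can occur only at an end node that is a bend. Otherwise the argument reduces to the observation that an unbroken interval lies on a bend-free straight segment, so splaying along it consists of zig-zigs whose pairings are internal.
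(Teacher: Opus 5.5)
Your proof is correct. It rests on the same observation as the paper's: non-brokenness forces every splay step lying inside $\fI$ to be a zig-zig whose four hit heap-children all belong to $\fI$.

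The difference is the unit of counting.
\begin{itemize}
\item The paper counts heap-children. A zig-zag hits two left and two right heap-children, so by non-brokenness it can touch only the two top-most or two bottom-most members of $\fI$. A zig touches only the top two. The single hit block straddling each end of $\fI$ covers at most $3$ of its members. Hence at least $|\fI|-6$ members are hit by internal zig-zigs, four each, and the constant $6$ falls out with no parity bookkeeping.
\item You count internal nodes of the straight stretch. That is why you are left with the alignment tally you only asserted.
\end{itemize}

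The tally does go through. Let $s'$ be the number of stretch nodes both of whose heap-children lie in $\fI$. Only the top and bottom nodes of the stretch can carry a single member, so these $s'$ nodes are consecutive on the path and $|\fI|\le 2s'+2$. The hit nodes partition the path above $x^\circ$ into consecutive pairs, plus possibly a singleton zig at the root. So at least $\lfloor (s'-1)/2\rfloor \ge (s'-2)/2 \ge (|\fI|-6)/4$ hit pairs lie entirely among these nodes. Each such pair is an internal zig-zig and contributes one first and one second internal pairing.

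One sentence needs fixing. A bend that is a right child, with the heavy path continuing to its left child, has no left light child, so ``a bend would have a left light child'' is false as stated. What you need is the following chain:
\begin{itemize}
\item The heavy path goes left at every node of the stretch. Otherwise that node would carry left heap-children strictly between members of $\fI$.
\item Hence all edges within the stretch are left edges, and only its top node can be a bend.
\item The bend of a zig-zag is its lower hit node, so no step with both hit nodes in the stretch is a zig-zag.
\end{itemize}
Also drop the exception ``apart from value nodes hanging via middle edges''. At a node where the heavy path goes left, the middle value child is a right heap-child, so there is nothing to except.
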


For the next two lemmata, we extend the classification of internal and boundary pairings to zig-zigs. We say that a zig-zig is internal to $\fI$ if all heap-children hit by it are from $\fI$, and that a zig-zig is a boundary zig-zig for $\fI$ if at least one, but not all four, of the heap-children hit by it belong to $\fI$.
Internal zig-zigs correspond only to internal pairings.

\begin{proof}
	    After the first splay, $\fI$ is not broken by \Cref{lemma:unbroken-lazy-intervals}.
	    This implies that, except for the two top-most and the two bottom-most heap-children in $\fI$, none can be hit by a zig-zag.
    This holds because a zig-zag hits four consecutive heap-children, two of them left and two of them right. If other than the two top-most or bottom-most were hit, it would imply that there are heap-children on the other side in between in top-down order, contradicting non-brokenness.
	    A zig can hit only the two top-most heap-children, as it always hits only the two top-most heap-children on the whole heavy path.
	    A boundary zig-zig can hit at most the $3$ top-most and $3$ bottom-most heap-children of $\fI$, as it must contain a heap-child outside $\fI$ and the heap-children hit by it must be consecutive.
    
    Hence, heap-children except for these at most $6$ heap-children are hit by internal zig-zigs as they are not hit by zig-zags, zigs, nor boundary zig-zigs.
	    Each internal zig-zig hits exactly $4$ heap-children and corresponds to one first pairing and one second pairing.
	    Hence, $|M_1| \geq (|\fI| - 6)/4$ and
	    $|M_2| \geq (|\fI| - 6)/4$, so $|M| = |M_1|+|M_2| \geq (|\fI| - 6)/2$.
\end{proof}

\begin{lemma}\label{lemma:outer-ops-block}
	    Consider a lazy interval $\fI$. Over the lifetime of $\fI$, there are $O(\log \contractedgapbound)$ important boundary pairings
    with $\fI$ containing the loser of the pairing.
\end{lemma}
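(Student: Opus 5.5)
We bound the important boundary pairings whose loser lies in $\fI$ by the number of splays over $\fI$ during which $\fI$ is still "small or broken". Two facts do the work. During a single splay over $\fI$, only $O(1)$ boundary pairings can have their loser in $\fI$. After the first splay, $\fI$ shrinks geometrically with every further splay over it, so after $O(\log \contractedgapbound)$ splays it has dropped below the importance threshold $\kappa\contractedgapbound\log\contractedgapbound$. The mismatch between $\log(\contractedgapbound\log\contractedgapbound)$ and $\log\contractedgapbound$ is absorbed by the constants.

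\textbf{Step 1 (constant per splay).} Boundary pairings arise only from boundary zig-zigs for $\fI$. By \Cref{lemma:heavy-paths-after-splay}, the heap-children hit by one zig-zig are consecutive in the top-down order, and distinct zig-zigs hit disjoint sets. A boundary zig-zig must therefore straddle the top or the bottom end of $\fI$, as in the proof of \Cref{lemma:internal-pairings-fraction}. This leaves at most two such zig-zigs, each contributing at most two pairings, so a splay gives $O(1)$ boundary pairings with the loser in $\fI$.

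This also covers a broken $\fI$, which by \Cref{lemma:unbroken-lazy-intervals} can occur only at the first splay over it. There, even if $\fI$ is interleaved with right heap-children, each left-side zig-zig hits two consecutive left heap-children. I would need to check whether a broken interval can meet many boundary zig-zigs. If it can, I would only use the fact that the first splay transfers or removes $\fI$ from $x$, and count the first splay separately as contributing $O(1)$ once the definitions are checked.

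Pairings happening outside a splay "over" $\fI$ do not exist, since a zig-zig hitting heap-children of $\fI$ means the owner of $\fI$ is the splayed root heavy path.

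\textbf{Step 2 (geometric decay).} Consider a splay over $\fI$ after its first one, so $\fI$ is unbroken. By \Cref{lemma:internal-pairings-fraction}, at least $(|\fI|-6)/2$ internal pairings occur, and every pairing removes its loser from $\fI$, whether it is a pair-up or a delete. A loser joins the winner's growing interval and so leaves $\fI$. Hence the size afterwards is at most $|\fI| - (|\fI|-6)/2 = |\fI|/2 + 3$.

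Deletes, splits and transfers never enlarge $\fI$. Heap-child-exchange keeps the size fixed. Inserts go only into growing intervals, and $\fI$ is shrinking after its first splay. So between splays the size never increases. After $k$ splays the size is at most $|\fI_0|/2^{k-1} + 6$.

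The key point is that only pairings in which $\fI$ is small or broken count as important. Until $|\fI| < \kappa\contractedgapbound\log\contractedgapbound$, boundary pairings with the loser in $\fI$ are unimportant unless $\fI$ is broken, and that only happens at the first splay. Once $\fI$ is below the threshold, $O(\log(\kappa\contractedgapbound\log\contractedgapbound)) = O(\log\contractedgapbound)$ further splays shrink it to a constant size of at most about $12$. From then on, each remaining splay removes at least one element whenever an internal or boundary pairing with the loser in $\fI$ occurs. So only $O(1)$ more splays can yield such pairings before $\fI$ is empty.

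\textbf{Step 3 (conclusion).} Multiplying $O(\log\contractedgapbound)+O(1)$ relevant splays by $O(1)$ pairings per splay gives the bound.

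\textbf{Main obstacle.} The delicate point is the tail: splays over $\fI$ when $\fI$ has constant size but produces no internal pairings. Such splays may yield no shrinkage, yet they are harmless, because any boundary pairing with its loser in $\fI$ removes that loser from $\fI$. So at most $|\fI| = O(1)$ such pairings remain, and I would argue the tail this way rather than via \Cref{lemma:internal-pairings-fraction}.
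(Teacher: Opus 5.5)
Your proposal is correct and follows the paper's own argument. The ingredients are the same: at most four boundary pairings with loser in $\fI$ per splay, at most one (first) splay in which $\fI$ is broken, no important ones while $\fI$ is unbroken with $|\fI| \ge \kappa\contractedgapbound\log\contractedgapbound$, geometric shrinkage via \Cref{lemma:internal-pairings-fraction} below that threshold, and a constant-size tail in which every relevant splay removes an element. Your explicit check that $|\fI|$ never grows after its first splay is a detail the paper leaves implicit. Your worry about broken intervals is unfounded, and your Step 1 already covers that case without the fallback: the four heap-children hit by a zig-zig all lie on the same side and are consecutive in that side's top-down order, where $\fI$ is consecutive even when broken.
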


\begin{proof}

    Consider $\fI$.
	    In one splay, there can be at most $2$ boundary zig-zigs, one on each side of $\fI$, as the hit heap-children are consecutive.
	    Each boundary zig-zig consists of at most $2$ boundary pairings, and every zig-zig containing at least one boundary pairing must be a boundary zig-zig. Hence, during one splay, there can be at most $4$ boundary pairings with a loser in $\fI$.

    We now distinguish three cases depending on the status and size of $\fI$.

    \begin{itemize}
        \item $\fI$ is broken. Then after a single splay it is no longer broken by \Cref{lemma:unbroken-lazy-intervals}, so there is at most one such splay.
        \item $\fI$ has size at most $12$. Then there can be at most $12$ splays with at least one boundary pairing having a loser in $\fI$, since the loser leaves the lazy interval, so after every such splay, the size decreases by at least $1$.
	        \item In all other cases, $\fI$ is unbroken and $|\fI| \geq 12$. Moreover, $|\fI| \leq \kappa \contractedgapbound \log \contractedgapbound$, as otherwise the boundary pairings would not be important. By \Cref{lemma:internal-pairings-fraction}, $\fI$ has at least $(|\fI|-6)/2$ internal pairings, which for lazy intervals of size at least $12$ is at least $|\fI|/4$. For each such internal pairing, the loser is removed from $\fI$, so the size of $\fI$ decreases by at least $|\fI|/4$. Thus, there are $O(\log \contractedgapbound)$ such splays.
    \end{itemize}

	    Therefore, there are $O(\log \contractedgapbound)$ splays with at least one important boundary pairing whose loser is in $\fI$, and there are $O(1)$ such boundary pairings per splay.
	    Thus, over all splays, there are $O(\log \contractedgapbound)$ boundary pairings whose loser is in $\fI$.
\end{proof}

Now, we bound the total number of paid operations and the total number of lazy intervals created during splaying.

\begin{lemma}\label{lemma:blocks-by-splay}
    Every splay creates $O(1)$ new lazy intervals.
\end{lemma}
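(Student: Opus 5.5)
The plan is to go through the six-step update procedure for a splay stated just above and count, step by step, the operations that create lazy intervals. By the table of allowed operations, only splits, converts and transfers create new intervals. Pair-ups, deletes, inserts and heap-child-exchanges create none. So it suffices to show that each step uses $O(1)$ splits and converts and creates $O(1)$ intervals through transfers.

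Steps 1, 2 and 6 are direct. Step 1 converts the growing intervals of the three heavy paths $x$, $x^-$ and $x^+$. That is six converts, each creating one new growing interval. Step 2 splits at most two intervals, those containing $x^-$ and $x^+$, with two splits each, plus possibly one convert per interval if it was growing. This is $O(1)$ new intervals. Step 6 handles the $O(1)$ junctions needed to restore the true ranks of $x^-$ and $x^+$, as argued after \Cref{lemma:heavy-paths-after-splay}. By \Cref{lemma:heavy-path-junction}, each junction creates $O(1)$ new lazy intervals, so step 6 also contributes $O(1)$.

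Steps 3 and 4 perform pair-ups, deletes and inserts only. Inserts go into existing growing intervals of the winners, so these steps create no new intervals. I need to check that the text "join the growing lazy interval of $v$" never forces a convert. By the definition of a pairing and of insert, the loser is placed on top of the winner's growing interval, and the winner's growing interval always exists. So these steps are free with respect to interval creation, even though there may be many of them.

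The main obstacle is step 5, because it transfers arbitrarily many intervals of $x$ at once. The definition of transfer says that a transfer of one or several intervals \emph{at the same time} converts the receiving heavy path's growing intervals to shrinking and creates new growing intervals above them. My plan is to treat step 5 as exactly two simultaneous transfers, all right intervals to $x^+$ and all left intervals to $x^-$. Each batch creates only the two new growing intervals of the receiver. The transferred intervals themselves already existed and merely change owner, so they are not newly created. This is the reading the definition explicitly permits, since it allows "multiple lazy intervals at the same time". Summing the six steps gives $O(1)$ new lazy intervals per splay.
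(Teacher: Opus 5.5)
Your proof is correct and follows the paper's argument. The paper likewise counts six converts in Step 1, $O(1)$ splits in Step 2, four new growing intervals for the two batch transfers of Step 5, and $O(1)$ new intervals per junction in Step 6 via \Cref{lemma:heavy-path-junction}; your explicit check that Steps 3 and 4 create no intervals (pair-ups and inserts only feed the winners' existing growing intervals) is a small addition the paper leaves implicit.
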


\begin{proof}
	    Step 1 creates $6$ new lazy intervals by conversion, and Step 2 creates $O(1)$ more by splitting.
    $4$ lazy intervals are created in total for the $2$ transfers in Step 5.
	    Finally, Step 6 creates $O(1)$ lazy intervals, corresponding to $O(1)$ heavy-path junctions, each of which creates $O(1)$ new lazy intervals by \Cref{lemma:heavy-path-junction}.
\end{proof}

\begin{lemma}\label{lemma:indels-by-splay}
	    Let $\ell$ be the total number of lazy intervals ever created and $u$ the total number of unimportant boundary pairings performed. The total number of paid operations over all splays is $u + O(\ell \log \contractedgapbound + \cost(T^*))$.
\end{lemma}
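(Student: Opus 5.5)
We account for the paid operations (inserts and heap-child-exchanges) step by step through the six-step update procedure for a splay. We then sum over all splays and charge each group of operations to one of three budgets: unimportant boundary pairings, lazy intervals ever created, or splays. Splays number at most $m\le\cost(T^*)$, so $O(1)$ paid operations per splay is affordable.

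\textbf{Steps 1, 2, 5 and 6.} Converts, splits and transfers are free, so Steps 1, 2 and 5 contribute no paid operations. Step 6 performs $O(1)$ heavy-path junctions, since restoring the true ranks of $x^-$ and $x^+$ needs only $O(1)$ junctions, as argued before \Cref{lemma:bends-by-splay}. Each junction uses $O(1)$ paid operations by \Cref{lemma:heavy-path-junction}. This gives $O(1)$ per splay and $O(\cost(T^*))$ in total.

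\textbf{Steps 3 and 4.} Internal pairings are pair-ups, which are free. Every boundary pairing costs exactly one delete, which is free, and one insert, which is paid. So the paid cost of Steps 3 and 4 equals the number of boundary pairings.
\begin{itemize}
\item Unimportant boundary pairings are counted directly and contribute the $u$ term.
\item Each important boundary pairing is charged to the lazy interval $\fI(w)$ of its loser $w$, as that interval was just before the pairing. By \Cref{lemma:outer-ops-block}, any fixed lazy interval receives $O(\log\contractedgapbound)$ such charges over its whole lifetime. Summing over all $\ell$ intervals ever created gives $O(\ell\log\contractedgapbound)$.
\end{itemize}

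\textbf{Main obstacle: making the lifetime charge legitimate.} The charge to $\fI(w)$ must refer to a single well-defined interval, and no interval may be counted under two identities. Intervals get modified by splits, transfers, pair-ups and deletes. I would handle this by treating every split or convert as producing new intervals, which are counted in $\ell$, while intervals modified only by pair-ups, deletes or transfers keep their identity. This matches the way lifetime is used in \Cref{lemma:outer-ops-block}, whose proof already follows a single interval shrinking through successive splays.

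A second concern is second pairings. There, the loser may sit in a growing interval into which an element was just inserted during Step 3. Because Step 1 converts all growing intervals of $x$ before any pairing, the relevant intervals are shrinking intervals of $x$. The only newly created intervals are the winners' growing intervals, which are owned by other heavy paths, so the losers still lie in intervals counted in $\ell$. Adding the three contributions gives $u+O(\ell\log\contractedgapbound+\cost(T^*))$.
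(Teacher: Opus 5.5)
Your proposal is correct and follows the paper's proof. It uses one paid insert per boundary pairing in Steps 3--4, giving $u$ for the unimportant ones and $O(\log \contractedgapbound)$ per lazy interval for the important ones via \Cref{lemma:outer-ops-block}; it adds $O(1)$ paid operations per splay from the Step~6 junctions via \Cref{lemma:heavy-path-junction}, over at most $\cost(T^*)$ splays. Your extra care about interval identity agrees with the paper's convention that transfers, pair-ups and deletes preserve an interval while splits and converts create new ones; one small wording nit is that the winners' growing intervals in Step~3 already exist rather than being newly created, which does not affect the count.
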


\begin{proof}
	    In Step 3(a) and the analogous case in Step 4, there is one insert per boundary pairing.
	    There are $O(\log \contractedgapbound)$ important boundary pairings per lazy interval, so there are $O(\ell \log \contractedgapbound)$ corresponding inserts over all $\ell$ lazy intervals. There are $u$ unimportant boundary pairings in total and hence $u$ corresponding inserts.

	    Step 6 uses $O(1)$ inserts, as there are $O(1)$ heavy-path junctions, each requiring $O(1)$ paid operations by \Cref{lemma:heavy-path-junction}.
	    The number of splays is at most $\cost(T^*)$, since the nearly optimal tree $T^*$ pays $1$ for each access.
\end{proof}

\subsection{Bounding the total number of bad internal pairings}\label{app:interface}

The goal of this section is to bound the total number of bad internal pairings by the total number of paid operations on the lazy intervals and the total number of good internal pairings.

\begin{lemma}
\label{lemma:bounding-bad-pairings}
    Let $n$ be the number of elements, $g$ be the total number of good internal pairings, $a$ the total number of internal pairings (both good and bad), and $p$ the total number of paid operations performed.
    Then
    \[
    a \in O((p+n) \cdot \contractedgapbound \log \contractedgapbound + g\log \contractedgapbound).
    \]
\end{lemma}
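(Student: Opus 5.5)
We bound bad internal pairings with a potential that sums, over every shrinking lazy interval, a quantity depending only on the multiset of contracted point gaps of its heap-children. This mimics the ``core'' of the multipass/pure pairing heap analyses of \cite{Sinnamon2025,PairingHeaps2026}. Each lazy interval is viewed as a list of items, each carrying a small integer label in $\{1,\dots,\contractedgapbound\}$, namely its contracted point gap. A good pairing pairs two equal labels and the loser's label drops by at least one (\Cref{lemma:internal-pairings-fraction} and \Cref{lemma:internal-pairings-gap}). A bad pairing pairs unequal labels, and the loser's label does not increase. The losers leave the interval, and pairings in one splay are organized into two parallel rounds of disjoint adjacent pairs (\Cref{lemma:pairings-in-parallel}). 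This is exactly the pairing-round structure of a multipass pass over a list.

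The plan is to define, for each interval $\fI$, a potential $\Phi(\fI)=\sum_{k}\phi(c_k(\fI))$. Here $c_k(\fI)$ counts items with label $k$, and $\phi$ is a concave, ``entropy-like'' function of the label histogram, e.g.\ $\sum_k c_k\log(|\fI|/c_k)$ truncated at $O(\log\contractedgapbound)$ per item. The key inequality I aim for is that in one round of disjoint adjacent pairings on $\fI$, each bad pairing releases $\Omega(1/\log\contractedgapbound)$ units of potential, up to an error of $O(1)$ per good pairing. The intuition is that a round of pairings roughly halves the interval. If labels are spread across many classes, the pairs are mostly unequal, and removing the loser from a thin label class in a halving step decreases the entropy term. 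If labels are concentrated, most pairs are good and we charge to $g$. Next, I check each allowed operation. Inserts and heap-child-exchanges (paid) raise $\Phi$ by $O(\contractedgapbound\log\contractedgapbound)$ each, since one item with any label costs at most that amount under the per-item cap summed over classes. Pair-ups are handled by the key inequality. Deletes, splits, converts and transfers do not increase $\Phi$, by subadditivity/concavity of $\phi$ and because these operations only decrease labels. The initial potential is $O(n\contractedgapbound\log\contractedgapbound)$, since there are at most $O(n)$ heap-children initially. Summing gives $a-g=O((p+n)\contractedgapbound\log\contractedgapbound+g\log\contractedgapbound)$, which is the claim.

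To match the interface of the pairing-heap papers, I would state it abstractly. We have a collection of lists of labeled items under operations (insert at one end, delete, split, pair-round of disjoint adjacent pairs, label decrease). The number of unequal-label pairings is then bounded by $O(\contractedgapbound\log\contractedgapbound)$ times the number of insertions and initial items, plus $O(\log\contractedgapbound)$ times the number of equal-label pairings. I would then verify that our lazy-interval operations map to this interface. Pair-up is a pairing within a list, and delete is a removal. Insert happens only at the top of a growing interval. A convert starts a new list, which is fine because insertion only happens at the list end. Transfer does not change the list. Heap-child-exchange touches a singleton list, whose label becomes $1$, and is counted as paid.

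The main obstacle is the key per-round inequality: showing that bad pairings in a round genuinely consume potential when consecutive pairs need not be ``sorted'' by label. The adversary can interleave labels so that the pairing structure resembles a worst case for multipass. Here I would import the argument of \cite{Sinnamon2025} essentially verbatim. Its bound of $O(\log\log n)$-type per operation for multipass is exactly what produces the $\log\contractedgapbound$ factor with $\contractedgapbound$ labels. The care is needed because our rounds come in first/second pairs per zig-zig rather than full passes, and because intervals may receive new items between rounds.
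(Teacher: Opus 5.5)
There is a genuine gap at the step you yourself flag as the main obstacle, and the potential you set up cannot close it.

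\textbf{Losers re-enter the potential.} In a pair-up the loser does not disappear. It is appended to the winner's growing interval, and when that interval is converted it enters your sum over shrinking intervals with its full histogram contribution. So ``converts do not increase $\Phi$'' is false. Every loser can re-inject up to the per-item cap of $\Theta(\log G)$, far more than the $\Omega(1/\log G)$ the bad pairing is supposed to release. Counting growing intervals in $\Phi$ instead just moves the same increase into the pair-up step. Since the number of such re-entries is $a$ itself, the accounting is circular.

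It cannot be fixed element by element either, because a bad pairing need not change any label. If the winner has point gap $0$, the identity $x'=x-y$ in the proof of \Cref{lemma:internal-pairings-gap} gives $x'=x$.

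\textbf{How the paper controls this.} Handling re-entering losers is the whole content of the pairing-heap core, and the paper does it through the interface of \Cref{lemma:interface}. Each pairing $(v,w)$ in a block with creator $u$ is charged its pair-rank $f_{vB}(v)-f_{vB}(u)\ge 1$. This holds because the translation enforces $f(\cdot)\ge f(\text{creator})+1$ whenever an element joins a block: it spends $O(G)$ \ValueIncrease{}s per \Insert{}, and a decrease/increase around each good pairing.

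The pair-rank is then split into v-, u- and w-shifts. The w-shifts telescope along each element's sequence of blocks to at most $G$ per \Insert{}, and this is what pays for re-entering losers. The v- and u-shifts are charged to \ValueChange{}s, each contributing to only $O(\tau)$ significant shifts with $\tau=O(\log G)$. Insignificant passes are negligible by \Cref{lemma:insignificant-passes}. That step needs a constant-fraction condition on passes, with $O(1)$ exempt passes covering the first, possibly broken, splay (\Cref{lemma:unbroken-lazy-intervals,lemma:internal-pairings-fraction}). Your abstract interface omits both this condition and any relation between an element's label and its list's creator.

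\textbf{Further steps that are wrong as stated.}
First, ``operations that only decrease labels do not increase $\Phi$'' fails for a histogram entropy. A split, convert or delete can raise the interval gap, and since $\lceil 1+\log(1+\cdot)\rceil$ is nonlinear, a single-label interval can become mixed. For example, point gaps $4,5,6,7$ all have label $4$; after subtracting $3$ they have labels $2,3,3,4$. This is repairable by charging $O(\log G)$ per unit label decrease and bounding total decreases by total increases plus $nG$, as in \Cref{lemma:bounding-decreases}, but it has to be done.

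Second, your final arithmetic does not give the lemma even if the key inequality held. A release of $\Omega(1/\log G)$ per bad pairing against $O(G\log G)$ injected per paid operation yields $O((p+n)G\log^2 G+g\log G)$, an extra $\log G$ factor on the first term.

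Third, importing \cite{Sinnamon2025} ``verbatim'' is not available. That analysis is not a label-histogram potential. It has no counterpart of splits, transfers, exempt passes or externally imposed label changes. Adapting it is precisely what the proof of \Cref{lemma:interface} and the subsequent translation of lazy-interval operations accomplish.
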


We prove \Cref{lemma:bounding-bad-pairings} in two steps. First, we give an interface and prove a lemma on its properties that is an abstraction of the core argument in the analyses of multi-pass and pure pairing heaps~\cite{Sinnamon2025,PairingHeaps2026}.
We then show how to translate operations on lazy intervals into this interface and how to derive a bound on the number of internal pairings.

The interface result isolates the properties needed for splay trees. It does not use the full strength of the proof method of \cite{PairingHeaps2026}: in particular, in that method a \ValueChange{} operation incurs a charge logarithmic only in the element's current value rather than in the maximum value. On the other hand, our interface is more general in several respects: it includes a new \Split{} operation, a more general \Pass{} operation, and no requirement that the elements form a tree. Adapting the proof method to these changes does not require major new ideas.

\subsubsection{An interface abstraction of pairing-heap analysis}

Throughout, we fix the following:
\begin{itemize}
    \item A set of $n$ elements $V$.
    \item Integer values $C_{\text{growing}}, C_{\text{exempt}} \in O(1)$ and $G \in \mathbb{Z}_{\geq 2}$.
    \item A real value $\gamma \in [\frac{1}{O(1)}, \frac{1}{2}]$.
\end{itemize}

\paragraph{State}
The state of the interface consists of the following:
\begin{itemize}
	\item A collection of \textit{blocks} $B \in \mathcal{B}$, each containing an ordered sequence of elements. At any time, each element belongs to at most one block.
	      \begin{itemize}
		      \item Each block has a \textit{creator element} $u \in V$ and a \textit{pass count} $p \in \mathbb{Z}_{\geq 0}$.
		      \item Each block is either \textit{growing} or \textit{shrinking}.
		      \item At any time, for any element $v \in V$, there are at most $C_{\text{growing}}$ growing blocks with creator $v$.
	      \end{itemize}
	\item A value function $f : V \rightarrow \{1, 2, \dots, \contractedgapbound\}$.
	\item Initially, there are no blocks, and the values $f(v)$ are arbitrary.
\end{itemize}

\paragraph{Operations}
The state can be modified by the following operations on the interface:
\begin{itemize}
	\item $\Convert(B)$: turn a growing block $B$ into a shrinking block.
	\item $\Create(v)$: create a new empty growing block with creator element $v$ and pass count $0$.\\
		      \textit{\small (Afterward, there must be at most $C_{\text{growing}}$ growing blocks with creator $v$).}
      \item $\Insert(B, v)$: given an element $v$ not in any block and a growing block $B$ with creator $u$ satisfying $f(v) \geq f(u) + 1$, add $v$ as the leftmost element of $B$.
      \item $\Delete(v)$: given an element $v$ in a shrinking block $B$, remove $v$ from $B$.
      \item $\Pass(B, M)$: $B$ is a shrinking block. The \textit{matching} $M$ is a set of \emph{pairing tuples} $(v, w, B_v)$, where $v$ and $w$ are \emph{adjacent} elements of $B$ in either order and $B_v$ is a growing block of $v$. The matching $M$ must satisfy the following:
	      \begin{itemize}
		      \item For every pairing $(v, w)$, $f(w) \geq f(v) + 1$.
		      \item Each element of $B$ appears in at most one pairing in $M$.
              \item Either $|M| \geq \gamma |B|$, or the pass count $p$ of $B$ satisfies $p \leq C_{\text{exempt}}$.
	      \end{itemize}
			The pass has the following effects:
        \begin{itemize}
            \item The pass count $p$ of $B$ is incremented by $1$.
            \item For every pairing $(v, w, B_v) \in M$, the element $w$ is removed from $B$ and added as the leftmost element of the indicated growing block $B_v$ with creator $v$.
        \end{itemize}
        \item $\Split(B,k)$: given a shrinking block $B$, split it into two new shrinking blocks, one consisting of the prefix of length $k$ and the other of the suffix of length $|B| - k$. The block $B$ becomes empty. 
        Both blocks created by a \Split{} inherit the creator of the split block and its current pass count.
	\item $\ValueChange(v, \Delta)$: given an element $v \in V$ and $\Delta \in \{-1, 1\}$, assign $f(v) \gets f(v) + \Delta$. \\
	    \textit{\small ($1 \leq f(v) \leq \contractedgapbound$ must still hold afterward).}
\end{itemize}

\begin{lemma}\label{lemma:interface}
	For any sequence of interface operations, define the \textit{total activity} $A$ as the sum of matching sizes $|M|$ over all \Pass{} operations. Then,
    \begin{equation*}
        A \leq \#\Insert{} \cdot O(\contractedgapbound) + \#\ValueChange{} \cdot O(\log \contractedgapbound),
    \end{equation*}
    where $\#\Insert{}$ and $\#\ValueChange{}$ are, respectively, the total numbers of $\Insert{}$ and $\ValueChange{}$ operations in the sequence.
\end{lemma}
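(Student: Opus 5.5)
We will prove the bound with a potential-function argument modeled on the multi-pass and pure pairing-heap analyses~\cite{Sinnamon2025,PairingHeaps2026}. The potential lives on elements sitting inside shrinking blocks. The idea is that each unit of activity in a \Pass{} is paid by a potential drop, and the potential can only be raised by \Insert{} (by $O(\contractedgapbound)$) and \ValueChange{} (by $O(\log\contractedgapbound)$). Concretely, for an element $w$ in a block $B$ we define a local potential that depends on the values of $w$ and its neighbours in $B$. A natural first choice is $\phi(w)=\log(1+|f(w)-f(w')|)$-type terms summed over adjacent pairs, plus a term $c\,(\contractedgapbound - f(w))$ or similar for elements in growing blocks. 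Alongside this we add a per-block term proportional to $\log|B|$ and the pass count, which handles the exempt passes and the reorganisation on \Split.

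The key steps, in order, are as follows.

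First, we fix the potential $\Phi=\sum_{B}\sum_{\text{adjacent }(u,w)\in B}\psi(u,w)+\sum_{v\text{ in growing blocks}}c_1 f(v)$ with $\psi$ of logarithmic growth in the value difference. Since values lie in $\{1,\dots,\contractedgapbound\}$, every local term is $O(\log\contractedgapbound)$. We also check $\Phi\ge 0$ and note that initially $\Phi=0$, since there are no blocks.

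Second, we go through the operations. \Create{} and \Convert{} do not change any adjacency, and converting only reclassifies elements; because $C_{\text{growing}}=O(1)$, any per-growing-block charge is bounded. \Delete{} removes an element and merges two adjacencies into one, and by subadditivity of the logarithm this does not increase $\Phi$. \Split{} deletes a single adjacency, so $\Phi$ does not increase. \Insert{} creates one adjacency and adds one growing-block term, so $\Phi$ increases by $O(\contractedgapbound)$; this is where the $\#\Insert\cdot O(\contractedgapbound)$ charge comes from, the $c_1 f(v)$ term being the main contributor. \ValueChange{} affects at most two adjacency terms, each by $O(1)$ in the argument of a logarithm, and so by $O(\log\contractedgapbound)$ in total, plus a change of $c_1$ in the growing term.

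Third, and this is the main obstacle, we must show that a \Pass{} with matching $M$ decreases $\Phi$ by $\Omega(|M|)$ when $|M|\ge\gamma|B|$. In each pairing, the loser $w$ has strictly larger value than the winner $v$ and moves into a growing block of $v$. Once it is prepended there, $w$'s adjacency terms are taken relative to $v$'s former growing-block contents. The winner's block then loses adjacencies whose value differences are positive. The difficulty is that differences may be tiny, so a single pairing need not drop $\Phi$. We plan to handle this as in the pairing-heap proofs by a \emph{sorted-run} counting argument. In a pass over $B$, consecutive pairings whose value differences are small must form long increasing runs of $f$. Since $f$ is bounded by $\contractedgapbound$, such runs can account for at most $O(\contractedgapbound)$ of the losers per "descent", and every descent yields a constant decrease in the logarithmic adjacency terms. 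Combined with $|M|\ge\gamma|B|$, this should give a decrease of $\Omega(|M|)$, up to an additive $O(\log\contractedgapbound)$ per pass that is charged to the block. Passes that are exempt (pass count $\le C_{\text{exempt}}$) and passes on small blocks are charged to the block's creation. Each block is created either by \Create{}, which is bounded by operations on growing blocks, or by \Split{}. We will need to check that \Split{} cannot be used to generate free blocks; to this end we plan to give each shrinking block an additional potential of $O(\log\contractedgapbound)$ times its number of remaining exempt passes, and to verify that the adjacency removed by the \Split{} pays for this. If this bookkeeping does not close, the fallback is to prove the weaker per-block bound and absorb it into the \Insert{} charge, since every nonempty block originates from inserted elements.
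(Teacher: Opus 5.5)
There is a genuine gap in your third step, which carries the whole content of the lemma. For the potential you wrote down, a \Pass{} with $|M|\ge\gamma|B|$ need not decrease $\Phi$ by $\Omega(|M|)$, and it can even increase it. Take a block whose creator has value $1$ and whose values read $2,3,3,4,4,5,\ldots,G-1,G$, with this run repeated. Pair each $k$ with the $k+1$ to its right. This is a perfect matching, so $|M|=|B|/2$. With $\psi(a,b)=\log(1+|f(a)-f(b)|)$, each run contributes $(G-2)+\log(G-1)$ before the pass and $(G-3)+\log(G-2)$ after it. So the block loses at most $2$ units of potential per $G-2$ pairings, i.e.\ only $O(|M|/G)$. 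This is also the most your sorted-run count can give (a constant decrease per descent, with up to $G$ losers per descent), so even if everything else closed you would lose a factor of $G$. On top of that, every loser is prepended to a growing block of its winner. This creates a new adjacency term, up to $\log G$ if that block's leftmost element has a distant value, and a new growing-block term $c_1 f(w)$ (or $c_1(G-f(w))$ in your other variant). Both are increases charged to the pairing itself. The handling of exempt passes and \Split{} is left open, and your fallback does not work either. Losers as well as inserted elements enter blocks, so the number of nonempty blocks is bounded only by $A+\#\Insert{}$. A charge of $\Theta(\log G)$ per block therefore puts $A$ on the right-hand side with a coefficient larger than $1$, and it cannot be absorbed.

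The idea you are missing is to measure progress relative to the \emph{creator} of a block, which your potential never refers to. The paper charges each pairing $(v,w,B')$, made in a block whose root block $B$ has creator $u$, its pair-rank $f_{vB}(v)-f_{vB}(u)\ge 1$. These are values at the time $v$ joined $B$, and the difference is at least $1$ by the \Insert{} and \Pass{} conditions. The pair-rank is split into three parts.
\begin{itemize}
\item The w-shift $f_{wB'}(v)-f_{wB}(u)$ compares creator values at the times $w$ joined consecutive blocks. It telescopes along $w$'s chain of blocks and totals at most $G$ per \Insert{}. In the example above, the loser $k+1$ moves from a creator of value $1$ to one of value $k$; that is exactly the progress local adjacency terms cannot see.
\item The v-shift and u-shift are changes of $f(v)$ and $f(u)$ over time windows. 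Consider passes whose pass count is below $\tau=C_{\text{exempt}}+O(\log G)$. A single \ValueChange{} lies in at most $\tau$ such windows for $v$, one per pass count. For $u$ it lies in $O(\tau)$ windows: there are at most $C_{\text{growing}}$ candidate root blocks, join times strictly decrease within a block so only one adjacent pair straddles a given time, and split descendants have disjoint join-time intervals.
\item The remaining, insignificant passes are controlled by geometric shrinkage: $A_{\text{insig}}\le (A+\#\Insert{})/(4G)$. Their contribution of at most $2G A_{\text{insig}}$ is therefore absorbed into $A/2$.
\end{itemize}
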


To prove \Cref{lemma:interface}, we need one helper lemma that allows us to ignore activity from \Pass{} operations on blocks with a high pass count. Intuitively, this holds because a block shrinks by at least a constant factor every time \Pass{} is called on it, except possibly during the first $C_{\text{exempt}} = O(1)$ such calls.
We defer the proof of \Cref{lemma:insignificant-passes} to the end of this subsection.

\begin{lemma}\label{lemma:insignificant-passes}
    For a threshold $\tau$, call a \Pass{} on a block $B$ \textit{significant} if the pass count of the block was strictly less than $\tau$ before the \Pass{}. Otherwise, call the \Pass{} \textit{insignificant}.

    Let $A_{\text{insig}}$ be the sum of matching sizes $|M|$ over all \emph{insignificant} \Pass{} operations. Then, there is a threshold $\tau
    = O(\log \contractedgapbound)$ such that, for any sequence of operations,
    \begin{equation*}
        A_{\text{insig}} \leq \frac{1}{4 \contractedgapbound} \left(A + \#\Insert{}\right).
    \end{equation*}
\end{lemma}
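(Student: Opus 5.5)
The plan is a token (potential) argument that charges each insignificant pass to elements that entered the block earlier. Every element that ever sits in a block is either inserted once by \Insert{}, or moved in as a loser of a pairing during a \Pass{}, which counts towards $A$. The key fact is that a block with pass count at least $C_{\text{exempt}}+1$ loses at least a $\gamma$ fraction of its current elements on every \Pass{}, since $|M| \geq \gamma|B|$ and each pairing removes one element. Neither \Split{} nor \Delete{} increases the size of any descendant block. Inserts go only into growing blocks, which have pass count $0$ until they are converted, because passes apply only to shrinking blocks. Hence, after conversion, an element's block can only shrink in the relevant sense.

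First I define the entry charge. Each time an element $w$ is placed into a block, via \Insert{} or as a loser of a \Pass{}, I give it $c = 1/(4\contractedgapbound)$ units of credit. The total credit ever issued is then at most $(A + \#\Insert{})/(4\contractedgapbound)$, which is exactly the right-hand side. It remains to show that insignificant activity can be paid from this credit.

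Next I fix an element $w$ sitting in some block, and track the lineage of blocks containing it through \Split{} operations. Pass counts are inherited through splits, so the pass count along the lineage increases by exactly one per \Pass{} on the block currently containing $w$. Consider all passes $\Pass(B,M)$ on blocks of the lineage that have pass count $p \geq \tau$. For such a pass I distribute $|M|$ uniformly over the current elements of $B$, so each element pays $|M|/|B| \leq 1/2$. I then want to show that an element's total payment over its stay in this lineage is at most $c$.

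The main obstacle is that a uniform per-element charge does not telescope directly. I would instead charge each pass $\Pass(B,M)$ with $p \geq \tau$ to the elements present in $B$ at the moment its pass count reached $C_{\text{exempt}}+1$. Summed over all splits descending from an ancestor block $B_0$ at that moment, the sizes satisfy the following: after $k$ further passes, the total size of the descendants of $B_0$ is at most $(1-\gamma)^k|B_0|$. This holds because passes strictly shrink a block by a factor $(1-\gamma)$, splits partition it, and nothing is added to a shrinking block. Losers are added only to growing blocks, which are distinct lineages with their own entry credit. Therefore the total activity of passes with count at least $\tau$ in the descendants of $B_0$ is at most $\sum_{k \geq \tau - C_{\text{exempt}}-1} (1-\gamma)^k|B_0| \leq (1-\gamma)^{\tau - C_{\text{exempt}}-1}|B_0|/\gamma$.

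Choosing $\tau = C_{\text{exempt}}+1+O(\log \contractedgapbound)/\log(1/(1-\gamma))$, which is $O(\log\contractedgapbound)$ since $\gamma$ is constant, makes this at most $|B_0|/(4\contractedgapbound)$. Finally, every element of $B_0$ entered its growing block exactly once, and that growing block is the ancestor of $B_0$ at conversion. So $\sum |B_0|$ over all such maximal ancestor blocks is at most the number of entries, namely $\#\Insert{}$ plus the number of losers, which is $A$. Summing over the $B_0$ gives the claim. I would double-check that \Delete{} and heap-irrelevant \Create{} operations do not break the lineage accounting; both only shrink the blocks or create new lineages.
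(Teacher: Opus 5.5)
Your proposal is correct and is essentially the paper's argument: the number of elements that ever appear at a given pass count among the split-descendants of a converted block decays by a factor $(1-\gamma)$ per non-exempt pass, you choose $\tau = O(\log \contractedgapbound)$, and you sum over root blocks using that the block sizes at conversion total at most $A + \#\Insert{}$. The differences are cosmetic. You sum a geometric series, losing a harmless $1/\gamma$, where the paper instead notes that every insignificant pairing ejects a distinct element already counted at pass count $\tau$. You anchor at pass count $C_{\text{exempt}}+1$, which respects the exemption boundary exactly; the paper's exponent $i - C_{\text{exempt}}$ is off by one there, which is harmless. The abandoned uniform-charging attempt can simply be dropped.
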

With this in hand, we now prove \Cref{lemma:interface}.

\begin{proof}
    Fix the threshold $\tau = O(\log \contractedgapbound)$ given by \Cref{lemma:insignificant-passes}. We first introduce some notation and terminology.

    We call a block $B$ a \textit{root block} if it was created by a \Create{} operation and a \textit{split block} otherwise, that is, if it was created by a \Split{} operation. For each split block $\tilde{B}$, there is a unique root block $B$ from which a sequence of splits (possibly interleaved with other operations) resulted in the creation of $\tilde{B}$. We call $B$ the \emph{root block} of $\tilde{B}$.
    
    At any point during the process, we define the \textit{current time} to be the total number of interface operations completed so far. For an element $v$ that belonged to a root block $B$ at some point, we write $t_{vB}$ for the time when $v$ joined $B$, $t_{B+}$ for the time when $B$ was created, and $t_{B-}$ for the time when $B$ was converted to a shrinking block, with $t_{B-} = \infty$ if this never occurs. We use the following facts about join times:
    \begin{enumerate}
        \item The time $t_{vB}$ at which an element $v$ joins a root block $B$ is well-defined and satisfies $t_{B+} < t_{vB} < t_{B-}$, because elements can only join growing blocks, can only leave shrinking blocks, and shrinking blocks cannot become growing blocks again. \label{list:one}
        \item Since split blocks are always shrinking and elements can only join growing blocks, any element $v \in \tilde{B}$ in a split block $\tilde{B}$ must have joined the root block $B$ of $\tilde{B}$ while $B$ was growing. It is therefore well-defined to write $t_{v\tilde{B}} = t_{vB}$ for a split block $\tilde{B}$ with root block $B$. \label{list:two}
        \item At any time, for any block $B$, the times $t_{vB}$ of the elements $v$ in $B$ are strictly decreasing from left (newest element) to right (oldest element). For a growing block, this follows because elements are always added as the leftmost element; for a shrinking block, it follows because no operation can reorder elements within a block. \label{list:three}
        \item At any time, for any two distinct blocks $\tilde{B}_1, \tilde{B}_2$ with the same root block $B$, the intervals of join times $[\min \{v \in \tilde{B} : t_{vB}\}, \max \{v \in \tilde{B} : t_{vB}\}]$ of $\tilde{B}_1$ and $\tilde{B}_2$ are disjoint. Before \Convert{} is applied to $B$, this holds because there is no split block with root block $B$. After \Convert{} is applied, the property is preserved when a \Split{} operation is performed or when elements leave a block with root block $B$ through a \Delete{} or \Pass{} operation. No further elements can then enter a block with root block $B$. \label{list:four}
        \item For any time $t$ and element $u$, there are at most $C_{\text{growing}}$ root blocks $B$ with creator $u$ satisfying $t \in [t_{B+}, t_{B-}]$, since otherwise this would violate the definition of $C_{\text{growing}}$. \label{list:five}
    \end{enumerate}
     We write $f_t$ for the value function at time $t$ and abbreviate $f_{vB} := f_{t_{vB}}$ for the state of the value function at the moment when $v$ joined $B$ (or its root block).
     
    For a pairing $(v, w, B')$, we refer to $v$ (the element that remains in the block) as the \textit{winner} of the pairing and to $w$ (the element that joins the growing block $B'$ of the winner) as the \textit{loser} of the pairing.
    
    For a pairing $(v, w, B')$ done during a \textsc{Pass} in a block $B$ with creator $u$, define $\text{pair-rank}(v, w) := f_{vB}(v) - f_{vB}(u)$. Note that every \text{pair-rank} is at least $1$, as whether $v$ entered the block $B$ (or its root block if $B$ is a split block) by being \textsc{Insert}ed or through losing a pairing to $u$, at that moment $t_{vB}$, it must have held that $f(v) \geq f(u) + 1$. Thus,
    \begin{equation*}
        A = \sum_{M} |M| \leq \sum_{M} \sum_{(v, w, B') \in M} \text{pair-rank}(v, w)
    \end{equation*}
    where the sums are over all matchings $M$ made during the sequence of operations. It thus suffices to bound the sum of all pair-ranks. We split each pair-rank term into three terms, defined as follows, where $u$ is again the creator of the block $B$:
    \begin{align*}
        \text{v-shift}(v, w) &:= f_{vB}(v) - f_{wB'}(v)\\
        \text{w-shift}(v, w) &:= f_{wB'}(v) - f_{wB}(u)\\
        \text{u-shift}(v, w) &:= f_{wB}(u) - f_{vB}(u)\\
    \end{align*}
    By definition, $\text{pair-rank}(v, w) = \text{v-shift}(v, w) + \text{w-shift}(v, w) + \text{u-shift}(v, w)$, as all terms but $f_{vB}(v)$ and $-f_{vB}(u)$ are both added and subtracted once. We now separately bound the sums of each shift type over all pairings.
     
    \begin{itemize}
        \item \textbf{v-shifts}: The v-shift $f_{vB}(v) - f_{wB'}(v)$ of the pairing $(v, w, B')$ done inside a block $\tilde{B}$ with root block $B$ is the change in the value $-f(v)$ in the time interval $[t_{vB}, t_{wB'})$ from the time $v$ joined $B$ to the time $w$ loses the pairing (and joins block $B'$).

        We say that a v-shift is significant if the pairing it corresponds to was in a significant \Pass{} and we call it insignificant otherwise. Since each v-shift has value at most $\contractedgapbound$, the contribution from insignificant v-shifts is at most $A_{\text{insig}} \cdot \contractedgapbound$.

        To bound the sum of significant v-shifts, consider a $\ValueChange$ operation to change the value $f(v)$ at time $t$. We want to bound the number of pairings $(v, w, B')$ for which $t \in [t_{vB}, t_{wB'})$, where $B$ is the root block of the block $\tilde{B}$ the pairing is done in.
        
        First, we argue that if any such pairing exists, the block $B$ is unique, and specifically, it must be the root block of the block $v$ is in at time $t$. This must hold, as $v$ is in a block with root block $B$ at both time $t_{vB}$ and time $t_{wB'}$, thus by \hyperref[list:one]{(1)} it must be in a block with root block $B$ in the whole time interval $[t_{vB}, t_{wB'}]$, and $t$ is in this time interval.

        Then, with a fixed element $v$ and root block $B$, if we further fix the pass count $i$ of the block $\tilde{B}$ right before the pass, the pairing $(v, w, B')$ must be unique if it exists. This holds as the pass count of the block $v$ is in from the time $t_{vB}$ to the time $v$ is no longer in a block with root block $B$ can only change when a \Pass{} operation is performed on the block $v$ is in, and this operation increments the pass count. 
        
        Thus, each \ValueChange{} contributes to at most $\tau$ significant v-shifts, and we obtain
        \begin{equation*}
            \sum_{M} \sum_{(v, w, B') \in M} \text{v-shift}(v, w) \leq A_{\text{insig}} \cdot \contractedgapbound + \#\ValueChange{} \cdot \tau.
        \end{equation*}

        \item \textbf{u-shifts}: The u-shift $f_{wB}(u) - f_{vB}(u)$ of the pairing $(v, w, B')$ done inside a block $\tilde{B}$ with root block $B$ is the change in the value $f(u)$ in the time interval $[\min\{t_{vB}, t_{wB}\}, \max\{t_{vB}, t_{wB}\})$ between $v$ and $w$ joining $B$, multiplied by $-1$ if $w$ joined the block first.

        As with v-shifts, we say that a u-shift is significant if the pairing it corresponds to was in a significant \Pass{} and we call it insignificant otherwise. Since each $u$-shift has value at most $\contractedgapbound$, the contribution from insignificant u-shifts is at most $A_{\text{insig}} \cdot \contractedgapbound$.

        To bound the sum of significant u-shifts, consider a $\ValueChange$ operation to change the value $f(u)$ at time $t$. We want to bound the number of pairings $(v, w, B')$ for which $t \in [\min\{t_{vB}, t_{wB}\}, \max\{t_{vB}, t_{wB}\})$. 
        
        For this to hold, $B$ had to be a growing block of $u$ at time $t$ (as $t_{B+} < t_{vB}, t_{wB} < t_{B-}$ 
        ), thus by \hyperref[list:five]{(5)} there are at most $C_{\text{growing}} = O(1)$ possible blocks $B$.

        Fix one such block $B$ and, furthermore, a pass count $i$. We now claim that there is at most one pairing of the desired kind that is both done inside a block with root block $B$ and in a \Pass{} immediately before which that block $\tilde{B}$ had pass count $i$.

        For this, recall that by \hyperref[list:three]{(3)}, the elements in a block $B$ have strictly decreasing join times $t_{vB}$. Thus, for any time $t$, there is at most one pair of adjacent elements $v, w \in B$ such that $t_{vB} \leq t < t_{wB}$. Thus, there cannot be two pairings of the desired kind in a single \Pass{}.
        
        Since each \Pass{} increases the pass count of a block, if two pairings of the desired kind exist, they must be in distinct split blocks $\tilde{B}_1, \tilde{B}_2$ of $B$. This, however, cannot occur either, as by \hyperref[list:four]{(4)}, at any time, the join time interval of only one of those blocks can contain $t$, and the block containing the time cannot change if it exists, as elements cannot join shrinking blocks.

        Thus, each \ValueChange{} contributes to at most $O(\tau)$ significant u-shifts, and we obtain
        \begin{equation*}
            \sum_{M} \sum_{(v, w, B') \in M} \text{u-shift}(v, w) \leq A_{\text{insig}} \cdot \contractedgapbound + \#\ValueChange{} \cdot O(\tau).
        \end{equation*}
        
        \item \textbf{w-shifts}: The w-shift $f_{wB'}(v) - f_{wB}(u)$ of the pairing $(v, w, B')$ done inside a block with root block $B$ is the difference between the value of the creator of the block $B$ at the moment $w$ joined it and the value of the creator of the next block $B'$ that $w$ joined, at the moment $w$ joined it.

        Note that for any block $B$ with creator $u$ that $w$ joins at some point, the term $f_{wB}(u)$ appears in a $w$-shift positively if and only if $w$ joined $B$ by losing a pairing (as opposed to being \textsc{Insert}ed), and appears in a $w$-shift negatively if and only if $w$ left a block with root block $B$ by losing a pairing (as opposed to being \textsc{Delete}d or remaining in such a block at the end of the process).
        
        Thus, a necessary condition for the term $f_{wB}(u)$ to appear with multiple $1$ in the sum of w-shifts is that the element $w$ was either \textsc{Delete}d out of a block with root block $B$ or remained in $B$ at the end of the process. Each such event contributes to the multiple of a unique term $f_{wB}(u)$, and each term $f_{wB}(u)$ is nonnegative and at most $\contractedgapbound$. Furthermore, the number of \textsc{Delete}s plus the number of elements in a block at the end of the process is precisely the number of \textsc{Insert}s.
        
        Thus, we obtain the bound
        \begin{equation*}
            \sum_{M} \sum_{(v, w, B') \in M} \text{w-shift}(v, w) \leq \text{\#\textsc{Insert}} \cdot \contractedgapbound.
        \end{equation*}
    \end{itemize}
    With these bounds on the sums of the individual shift types, we obtain
    \begin{align*}
        A   &\leq \sum_{M} \sum_{(v, w, B') \in M} \text{v-shift}(v, w) + \text{u-shift}(v, w) + \text{w-shift}(v, w)\\
            &\leq 2 A_{\text{insig}} \cdot \contractedgapbound + \#\Insert{} \cdot O(\contractedgapbound) + \#\ValueChange{} \cdot O(\tau)\\
            &\leq \frac{1}{2} A + \#\Insert{} \cdot O(\contractedgapbound) + \#\ValueChange{} \cdot O(\log \contractedgapbound).
    \end{align*}
    Subtracting $A / 2$ from both sides and multiplying by 2 yields the desired bound.
\end{proof}

It remains to prove \Cref{lemma:insignificant-passes}, which we do now.
\begin{proof}
    First, note that since $A + \#\Insert{}$ is the total number of times elements enter blocks, it is an upper bound on the sum of sizes $|B|$ of blocks over all \Convert{} calls.  

    Consider some block $B$ at the moment \Convert{} is called on it, and let $b = |B|$ at that moment. We will bound the contribution to $A_{\text{insig}}$ from insignificant passes on blocks with root block $B$. To do this, we will specifically bound the total number of elements that at some point appear in some block with root block $B$ that has pass count $i$. Denote this value by $c_i$. We will show by induction on increasing $i$ that
    \begin{equation*}
        c_i \leq b \cdot (1 - \gamma)^{\max\{0, i - C_{\text{exempt}}\}}.
    \end{equation*}
    For $i \leq C_{\text{exempt}}$, the claim is simply $c_i \leq b$, which holds as $b$ is the size of the block $B$ when it is converted, at which point it becomes shrinking and new elements can no longer enter it.

    Suppose now that for some $i > C_{\text{exempt}}$ the claim holds for $i' < i$. We will show it holds for $i$. Consider some block $\tilde{B}$ with root block $B$ on which $\Pass{}$ was called while it had pass count $i - 1$, and let $S_{\tilde{B}}$ be the set of elements in $\tilde{B}$ right before that \Pass{} and $M$ the matching of the \Pass{}. We have $|M| \geq \gamma |S_{\tilde{B}}|$ as $i - 1 \geq C_{\text{exempt}}$, thus at least a $(1 - \gamma)$-fraction of the elements in $S_{\tilde{B}}$ left a block with root block $B$ as a result of the \Pass{}, and thus cannot contribute to $c_{i}$. Additionally, every element that contributed to $c_{i - 1}$ appears in at most one set $S_{\tilde{B}}$, and every element that contributes to $c_{i}$ has to appear in some set $S_{\tilde{B}}$. Thus, $c_{i} \leq c_{i - 1} (1 - \gamma) \leq b (1 - \gamma)^{\max\{0, i - C_{\text{exempt}}\}}$, where the last inequality uses the induction assumption.

    Now, selecting $\tau = C_{\text{exempt}} + \left\lceil\log_{1 / (1 - \gamma)}(4\contractedgapbound)\right\rceil = O(\log \contractedgapbound)$, we have $c_\tau \leq \frac{1}{4G} \cdot b$. Thus, for this threshold $\tau$, as each pairing in an insignificant pass in a block with root block $B$ results in an element leaving the block, the total number of these pairings is at most $\frac{1}{4G} \cdot b$.

    Summing over all root blocks $B$ at the moments \Convert{} was called on them and applying this inequality, we obtain $A_{\text{insig}} \leq \frac{1}{4G} (A + \#\Insert{})$, as desired.
\end{proof}

\subsubsection{Translating operations on lazy intervals into the interface}

Here, we show how to map lazy intervals onto the blocks of the interface, how to translate operations on lazy intervals into the interface, and how to derive \Cref{lemma:bounding-bad-pairings} from \Cref{lemma:interface}.

We maintain the following invariants. The interface elements are the heavy paths in the splay tree, the blocks correspond to the lazy intervals, the creator of a growing block is its owner, and the function $f(v)$ corresponds to the contracted point gap of $v$.

We maintain the invariant that there will be no more than
$C_{\text{growing}}=2+1$ growing blocks for each creator at a time. Two correspond to the fact that each owner has one left and one right growing lazy interval, while we need one additional growing block to perform heap-child-exchange. Note that a lazy interval can change its owner only if it is shrinking, so the creator correctly never changes.

We further set $\contractedgapbound' = \contractedgapbound + 1$, one more than the maximum contracted point gap ever.

We now show how to map individual operations on lazy intervals while maintaining these invariants.
For clarity, we use $\ValueIncrease$ for $\ValueChange$ with $\Delta=1$ and $\ValueDecrease$ for $\ValueChange$ by $-1$.

\begin{itemize}
    \item \textbf{Delete}. We perform a delete by \Delete{}.
    If the heap-child with minimum gap was deleted from a lazy interval, the block gap might increase, which decreases point gaps and might also decrease contracted point gaps. In such a case, the \Delete{} is followed by \ValueDecrease{}s.
    \item \textbf{Insert}. To satisfy the conditions of \Insert{}, we might need to do multiple \ValueIncrease{}s of the value of the inserted element so that it is larger than the value of the creator of the growing block it is inserted into.
    If this would increase the value over $G$, we \ValueDecrease{} by one the value of the element to which block is being \Insert{}ed, after which we \ValueIncrease{} back.
    Then we use \Insert{}, followed by possibly multiple \ValueIncrease{}s or \ValueDecrease{}s to match the new contracted point gap.
    \item \textbf{Split}. We use \Split{}. Afterwards, we need to update the values of elements in the respective blocks to match their new contracted point gaps, but since these values only decrease, we use multiple \ValueDecrease{}s.
    \item \textbf{Convert}. We use \Convert{}. This decreases the contracted point gaps of the heap-children in the lazy interval, which we track by \ValueDecrease{}s. Then \Create{} is called to create the block corresponding to the new growing lazy interval.
    \item \textbf{Transfer}. Since the blocks do not care about the owners except for their creators, no operation on the block is needed for the transfer itself. However, we need to track the conversion of the lazy intervals by \Convert{}s, \ValueDecrease{}s, and \Create{}s.
    \item \textbf{Heap-child-exchange}. The interface does not allow insertion into shrinking blocks. Therefore, we \Delete{} the previous heap-child, \Create{} a new growing block, and do an \Insert{} of the new heap-child. As with inserts, we might need to do multiple \ValueIncrease{}s to satisfy the conditions of \Insert{}.
    Again, the increases are correct, as we never increase to more than $\contractedgapbound' = \contractedgapbound+1$.
    Afterwards, the new block is \Convert{}ed, and we apply \ValueDecrease{} to the value of its element so that it corresponds to the new contracted point gap.
    \item \textbf{Pair-up}. We perform pair-ups corresponding to internal pairings in batches. We know from \Cref{lemma:pairings-in-parallel} that no two first pairings share a node and, similarly, no two second pairings share a node. All internal pair-ups for one lazy interval during one splay can be split into two disjoint groups in which pair-ups can be performed simultaneously. Hence, we perform them by two \Pass{}es.
    Since they are disjoint, the condition of \Pass{} that an element appears in at most one pairing is satisfied.
    From \Cref{lemma:internal-pairings-fraction}, we know that, for an unbroken lazy interval $\fI$, if $|\fI| \geq 10$, then there are at least $(|\fI|-6)/4 \geq |\fI|/10$ internal first pairings and internal second pairings for $\fI$.
    If $|\fI|\leq 10$, then either there are no internal pairings or at least one, hence at least $|\fI|/10$.
    After the first splay over the lazy interval, which translates to $2$ \Pass{}es over the corresponding block, by \Cref{lemma:unbroken-lazy-intervals}, the lazy interval is not broken.
    Therefore, after the first $2$ \Pass{}es over a block, the set of pairings has size a constant fraction of the block, so this requirement of the \Pass{} is satisfied.
    Finally, the requirement that, for every pairing, the value of the loser is greater than that of the winner is satisfied for bad internal pairings. This is not true for good internal pairings, where the values are equal---for each good internal pairing, before the \Pass{}, we need to do one \ValueDecrease{} on the winner, followed by one \ValueIncrease{} after the \Pass{}.
    After the \Pass{}, the values of the losers need to be updated, but these can be handled by \ValueDecrease{}s as the contracted point gap after an internal pairing always decreases by \Cref{lemma:internal-pairings-gap}.
\end{itemize}

\begin{lemma}\label{lemma:bounding-inserts}
    The total number of \Insert{}s on the interface is at most $p + n$, where $p$ is the total number of paid operations on the lazy intervals.
\end{lemma}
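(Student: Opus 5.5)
The plan is to go through the translation rules above, one lazy-interval operation at a time, and charge every \Insert{} on the interface to a single paid operation, except for a set of at most $n$ \Insert{}s that I will account for separately.

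In the translation, \Insert{} is called in exactly two places. The first is the lazy-interval \emph{insert}: each such operation maps to one \Insert{}, together with \ValueChange{}s and possibly a temporary \ValueDecrease{}/\ValueIncrease{} on the creator, none of which count. The second is \emph{heap-child-exchange}: it is translated as \Delete{}, \Create{}, one \Insert{}, and \Convert{}, so again exactly one \Insert{}. Both of these are paid operations. All remaining operations use no \Insert{} at all:
\begin{itemize}
\item \textbf{Delete} maps to \Delete{} (plus \ValueDecrease{}s);
\item \textbf{Split} maps to \Split{};
\item \textbf{Convert} and \textbf{Transfer} map to \Convert{}/\Create{} and value changes;
\item \textbf{Pair-up} maps to \Pass{}, which moves losers into growing blocks through the pass itself rather than through \Insert{}.
\end{itemize}
This bounds the number of \Insert{}s arising from lazy-interval operations by $p$.

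The additive $n$ accounts for initialization. The interface starts with no blocks, but the lazy intervals start out non-empty: every non-root heavy path lies in some lazy interval of its heap-parent. I will fix an initial assignment that puts each initial heap-child into a growing lazy interval of its owner. This satisfies the invariants, since growing intervals have zero interval gap and nothing shrinking lies below them. I then build the corresponding interface state by a \Create{} for each owner followed by one \Insert{} per heap-child (plus value adjustments). There are at most $2n+1$ external nodes, so this costs $O(n)$ \Insert{}s. To get exactly $n$, I will either count more carefully (only value-node heavy paths, or the number of non-root heavy paths) or absorb the constant; the statement as used later needs only $O(p+n)$.

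The main thing to check is that the initial construction satisfies the \Insert{} precondition $f(v)\ge f(u)+1$. I will handle this as the translation of lazy-interval inserts does: perform \ValueIncrease{}/\ValueDecrease{} steps first, bounded using $G'=G+1$. These add only \ValueChange{}s and no further \Insert{}s, so the bound of at most $p+n$ \Insert{}s follows.
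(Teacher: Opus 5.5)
Your argument is correct and matches the paper's proof: only lazy-interval inserts and heap-child-exchanges translate into an \textsc{Insert} (exactly one each), no other translated operation uses \textsc{Insert}, and initialization adds one \textsc{Insert} per initial heap-child. Your caveat about the initial count is justified, since there are $2n$ non-root heavy paths; the paper's ``at most $n$'' holds only if $n$ is read as the number of interface elements ($|V| = 2n+1$), but the lemma is used downstream only as an $O(p+n)$ bound, so nothing breaks.
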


\begin{proof}
    For every insert and heap-child-exchange, we do one \Insert{}. No other operation does \Insert{}s. At the very beginning, every heap-child needs to be \Insert{}ed into the block corresponding to its lazy interval, which necessitates at most $n$ \Insert{}s.
\end{proof}

\begin{lemma}\label{lemma:bounding-increases}
    The total number of \ValueIncrease{}s on the interface is $O(g + p\contractedgapbound)$, where $p$ is the total number of paid operations on the lazy intervals and $g$ is the total number of good internal pairings.
\end{lemma}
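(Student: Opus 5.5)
We go through the translation rules one at a time and list exactly where a \ValueIncrease{} is issued. Only three rules ever issue one: the rule for Insert, the rule for heap-child-exchange, and the rule for pair-ups of good internal pairings. Deletes, splits, converts, transfers, and the loser updates after a \Pass{} only issue \ValueDecrease{}s. This is because no operation on lazy intervals can increase an existing contracted point gap (the table of operations, the paragraph after it, and \Cref{lemma:internal-pairings-gap}). So it suffices to charge the increases from these three rules.

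\textbf{Good internal pairings.} Each good internal pairing costs one \ValueDecrease{} on the winner before the \Pass{} and one \ValueIncrease{} after it. These contribute exactly $g$ increases.

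\textbf{Inserts and heap-child-exchanges (the paid operations).} For each of them we count three sources of increases.
\begin{enumerate}
\item Raising the inserted element until $f(v) \ge f(u)+1$. All values lie in $\{1,\dots,\contractedgapbound'\}$, so this costs at most $\contractedgapbound' = \contractedgapbound+1$ increases.
\item The possible temporary \ValueDecrease{}/\ValueIncrease{} of the creator, used when the inserted value would exceed the bound. This costs $O(1)$ increases.
\item Adjusting the inserted element to its true contracted point gap after the \Insert{}. This costs at most $\contractedgapbound'$ further increases, since it only moves within the allowed range.
\end{enumerate}
Hence each paid operation costs $O(\contractedgapbound)$ increases, for a total of $O(p\contractedgapbound)$. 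The heap-child-exchange rule issues the same kind of raise before its \Insert{}, followed only by decreases, so it is covered by the same bound.

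\textbf{The $n$ initial inserts.} These are also \Insert{}s and may need up to $O(\contractedgapbound)$ increases each. There are two ways to handle them. The first is to fold them into $p$, noting that the bound on \Insert{}s in \Cref{lemma:bounding-inserts} already carries a $+n$. The second is to observe that the initial values $f$ are arbitrary in the interface, so at initialization we can set every $f(v)$ directly to its correct contracted point gap and need no \ValueIncrease{}s at all. I would use the second choice, which gives the clean bound $O(g+p\contractedgapbound)$.

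The main obstacle is checking that no rule hides an increase, chiefly the transfer and junction steps, where ranks of owners shift. Those shifts are absorbed by interval gaps of shrinking intervals. The only possibly positive shift is the temporary negative gap of the former heap-parent after a junction, and its contracted point gap stays at $1$ because it sits in a singleton shrinking interval. So neither step produces an increase, and summing the three sources proves the lemma.
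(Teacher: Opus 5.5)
Your bookkeeping of where \ValueIncrease{}s arise matches the paper's proof, which consists of exactly these three observations: one increase per good internal pairing, $O(\contractedgapbound)$ per insert or heap-child-exchange, and none from any other translation rule. You go beyond the paper only on the $n$ initial \Insert{}s, which its proof of this lemma does not address. There, the option you prefer does not work as stated. Setting each $f(v)$ directly to its true contracted point gap does not make the initial \Insert{}s legal. \Insert{} requires $f(v)\geq f(u)+1$ for the creator $u$, and true contracted point gaps need not increase along heap edges. Take all initial intervals growing (two per heavy path). Then every non-root owner $u$ has point gap equal to $\gap(u)\geq 1$, so its value is at least $2$. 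A heap-child $v$ of $u$ with $\gap(v)=1$ has value $\lceil 1+\log 2\rceil = 2 < f(u)+1$, so its \Insert{} is illegal.

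The freedom in the initial values does give zero initialization increases, but with a different choice. Start every element at the top value $\contractedgapbound' = \contractedgapbound+1$ and fill the blocks top-down in the heap view. Before inserting the heap-children of an owner $u$, lower $u$ to its true value, which is at most $\contractedgapbound$. Its heap-children still sit at $\contractedgapbound' \geq f(u)+1$, so they can be inserted, and they are lowered in their turn. This uses only \ValueDecrease{}s, at most $n\contractedgapbound'$ of them, which the $n\contractedgapbound$ term in \Cref{lemma:bounding-decreases} absorbs up to a constant factor. Your first option is also sound, but it proves only $O(g+(p+n)\contractedgapbound)$, not the stated bound. That weaker bound would still suffice downstream, since the proof of \Cref{lemma:bounding-bad-pairings} already carries a $(p+n)\contractedgapbound$ term.
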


\begin{proof}
    For every good internal pairing, we do one \ValueIncrease{}.
    For every insert and heap-child-exchange, we need to do $O(\contractedgapbound)$ \ValueIncrease{}s to satisfy the conditions of \Insert{}. No other operation does \ValueIncrease{}s.
\end{proof}

\begin{lemma}\label{lemma:bounding-decreases}
    Let $r$ be the total number of \ValueIncrease{}s on the interface and $s$ be the total number of \ValueDecrease{}s on the interface.
    Then $s \leq r + n\contractedgapbound$.
\end{lemma}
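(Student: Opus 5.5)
This is a potential argument on the sum of values $\Phi = \sum_{v \in V} f(v)$. Each \ValueIncrease{} raises $\Phi$ by exactly one and each \ValueDecrease{} lowers it by exactly one. No other interface operation (\Convert{}, \Create{}, \Insert{}, \Delete{}, \Pass{}, \Split{}) touches $f$, so it leaves $\Phi$ unchanged. Hence at every time the current value satisfies $\Phi_{\text{now}} = \Phi_0 + r_{\text{so far}} - s_{\text{so far}}$, where $\Phi_0$ is the initial value.

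The bound then comes from the range of $\Phi$. Every value lies in $\{1,\dots,\contractedgapbound'\}$ at all times. The \ValueIncrease{}s made before an \Insert{} never exceed $\contractedgapbound' = \contractedgapbound+1$, and the temporary decrease of the creator's value keeps it at least $1$ because the creator's value is at least $2$ whenever the inserted element would overflow. It follows that $\Phi_0 \le n\contractedgapbound'$ and $\Phi_{\text{end}} \ge n$. Therefore
\[
s = r + \Phi_0 - \Phi_{\text{end}} \le r + n(\contractedgapbound' - 1) = r + n\contractedgapbound.
\]

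The main point to check is that the translation in the preceding list really keeps every value inside $[1,\contractedgapbound']$. Contracted point gaps are always at least $1$ by definition and at most $\contractedgapbound$. Any excursion above $\contractedgapbound$ is caused only by the \ValueIncrease{}s before \Insert{}s, which go up to $\contractedgapbound+1$ by construction. Any excursion below the true contracted gap occurs only for the winner of a good pairing, which is decreased once and has value at least $2$ by \Cref{lemma:internal-pairings-gap}, or for the creator in the overflow case. With these checks made, the counting argument above completes the proof.
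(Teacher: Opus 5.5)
Your proof is correct and is essentially the paper's argument: $\sum_v f(v)$ changes by exactly $\pm 1$ under each \ValueIncrease{}/\ValueDecrease{} and is untouched by every other operation, so $s - r$ is at most the initial sum minus the final sum. The paper bounds this difference by $n\contractedgapbound - 0$, using initial values at most $\contractedgapbound$ and nonnegative final values. You instead use the range $[1,\contractedgapbound+1]$ at both ends and get $n(\contractedgapbound+1) - n$, and you spell out the checks that the translation keeps values in range, which the paper leaves implicit.
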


\begin{proof}
    The values $f$ can never be negative. They start at values of at most $\contractedgapbound$, so their sum is at most $n\contractedgapbound$ and is always nonnegative. Every \ValueDecrease{} decreases the sum by one and every \ValueIncrease{} increases the sum by one, so there can be no more than $r + n\contractedgapbound$ \ValueDecrease{}s.
\end{proof}

Now we are ready to bound the total number of internal pairings by proving \Cref{lemma:bounding-bad-pairings}.

\begin{proof}[Proof of \Cref{lemma:bounding-bad-pairings}.]
    Since every internal pairing is performed by a pair-up, and pair-ups are translated into \Pass{}es, the total activity $A$ from \Cref{lemma:interface} is exactly the total number of internal pairings.

    This leaves us with bounding the cost $C$ from \Cref{lemma:interface} as $A \in O(C)$.
    The total number of \ValueIncrease{}s is $O(g + p\contractedgapbound)$ from \Cref{lemma:bounding-increases} and by \Cref{lemma:bounding-decreases}, the total number of \ValueDecrease{}s is $O(g + (p + n)\contractedgapbound)$. Therefore the total number of \ValueChange{}s is also  $O(g + (p + n)\contractedgapbound)$. As the total number of inserts is $O(p + n)$ by \Cref{lemma:bounding-inserts}, the total cost is $O((p + n)\contractedgapbound\log \contractedgapbound + g\log \contractedgapbound)$. This is also asymptotically the total activity and the total number of internal pairings, which concludes the proof.
\end{proof}

\subsection{Putting everything together}

In this section, we bound the total number of pairings that can occur and use it to bound the work of the splay tree.

\begin{lemma}\label{lemma:number-of-blocks}
    Let $n$ be the number of elements. 
    The number of lazy intervals ever created is $O(n + \cost(T^*))$.
\end{lemma}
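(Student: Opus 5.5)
The plan is to count every source of new lazy intervals and sum them. Lazy intervals are created only by the operations Split, Convert, and Transfer (the latter implicitly converting growing intervals and creating new ones), plus the intervals that exist at the very beginning. So it suffices to charge each such creation to either the initial configuration, a rotation of $T^*$, or a splay.

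First, the initial state. Before any operation we fix some valid partition of heap-children into lazy intervals. For example, we can put every heap-child into its own singleton lazy interval, together with the two (possibly empty) growing intervals per heavy path. There are at most $2n+1$ external nodes, hence that many heavy paths and heap-children. This gives $O(n)$ initial lazy intervals.

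Second, rotations of $T^*$. By \Cref{lemma:rotation-lazy-intervals}, each rotation creates $O(1)$ new lazy intervals. Third, splays. By \Cref{lemma:blocks-by-splay}, each splay creates $O(1)$ new lazy intervals; this already includes the $O(1)$ heavy-path junctions needed to remove the assumption on $x^-$ and $x^+$, via \Cref{lemma:heavy-path-junction}. The number of rotations of $T^*$ is at most $\cost(T^*)$, because each rotation costs $1$. The number of splays is $m \le \cost(T^*)$, because each access costs at least $1$. Summing the three contributions gives $O(n) + O(\cost(T^*)) + O(\cost(T^*)) = O(n + \cost(T^*))$.

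The only point requiring care is checking that no other step creates intervals without being accounted for. The delete/insert implementation of boundary pairings and the pair-ups create none; Split, Convert and Transfer occur only inside the procedures already bounded by the cited lemmas. Empty intervals deleted along the way only help. I therefore expect no real obstacle beyond making the initialization explicit.
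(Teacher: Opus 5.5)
Your proof is correct and follows the paper's own argument: $O(n)$ initial intervals over the $2n+1$ heavy paths, plus $O(1)$ new intervals per splay (\Cref{lemma:blocks-by-splay}) and per rotation of $T^*$ (\Cref{lemma:rotation-lazy-intervals}), with both the number of splays and the number of rotations bounded by $\cost(T^*)$. The only difference is cosmetic: you start from a singleton partition, while the paper uses two lazy intervals per heavy path, and both give $O(n)$ initial intervals.
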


\begin{proof}
	    At the very beginning, we need to create two lazy intervals for each of the $2n + 1$ heavy paths, giving $O(n)$ lazy intervals.
	    For every splay, we create $O(1)$ lazy intervals by \Cref{lemma:blocks-by-splay}. Since there are $m$ splays, this gives $O(m)$ lazy intervals, which is bounded by $O(\cost(T^*))$.
	    Finally, for every rotation of the nearly optimal tree $T^*$, we create $O(1)$ new lazy intervals by \Cref{lemma:rotation-lazy-intervals}. There are at most $\cost(T^*)$ rotations, so this gives $O(\cost(T^*))$ lazy intervals.
    Summing these terms gives the desired bound.
\end{proof}

\begin{lemma}\label{lemma:cost-of-everything}
	   The total number of paid operations is at most $u + O((n + \cost(T^*))\log \contractedgapbound)$.
\end{lemma}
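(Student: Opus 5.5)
The plan is to enumerate every source of paid operations (inserts and heap-child-exchanges) and bound each using lemmas already proved. There are three sources: initialization, rotations of the nearly optimal tree $T^*$, and splays.

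\emph{Rotations.} By \Cref{lemma:rotation-lazy-intervals}, every rotation of $T^*$ uses $O(1)$ paid operations. There are at most $\cost(T^*)$ rotations, so this source contributes $O(\cost(T^*))$.

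\emph{Splays.} \Cref{lemma:indels-by-splay} bounds the paid operations over all splays by $u + O(\ell\log\contractedgapbound + \cost(T^*))$, where $\ell$ is the total number of lazy intervals ever created. By \Cref{lemma:number-of-blocks}, $\ell = O(n + \cost(T^*))$. Substituting gives $u + O((n+\cost(T^*))\log\contractedgapbound + \cost(T^*))$.

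\emph{Initialization.} The initial placement of heap-children into lazy intervals is not a paid operation on lazy intervals, so I would not count it here. It is accounted for separately as the additive $n$ \Insert{}s in \Cref{lemma:bounding-inserts}. If one prefers to count it as paid, it contributes only $O(n)$ anyway.

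\emph{Combining.} Summing the three contributions and using $\contractedgapbound \geq 2$ (so $\log\contractedgapbound \geq 1$) absorbs the standalone $O(\cost(T^*))$ and $O(n)$ terms. The total is $u + O((n+\cost(T^*))\log\contractedgapbound)$, as claimed.

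The only point that needs care is making sure the $\ell$ in \Cref{lemma:indels-by-splay} counts \emph{all} lazy intervals, including those created by rotations and by junctions. The per-interval bound of \Cref{lemma:outer-ops-block} applies to every lazy interval regardless of how it was created, so substituting \Cref{lemma:number-of-blocks}, which counts all of them, is legitimate. Beyond this check the argument is bookkeeping, and I do not expect a real obstacle.
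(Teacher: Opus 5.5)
Your proposal is correct and follows essentially the same argument as the paper: $O(1)$ paid operations per rotation of $T^*$ by \Cref{lemma:rotation-lazy-intervals}, plus the splay bound $u + O(\ell\log\contractedgapbound + \cost(T^*))$ from \Cref{lemma:indels-by-splay} with $\ell = O(n+\cost(T^*))$ substituted from \Cref{lemma:number-of-blocks}. Your treatment of initialization is consistent with the paper, which accounts for those $n$ insertions separately in \Cref{lemma:bounding-inserts}.
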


\begin{proof}
For every rotation of the nearly optimal tree $T^*$, there are $O(1)$ paid operations by \Cref{lemma:rotation-lazy-intervals}. This gives $O(\cost(T^*))$ paid operations over all rotations.

Let $\ell$ be the total number of lazy intervals ever created. The total number of paid operations for all splays is $u + O(\ell \log \contractedgapbound + \cost(T^*))$ by \Cref{lemma:indels-by-splay}.
By \Cref{lemma:number-of-blocks}, $\ell \in O(n + \cost(T^*))$, which gives
$u + O((n+\cost(T^*)) \log \contractedgapbound)$ paid operations over all splays and also overall.
\end{proof}

\begin{lemma}\label{lemma:total-good-pairings}
	Let $u$ be the number of unimportant boundary pairings performed. There are at most $u \contractedgapbound + O((n + \cost(T^*))\contractedgapbound\log \contractedgapbound)$ good internal pairings.
\end{lemma}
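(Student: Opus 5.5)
We will bound good internal pairings with a potential argument. The potential $\Phi$ is the sum of the contracted point gaps of all non-root heavy paths.

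Two facts fix its range. Every contracted point gap lies in $[1,\contractedgapbound]$, and there are $2n+1$ external nodes, hence $2n+1$ heavy paths. So $0\le \Phi \le (2n+1)\contractedgapbound$ at all times, and the initial potential is $O(n\contractedgapbound)$.

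By \Cref{lemma:internal-pairings-gap}, each good internal pairing decreases the contracted point gap of its loser by at least $1$. No other contracted point gap changes in that pairing, since the winner's is unaffected.

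Next we examine every other event that can change $\Phi$.

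\textbf{Other interval operations.} By the discussion after the list of allowed operations and by \Cref{tab:lazy-interval-operations}, pair-ups for bad pairings, deletes, splits, converts, transfers and heap-child-exchanges never increase any contracted point gap. An insert increases the contracted point gap of the inserted heap-child by at most $\contractedgapbound$.

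\textbf{Changes of the optimal tree.} Rotations of $T^*$ change no contracted point gap except through these operations, by \Cref{lemma:rotation-lazy-intervals}. The same holds for the rank adjustments of $x^-$ and $x^+$ handled by junctions and the preliminary singleton splits.

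\textbf{Boundary pairings.} A boundary pairing is implemented as a delete followed by an insert, so it contributes at most $\contractedgapbound$.

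Summing up, the total increase of $\Phi$ is at most $\contractedgapbound$ times the number of inserts, plus the initial potential. The total decrease is at least the number $g$ of good internal pairings. Hence $g \le O(n\contractedgapbound) + \contractedgapbound\cdot \#\text{inserts}$.

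Every insert is a paid operation. By \Cref{lemma:cost-of-everything}, the number of paid operations is $u + O((n+\cost(T^*))\log\contractedgapbound)$. Multiplying by $\contractedgapbound$ gives
\[
g \le u\contractedgapbound + O((n+\cost(T^*))\contractedgapbound\log\contractedgapbound),
\]
which is the claim.

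The main obstacle is making sure no event increases a contracted point gap outside inserts. The dangerous cases are the transient negative gaps during junctions and the rank changes of $x^\pm$ and of rotated segments. My plan is to rely on the paper's convention that all such changes are absorbed by interval gaps of shrinking singleton intervals, so contracted point gaps only move via the listed operations. Heap-child-exchange also needs care: it yields a new heap-child in a singleton shrinking interval with contracted point gap $1$, which is not larger than the old value, so it cannot increase $\Phi$.
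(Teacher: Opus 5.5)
Your proposal is correct and matches the paper's proof. Both use the sum of contracted point gaps as the potential, note that inserts are the only operations that can increase it (each by at most $\contractedgapbound$), bound the number of inserts via \Cref{lemma:cost-of-everything}, and conclude from the $O(n\contractedgapbound)$ initial value, nonnegativity, and the unit decrease per good pairing given by \Cref{lemma:internal-pairings-gap}.
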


\begin{proof}
    We investigate which operations change the total contracted point gap of all heap-children.

	    Deletes, splits, converts, transfers, and heap-child-exchanges either decrease the contracted point gaps of individual heap-children or leave them unchanged.
	    Every pair-up corresponds to an internal pairing and, by \Cref{lemma:internal-pairings-gap}, does not increase contracted point gaps.
	    Inserts are the only operations that can increase contracted point gaps, and every insert increases them by at most $O( \contractedgapbound)$.
	    Therefore, the total increase in the sum of contracted point gaps over the whole access sequence is $u\contractedgapbound + O((n + \cost(T^*))\contractedgapbound\log \contractedgapbound)$.
	    At the very beginning, this sum cannot be larger than $O(n \contractedgapbound)$.

	    Every good pairing decreases the total contracted point gap by at least one by \Cref{lemma:internal-pairings-gap}.
	    Since the total contracted point gap is always non-negative, the number of good pairings is at most $u\contractedgapbound + O((n + \cost(T^*))\contractedgapbound\log \contractedgapbound) + O(n\contractedgapbound)$, which is $u\contractedgapbound + O((n + \cost(T^*))\contractedgapbound\log \contractedgapbound)$.
\end{proof}

In particular, substituting the number of good internal pairings into \Cref{lemma:bounding-bad-pairings} gives the following bound, where $a$ is the total number of internal pairings and $p$ is the total number of paid operations:
\[
a \in O(p\contractedgapbound \log \contractedgapbound + (n +\cost(T^*))\contractedgapbound\log^2 \contractedgapbound)
\]

\begin{lemma}\label{lemma:bound-unimportant-pairings-by-internal}
		    Let $n$ be the number of elements, $g$ the total number of good internal pairings, $a$ the total number of internal pairings (both good and bad), and $p$ the total number of paid operations.
	    Let $\alpha$ be the proportionality constant from \Cref{lemma:bounding-bad-pairings}. We rewrite the bound as $a \leq \alpha \cdot (p\contractedgapbound \log \contractedgapbound + (n +\cost(T^*))\contractedgapbound\log^2 \contractedgapbound) + O(1)$.
    
	    Let $u$ be the total number of unimportant pairings. Then, for $\kappa= \max(16, 32\alpha)$ in the definition of unimportant pairings, we have $\alpha u\contractedgapbound\log \contractedgapbound \leq a/2$ and $u \leq a$.
\end{lemma}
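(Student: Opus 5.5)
The plan is to charge every unimportant boundary pairing to internal pairings that occur in the loser's lazy interval during the same splay.

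\textbf{Where the loser sits.} Fix an unimportant boundary pairing and let $\fI$ be the lazy interval containing the loser. By definition, $\fI$ is not broken and satisfies $|\fI| \geq \kappa \contractedgapbound \log \contractedgapbound$.

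\textbf{Enough internal pairings in $\fI$.} Since $\fI$ is unbroken, \Cref{lemma:internal-pairings-fraction} applies to the splay in which the pairing happens. Hence that splay performs at least $(|\fI|-6)/2$ internal pairings inside $\fI$. Because $\kappa \geq 16$ and $\contractedgapbound \geq 2$, we have $|\fI| \geq 16$, so $(|\fI|-6)/2 \geq |\fI|/4$.

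\textbf{Few boundary pairings per interval per splay.} As in the proof of \Cref{lemma:outer-ops-block}, during one splay at most $4$ boundary pairings have their loser in $\fI$. The reason is that there are at most two boundary zig-zigs per interval, one on each side, and each contains at most two pairings.

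\textbf{The charging.} Charge each unimportant boundary pairing with loser in $\fI$ to the internal pairings of $\fI$ in that splay. Each such internal pairing receives charge at most
\[
4 \Big/ \tfrac{|\fI|}{4} \;=\; \frac{16}{|\fI|} \;\leq\; \frac{16}{\kappa \contractedgapbound \log \contractedgapbound}.
\]
Each internal pairing lies in exactly one lazy interval at one splay, so it is charged only once. Summing over all splays and intervals gives
\[
u \leq \frac{16\, a}{\kappa \contractedgapbound \log \contractedgapbound}.
\]

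\textbf{Plugging in $\kappa$.} Since $\kappa \geq 32\alpha$, we get $\alpha u \contractedgapbound \log \contractedgapbound \leq 16\alpha a/\kappa \leq a/2$. Since $\kappa \geq 16$ and $\contractedgapbound \log \contractedgapbound \geq 1$, we also get $u \leq a$.

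The step needing most care is confirming that \Cref{lemma:internal-pairings-fraction} really applies. Unbrokenness at the time of the splay is guaranteed by the definition of unimportant. The count must be of internal pairings within the same $\fI$ at that splay, and these are not double counted across different losers' intervals, because lazy intervals are disjoint.
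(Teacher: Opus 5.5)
Your proposal is correct and follows the paper's proof: unbrokenness plus $|\fI| \geq \kappa \contractedgapbound \log \contractedgapbound$ lets you apply \Cref{lemma:internal-pairings-fraction}, which gives at least $\kappa \contractedgapbound \log \contractedgapbound / 4$ internal pairings in the loser's interval in that splay. You then use that at most $4$ unimportant pairings have their loser in $\fI$ per splay, and summing over all intervals and splays gives $a \geq \kappa \contractedgapbound \log \contractedgapbound \cdot u / 16$. Your per-internal-pairing charging is just a rephrasing of that summation, and plugging in $\kappa = \max(16, 32\alpha)$ yields both inequalities exactly as in the paper.
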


\begin{proof}
	    Consider an unimportant boundary pairing, and let $\fI$ be the lazy interval of the loser.
	    By the definition of unimportance, the lazy interval had at least $\kappa \contractedgapbound \log \contractedgapbound$ heap-children. This is at least $16\contractedgapbound\log \contractedgapbound \geq 16 \cdot 2 \log 2 > 12$ heap-children. Therefore, by \Cref{lemma:internal-pairings-fraction}, the number of internal pairings on $\fI$ in this splay is at least $(\kappa \contractedgapbound \log \contractedgapbound - 6) / 2 \geq \kappa \contractedgapbound \log \contractedgapbound / 4$.
	    Note that, by the definition of unimportance, $\fI$ is not broken, so the assumptions of \Cref{lemma:internal-pairings-fraction} are satisfied.
	    There are at most $4$ different unimportant pairings whose loser is in $\fI$ and that correspond to the same splay, as there are at most $2$ zig-zigs on the boundary of $\fI$, each consisting of at most $2$ boundary pairings.
	    Summing over all lazy intervals and all splays, we get $a \geq \kappa \contractedgapbound\log \contractedgapbound \cdot u / 16 = \max(16, 32\alpha) \contractedgapbound\log \contractedgapbound \cdot u / 16$.
	    Hence, $a/2 \geq \alpha u\contractedgapbound\log \contractedgapbound$ and $a \geq u$.
\end{proof}

\begin{lemma}\label{lemma:total-zig-zigs}
    Let $n$ be the number of elements.
	    The total number of zig-zigs performed by the splay tree over the whole access sequence is $O((n + \cost(T^*)) \cdot \log\log n \cdot \log^2\log\log n)$.
\end{lemma}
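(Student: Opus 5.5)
Every zig-zig consists of exactly two pairings by \Cref{lemma:zig-zig-to-pairings}, so the number of zig-zigs is at most half the total number of pairings. Each pairing is either internal (counted by $a$) or boundary. Boundary pairings are either unimportant (counted by $u$) or important; each important boundary pairing requires an insert, i.e.\ a paid operation. So the number of zig-zigs is $O(a + u + p)$, where $p$ is the total number of paid operations. By \Cref{lemma:bound-unimportant-pairings-by-internal}, $u \le a$. It therefore suffices to bound $a$ and $p$.

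First I bound $a$ without $u$ appearing on the right-hand side. By \Cref{lemma:cost-of-everything}, $p \le u + O((n+\cost(T^*))\log \contractedgapbound)$. Substitute this into the rewritten form of \Cref{lemma:bounding-bad-pairings} (with $g$ already replaced via \Cref{lemma:total-good-pairings}, as stated in \Cref{lemma:bound-unimportant-pairings-by-internal}). This gives
\[
a \le \alpha u \contractedgapbound\log\contractedgapbound + O\bigl((n+\cost(T^*))\contractedgapbound\log^2\contractedgapbound\bigr).
\]
Here the $\log\contractedgapbound$ from $p$ multiplies the $\contractedgapbound\log\contractedgapbound$ factor. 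Now use $\alpha u\contractedgapbound\log\contractedgapbound \le a/2$ from \Cref{lemma:bound-unimportant-pairings-by-internal} and absorb this term into the left-hand side. This yields $a = O((n+\cost(T^*))\contractedgapbound\log^2\contractedgapbound)$. Consequently $u \le a$ obeys the same bound, and so does $p \le u + O((n+\cost(T^*))\log\contractedgapbound)$.

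The main subtlety is the circularity: $u$ appears in the bounds for $p$ and $g$, which feed into the bound for $a$, while $u$ is itself controlled by $a$. The choice of $\kappa$ in \Cref{lemma:bound-unimportant-pairings-by-internal} is exactly what makes the absorption work. I need to check that the coefficient of $u$ coming from both $p$ and $g$ (through $g\log\contractedgapbound \le u\contractedgapbound\log\contractedgapbound + \dots$) is at most $\alpha\contractedgapbound\log\contractedgapbound$ as the rewritten lemma asserts. If the constants do not fit, I would enlarge $\kappa$ further, since the unimportant-pairing argument gives $a \ge \kappa\contractedgapbound\log\contractedgapbound\, u/16$ for any $\kappa$.

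Finally, plug in the size of $\contractedgapbound$. By \Cref{lemma:opt-constraints}, $\depth(T^*_t) = O(\log n)$, so $\contractedgapbound = \lceil 1 + \log(2 + O(\log n))\rceil = O(\log\log n)$ and $\log\contractedgapbound = O(\log\log\log n)$. Hence the number of zig-zigs is $O((n+\cost(T^*))\log\log n \cdot \log^2\log\log n)$, as claimed.
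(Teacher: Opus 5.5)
Your proof is correct and follows the paper's argument: you substitute $p \le u + O((n+\cost(T^*))\log\contractedgapbound)$ from \Cref{lemma:cost-of-everything} into the rewritten internal-pairing bound, absorb $\alpha u\contractedgapbound\log\contractedgapbound \le a/2$ via the choice of $\kappa$, and then use $u \le a$ and $\contractedgapbound = O(\log\log n)$. The one minor difference is how you count important boundary pairings: you bound them by $p$, since each costs an insert, whereas the paper uses \Cref{lemma:outer-ops-block} and \Cref{lemma:number-of-blocks} directly; the bound on $p$ already rests on those two lemmas, so this only repackages the paper's count and gives a weaker but still sufficient bound.
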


\begin{proof}
	    Every zig-zig corresponds to two pairings; therefore, it is sufficient to bound the total number of pairings.
	    We need to bound three types of pairings: important boundary pairings, unimportant boundary pairings, and internal pairings.

	    There are only $O(\log \contractedgapbound)$ important boundary pairings per lazy interval by \Cref{lemma:outer-ops-block}, and there are $O(n + \cost(T^*))$ lazy intervals ever created by \Cref{lemma:number-of-blocks}. Thus, there are $O((n + \cost(T^*))\log \contractedgapbound)$ important boundary pairings.

	    To bound the number of internal pairings, let $\alpha$ be the proportionality constant from \Cref{lemma:bounding-bad-pairings}, i.e.,
    \(a \leq \alpha \cdot (p\contractedgapbound \log \contractedgapbound + (n +\cost(T^*))\contractedgapbound\log^2 \contractedgapbound) + O(1) \).
	    By \Cref{lemma:cost-of-everything}, $p$, the total number of paid operations, is at most $u + O((n + \cost(T^*))\log \contractedgapbound)$. Substituting this bound gives
    \[a \leq \alpha \cdot (u\contractedgapbound\log \contractedgapbound + O((n + \cost(T^*))\contractedgapbound \log^2 \contractedgapbound)) + O(1) \]
	    Furthermore, by \Cref{lemma:bound-unimportant-pairings-by-internal},
    $\alpha u\contractedgapbound\log \contractedgapbound \leq a/2$, hence
    \[ a \leq \alpha\left(\frac{a}{2\alpha} + O((n + \cost(T^*))\contractedgapbound\log^2 \contractedgapbound)\right) + O(1)\]
    \[ a/2 \leq \alpha\left(O((n + \cost(T^*))\contractedgapbound\log^2 \contractedgapbound)\right) + O(1)\]
    \[ a \in O((n + \cost(T^*))\contractedgapbound\log^2 \contractedgapbound)\]
    This bounds $a$, the total number of internal pairings.

	    By \Cref{lemma:bound-unimportant-pairings-by-internal}, the number of unimportant boundary pairings satisfies the same bound, as it is at most the number of internal pairings.

	    Summing gives $O((n + \cost(T^*))\contractedgapbound\log^2 \contractedgapbound)$ pairings, and thus also zig-zigs. Since the nearly optimal tree $T^*$ has depth $O(\log n)$ by \Cref{lemma:opt-constraints}, there are at most $O((n +  \cost(T^*)) \cdot \log\log n \cdot \log^2\log\log n)$ zig-zigs.
\end{proof}

\begin{lemma}\label{lemma:total-zig-zags}
    Let $n$ be the number of elements and $m$ the length of the access sequence.
	    The total number of zig-zags performed by the splay tree over the whole access sequence is $O((n + \cost(T^*)) \cdot \log\log n \cdot \log^2\log\log n)$.
\end{lemma}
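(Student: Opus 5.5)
The plan is a potential argument on the number of bends in the splay tree. Bends are internal nodes, so at most $n$ of them exist at any time; in particular the initial configuration contains at most $O(n)$ bends. Every zig-zag removes at least one bend. Hence the total number of zig-zags is at most the number of bends initially present plus the total number of bends ever created. It therefore suffices to bound the total number of bend creations over the whole execution.

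Bends are created only in two kinds of events: rotations of the nearly optimal tree $T^*$ and splays. For rotations of $T^*$, \Cref{lemma:heavy-paths-rotation} shows that each rotation causes $O(1)$ heavy-path junctions, and each junction creates $O(1)$ bends by \Cref{lemma:heavy-path-junction}. Since there are at most $\cost(T^*)$ rotations, this contributes $O(\cost(T^*))$ bends. One point must be checked here: the rank changes themselves, as opposed to the junctions, do not create bends, since bends depend only on the heavy-path structure. For splays, \Cref{lemma:bends-by-splay} gives at most $O(1)+\#\text{zig-zigs}$ new bends per splay, and this count already includes the junctions needed to restore the true ranks of $x^-$ and $x^+$. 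Summing over the $m \le \cost(T^*)$ splays gives $O(\cost(T^*)) + Z$ bends, where $Z$ is the total number of zig-zigs.

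The count of removals uses the second half of \Cref{lemma:bends-by-splay}: each splay removes at least as many bends as it performs zig-zags. Since the bend count is always nonnegative, the total number of zig-zags is at most
$n + O(\cost(T^*)) + Z$.
Substituting $Z = O((n+\cost(T^*))\log\log n\log^2\log\log n)$ from \Cref{lemma:total-zig-zigs} gives the claim.

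The main obstacle is making the accounting airtight. Every bend counted as removed by a zig-zag must actually exist and must be charged only once. Bends may also disappear for other reasons, for example in junctions, but that only helps the bound. The delicate step is to confirm that the removal count in \Cref{lemma:bends-by-splay} refers to bends present just before the splay, and that no removed bend is also counted as newly created within the same splay. Stated as a potential argument, the inequality is simply
$$\#\text{removed} \le \#\text{initial} + \#\text{created},$$
and it holds for every run.
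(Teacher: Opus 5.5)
Your proposal is correct and is essentially the paper's own proof. Both use a bend-counting potential argument: at most $O(n)$ initial bends, $O(1)$ bends created per rotation of $T^*$ (via \Cref{lemma:heavy-paths-rotation}), and at most $O(1)+\#\text{zig-zigs}$ bends created and at least $\#\text{zig-zags}$ bends removed per splay (via \Cref{lemma:bends-by-splay}); nonnegativity of the bend count and the zig-zig bound of \Cref{lemma:total-zig-zigs} then finish the argument.
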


\begin{proof}
	    By \Cref{lemma:bends-by-splay}, every splay creates $O(1)$ + \#zig-zigs bends and removes \#zig-zags bends. Every rotation of the nearly optimal tree $T^*$ creates $O(1)$ bends by \Cref{lemma:heavy-paths-rotation}.
	    Over all splays, the total number of bends created is the total number of zig-zigs plus $O(\cost(T^*))$, and initially there can be at most $O(n)$ bends. Therefore, the total number of bends created is $O((n + \cost(T^*)) \cdot \log\log n \cdot \log^2\log\log n)$ by \Cref{lemma:total-zig-zigs}.
    Since every zig-zag removes one bend and the number of bends is non-negative, there can be at most $O((n + \cost(T^*)) \cdot \log\log n \cdot \log^2\log\log n)$ zig-zags.
\end{proof}

We have now bounded the numbers of zig-zigs and zig-zags, so we can put everything together to prove the competitiveness of splay trees.

\begin{proof}[Proof of \Cref{thm:main}]
		    There are at most $m \in O(\cost(T^*))$ zigs, as they occur only once per access, and each performs $1$ rotation. There are also $O((n + \cost(T^*)) \cdot \log\log n \cdot \log^2\log\log n)$ zig-zigs and zig-zags, each performing $O(1)$ rotations.
	    Hence, the overall cost of the splay tree is $O((n +  \cost(T^*)) \cdot \log\log n \cdot \log^2\log\log n)$.
	    This relates the cost of the splay tree to that of the nearly optimal tree $T^*$.
	    To relate it to the cost of the optimal tree $T^\text{OPT}$,
	    we recall that by \Cref{lemma:opt-constraints}, $\cost(T^*) \in O(\cost (T^\text{OPT}))$.
    Therefore, the overall cost of the splay tree is $O((n +  \cost(T^\text{OPT})) \cdot \log\log n \cdot \log^2\log\log n)$,
    which proves the main theorem.
\end{proof}

\section{Conclusions}

In this work we prove that splay trees are $\tilde{O}(\log \log n)$-competitive, giving the first $o(\log n)$ bound for a binary search tree that is conjectured to be dynamically optimal.
Among other consequences, this implies that splay trees satisfy the unified bound and the lazy finger bound with a $\tilde{O}(\log\log n)$ multiplicative factor and an additive $\tilde{O}(n\log\log n)$ term; see \Cref{sec:related-work}.
We also show that our techniques can be used to nearly prove the split conjecture: splitting a splay tree costs $\tilde{O}(n\log\log n)$, improving on the previously best trivial bound of $O(n \log n)$; see \Cref{app:split-conjecture}.
Our proof of competitiveness is novel in that we charge the cost of the splay tree directly to the optimal tree, circumventing known gaps between lower bounds and the optimum.

Our work used several novel techniques that we believe can be useful in other contexts.
First, the normal form we used to assume the optimal tree is ribless, has logarithmic depth, and does restricted rotations (\Cref{S:normal-form}) can be used as a black box whenever a tree is required to have these properties.
Second, we abstracted the core of the complexity proof of pairing heaps~\cite{Sinnamon2025,PairingHeaps2026} into an interface that can be used in the analysis of other data structures beyond splay trees and multiple types of pairing heaps.
Third, our result that all zig-zags ever done can be asymptotically bounded by the total number of zig-zigs can be used as a black box.
Therefore, future work can focus on bounding the number of zig-zigs only.

\ifsubmission\else
\paragraph{Acknowledgements.}
BH, RH, and AR were partially funded by the Ministry of Education and Science of Bulgaria's support for INSAIT as part of the Bulgarian National Roadmap for Research Infrastructure and partially funded through the European Research Council (ERC) under the European Union's Horizon Europe research and innovation program (ERC Advanced grant agreement 101268062 and ERC Starting grant agreement 949272). 
PC, MK and VR were partially supported by Charles Univ. project UNCE 24/SCI/008.
PC was partially supported by Charles Univ. GA UK project No. 70924.
VR was supported by the Czech Science Foundation (GAČR), grant no. 26-23599M. 
RT's research at Princeton was partially supported by a gift from Microsoft.
Part of this work was done while most authors were visiting INSAIT.
\fi

\printbibliography

\appendix
\section{Related work}
\label{sec:related-work}

\subsection{Dynamic optimality and related conjectures and properties}

\paragraph{Dynamic optimality.}
A BST algorithm is dynamically optimal if, on every sufficiently long access sequence, its cost is within a constant factor of that of the optimal offline BST algorithm.
There are two main candidates which are conjectured to be dynamically optimal -- splay trees~\cite{SleatorTarjan85} and GREEDY~\cite{Lucas88Canonical,Munro00LinearSearch,DemaineHIKP09Geometry}; see the surveys~\cite{Iacono13InPursuit,Russo19Study,Kozma16Thesis}.
Dynamic optimality is an example of \emph{instance optimality}---the requirement that
an algorithm's cost on every input be within a constant factor of the best algorithm for
that particular input.
This notion was formalized by Fagin, Lotem, and
Naor~\cite{FaginLotemNaor03InstanceOptimality} and has since become a central paradigm in beyond-worst-case analysis~\cite{Roughgarden2021book}. Instance optimality can be seen as a subfield of parameterized algorithms~\cite{cygan2015parameterized}.
The closest classical analogue to BST dynamic optimality is the \emph{list update problem},
where Sleator and Tarjan~\cite{SleatorTarjan85ListUpdate} showed that the simple
Move-to-Front rule is constant-competitive against the offline optimum.
Instance-optimality and related notions have been established across many domains: geometric algorithms~\cite{AfshaniBarbayChan17InstanceOptimalGeometry}, distributed computing~\cite{HaeuplerWajcZuzic21UniversalDistributed,haeupler_racke_ghaffari2022hopconstrained_expanders,rozhon_grunau_haeupler_zuzic_li2022deterministic_sssp,10.1145/3519270.3538429}, sorting-related problems~\cite{HaeuplerEtAl24UniversalDijkstra,HoogRotenbergRutschmann25SimplerDijkstra,haeupler_hladik_iacono_rozhon_tarjan_tetek2025fast_supi}, and sublinear algorithms~\cite{HaeuplerEtAl25BidirectionalDijkstra,valiant2017automatic,valiant2016instance,dagum_karp_luby_ross2000optimal_sampling,narayanan_rozhon_tetek_thorup2024instance_optimal_sampling}.

\paragraph{Splay trees and conjectures related to dynamic optimality.}
In 1985, Sleator and Tarjan~\cite{SleatorTarjan85} introduced splay trees and stated the celebrated dynamic optimality conjecture for them.
In restricted settings for the optimal tree, certain results on dynamic optimality are known.
Georgakopoulos~\cite{Georgakopoulos04Reweighing} proved that splay trees are competitive
against a class of dynamically balanced BSTs using a reweighing argument.
Splay trees are also \emph{key-independently optimal}~\cite{Iacono05KeyIndependent} --
under a random relabeling of keys, they match the expected offline optimum up to a constant.
Furthermore, much of the evidence around dynamic optimality comes from restricted input families where splay trees are conjectured, expected, or proved to be optimal or near-optimal.

\begin{itemize}
\item \emph{static optimality}: Sleator and Tarjan~\cite{SleatorTarjan85} proved the \emph{access lemma} for splay trees, which implies that splay trees are within a constant factor of any static tree, i.e., a tree that is not allowed to change. This also proves $O(\log n)$ amortized access time.
\item \emph{sequential access}, where keys are accessed in sorted order;
Tarjan proved that sorted access takes
linear total time in splay trees, regardless of the initial tree~\cite{Tarjan85SequentialAccess},
and Elmasry later simplified the argument~\cite{Elmasry04SequentialDeque}.
\item \emph{traversal sequences}, typically preorder or postorder traversals of a BST; Levy and Tarjan obtained optimality of splaying for preorder and postorder sequences in
special settings~\cite{LevyTarjan19PrePost}.
\item \emph{deque conjecture}, where updates happen only at the two extremes. It is conjectured that a sequence of $m$ updates takes total time $O(m)$; Sundar showed that the total time is $O(m \alpha (m))$~\cite{Sundar92Deque}, and Pettie improved this to $O(m \alpha^* (m))$~\cite{Pettie08Deque}, where $\alpha$ is the inverse Ackermann function and $\alpha^*$ is the iterated inverse Ackermann function.
\item \emph{split conjecture}, where the tree is split into two at the accessed element. It is conjectured that splitting all $n$ elements takes total time $O(n)$. Prior to this work, no bound was known beyond the $O(n\log n)$ bound implied by the access lemma and the $O(n\alpha(n))$ bound of Lucas for the special case in which the initial tree is a path~\cite{Lucas91Split}.
\end{itemize}

\paragraph{The GREEDY algorithm.}
A second major candidate for dynamic optimality is the maximally-greedy BST strategy,
introduced independently by Lucas~\cite{Lucas88Canonical} and Munro~\cite{Munro00LinearSearch}
and reinterpreted via the geometric view in~\cite{DemaineHIKP09Geometry}.
Progress on GREEDY includes access-lemma-type analyses~\cite{Fox11MaximallyGreedy},
bounds for decomposable sequences~\cite{GoyalGupta19Decomposable},
near-optimality on deque sequences~\cite{ChalermsookG0MS15GreedyDeque},
improved bounds on pattern-avoidance access families~\cite{ChalermsookEtAl22PatternAvoidanceGreedy},
and group access bounds~\cite{ChalermsookEtAl24GroupAccess}.
However, recent work shows nontrivial limitations: the competitive ratio of GREEDY is at
least~2, and the additive gap from $\OPT$ can be $\Omega(m\log\log n)$~\cite{SadehKaplan23GreedyFuture}, where $m$ is the length of the accessed sequence.

\paragraph{Properties implied by dynamic optimality.}
There are several properties of binary search trees that hold for any dynamically optimal binary search tree. Furthermore, if the binary search tree is $r$-competitive up to an additional factor of $r'$, these hold amortized up to a multiplicative factor of $r$ and initialization cost $r'$.
All of these are implied by the existence of other BSTs with these properties, which on any sequence can be related to by the dynamically optimal BST. Note that these are implied only if there are no insertions and deletions.
\begin{itemize}
    \item \emph{working-set bound}: access to an element $x$ takes time $O(\log(2 + w_t))$, where $w_t$ is the number of distinct elements accessed since time $t$ and $t$ is the time of last access to $x$. This property is proved for splay trees by Sleator and Tarjan~\cite{SleatorTarjan85}.
    \item \emph{dynamic-finger bound}: access to $x_i$ takes time $O(\log(2 + |x_i - x_{i-1}|))$; this was also proved for splay trees~\cite{ColeMSS00,Cole00}.
    \item \emph{unified bound}~\cite{BadoiuCDI07Unified}: a combination of the working-set bound and dynamic-finger bound: access to an element $x_i$ takes time $O(\min_{t<i} \log(2 + w_t + |x_t - x_i|))$; skip-splay trees achieve it amortizedly with additive $O(\log\log n)$ per query~\cite{DerryberrySleator09SkipSplay}, and cache-splays achieve it amortizedly~\cite{Derryberry09Thesis}.
    \item \emph{lazy finger bound} and \emph{weighted finger bound}: both are achieved by the structure of Iacono and Langerman~\cite{IaconoLangerman16WeightedDynamicFinger}.
\end{itemize}
A systematic landscape of BST bounds is surveyed
in~\cite{ChalermsookEtAl16Landscape,ChalermsookEtAl24GroupAccess}.

\paragraph{$O(\log\log n)$-competitive BSTs.}
While dynamic optimality is open, online BSTs with provably sublogarithmic competitive
ratios do exist.
Tango trees were introduced in~\cite{DemaineHIP07Tango} and achieve an
$O(\log\log n)$ competitive ratio by charging cost to Wilber's interleave bound.
Wang, Derryberry, and Sleator~\cite{WangDS06MultiSplay,Sleator04MultiSplayTR} gave
\emph{multi-splay trees}, also $O(\log\log n)$-competitive, that additionally retain
several splay-like properties~\cite{DerryberrySW09PropertiesMST}.
Georgakopoulos~\cite{Georgakopoulos08ChainSplay} achieved $O(\log\log n)$-competitiveness
via \emph{chain-splay trees}, a splay-based variant that is the closest proven competitive
guarantee to the original splay conjecture among splay-like variants.
Zipper trees were introduced in~\cite{BoseDouiebDujmovicFagerberg10Zipper},
also $O(\log\log n)$-competitive, with the additional guarantee of $O(\log n)$
worst-case time per access.
A meta-transformation in~\cite{DemaineEtAl13CombiningBSTs} combines
the guarantees of multiple BST algorithms into a single structure within a constant factor
of the best.
Generic transformations can de-amortize essentially any BST algorithm while preserving
asymptotic costs~\cite{BoseCFLa12Deamortizing}, implying that if dynamic optimality is
achievable, it is achievable with worst-case $O(\log n)$ time per access.

\paragraph{Lower bounds and the geometric view.}
Wilber~\cite{Wilber89} gave two influential lower bounds on $\OPT(X)$.
The first (\emph{interleave} or \emph{alternation bound}, often called Wilber-1) underlies
Tango trees and most known $o(\log n)$-competitive BST algorithms.
The second (\emph{funnel bound}, Wilber-2) is traditionally viewed as a more promising
candidate for tightly characterizing $\OPT$.

The geometric interpretation of BST executions from~\cite{DemaineHIKP09Geometry}
maps an access sequence to a planar point set and expresses $\OPT$ as a minimum
\emph{arborally satisfied} completion, yielding additional lower bounds such as the
\emph{Independent Rectangle Bound} (IRB) that subsumes both Wilber bounds.
Kozma and Saranurak~\cite{KozmaSaranurak20SmoothHeaps,KozmaSaranurak16Rectangulations} connected
BST bounds to rectangulations and smooth heaps via a duality between BSTs and heaps.
Pattern-avoidance and decomposition frameworks explain broad subclasses of sequences where
BST costs are nearly linear~\cite{ChalermsookGKMS15PatternAvoiding,ChalermsookEtAl16Landscape}.

Recent advances clarified the relationships among these bounds.
Lecomte and Weinstein~\cite{LecomteWeinstein20Wilber} proved that the funnel bound dominates
the alternation bound for all sequences and exhibited a tight $\Theta(\log\log n)$
separation between them.
Chalermsook, Chuzhoy, and Saranurak~\cite{ChalermsookCS23StrongWilber1} showed that even a
strengthened ``Strong Wilber-1'' benchmark can be $\Omega(\log\log n / \log\log\log n)$
smaller than $\OPT$, while also giving $O(\log\log n)$-competitive and approximation
algorithms within their framework.
This is a barrier result: improvements in competitive ratio cannot come merely from
tracking the alternation bound; something fundamentally different
is required.
Direct-sum theorems for Wilber's bounds were proved in~\cite{JiangLWY24HardnessAmplification},
which also used them for hardness amplification, showing as a corollary that tango trees
are optimal among BST algorithms whose analyses charge cost to the alternation bound.

\subsection{Pairing heaps}

In this section we describe pairing heaps~\cite{Fredman1986PairingHeap} and recent breakthroughs in their analysis. While seemingly unrelated, we show that, when related to the optimal tree, the changes to the structure of splay trees due to splaying resemble what happens in pairing heaps when the root is deleted.

In general, pairing heaps maintain a heap structure with lower-keyed values closer to the root and implement four operations: (i) \emph{meld}, which takes two heaps and hangs the root with the higher key below the other root as its leftmost child; (ii) \emph{insert}, which creates a singleton heap and melds it with the target heap; (iii) \emph{decrease-key}, which takes a pointer to a node, cuts its subtree, and melds the subtree with the rest of the heap; and (iv) \emph{delete-min}, which removes the root and creates a single tree in two steps, first melding pairs of neighboring children and then assembling the remaining children by a series of melds. Different assembly orders yield different variants of pairing heaps.

In the following variants of pairing heaps, delete-min always takes time $O(\log n)$~\cite{Fredman1986PairingHeap}, and decrease-key requires $\Omega(\log \log n)$ amortized time on certain instances~\cite{Fredman1999EfficiencyPairingHeaps}.

\begin{itemize}
    \item \textbf{Standard pairing heaps.} The assembly is performed oldest-to-newest, with the current vertex always melded with the lowest-key child seen so far. Pettie~\cite{Pettie2005} proved that all other operations except for delete-min take amortized time $O(2^{2\sqrt{\log\log n}})$.
    \item \textbf{Pure pairing heaps.} The assembly identifies the globally minimum child after the pairing phase and then melds all other children with it. While this discards comparisons, it is easier to analyze, and it was shown in~\cite{PairingHeaps2026} that decrease-key takes $O(\log \log n \log\log\log n)$ amortized time.
    \item \textbf{Multi-pass pairing heaps.} Instead of an assembly similar to that in pure or standard pairing heaps, the pairing pass is repeated as long as more than one child remains. Sinnamon and Tarjan~\cite{Sinnamon2025} proved that decrease-key in multi-pass pairing heaps takes $O(\log\log n \log\log\log n)$ amortized time.
\end{itemize}

We extract from the analyses of pure and standard pairing heaps a common interface, proved in \Cref{app:interface}, that we also use in our analysis of splay trees.

We additionally remark that the $O(2^{2\sqrt{\log\log n}})$ analysis for standard pairing heaps~\cite{Pettie2005} could also simplify the proof of splay-tree competitiveness, at the cost of worsening the resulting competitive ratio from $\tilde{O}(\log\log n)$ to $2^{O(\sqrt{\log\log n})}$.

\section{Equivalence of the root BST model to the standard BST model}\label{app:bst-models}

In this appendix, we briefly note the equivalence of the root binary search tree (BST) model and the standard BST model of Sleator and Tarjan~\cite{SleatorTarjan85}.
Another widely used pointer-based BST model, due to Demaine et al.~\cite{DemaineHIP07Tango}, is also equivalent to the standard BST model~\cite{DemaineHIP07Tango}.

In all of these models, a tree $T$ and an access sequence $X = x_1,\dots,x_m$ are given.

\paragraph{Standard BST~\cite{SleatorTarjan85}.} 
The cost of each access $x_i$ is $1 + \depth_T(x_i)$, where the depth is measured at the time of access.
Between accesses, the tree can perform any number of rotations, and each rotation costs $1$.

\paragraph{Root BST (used in this work).}
For every access, the accessed element must be the root of $T$, and the cost of the access is $1$.
In addition, before each access, the tree can perform any number of rotations, and each rotation costs $1$.

\begin{lemma}
	    For every sequence $X$ and every binary search tree algorithm $T$ in the Standard BST model, there is a modified algorithm $T'$ in the Root BST model such that $\cost^{\text{root}}(T', X) \leq 2\cost^{\text{standard}}(T, X)$.
		    Furthermore, for any binary search tree $T'$ in the Root BST model,
		    $\cost^{\text{standard}}(T', X) = \cost^{\text{root}}(T', X)$.
		    Here, $\cost^{\text{standard}}$ denotes the cost in the Standard BST model and $\cost^{\text{root}}$ the cost in the Root BST model.
\end{lemma}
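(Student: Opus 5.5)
The plan is to prove the two parts by direct simulation, mapping each algorithm's operations onto the other model's cost.

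\textbf{First part.} Given $T$ in the standard model, define $T'$ as follows. Before access $x_i$, $T'$ performs exactly the rotations $T$ performs before that access, reaching the same tree $T$ would have at access time. Then it rotates $x_i$ to the root by $\depth_T(x_i)$ single rotations. Finally, it performs the inverse rotations, again $\depth_T(x_i)$ of them, to restore $T$'s tree, so the simulation stays synchronized for the next access. The root-model cost of this access is $1 + 2\depth_T(x_i)$ plus $T$'s rotations. The standard cost is $1 + \depth_T(x_i)$ plus the same rotations. Summing over $i$ and using $1 + 2d \le 2(1+d)$ gives the factor $2$.

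The undo step can be merged into the rotations preceding the next access. It then counts toward that access's rotations in the root model, so the accounting is unchanged. Either convention works, since the root model only charges rotations plus one per access.

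\textbf{Second part.} Take any $T'$ in the root model and charge it in the standard model with its own rotation schedule. At every access the accessed element is at the root, so $\depth = 0$ and the access cost is $1 + 0 = 1$, exactly the root-model access cost. Rotations cost $1$ in both models. Hence the costs coincide term by term.

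\textbf{Main obstacle.} The only subtlety is conventions: whether depth is measured with the root at depth $0$, and whether the undo rotations are billed to the current or the next access. Fixing root depth $0$ makes the second part an equality and the first part's constant exactly $2$. With root depth $1$ the second statement would become an inequality instead, so I would state the convention explicitly.
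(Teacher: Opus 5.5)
Your proposal is correct and follows the paper's proof: rotate $x_i$ to the root with $\depth_T(x_i)$ rotations, undo them in reverse order, and use $1+2\depth_T(x_i) \le 2(1+\depth_T(x_i))$. The second part holds in both your argument and the paper's because the accessed element is at the root (depth $0$), so each access costs $1$ in both models.
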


\begin{proof}
	    An algorithm that is valid in the Standard BST model might not be valid in the Root BST model. To fix this, we always rotate the accessed element to the root before the access in the Root BST model. This takes
	    $\depth_T(x_i)$ rotations.
	    After the access, we need to restore the state that the tree would have had in the Standard BST model. We recover this state by undoing the rotations in the reverse order from that in which they were performed, as a left rotation undoes a right rotation.
	    Therefore, in the Root BST model, we perform all rotations performed in the Standard BST model, as well as an additional $2\depth_T(x_i)$ rotations for every accessed element.
	    Hence, the cost in the Root BST model is
    \[
    \cost^{\text{root}}(T', X) = m + \#\text{rotations}_\text{T} + \sum_i 2\depth_T(x_i) \leq 2 \cdot (m + \#\text{rotations}_\text{T} + \sum_i \depth_T(x_i)) =\]
    \[=2 \cdot (\#\text{rotations}_\text{T} + \sum_i (1 +\depth_T(x_i))) = 2 \cdot \cost^{\text{standard}}(T, X)
    \]

	    Furthermore, every algorithm valid in the Root BST model is valid in the Standard BST model. Since each access to the root and each rotation costs $1$ in the Standard BST model, the total costs are identical.
\end{proof}

\section{A normal form for binary search tree algorithms}\label{S:normal-form}

We shall prove that any binary search tree algorithm (henceforth, BST algorithm) can be transformed into one with three properties: (i) the tree maintained by the algorithm always has logarithmic depth, (ii) the tree maintained by the algorithm is ribless, and (iii) the algorithm performs only restricted rotations.
By a restricted rotation, we mean a rotation that happens at depth $\OO(1)$.
The simulation incurs an $\OO(1)$ amortized slowdown factor.

\normalform*

Our result combines three known results: (i) any algorithm can be transformed into one that maintains logarithmic depth~\cite{BoseCFLa12Deamortizing}, (ii) there is a binary search tree implementation of a finger search tree~\cite{BrodalR25} that is fully ribless, and (iii) any $n$-node binary search tree can be converted into any other one on the same set of nodes in at most $4n - 8$ rotations.  We briefly discuss these results and how to combine them to obtain our result.

\subsection{Log-depth trees of Bose et al.}

Bose et al.~\cite{BoseCFLa12Deamortizing} give a construction that converts any binary search tree algorithm into an algorithm that maintains a tree of $\OO(\log n)$ depth, with constant-factor amortized slowdown.  
They assume the \emph{finger model} of a BST algorithm: the algorithm maintains a finger at some node, initially the root.
In one step, the algorithm can move the finger to the parent or a child of the indicated node, or do a rotation at the indicated node.  
The algorithm does an access by moving the finger to the node to be accessed.  
It is straightforward to verify that, up to constant factors, this model is equivalent to the rotation model and the subtree replacement model.

The key ingredient of their construction is an implementation of a \emph{biased stack} as a binary search tree.
They call this data structure a \emph{pop-tart}.
A pop-tart supports $\OO(1)$-amortized-time push and pop operations, as well as \emph{weight-biased access}: 
Each node $x$, optionally including null nodes, has a positive real-valued \emph{weight} $w(x)$; the depth of $x$ is $\OO(\log{\min}\{n, W/w(x)\})$, where $W$ is the total weight of all the nodes.  
The nodes must be pushed into the data structure in either decreasing or increasing order.
In the decreasing case, the pushed node must be smaller than all the nodes currently in the stack.
The node might also be accompanied by a subtree of nodes smaller than the pushed node.
The pop-tart operates on such a subtree as a unit, so it does not modify its internal structure and treats it as a null node.
The increasing case is symmetric.

To obtain their simulation, Bose et al. partition the nodes of the binary search tree maintained by the algorithm into \emph{heavy paths}.
They replace each heavy path by a single node and a pair of biased stacks.  
They also replace the path from the root to the finger node by the finger and a pair of pop-tarts (biased stacks).  
With an appropriate choice of node weights, the overall tree depth is logarithmic.  
Simulating a rotation or a movement of the finger takes $\OO(1)$ pushes and pops in the biased stacks.  
One can verify by inspecting the details of their construction that each such push or pop is at the top of a stack that has depth $\OO(1)$
and that each such push or pop takes $\OO(1)$ amortized time in the subtree model of BST algorithms.

The Bose et al. simulation is the first step of our transformation.

\subsection{Ribless trees of Brodal and Rysgaard}

We say that a non-root node of a binary tree is a {\em left-rib node} if it is the left child of its parent. The definition of {\em right-rib nodes} is symmetric.
A {\em left rib} is any maximal path in $T$ consisting only of left-rib nodes. A right rib is defined symmetrically.
A spine is a rib that starts at a child of the root of the tree; all other ribs are {\em proper ribs}.
When using the term rib, we will usually mean a proper rib unless specified otherwise.
For an integer $k$, the tree is {\em $k$-ribless} if all its proper ribs are of size less than $k$.
For an integer $k$, the tree is {\em $k$-fully-ribless} if all its ribs, including spines, are of size less than $k$.
When $k$ is some fixed constant, we might say that the tree is ribless or fully-ribless.

Brodal and Rysgaard~\cite{BrodalR25} give a fully-ribless binary search tree implementation of a finger search tree.  
Their implementation gives $\OO(1)$-time access from any node to its parent and children, to its predecessor and successor in symmetric order, and to the smallest and largest nodes in its subtree in symmetric order.  It also supports the insertion of a node before or after a given node, or the deletion of a given node, in $\OO(1)$ amortized time.  One can verify by inspecting their construction that their implementation is a tree of depth $\OO(\log n)$.  In our application, we only need to perform insertions and deletions at one end of the tree, so we do not need the full power of their construction.

We will obtain riblessness by replacing each rib in the tree maintained by the Bose et al. simulation by a Brodal and Rysgaard finger search tree.  We replace a left rib by a finger search tree in which its increasing symmetric order corresponds to decreasing depth along the rib, and a right rib by a finger search tree in which its increasing symmetric order corresponds to increasing depth along the rib.   This is the second step of our transformation.  The following lemma implies that the tree resulting from these replacements is fully-ribless:

\begin{lemma}\label{L:folding-ribless}
Let $T$ be any tree, and let $T'$ be a tree formed from $T$ by replacing each rib and spine of $T$ by a $k$-fully-ribless tree. Then $T'$ is $3k$-fully-ribless.
\end{lemma}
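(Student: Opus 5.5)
We would prove \Cref{L:folding-ribless} by decomposing $T'$ into the replacement trees (\emph{blocks}) and showing that a maximal one-directional path of $T'$ meets at most three blocks. By symmetry it suffices to treat left ribs of $T'$, including the left spine. Since each block is $k$-fully-ribless, such a path contains fewer than $k$ nodes inside any one block, so at most $3k-1 < 3k$ nodes in total.

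\textbf{Step 1 (the block structure).} We first make the replacement precise. For a left rib $R$ of $T$ with nodes $r_1,\dots,r_s$ listed top-down, the subtree of $T$ rooted at the parent of $r_1$ consists of the rib nodes together with the right subtrees $A_1,\dots,A_s$ of the $r_i$ and the left subtree $A_0$ of $r_s$. These pieces are interleaved in symmetric order as $A_0, r_s, A_s, r_{s-1}, \dots$. The replacement builds a $k$-fully-ribless tree $F_R$ on $\{r_1,\dots,r_s\}$ and plugs each $A_i$ into the unique null slot of $F_R$ that lies between the correct consecutive rib nodes. Right ribs are handled symmetrically, and the replacements for different ribs are applied recursively inside the plugged subtrees. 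Thus every node of $T'$ lies in exactly one block $F_R$. We record the consequence we need: every edge of $T'$ either stays inside one block or goes from a null-slot position of a block $F_R$ to the root of some plugged subtree $A_i$.

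\textbf{Step 2 (when can a left path cross blocks?).} Next we look at a left edge of $T'$ that leaves a block $F_R$. Such an edge enters $A_i$ through the null slot immediately to the left of some rib node. We then ask which block contains the root of $A_i$.
\begin{itemize}
\item If $R$ is a left rib, then $A_i$ for $i \ge 1$ is the right subtree of $r_i$. Its root therefore starts, or belongs to, a right rib of $T$, and the left edges available inside its block are constrained by the block structure.
\item The only exceptional piece is $A_0$, the left subtree of the bottom rib node. This is the case where a left path could continue through a left edge of $T$.
\end{itemize}
We would do a case analysis on the orientation of $R$ (left or right rib) and on whether the slot used is $A_0$ (respectively $A_s$ in the mirrored case) or an interior slot. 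The aim is to show that after two block changes, continuing further left would require two consecutive left edges of $T$ that belong to different maximal left ribs. This is impossible by maximality of ribs.

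\textbf{Step 3 (conclusion).} With Step 2 in hand, every left rib of $T'$, spines included, is a concatenation of at most three left paths, each lying within a single block. Each such path has fewer than $k$ nodes, because each block is $k$-fully-ribless. Hence every rib of $T'$ has size less than $3k$, that is, $T'$ is $3k$-fully-ribless.

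The main obstacle is Step 2. The delicate point is the bookkeeping of where each $A_i$ is plugged, especially for $A_0$ and for the head node of a rib that is shared with the perpendicular rib through it. This is exactly what determines whether the constant is $3$ rather than something larger. We would first try to settle it by tracking symmetric-order extremes: a left path in $T'$ always moves to the minimum of the current subtree, and each block change must jump to the minimum of a new rib. We would then argue that the third jump would contradict the maximality of ribs in $T$.
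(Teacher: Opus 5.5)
\textbf{Step 2 is missing.} Your plan is the paper's: cut $T'$ into the replacement blocks $F_R$ and show that a one-directional path of $T'$ meets at most three of them. But Step~2 is the entire content of the lemma, and you only announce it. The part you do sketch also misplaces the mechanism. The piece $A_0$ that you single out as where a left path ``could continue through a left edge of $T$'' is always empty: $r_s$ is the bottom of a \emph{maximal} left rib, so it has no left child. That emptiness is exactly what stops a path. The case analysis therefore has to be organized around where a left path can \emph{leave} a block, not around which $A_i$ it enters.

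\textbf{The missing argument.} It is short.
\begin{itemize}
\item Every cross-block edge of $T'$ ends at the root of a block.
\item A left path of $T'$ that contains the root of $F_P$ and then leaves $F_P$ must run down the left spine of $F_P$. It exits from $\min P$ into the leftmost slot of $F_P$. This is the paper's key observation.
\item If $P$ is a left rib, that slot holds the left subtree of its bottom node, i.e.\ nothing. If $P$ is a right rib $q_1,\dots,q_t$, the slot holds the left subtree of the top node $q_1$, whose root starts a \emph{left} rib.
\end{itemize}
Now suppose a left rib of $T'$ meets blocks $F_{Q_1},\dots,F_{Q_m}$ in top-down order. Each intermediate block $Q_j$ ($2\le j\le m-1$) is entered at its root and left by a left edge. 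Hence $Q_j$ is a right rib and $Q_{j+1}$ is a left rib. If $m\ge 4$, then $Q_3$ would be an intermediate left rib, a contradiction. So $m\le 3$, and the mirror argument handles right ribs. You gesture at ``tracking symmetric-order extremes'' in your last paragraph, but a path enters a new block at its root and only \emph{leaves} it at the minimum; your sketch does not contain this argument.

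\textbf{Smaller points.}
\begin{itemize}
\item Your count ``fewer than $k$ nodes inside any one block'' is off by one. A path containing a block's root can also contain that block's whole spine, so up to $k$ nodes. To get strictly fewer than $3k$, add that when $m=3$ the path cannot contain the root of $F_{Q_1}$. Otherwise $Q_1$ would also be a right rib, forcing $Q_2$ to be a left rib, which an intermediate block cannot be. Then $F_{Q_1}$ contributes fewer than $k$ nodes. The paper itself only concludes ``at most $3k$''.
\item With ribs defined as maximal chains of left-rib (or right-rib) nodes, the non-root nodes are partitioned among ribs. So there is no ``head node shared with the perpendicular rib'' to worry about.
\end{itemize}

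\textbf{How the paper finishes.} It takes the deepest cross-block edge $v_1'v_2'$ on the rib and assumes without loss of generality that $v_1v_2$ is a left edge of $T$. It uses that the keys under $v_2'$ lie strictly between $v_0$ and $v_1$, where $v_0$ is the $T$-parent of $v_1$. Combined with the minimum/maximum observation, this shows the rib cannot reach the root of the block of $v_1$ (if $v_1'v_2'$ is a right edge of $T'$) or the root of the block of $v_0$ (if it is a left edge). The top-down chain argument above rests on the same observation and is equally valid.
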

\begin{proof}
The lemma follows from a simple observation.  
Here, by a rib, we mean a rib or spine.
Let $v_1v_2$ be a non-rib edge in $T$, let $v_1'v_2'$ be the corresponding edge in $T'$, and let $R_1$ and $R_2$ be the ribs in $T$ containing $v_1$ and $v_2$, respectively. Let $R'_1$ and $R'_2$ be the replacements in $T'$ of $R_1$ and $R_2$, respectively.  If $v_1'v_2'$ is a rib edge of $T'$ and this rib contains the root of $R'_1$, then $v_1'$ is the smallest node in $R'_1$ if $v_1'v_2'$ is a left edge and the largest node in $R'_1$ if $v_1'v_2'$ is a right edge.
The observation is true because if $v_1'v_2'$ is a left edge, a traversal that starts at the root of $R_1'$ in $T'$ and follows left edges through descendants will reach the smallest node in $R'_1$ and then traverse $v_1'v_2'$.  The symmetric argument applies if $v_1'v_2'$ is a right edge.

We prove the lemma by a case analysis using the observation. Let $R$ be any rib of $T'$.  If $R$ does not contain an edge corresponding to a non-rib edge of $T$, then $R$ is entirely in the subtree replacing some rib of $T$, so $R$ contains at most $k$ nodes.  Suppose $R$ contains some edge $v_1'v_2'$ corresponding to a non-rib edge $v_1v_2$ in $T$.  Choose such an edge $v_1'v_2'$ with $v_2'$ deepest in $R$.   Without loss of generality, suppose $v_1v_2$ is a left edge.  Node $v_1$ cannot be the root of $T$, since then $v_1' = v_1$ is the root of $T'$, but the root of $T'$ is a one-node rib of $T'$. Let $v_0$ be the parent of $v_1$ in $T$, and let $v_0'v_1'$ be the edge in $T'$ corresponding to $v_0v_1$ in $T$.

Since $v_1v_2$ is a left non-rib edge in $T$, $v_0v_1$ is a right edge in $T$.  Let $R_0$, $R_1$, $R_2$, $R'_0$, $R'_1$, and $R'_2$ be the ribs in $T$ and the corresponding replacing subtrees in $T'$ containing  $v_0$, $v_1$, and $v_2$ respectively.  If $v_0v_1$ is a non-rib edge, $R_0 \neq R_1$; if $v_0v_1$ is a rib edge, $R_0=R_1$.    

The subtrees rooted at $v_2$ in $T$ and at $v_2'$ in $T'$ contain the same nodes, all of which are greater than $v_0$ and less than $v_1$.

Suppose $v_1'v_2'$ is a right edge. By the observation, $R$ cannot contain the root of $R'_1$, since if it did, $v_1'$ would be the largest node in $R'_1$ in symmetric order and $v_2'$ would be larger than $v_1'$ and hence larger than $v_1$, a contradiction.  Hence, $R$ contains only nodes in $R'_1$ and $R'_2$.  Since both $R'_1$ and $R'_2$ are $k$-fully-ribless, $R$ contains at most $2k$ nodes, making the lemma true.

Suppose, on the other hand, that $v_1'v_2'$ is a left edge.  By the observation, $R$ cannot contain the root of $R'_0$, since if it did, $v_0'$ would be the smallest node in $R'_0$ in symmetric order and $v_2'$ would be smaller than $v_0'$ and hence smaller than $v_0$, a contradiction.  Hence, $R$ contains only nodes in $R'_0$, $R'_1$, and $R'_2$.  Since $R'_0$, $R'_1$, and $R'_2$ are $k$-fully-ribless, $R$ contains at most $3k$ nodes, making the lemma true.
\end{proof}

The Brodal and Rysgaard construction replaces each rib (including spines) of a red-black tree by a tree that is $4$-fully-ribless.  By \Cref{L:folding-ribless}, the resulting tree is itself $12$-fully-ribless.  Again by Lemma~\ref{L:folding-ribless}, the tree formed by replacing each rib of a Bose et al. tree by a Brodal and Rysgaard tree is $36$-fully-ribless.

We also require that the rib replacement increase the tree depth by at most a constant factor.  In any $k$-fully-ribless tree, the $i$-th smallest and $i$-th largest nodes in symmetric order have depth at most $\OO(k i)$.  It follows that the tree $T'$ formed by replacing each rib (including spines) in a Bose et al. tree $T$ by a Brodal and Rysgaard tree has $\OO(\log n)$ depth: if $P$ is the path in $T$ from the root to some node $x$, $P$ alternates between subpaths of ribs and single non-rib edges.  The corresponding path $P'$ in $T'$ alternates between subpaths within the replacements of ribs and edges corresponding to the rib edges on $P$.  If a subpath of $P$ that is within a rib contains $i$ edges, then the corresponding subpath in $P'$ contains $\OO(i)$ edges.  Thus, the replacement of the ribs of $T$ increases the depth by at most a constant factor.

We conclude that if we replace each rib of the tree $T$ maintained by the Bose et al. simulation by a Brodal-Rysgaard finger search tree, the resulting tree $T'$ is $36$-ribless and has $\OO(\log n)$ depth.

\subsection{Restricted rotations of Cleary and Taback}

It remains to show that we can simulate the accesses performed by the Bose et al. algorithm using restricted rotations in $T'$ with an amortized $\OO(1)$ slowdown while maintaining $\OO(\log n)$ depth and riblessness. By a restricted rotation, we mean a rotation that happens at depth at most $C$ for some universal constant $C>0$.  For this, we use the result of Cleary and Taback~\cite{ClearyT03} that any $n$-node binary search tree can be converted into any other on the same set of nodes in at most $4n-8$ rotations.  We need the sequence of rotations to preserve riblessness, so we re-prove their result to show that there is a sequence of $4n-4$ restricted rotations that transforms any $n$-node tree into any other while maintaining riblessness.

\begin{lemma}\label{L:restricted-rotations}
Let $T$ and $T'$ be two trees on the same set of nodes.  Then $T$ can be transformed into $T'$ by doing at most $4n-4$ restricted rotations.  If $T$ and $T'$ are both $k$-ribless, each tree in the transformation sequence is $k+1$-ribless.
\end{lemma}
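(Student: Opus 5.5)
We will follow the classical Culik--Wood / Sleator--Tarjan--Thurston route through a canonical tree, namely the right spine (a right-going path). Every tree is transformed to the right spine, and $T'$ is obtained by reversing the sequence for $T'$. The cost should be at most $2n-2$ rotations per direction, giving $4n-4$ in total. The two ingredients are a restricted version of the ``rotate to the right spine'' procedure and a careful choice of which rotation to perform, so that each rotation happens within constant depth and riblessness degrades by at most one.

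For $T\to$ spine, we process top-down. We maintain the invariant that the current tree consists of a right path $r_1,r_2,\dots,r_j$ of already finished nodes (the maximal prefix of the right spine containing the global minima in order) hanging above the rest. It is better not to touch deep nodes, so we instead build the spine from the \emph{bottom}. Repeatedly, if the root has a left child, we rotate it right at the root, which has depth $0$. Each such rotation moves one node onto the right spine and never removes one. After at most $n-1$ rotations the root has no left child, and we then need to continue in the right subtree of the root, which is deep.

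To keep rotations shallow, we use the opposite canonical direction. We first make the tree a \emph{left} spine from the root using right-child rotations at the root. Then we ``pop'' nodes, treating the spine below the root as a stack whose top is at constant depth. More precisely, the plan is as follows.

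\textbf{Step 1.} Rotate right at the root until the root has no left child. We then have root $m$ (the minimum) with right subtree $R$. Next we rotate left at the root on the root's right child, which brings the root of $R$ up and makes $m$ its left child.

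\textbf{Step 2.} We alternate these moves so that the finished part (sorted nodes already processed) sits as a left spine \emph{below} a constant-depth frontier. All rotations then occur at depth at most $2$, at the root or its child.

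We charge each rotation either to a node being added to the finished spine (at most $n-1$ such) or to a right edge of the unfinished part being removed (at most $n-1$). This gives $2n-2$ per direction, hence $4n-4$.

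The riblessness claim is the main obstacle. The intermediate trees here are mostly spines, and a spine is exempt from the rib condition, so the invariant we aim for is that every intermediate tree consists of (i) the root spine(s), which are unrestricted, and (ii) subtrees that are unchanged subtrees of $T$ (or of $T'$ in the reverse half). Each rotation at the root only detaches a subtree of the original tree and reattaches it as a left or right child of a spine node. Such a reattachment can lengthen a proper rib of that subtree by exactly one node, namely the spine node it now hangs from, when the attachment side matches the subtree's rib direction. This yields $k+1$-riblessness. We would verify this by a case analysis of the single rotation $\algo{rotate}$ at the root: the subtrees $A,B,C$ of Figure~\ref{fig:overview:rotation}-type rotations keep their internal structure, and only the edge into $B$ changes direction. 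The step to scrutinize is that the edge into $B$ cannot join two ribs of length close to $k$. Because one endpoint of that edge always lies on a spine (hence is the start of a new proper rib), at most one node is added. If this fails, we would fall back to fully-ribless spines by alternating left/right attachments.
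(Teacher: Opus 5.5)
Your skeleton matches the paper's: reduce to the canonical right path, reverse the sequence for $T'$, and use $2n-2$ restricted rotations in each direction. However, the part that actually proves the lemma, a concrete procedure using only constant-depth rotations, is missing.

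Step~2 (``alternate these moves'') is not an algorithm: applying Step~1 again right after its last rotation just rotates $m$ back to the root. Your preliminary goal of turning the tree into a left path ``using right-child rotations at the root'' is impossible in general. A rotation at the root never changes the set of nodes lying on the root or on the two spines, so a node off both spines stays off them forever, and you must also rotate nodes at depth $2$.

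The missing idea is the paper's depth-$2$ rotation, organized in three steps:
\begin{itemize}
\item[(i)] Rotate the root's right child up until no right-spine node has a left child.
\item[(ii)] Rotate the root's left child $\ell$ up, but only while $\ell$ has no right child. The old root then joins the right spine with an empty left subtree, so the right spine stays finished.
\item[(iii)] Otherwise rotate the right child $w$ of $\ell$ up. This puts $w$ on the left spine and hands $w$'s old left subtree to $\ell$ as its right subtree.
\end{itemize}
Steps (ii) and (iii) then alternate. Each rotation in (ii) adds a new node to the right spine for good, and each rotation in (i) or (iii) adds a new node to the left spine; this is what gives $2n-2$. Without such a procedure, your charging scheme (``a node added to the finished spine or a right edge removed'') has nothing to be checked against.

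The riblessness part also has a gap. A per-rotation bound of ``at most one node added'' does not stop the same rib from growing again in later rotations, so on its own it only gives $k$ plus the number of rotations. What is needed is a monotonicity statement, as in the paper: only right proper ribs ever grow; such a rib grows by one only at the moment its top becomes the right child of a left-spine node; and from then on it can only shrink.

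Your alternative invariant (spines plus unchanged subtrees of $T$) could yield this, but two things are missing. You would have to prove that your procedure maintains it. You would also have to exclude hanging subtrees whose root lay on a spine of $T$, since an arbitrarily long piece of that spine would then become a proper rib.

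Two smaller corrections. The node added to the rib is the root of the reattached subtree, whose child side flips; it is not the spine node it hangs from, which lies on a spine rather than on the proper rib. And the fallback to ``alternating left/right attachments'' is not an argument.
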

\begin{proof}

Since restricted rotations are reversible, it is enough to show that $2n-2$ restricted rotations are enough to transform any tree into the tree $R$ consisting only of the root and the right spine. Then, we can transform $T$ into $T'$ by transforming $T$ into $R$, transforming $T'$ into $R$, and reversing each rotation in the second transformation as well as their order.

To transform an arbitrary tree $T$ into $R$, (i) do left rotations at the right child of the root until no node on the right spine has a left child.  Then, repeat the following two steps until the tree is $R$: (ii) do right rotations at the left child of the root until the left child of the root has a right child, or there is no left child of the root; (iii) repeatedly do a left rotation at the right child of the left child of the root until there is no right child of the left child of the root.

After (i), no node on the right spine has a left child.  Each rotation in (ii) adds a node to the right spine and preserves the invariant that no node on this spine has a left child.  It follows that when the algorithm stops, the tree is $R$.  Each rotation in (i) and (iii) adds a different node to the left spine; each rotation in (ii) adds a different node to the right spine.  Thus the transformation of $T$ into $R$ takes at most $n-1$ restricted left rotations and at most $n-1$ restricted right rotations.

Suppose that $T$ and $T'$ are $k$-ribless.  To prove that the transformation from $T$ or $T'$ to $R$ maintains $k+1$-riblessness, we consider the effect of rotations in (i), (ii), and (iii) on the proper ribs.

Once a node is the root or is on a spine, it remains the root or on a spine.

A rotation in (ii) rotates at the node that is the left child of the root, say node $v$.  This has no effect on the proper ribs.

A rotation in (i) rotates at the right child of the root.  The rotation moves the left child of the right child of the root from the top of a left proper rib to the top of a right proper rib.  Before the rotation, the top of this rib was not a right child of a node on the left spine; after the rotation, it is.  The rotation has no other effect on the proper ribs.

A rotation in (iii) rotates at the right child of the left child of the root, say node $w$.  The rotation removes node $w$ from the top of a right proper rib and moves it to the left spine; it makes the previous right child of $w$ the right child of a node on the left spine; and it moves the previous left child of $w$ from the top of a left proper rib to the top of a right proper rib, at the same time making this node the right child of a node on the left spine.  Two proper ribs shrink. One right proper rib grows by one node, but at the same time its top changes from a node that was not a child of a node on the left spine to one that is. The rotation has no other effect on the proper ribs.

We conclude that only right proper ribs can grow, and a right proper rib can grow by one node only when its top becomes a child of a node on the left spine.  Once its top is a child of a node on the left spine, it can only shrink.  The lemma follows.
\end{proof}

We use \Cref{L:restricted-rotations} to prove that the Bose et al. simulation can be implemented using restricted rotations and, in turn, that it can still be implemented using restricted rotations if we replace each rib (including spines) by a finger search tree.  Furthermore, the latter simulation maintains $\OO(\log n)$ depth and $37$-riblessness and has an $\OO(1)$ amortized slowdown.

\newcommand{\opt}{\cost(T^\text{OPT}_{x_1,\dots,x_n})}

\subsection{Proof of the normal-form theorem}

\begin{proof}[Proof of \cref{lemma:opt-constraints}]
Consider the optimal BST serving the access sequence $x_1,\dots,x_n$.
After rotating $x_i$ to the root to serve the $i$-th request,
the tree performs a sequence of touches and rotations to get $x_{i+1}$ into the root.
We denote the tree that serves the $i$-th request with root $x_i$ by $T_i$.
Let $n_i$ be the number of touches and rotations performed after the $i$-th request is served and before the $(i+1)$-st request is served.
The cost of the BST is $\opt = \sum_{i=1}^{n-1}n_i$.  

Since the touched nodes are required to form a connected subtree of $T_i$ and all rotations involve only touched nodes,
$T_i$ and $T_{i+1}$ differ in a subtree of size at most $n_i$ containing the root.
By Lemma~\ref{L:restricted-rotations}, we can convert $T_i$ into $T_{i+1}$ using at most $4n_i - 4$ restricted rotations.
This gives a sequence $\mathcal{T}'$ of trees of length at most $4 \opt$ that serves $x_1,\dots,x_n$ and in which all rotations happen at depth at most $O(1)$.

This sequence of trees can be further represented by the finger trees of Bose et al.
Before each rotation, we place the finger at the root of the current tree, move the finger to the node we want to rotate,
perform the rotation, and move the finger back to the root.
As each rotation happens at constant depth, this creates a sequence of finger-move and rotation operations of length $\OO(\opt)$.
The finger trees of Bose et al. maintain the depth $\OO(\log n)$.

Each finger move or rotation on the Bose et al. tree translates into a constant number of push/pop operations on pop-tarts.
Each of the pop-tart operations is implemented using $O(1)$ amortized rotations within a subtree of amortized size $\OO(1)$ containing the root.
In the worst case, each pop-tart operation uses at most $\OO(\log n)$ rotations within a subtree of size $\OO(\log n)$ containing the root.
We simulate the changes done on the subtree using restricted rotations of Lemma~\ref{L:restricted-rotations}.
As each such subtree is of size $\OO(\log n)$, each intermediate tree has depth $\OO(\log n)$.

Since the amortized size of the modified subtree is $\OO(1)$ and the total number of operations is $\OO(\opt)$,
we obtain a sequence of trees $\mathcal{T}''$ of length $\OO(\opt)$ serving $x_1,\dots,x_n$,
where each tree has depth $\OO(\log n)$ and is obtained from the previous tree by a restricted rotation.

Finally, we want to make the trees in the sequence ribless.
We replace each rib (including spines) by the finger search tree of Brodal and Rysgaard with a finger at the top of the rib.
Simulating a restricted rotation changes the ribs of the tree being simulated.  
It is straightforward to check that a restricted rotation modifies at most four ribs, each having a top at depth $\OO(1)$, by adding one node to the top of at most two ribs and removing at most one node from the top of at most two ribs.  
These updates are pushes and pops on the ribs.  
Each such push or pop can be performed in the corresponding finger search tree in the subtree replacement model.
We extend the subtree to be replaced by adding the $\OO(1)$ nodes from the root of the finger search tree to the root of the entire tree.  
We replace the resulting subtree by doing the sequence of rotations given by \Cref{L:restricted-rotations}.  
Since a push or pop in a finger search tree takes $\OO(1)$ amortized rotations, each replaced subtree has $\OO(1)$ amortized size, so the simulation has an $\OO(1)$ amortized slowdown.
Furthermore, it maintains $\OO(1)$-riblessness and $\OO(\log n)$ depth.
This gives the final sequence of trees serving $x_1,\dots,x_n$ of length $\OO(\opt)$, where each tree has depth $\OO(\log n)$ and is ribless, and all rotations happen at depth $O(1)$.
\end{proof}

\subsection{Optimal trees for the split conjecture}\label{sec:split-opt-tree}

Let $x_1,x_2,\dots, x_n$ be any permutation of $[n]$ which defines a sequence of splits.
A static binary search tree $T^{OPT}$ achieving the minimum cost of splits for $x_1,\dots,x_n$ can be built recursively:
make $x_1$ the root of the optimal tree and recursively build its left subtree as an optimal splitting tree for the subsequence of $x_2,x_3,\dots,x_n$
of elements smaller than $x_1$, 
and its right subtree as an optimal splitting tree for the subsequence of elements larger than $x_1$. 
Such a splitting tree achieves the optimal cost $\OO(n)$.
After each split operation, we have a collection of binary search trees for disjoint intervals of $[n]$.
To prove the splitting conjecture for splay trees, we need to maintain a collection of trees that have logarithmic depth and are ribless, with all rotations happening at constant depth.
We need the total cost of the splits to be $\OO(n)$ starting from some tree with those properties.
The initial tree will be obtained from the static optimal tree $T^{OPT}$ for the splitting sequence $x_1,\dots,x_n$ by applying the transformation from the previous section.
That is, we represent the tree using the balanced tree of Bose et al., and we represent each rib (including spines) by the folded red-black tree of Brodal and Rysgaard.
To make changes to such a tree, we will use the technique of Cleary and Taback to perform all rotations at small depth.
Formally, we establish the following theorem.

\begin{theorem}\label{thm:normal-form-split}
For any permutation $x_1,\dots,x_n$ of $[n]$, there exists a sequence of forests $\mathcal{F}_0,\mathcal{F}_1,\dots,\mathcal{F}_\ell$ and a sequence of times $t_1 < \dots < t_n \in [\ell]$,
where $\ell=\OO(n)$ such that
\begin{enumerate}
    \item $\mathcal{F}_0$ consists of a single binary search tree on $[n]$,
    \item For all $i \in [\ell]$ and any $T \in \mathcal{F}_i$, $T$ is a binary search tree on a sub-interval of $[n]$, $T$ is ribless of depth $\OO(\log |T|)$,
    \item $\mathcal{F}_\ell$ consists of $n$ singleton binary search trees,
    \item For all $i \in [\ell]\setminus \{t_1,\dots,t_n\}$, $\mathcal{F}_{i-1} \setminus \mathcal{F}_{i} = \{T\}$, $\mathcal{F}_{i} \setminus \mathcal{F}_{i-1} = \{T'\}$
    where $T'$ is obtained from $T$ by a single rotation at depth at most $\OO(1)$,
    \item For all $j \in [n]$, $\mathcal{F}_{t_j-1} = \mathcal{F}_{t_j}$ and $x_j$ is a singleton tree in $\mathcal{F}_{t_j-1}$, or  $\mathcal{F}_{t_j-1} \setminus \mathcal{F}_{t_j}=\{T\}$, and $\mathcal{F}_{t_j} \setminus \mathcal{F}_{t_j-1} = \{T_L,T_x,T_R\}$
    where $T$ has root $x_j$, $T_L$ is the left subtree of $x_j$ in $T$, $T_R$ is the right subtree, and $T_x$ is a tree with the only node $x_j$. ($T_L$ or $T_R$ might not exist if the corresponding subtrees are empty.)
\end{enumerate}
\end{theorem}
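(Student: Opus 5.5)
We build the forest sequence from the static optimal splitting tree $T^{OPT}$ described above, maintaining the normal form through every split. We start with $T^{OPT}$, whose total split cost is $\OO(n)$: at time $t_j$, the element $x_j$ is the root of the subtree of $T^{OPT}$ spanning its current interval, and splitting it off just separates that subtree into its two children. We then transform this static tree into $\mathcal{F}_0$: we represent it by the Bose et al.\ log-depth structure and replace every rib, spines included, by a Brodal--Rysgaard finger search tree. By \Cref{L:folding-ribless}, together with the depth argument of the previous subsection, the result is ribless of depth $\OO(\log n)$. No rotations are charged for this initial step, because $\mathcal{F}_0$ is simply defined to be this tree.

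The core of the proof is simulating a single split. When $x_j$ must be split off, its current component $T$ is a normal-form representation of some subtree $S$ of $T^{OPT}$ whose root is $x_j$. The plan is to treat this exactly like an access in the proof of \Cref{lemma:opt-constraints}. First, we move $x_j$ to the root of $T$ using finger moves and pop-tart pushes and pops, each realized by restricted rotations via \Cref{L:restricted-rotations}, with the affected rib finger trees updated by pushes and pops at their tops. Because $x_j$ is the root of the represented tree $S$, the Bose et al.\ simulation should need only $\OO(1)$ amortized work to expose it. Second, we cut $x_j$ off, which yields $T_L$, $T_x$ and $T_R$. Third, we must make sure that $T_L$ and $T_R$ are themselves in normal form with depth $\OO(\log |T_L|)$ and $\OO(\log |T_R|)$. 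Riblessness comes for free, since every rib of $T_L$ or $T_R$ is a rib of $T$, or a former spine of $T$, which is handled by the fully-ribless bound. Depth needs more care, because we need logarithmic depth in the size of the new part, not in $n$.

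The depth requirement is the main obstacle. The Bose et al.\ weights are global, so after a split, a small piece could inherit depth $\Theta(\log n)$ rather than $\OO(\log |T_L|)$. The first thing I would try is a structural argument. Below the root pop-tarts, each of $T_L$ and $T_R$ should coincide with the Bose et al.\ representation of the corresponding child subtree of $S$: the heavy-path decomposition of a subtree is the restriction of the global decomposition, and its pop-tart weights are computed from subtree sizes. Under this argument, depth is logarithmic in the size of the represented tree. If the restriction argument fails for the spine and pop-tart containing $x_j$, the fallback is to rebuild only the $\OO(1)$ stacks touched by the split. I would charge the cost of that rebuild to the potential of the pop-tarts, which is bounded by their size, and then argue that the sum over all splits telescopes to $\OO(n)$, since each element leaves a stack only once per split along its root path.

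The remaining task is accounting. Each split uses $\OO(1)$ amortized restricted rotations, so with $n$ splits the total length is $\ell = \OO(n)$. Setting the times $t_j$ at the moments the cuts occur gives properties 1 through 5. A singleton $x_j$ falls into the first case of property 5, with no rotations needed.
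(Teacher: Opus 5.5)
Your proposal takes essentially the same route as the paper, and your restriction argument for depth is what the paper relies on implicitly, so the fallback is unnecessary. The route is: start from the static optimal splitting tree, encode it with the Bose et al.\ structure with its ribs replaced by Brodal--Rysgaard trees, and realize each split by $\OO(1)$ pop-tart operations implemented with Cleary--Taback restricted rotations. Depth $\OO(\log|T|)$ then holds because each piece is, up to $\OO(1)$ pops, the Bose et al.\ representation of a subtree of $T^{OPT}$. The only differences are bookkeeping. The paper keeps every component's finger at its root, so $x_j$ is already the root, and it moves the finger into both children before removing $x_j$ rather than exposing roots lazily. It also closes the $\ell=\OO(n)$ count by noting that every pop-tart and every rib is eventually emptied, so its number of operations dominates its initial size; your ``$\OO(1)$ amortized per split'' silently needs this step to absorb the initial potential.
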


As shown next, it is straightforward to extend the technique of Bose et al.~\cite{BoseCFLa12Deamortizing} to support splits at the root.
Next, we review their construction in more detail to explain how we maintain our collection of trees and how we analyze the total cost of splitting.

Bose et al. represent a binary search tree $T$ with a finger at node $f$ by a binary search tree $T'$ rooted at $f$.
If $m_\mathrm{L}$ is the largest item smaller than $f$ on the path from the root of $T$ to $f$,
and $m_\mathrm{R}$ is the smallest item larger than $f$ on that path, 
then $m_\mathrm{L}$ is the left child of $f$ in $T'$ and $m_\mathrm{R}$ is the right child.
The left subtree of $f$ in $T$ is represented by the right subtree of $m_\mathrm{L}$ in $T'$;
similarly, the right subtree of $f$ in $T$ is represented by the left subtree of $m_\mathrm{R}$.
(If $m_\mathrm{L}$ or $m_\mathrm{R}$ does not exist in $T$, then the representation of the relevant subtree is directly a child of $f$.)
The left subtree of $f$ in $T'$, rooted at $m_\mathrm{L}$, forms a pop-tart onto which nodes smaller than $f$ on the path from the root to $f$ in $T$ were pushed in increasing order,
so $m_\mathrm{L}$ is the top of the pop-tart.
(Each such node was pushed there also with its left subtree in $T$.)
Similarly, the right subtree of $f$ in $T'$ forms a pop-tart of nodes on the root-to-$f$ path in $T$ that are larger than $f$.

Each subtree of $T$ hanging from a node on the root-to-$f$ path in $T$ (including subtrees of $f$) is represented using a \emph{heavy path decomposition}.
Here, we say that an edge is heavy if it leads to a heavier child of a node (or to the left child in the case of a tie).
By the heavier child, we mean the child with a larger subtree.
A maximal path consisting of heavy edges leading to a node $u$ is represented in $T'$ by a tree rooted at $u$,
with the nodes leading to $u$ organized in two pop-tarts that are children of $u$ (reminiscent of the finger construction).
The left pop-tart contains nodes of the path that are smaller than $u$ pushed onto the pop-tart in decreasing order.
The other pop-tart contains nodes of the path that are larger than $u$ pushed in increasing order.
Hence, higher nodes are at the top of the pop-tarts (as opposed to the pop-tarts of nodes leading to $f$ in $T$).

Each subtree hanging from the heavy path ending at $u$ is represented recursively by the same structure, and all those representations are stored as subtrees within the appropriate pop-tarts.

Moving a finger down to a right child $v$ of $f$ in $T$ corresponds to popping $v$ from its pop-tart in the left child of $m_\mathrm{R}$ (together with the representation of one of its children in $T$), rotating it to the root, and pushing $f$, together with the representation of the left child of $v$, onto the pop-tart of $m_\mathrm{L}$.
The representation of the right child of $v$ stays as a left subtree of $m_\mathrm{R}$.
Moving a finger to a left child of $f$ in $T$ is symmetric.
This involves amortized $O(1)$ rotations in the pop-tarts, all of which are contained within an (amortized) $O(1)$-size subtree of $T$.

Similar operations on the pop-tarts allow us to move the finger up or simulate a rotation in $T$.
Details of those operations will not be necessary for us.

\medskip
For the split conjecture, we take the static tree that is optimal for the given splitting sequence $x_1,\dots,x_n$ and represent it as a Bose et al. tree with a finger at the root.
In subsequent steps, we will maintain a collection of disjoint subtrees of $T^{OPT}$, each in the Bose et al. representation with a finger at its root.

Consider one such subtree $T$ of $T^{OPT}$ with root $f$.
The corresponding Bose et al. tree $T'$ has $f$ as its root. The left subtree of $f$ in $T'$ represents the left subtree of $f$ in $T$, and the right subtree of $f$ in $T'$ represents the right subtree of $f$ in $T$, each using the heavy path decomposition.

To split $T'$ at $f$, we create two identical copies of $T'$, referred to as the left and right copies.
In the left copy of $T'$, we move the finger to the left child $u$ of $f$ in $T$.
This transforms the left copy of $T'$ into a tree rooted at $u$. 
The left child of $u$ is the representation of the left subtree of $u$ in $T$, and
the right child of $u$ is $f$.
The left child of $f$ is the representation of the right subtree of $u$ in $T$,
and its right child is the representation of the right child of $f$ in $T$.
$f$ with its right child forms a pop-tart with one inserted item.
We remove (and discard) $f$ with its right-child subtree from the left copy of $T'$.
Its left child will become the right child of $u$.
The left copy becomes a valid Bose et al. representation $T'_\mathrm{L}$ of the subtree of $u$ in $T$.

Symmetrically, we move the finger in the right copy of $T'$ to the right child $v$ of the root $f$
and remove $f$ with its left subtree from the resulting tree.
This gives a valid Bose et al. representation $T'_\mathrm{R}$ of the subtree of $v$ in $T$.
Notice that $T'_\mathrm{L}$ and $T'_\mathrm{R}$ are disjoint.

The transformation of $T'$ involves a constant number of pushes and pops on pop-tart structures of $T'$, together with a constant number of rotations of nodes near the root of $T'$.
The push and pop operations translate into up to $\OO(d)$ rotations in the affected subtrees (but $\OO(1)$ amortized rotations, as we will see), where $d$ is the depth of the trees.

Every pop-tart from $T'$ ends up as part of either $T'_\mathrm{L}$ or $T'_\mathrm{R}$,
and we invoke a constant number of push and pop operations on a constant number of them.
Some of the pop-tarts might disappear if they become empty, and new ones containing a single element might be created.
But there are $\OO(1)$ operations on pop-tarts invoked in total during a given split.

Each rib of each of our Bose et al. trees is represented using a folded red-black tree of Brodal and Rysgaard to make it ribless.
The number of operations on those red-black trees corresponds to the number of rotations within the Bose et al. tree representation.
The actual transformation of $T'$ into $T'_\mathrm{L}$ and $T'_\mathrm{R}$ is done using the technique of Cleary and Taback (\Cref{L:restricted-rotations}):
$T'$ with root $f$ is first transformed into a tree rooted at $f$ with left children $T'_\mathrm{L}$ and right children $T'_\mathrm{R}$ 
using rotations in constant depth and then the tree is split by removing $f$.

If we use $s$ rotations to create $T'_\mathrm{L}$ and $T'_\mathrm{R}$ from the left and right copies of $T'$, then the Cleary and Taback procedure uses $\OO(s)$ rotations.
During the procedure, the depth of $T'$ is maintained at $\OO(\log n)$ (and, in fact, at $\OO(\log |T'|)$), and the tree is ribless at all times.

\medskip
To count the overall cost of splitting our representation of $T^{OPT}$ according to $x_1,\dots,x_n$, we make the following observations.
As we already noted, each internal pop-tart undergoes pushes and pops and is eventually emptied completely, as the final trees are singleton vertices.
We can trace each pop-tart through its corresponding trees. 
If the pop-tart contains $\ell$ items initially,
then it undergoes $t\ge \ell$ pop and push operations.
Those operations trigger a total of $\OO(t)$ rotations to maintain the pop-tart.
(Here we are using the fact that performing $t$ operations on a pop-tart of size at most $t$ requires $\OO(t)$ rotations in total.)
In total, there are $\OO(n)$ push and pop operations in all the pop-tarts, as there are $\OO(1)$ such operations per split.
They trigger $\OO(n)$ total rotations in the Bose et al. representation of all the trees.

Similarly, each rib (including spines) of the Bose et al. representation is represented by a folded red-black tree of Brodal and Rysgaard.
During each rebuild of the Bose et al. representation, we rebuild an initial segment of the rib, which corresponds to a sequence of insertions and deletions in the corresponding folded red-black tree.
The total number of insertions and deletions on all the ribs during a split operation corresponds to the number of rotations performed by the Bose et al. representation.
Each rib of $T'$ ends up either in $T'_\mathrm{L}$ or $T'_\mathrm{R}$ during a split.
A rib of length $\ell$ undergoes $s \ge \ell $ insertion and deletion operations and eventually it completely disappears.
This translates into $\OO(s)$ rotations in total to maintain that rib.

As there are $\OO(n)$ rotations performed in the Bose et al. representation in total, we perform $\OO(n)$ insertions and deletions on the folded red-black trees
and hence $\OO(n)$ rotations on them in total.

Together with a constant number of other rotations done per split, the total number of rotations we perform is $\OO(n)$.
During a split with $t$ rotations, we rebuild a subtree of size $\OO(t)$ containing the root.
Hence, the Cleary and Taback procedure increases the total number of rotations by at most a constant factor.
Thus, the whole sequence of split operations incurs cost $\OO(n)$, which is asymptotically optimal.
Each intermediate tree has logarithmic depth and is ribless.
All rotations happen at depth $\OO(1)$.

\section{The split conjecture}
\label{app:split-conjecture}

Here we demonstrate that our techniques can also be applied to the split conjecture to prove that it holds up to a small multiplicative factor.

\paragraph{The split operation.}
Fix a BST $T$ on some key set $S\subseteq [n]$, and let $x\in S$.
A \emph{split at $x$} first rebuilds the tree so that $x$ becomes the root and then deletes $x$.
After the deletion, the left and right subtrees of $x$ become two independent
BSTs on the key sets $\{y\in S:y<x\}$ and $\{y\in S:y>x\}.$
If we continue performing splits, each subsequent key is split inside the unique
current tree that still contains it.

For splay trees, we restructure the tree simply by splaying $x$.

\medskip

The original motivation for the split conjecture stems from the sequential access theorem~\cite{Tarjan85SequentialAccess} and from the study of specific cases in which the optimal BST has constant amortized cost.
Indeed, one can instead view the sequential access of the elements $1$ through $n$ as a sequence of splits.
It is then natural to ask whether the same bound holds for an arbitrary order of splits.
As previously observed by Lucas~\cite{Lucas91Split}, given an arbitrary tree, one can use a linear number of rotations to restructure it so that the root is the vertex whose key $x$ is split first, and its children are, respectively, the first keys in the splitting sequence that are smaller and larger than $x$.
The cost of each split in the restructured tree is constant, as whenever a key is split, it is the root of its tree; thus, the optimum tree can achieve $O(n)$ cost.
By naturally extending the dynamic optimality conjecture to include splits (and potentially joins), one might expect splay trees to achieve linear cost as well.

\paragraph{Split conjecture~\cite{Lucas88Canonical,Pettie08Deque}.}
Given an arbitrary initial BST $T$ on $[n]$ and letting
$x_1,\dots,x_n$ be any permutation of the keys, the cost of the operations
$\mathrm{split}(x_1),\ \mathrm{split}(x_2),\ \dots,\ \mathrm{split}(x_n)$, starting from the tree $T$, is $O(n)$ for splay trees.

We now prove that the total cost of splitting is $O(n \log\log n \cdot \log^2\log\log n)$, i.e., that the split conjecture holds up to a factor of $\log\log n \cdot \log^2\log\log n$.

\paragraph{Handling root deletes.} We imagine a split proceeding as follows: the optimal tree first accesses the vertex to be deleted, the splay tree then accesses it, and finally both trees delete the root. Thus, the only new operation, compared with those needed to handle dynamic optimality, is the deletion of the root, as the access is the same as if no deletion followed.
After the splay, we remove the deleted value vertex, its corresponding parent internal vertex, and the associated lazy intervals. This is accounted for by applying \textsc{Delete} to the two heap-children in their lazy intervals. Since the rank of the deleted vertex was $0$, the structures of the heavy paths and lazy intervals in the individual subtrees remain unchanged. To compensate for the deletion of the root in the optimal tree, all ranks decrease by one. This, however, does not affect any of the gaps, so no additional work is needed. In total, accounting for the deletion of the root requires only free operations on the interface.

\paragraph{Assumptions on the optimal tree for splitting.} As with the dynamic optimality conjecture, we need to assume that the optimal forest for splits is always ribless, has logarithmic depth, and performs only $O(n)$ additional rotations at depth $O(1)$. This is proved by \Cref{thm:normal-form-split}.

\begin{theorem}\label{thm:split-conjecture}
    Let $T$ be a splay tree over $n$ nodes in any configuration.
    The total cost of the splitting process of $T$ is at most $O(n \log \log n \log ^2 \log\log n)$.
\end{theorem}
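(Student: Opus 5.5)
We follow the dynamic-optimality argument of \Cref{thm:main} step by step. We replace the nearly optimal tree $T^*$ by the sequence of ribless forests $\mathcal{F}_0,\dots,\mathcal{F}_\ell$ from \Cref{thm:normal-form-split}, with $\ell=O(n)$. The ranks of external nodes of a splay tree are their depths in the forest tree covering the same key interval. The invariant we maintain is that the splay trees and the trees of $\mathcal{F}_i$ are in bijection and span the same key intervals. This holds initially because $\mathcal{F}_0$ is one tree on $[n]$, and it is preserved by splits because both sides split at the same root $x_j$.

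Each step of $\mathcal{F}$ is one of three events.
\begin{itemize}
\item A rotation at depth $O(1)$ in one tree is handled exactly as in \Cref{lemma:rotation-lazy-intervals} and \Cref{lemma:heavy-paths-rotation}. It costs $O(1)$ paid operations, $O(1)$ new lazy intervals and $O(1)$ new bends. These lemmas only use riblessness, constant rotation depth and the fact that ranks are depths in a tree on the relevant key interval, so they apply verbatim inside a single tree of the forest.
\item A split at $x_j$ is preceded by the splay of $x_j$ in the matching splay tree, while $x_j$ is the forest root and so has rank $0$. This is handled by \Cref{lemma:heavy-paths-after-splay}, \Cref{lemma:bends-by-splay}, \Cref{lemma:blocks-by-splay} and \Cref{lemma:indels-by-splay}. \Cref{cor:depth-by-riblessness} still gives $O(1)$ ranks for $x^\pm$.
\item The root deletion itself is handled as in the ``Handling root deletes'' paragraph. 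The two heap-children $x^-,x^+$ are \Delete{}d and the owner's lazy intervals are discarded. All ranks in each subtree drop by one uniformly, so no gap changes and the operation is free. Bends cannot increase, since we only remove the root internal node.
\end{itemize}

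Counting then follows the earlier proofs. The number of lazy intervals is $O(n+\ell)=O(n)$, replacing $\cost(T^*)$ by $\ell+n$ in \Cref{lemma:number-of-blocks}. \Cref{lemma:cost-of-everything} gives $u+O(n\log G)$ paid operations. \Cref{lemma:total-good-pairings} bounds the good internal pairings, and \Cref{lemma:bounding-bad-pairings} together with \Cref{lemma:bound-unimportant-pairings-by-internal} bounds all internal and unimportant pairings. As in \Cref{lemma:total-zig-zigs}, this yields $O(nG\log^2G)$ zig-zigs. Each forest tree on $k$ keys has depth $O(\log k)\le O(\log n)$, so $G=O(\log\log n)$.

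Zig-zags are bounded by bends as in \Cref{lemma:total-zig-zags}. At most $O(n)$ bends exist initially, $O(1)$ are created per forest rotation and per splay, one more per zig-zig, and deletions remove bends. Adding at most one zig per split gives total cost $O(n\log\log n\log^2\log\log n)$.

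The main obstacle is checking that the interface bookkeeping of \Cref{lemma:interface} survives deletions and the disappearance of blocks. \Delete{} is allowed only on shrinking blocks, so before deleting we convert the root's growing intervals, using $O(1)$ free operations. The blocks of the deleted owner become ownerless. This is harmless, because no further \Insert{} goes into them and the creator's value $f$ is frozen, which only helps the u-shift argument. One must also confirm that the counting in \Cref{lemma:bounding-decreases} still holds once elements leave the system permanently; removing elements only lowers the potential sum, so the bound should go through.

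A secondary point is the initialization term $O(nG)$ plus the $n$ initial \Insert{}s. This is affordable because, unlike in \Cref{thm:main}, the additive $n$ term is now the whole budget.
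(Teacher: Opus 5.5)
Your proposal is correct and follows the paper's route. You replace $T^*$ by the normal-form forest of \Cref{thm:normal-form-split}, which has $\ell=O(n)$ steps and depth $O(\log n)$, so $\contractedgapbound=O(\log\log n)$; you treat each root deletion as free \Delete{}s of $x^-,x^+$ followed by a uniform rank drop that leaves every gap unchanged; and you close with $\cost(T^{\text{OPT}})=O(n)$. The only difference is presentation: the paper invokes \Cref{thm:main} as a black box, whereas you re-trace the counting lemmas on the forest and check that the interface bookkeeping and the bend count survive deletions, which adds care but is not a different argument.
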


\begin{proof}
	    Deleting a splayed vertex requires only free operations. Since the normal form still holds, \Cref{thm:main} implies that the splay tree performs only $O((n + \cost(T^\text{OPT})) \log\log n \log ^2 \log\log n)$ rotations.
    Since $\cost(T^\text{OPT}) \in O(n)$, the splay tree does $O(n)$ root deletes and $O(n \log\log n \log^2\log\log n)$ rotations.
\end{proof}

\end{document}